\documentclass[11pt]{article}
\usepackage[legalpaper]{geometry}

\usepackage{cancel}
\usepackage{amsmath}    
\usepackage{amssymb}    
\usepackage{amsthm}
\usepackage{mathtools}
\usepackage{amsfonts}
\usepackage{stmaryrd}
\usepackage{enumerate}
\usepackage{bbm}        
\usepackage{xparse}
\usepackage{xcolor}
\usepackage{mathrsfs} 
\usepackage{graphicx}
\usepackage{tcolorbox}
\usepackage{enumitem} 
\usepackage{makecell}
\usepackage{xstring} 

\usepackage{glossaries}

\usepackage{caption}
\usepackage{algorithm}
\usepackage{algpseudocode}
\floatname{algorithm}{Protocol}

\usepackage{tikz}
\usetikzlibrary{arrows.meta,positioning,fit,calc,shapes.multipart}

\usepackage{subcaption}

\usepackage{ifthen}
\newboolean{darkMode}
\setboolean{darkMode}{false}
\newboolean{comments}
\setboolean{comments}{true}

\definecolor{magenta}{RGB}{255, 0, 255}
\definecolor{rotemcolor}{RGB}{255, 0, 0}
\definecolor{thomsycolor}{RGB}{200, 30, 230}

\numberwithin{equation}{section}

\newtheorem{theorem}{Theorem}
\newtheorem*{mainrestate}{Theorem~\ref{thm:main}}
\newtheorem*{virtualinputrestate}{Theorem~\ref{thm:intro_virtual_input_hiding}}

\newtheorem{lemma}{Lemma}[section]

\newtheorem{proposition}[lemma]{Proposition}
\newtheorem{corollary}[lemma]{Corollary}

\theoremstyle{definition}
\newtheorem{definition}[lemma]{Definition}

\theoremstyle{remark}
\newtheorem*{remark}{Remark}

\ifthenelse{\boolean{darkMode}}
    {
        \pagecolor[rgb]{0,0,0}\color[rgb]{1,1,1}
        \usepackage{hyperref}

    }
    {
        \usepackage[
        colorlinks=true,
        linkcolor=black,       
        citecolor=black,       
        urlcolor=blue,         
        pdfborder={0 0 0}      
        ]{hyperref}
    }

\begin{document}


\newcommand{\preimage}{x}
\newcommand{\Preimage}{\mathcal{X}}
\newcommand{\image}{y}
\newcommand{\Image}{\mathcal{Y}}
\newcommand{\key}{\kappa}
\newcommand{\Key}{\mathcal{K}}
\newcommand{\trapdoor}{\tau}
\newcommand{\tcfunc}{f}
\newcommand{\Tcfunc}{\mathcal{F}}

\newcommand{\preimagerv}{X}

\newcommand{\security}{\lambda}
\newcommand{\eval}{\mathsf{Eval}}
\newcommand{\gen}{\mathsf{Gen}}
\newcommand{\inv}{\mathsf{Inv}}


\newcommand{\ket}[1]{\left|#1\right\rangle}
\newcommand{\bra}[1]{\left\langle#1\right|}
\newcommand{\dyad}[1]{\ket{#1}\!\!\bra{#1}}
\newcommand{\id}{\mathbb{I}}

\newcommand{\tr}{\mathrm{tr}}

\newcommand{\state}{\psi}
\newcommand{\transState}{{\state_{\trans}}}
\newcommand{\transStaterv}{{\state_{_\transRV}}}
\newcommand{\sigmax}{{\mathtt{X}}}
\newcommand{\sigmay}{{\mathtt{Y}}}
\newcommand{\sigmaz}{\mathtt{Z}}

\newcommand{\measurementAngle}[1]{\theta^{(#1)}}
\newcommand{\anticommSquare}{\alpha_{\mathrm{ac}}}

\newcommand{\bellOutcomeRV}{B}
\newcommand{\bellOutcomeRVAt}[1]{\bellOutcomeRV_{#1}}

\newcommand{\blockIndex}{i}
\newcommand{\blockIndexRV}{I}
\newcommand{\blockProjector}[1]{\Pi_{#1}}
\newcommand{\angleBlockSet}[1]{\mathsf{S}_{#1}}
\newcommand{\biasPairSet}[1]{\mathsf{G}_{#1}}
\newcommand{\goodPairSet}{\Gamma_{*}}
\newcommand{\blockProbability}[2]{p_{#1}^{(#2)}}

\newcommand{\pacc}{p_{\mathrm{acc}}}

\newcommand{\postPhaseState}[1]{\rho_{\proverReg\eveReg}^{#1}}
\newcommand{\postPhaseStateP}[1]{\rho_{\proverReg}^{#1}}
\newcommand{\blockPostPhaseState}[2]{\rho_{\proverReg\eveReg}^{#1,#2}}
\newcommand{\blockPostPhaseStateP}[2]{\rho_{\proverReg}^{#1,#2}}

\newcommand{\continuingPrefixes}{S_\mathsf{Cont}}
\newcommand{\goodPrefixes}{S_\mathsf{Good}}
\newcommand{\concDecomm}{\chi}
\newcommand{\concRand}{\varrho}
\newcommand{\predictorGuess}{G}
\newcommand{\goodBlockEntropy}{h_*}

\newcommand{\entropyConst}{c_0}
\newcommand{\entropySqrtConst}{c_{1/2}}


\newcommand{\vne}{H}
\renewcommand\Pr{\mathrm{Pr}}
\newcommand{\indicator}[1]{\mathbbm{1}_{#1}}


\newcommand{\Y}{\mathcal{Y}}
\newcommand{\X}{\mathcal{X}}
\newcommand{\F}{\mathcal{F}}
\newcommand{\N}{\mathbb{N}}
\newcommand{\A}{\mathcal{A}}
\newcommand{\B}{\mathcal{B}}
\newcommand{\E}{\mathbb{E}}
\renewcommand{\O}{\mathcal{O}}


\newcommand{\moya}{\cite{MYowf23}}

\newcommand{\adversary}{\mathscr{A}}
\newcommand{\qadversary}{\mathscr{Q}}
\newcommand{\guesser}{\mathscr{G}}
\newcommand{\verifier}{\mathcal{V}}
\newcommand{\prover}{\mathcal{P}}

\newcommand{\trans}{t}
\newcommand{\transRV}{{T}}
\newcommand{\Trans}{\mathcal{T}}
\newcommand{\rand}{r}
\newcommand{\hadx}{d}

\newcommand{\bellChallenge}{y}
\newcommand{\bellChallengeRV}{Y}
\newcommand{\BellChallenge}{\mathcal{Y}}
\newcommand{\bellMeasurement}{M}
\newcommand{\bellOutcome}{\bit}
\newcommand{\BellOutcome}{\B}
\newcommand{\virtualInput}{x}
\newcommand{\VirtualInput}{\X}
\newcommand{\virtualInputRV}{X}
\newcommand{\virtualOutput}{a}
\newcommand{\VirtualOutput}{\A}
\newcommand{\virtualOutputRV}{A}

\newcommand{\acceptPredicate}{\mathsf{Acc}}

\newcommand{\verifierReg}{V}
\newcommand{\proverReg}{P}
\newcommand{\eveReg}{{E}}

\newcommand{\hashfam}{\mathcal{H}}
\newcommand{\hashfunc}{h}
\newcommand{\Hashval}{\mathcal{Z}}
\newcommand{\hashval}{z}
\newcommand{\oracle}{\mathcal{G}}

\newcommand{\negl}{\mathrm{negl}}
\newcommand{\poly}{\mathrm{poly}}
\newcommand{\imap}{\xi}
\newcommand{\omap}{\alpha}

\newcommand{\chshopt}{\omega^*}

\newcommand{\hashIndex}{j}
\newcommand{\hashNum}{m}
\newcommand{\hashEpsilon}{\varepsilon}
\newcommand{\hashRangeSize}{k}


\newcommand{\bgkmProtocol}{\hyperref[protocol:bgkm_protocol]{Protocol~\ref*{protocol:bgkm_protocol}}}
\newcommand{\seqProtocol}{\hyperref[protocol:sequential_rand_certification]{Protocol~\ref*{protocol:sequential_rand_certification}}}
\newcommand{\iidProtocol}{\hyperref[protocol:single_round]{Protocol~\ref*{protocol:single_round}}}

\newcommand{\Ber}{\mathrm{Ber}}
\newcommand{\generationRoundBernoulli}{\beta}
\newcommand{\winRV}{W}
\newcommand{\generationRoundRV}{G}

\newcommand{\contProbWidth}{\Delta_\pce}
\newcommand{\expectedContProb}{\pce_{\mathrm{exp}}}

\newcommand{\winProbWidth}{\Delta_\omega}
\newcommand{\expectedWinProb}{\omega_{\mathrm{exp}}}

\newcommand{\expectedEpsilon}{\varepsilon_{\mathrm{exp}}}
\newcommand{\epsilonWidth}{\Delta_\varepsilon}

\newcommand{\smoothingParam}{\varepsilon_{\mathrm{s}}}

\newcommand{\testMeasurement}{\Delta}
\newcommand{\testProjector}[1]{\testMeasurement_{#1}}
\newcommand{\testOutcome}{D}
\newcommand{\simplifiedOutcome}{\Pi}


\newcommand{\bc}{\mathcal{BC}}
\newcommand{\shbcscheme}{\mathcal{BC}_{\text{SH}}}

\newcommand{\sender}{\mathcal{S}}
\newcommand{\receiver}{\mathcal{R}}
\newcommand{\open}{\mathsf{Open}}
\newcommand{\commit}{\mathsf{Commit}}

\newcommand{\Decomm}{\X}
\newcommand{\decomm}{x}
\newcommand{\tape}{s}

\newcommand{\bindExp}{\mathsf{CompBindExp}}
\newcommand{\qbindExp}{\mathsf{SumBindExp}}
\newcommand{\hideExp}{\mathsf{HideExp}}

\newcommand{\bit}{b}
\newcommand{\bchallenge}{{\bit_{\mathrm{challenge}}}}

\newcommand{\tcfbc}{\bc_{\text{tcf}}}


\newcommand{\flag}{\mathrm{flag}}
\newcommand{\abort}{\mathrm{abort}}
\newcommand{\continue}{\mathrm{continue}}
\newcommand{\contEvent}{\mathtt{CE}}
\newcommand{\pce}{\gamma}
\newcommand{\accept}{\mathrm{accept}}

\newcommand{\pwin}{p_{\mathrm{win}}}
\newcommand{\bctrans}{{}\trans_{\mathrm{bc}}}
\newcommand{\protocolState}{\psi}

\newcommand{\angleThreshold}{\delta}
\newcommand{\projThreshold}{\eta}

\newcommand{\ymeps}{{\gamma_\textrm{my}}}


\newacronym{ppt}{PPT}{probabilistic polynomial-time}

\newcommand{\ppt}{\gls{ppt}}
\newcommand{\Ppt}{\Gls{ppt}}
\newcommand{\pptshort}{\glsentryshort{ppt}}

\newacronym{qpt}{QPT}{quantum polynomial-time}

\newcommand{\qpt}{\gls{qpt}}
\newcommand{\Qpt}{\Gls{qpt}}
\newcommand{\qptshort}{\glsentryshort{qpt}}

\newacronym{qrom}{QROM}{quantum random-oracle model}

\newcommand{\qrom}{\gls{qrom}}
\newcommand{\Qrom}{\Gls{qrom}}
\newcommand{\qromshort}{\glsentryshort{qrom}}

\newacronym{tcf}{TCF}{trapdoor claw-free function}

\newcommand{\tcf}{\gls{tcf}}
\newcommand{\Tcf}{\Gls{tcf}}
\newcommand{\tcfs}{\glspl{tcf}}
\newcommand{\Tcfs}{\Glspl{tcf}}
\newcommand{\tcfshort}{\glsentryshort{tcf}}
\newcommand{\tcfsshort}{\glsentryshortpl{tcf}}

\newacronym{shbc}{SHBC}{statistically-hiding bit commitment}

\newcommand{\shbc}{\gls{shbc}}
\newcommand{\Shbc}{\Gls{shbc}}
\newcommand{\shbcs}{\glspl{shbc}}
\newcommand{\Shbcs}{\Glspl{shbc}}
\newcommand{\shbcshort}{\glsentryshort{shbc}}
\newcommand{\shbcsshort}{\glsentryshortpl{shbc}}

\newacronym{pq}{PQ}{post-quantum}

\newcommand{\pq}{post-quantum}
\newcommand{\Pq}{Post-quantum}
\newcommand{\pqshort}{post-quantum}

\newacronym{owf}{OWF}{one-way function}

\newcommand{\owf}{\gls{owf}}
\newcommand{\Owf}{\Gls{owf}}
\newcommand{\owfs}{\glspl{owf}}
\newcommand{\Owfs}{\Glspl{owf}}
\newcommand{\owfshort}{\glsentryshort{owf}}
\newcommand{\owfsshort}{\glsentryshortpl{owf}}

\newacronym{poq}{PoQ}{proof of quantumness}

\newcommand{\poq}{\gls{poq}}
\newcommand{\Poq}{\Gls{poq}}
\newcommand{\poqs}{\glspl{poq}}
\newcommand{\Poqs}{\Glspl{poq}}
\newcommand{\poqshort}{\glsentryshort{poq}}
\newcommand{\poqsshort}{\glsentryshortpl{poq}}

\newacronym{eat}{EAT}{entropy accumulation theorem}

\newcommand{\eat}{\gls{eat}}
\newcommand{\Eat}{\Gls{eat}}
\newcommand{\eatshort}{\glsentryshort{eat}}

\newacronym{qccc}{QCCC}{quantum computation with classical communication}

\newcommand{\qccc}{\gls{qccc}}
\newcommand{\Qccc}{\Gls{qccc}}
\newcommand{\qcccshort}{\glsentryshort{qccc}}

{
    \title{Randomness Certification from\\Statistically Hiding Commitments}
    \author{
        Ilya Merkulov$^{1}$\thanks{Email: \texttt{ilya.merkulov@weizmann.ac.il}}
        \quad and \quad
        Rotem Arnon$^{2}$
        \\
        \\
        \normalsize $^{1}$The Center for Quantum Science and Technology, \\
        \normalsize Weizmann Institute of Science, Rehovot, Israel
        \\
        \normalsize $^{2}$School of Computer and Communication Sciences, \\
        \normalsize EPFL, Lausanne, Switzerland
    }
    \date{\today}
    \maketitle
	\begin{abstract}
		Randomness certification aims to guarantee that the output of an untrusted quantum device contains genuine randomness that remains unpredictable even given quantum side information held by an adversary.
		For a classical verifier interacting with a single quantum device, such a guarantee cannot rely on the physical separation underlying standard Bell tests and must instead follow from computational restrictions on the prover.
		Building on the framework of computational Bell inequalities~\cite{cbi}, we distinguish classical and quantum virtual-input hiding, showing that the former suffices for proofs of quantumness, whereas the latter enables randomness certification against quantum side information.

		Our main technical contribution is to establish quantum virtual-input hiding for a commitment-based protocol derived from the proof of quantumness of Morimae and Yamakawa~\cite{MYowf23}.
		We show that statistical hiding together with ordinary computational binding against \qpt{} senders suffices, without requiring stronger quantum binding notions such as sum-binding or collapse-binding.
		This yields a randomness-certification protocol from post-quantum statistically hiding bit commitments.
		In particular, existing constructions of such commitments from post-quantum one-way permutations imply that post-quantum one-way permutations suffice for randomness certification.

		We further instantiate our construction using quantum-secure \tcfs{}, obtaining an efficient classical verifier, and in the \qrom{}, obtaining randomness certification without additional computational assumptions but with an inefficient verifier.
		These instantiations separate the cryptographic properties sufficient for randomness certification from the additional structure required for efficient verification within our commitment-based framework.
		\glsresetall
	\end{abstract}
}

\newpage
\footnotesize
\tableofcontents
\normalsize

\newpage

\section{Introduction}
\label{sec:introduction}

Randomness is a fundamental resource in modern cryptography.
Encryption schemes, digital signatures, authentication protocols, and many other cryptographic primitives rely on fresh secret randomness for their security~\cite{rolland2015randomness}.
The ability to generate trustworthy randomness is therefore a fundamental requirement for cryptographic security~\cite{quantum_randomness}.

At first glance, randomness generation may appear to be a purely statistical problem: produce a sequence of bits that ``looks random.''
This is not sufficient for cryptographic applications.
A public sequence may pass standard statistical tests~\cite{statistical_test} while being perfectly predictable to anyone who possesses a copy.
The relevant requirement is instead that the generated output remain unpredictable even when conditioned on all information available to a potentially adversarial observer.

Device-independent randomness certification addresses precisely this stronger requirement.
It aims to certify entropy in the output of an untrusted device, conditioned on an adversary's side information, without relying on a detailed model of the device's internal implementation~\cite{pironio2010random,vidick_quantumdice,acin2016optimal,randomness_expansion,acin2016certified,arnonfriedman2020device}.
Bell inequalities~\cite{bell64,scarani2019bell_nonlocality,brunner2014bell} provide the canonical mechanism for obtaining such guarantees.
In a standard Bell experiment, a user interacts classically with spatially separated quantum devices, while an adversary may have prepared the devices and may possess quantum side information correlated with their internal state.
The essential physical assumption is that the devices cannot communicate after receiving their respective inputs, enforcing the no-signaling condition on their correlations.
Under this assumption, a Bell violation, serving as a \poq{}, rules out a local classical model~\cite{bell64,scarani2019bell_nonlocality}.
Building on the characterization of local hidden-variable models underlying Fine's theorem~\cite{fine1982hidden}, local classical correlations can more generally be expressed as probabilistic mixtures of deterministic local assignments, in which the outcomes of all possible measurements are predetermined~\cite{brunner2014bell}.
Thus, a Bell violation rules out such a local classical explanation of the observed correlations and enables one to infer genuine randomness in the device outputs, even conditioned on quantum side information held by an adversary~\cite{scarani2019bell_nonlocality,security_proofs}.

The cryptographic single-device setting was introduced by Brakerski \emph{et al.}~\cite{brakerski2021cryptographic}, who showed that a classical verifier interacting with a single computationally bounded quantum device can certify quantum behavior and, in suitable protocols, certify randomness under \pq{} cryptographic assumptions.
This setting, considered also in the present work, operates in the \qccc{} model, where the device may perform quantum computations internally, but all communication with the classical verifier remains classical.
It differs fundamentally from the standard device-independent setting described above.
A single device receives the entire interaction transcript, so there is no physical separation that prevents it from learning all verifier messages.
Instead, \pq{} cryptography imposes computational restrictions on the information available to the device, playing a role analogous to the information constraints enforced by spatial separation in a Bell experiment.
Mahadev's classical-verification protocol~\cite{mahadev2018classical} provided a closely related foundational application of \pq{} cryptography to the classical verification of quantum computation.\footnote{
    Randomness certification has been demonstrated experimentally both from nonlocal correlations~\cite{experimental_randomness0,experimental_randomness1,experimental_randomness3} and from computational hardness~\cite{experimental_randomness2}.
    The computational Bell test of Kahanamoku-Meyer \emph{et al.}~\cite{kcvy} has also been implemented experimentally~\cite{zhu2023interactive}.
    Although originally introduced and experimentally demonstrated as a \poq{}, our previous work~\cite{cbi} showed that this protocol also certifies randomness against quantum side information.
    Consequently, the existing experiment can also be interpreted as an experimental demonstration of computational randomness certification.
}
Subsequent works developed a range of computational analogues of Bell tests and related verification procedures~\cite{kcvy,mahadev2022efficient,brakerski2023simple,kalai2022compiled,compiledtrapdoor2024,cbi,tilted_mehta,baroni2026seqnpa,cui2025seqnpa}.

The framework of computational Bell inequalities~\cite{cbi} provides a connection between the standard Bell setting and these cryptographic single-device protocols.
In a Bell experiment, spatial separation prevents one device from learning the measurement input given to the other.
In the single-device setting, an analogous role can be played by a \emph{virtual input}: a classical variable determined by the interaction transcript that, when cunningly chosen, is computationally hard for the device to recover.
Together with a corresponding virtual output, the virtual input defines a mapping from the single-device interaction to a Bell scenario, allowing the resulting correlations to be analyzed using Bell inequalities.

\begin{figure}[t]
    \centering
    \includegraphics[width=0.8\textwidth]{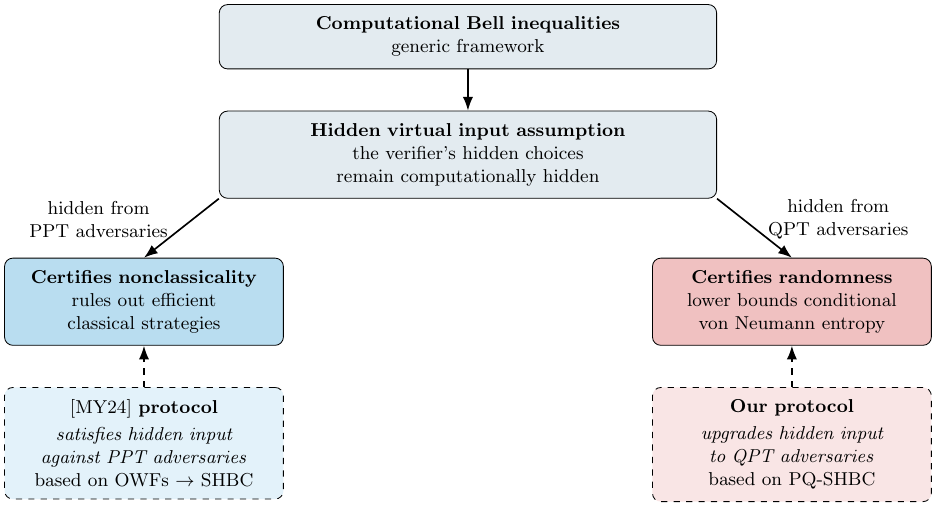}
    \caption{
        \footnotesize
        The hidden virtual input is the computational analogue of a Bell measurement setting.
        Classical hiding excludes efficient classical strategies and yields a \poq{}, whereas quantum hiding also protects against the prover's residual quantum state and enables randomness certification.
        Our main technical contribution is to establish this stronger guarantee for a commitment-based proof-of-qubit protocol.
    }
    \label{fig:intro_diagram}
\end{figure}

\paragraph{Our contribution.}

In this work, we distinguish two natural notions of virtual-input hiding according to the computational model of the adversary.
The weaker notion requires the virtual input to remain hidden from \ppt{} classical algorithms, whereas the stronger notion requires it to remain hidden from \qpt{} algorithms.
In both cases, the adversary may use the interaction transcript together with any residual state retained from the protocol execution.
We show that this distinction is operationally meaningful, rather than merely definitional: the weaker classical notion suffices for \emph{\poqs{}}, while \emph{randomness certification} requires the stronger quantum notion.

Motivated by this distinction, we take a closer look at the commitment-based protocol for \poq{} of Morimae and Yamakawa~\cite{MYowf23}.
We find that the original work only allows for a classical virtual-hidden input and hence cannot be used for randomness certification as is.
We strengthen the protocol by requiring quantum virtual-input hiding, thereby obtaining a new randomness-certification protocol.Specifically, we establish quantum virtual-input hiding assuming only a \shbc{} satisfying the standard computational binding property but against \qpt{} senders (rather than merely \ppt{} senders), without requiring stronger \pq{} notions such as sum-binding or collapse-binding that prevent a quantum sender from deciding only at the opening stage which bit to unveil.

At a high level, we use the simplified~\cite{kcvy} protocol from~\cite{brakerski2023simple} to map the structure induced by the commitment to a BB84 state whose basis is determined by the virtual input.
Crucially, we prove that this virtual input remains computationally hidden even from a \qpt{} prover with access to its residual quantum state.
Combined with the \emph{computational Bell inequalities} framework of~\cite{cbi}, this upgrades the underlying~\moya{} \poq{} construction to a randomness-certification protocol.

We further provide a direct analytical proof of randomness certification, combining approximate anticommutation of the prover's measurements with entropic uncertainty relations~\cite{eur} to lower bound the conditional von Neumann entropy of the prover's output given the protocol transcript and any quantum side information held by an adversary.

We then instantiate our construction in two settings.
First, quantum-secure \tcfs{} yield an efficient classical verifier, since the trapdoor allows the verifier to efficiently recover and certify the relevant unique commitment openings.
Second, in the \qrom{}, we obtain randomness certification without additional computational assumptions, but with an inefficient verifier.
Together, these instantiations separate the cryptographic properties required for security from the additional algorithmic structure required for efficient verification, and identify efficient recovery and certification of unique commitment openings as the additional ingredient supplied by the trapdoor.

We now explain and discuss our contributions in more depth.

\subsection{The Required Commitment Primitive}
\label{subsec:post_quantum_binding}

The cryptographic primitive underlying our construction is a \shbc{} that satisfies computational binding against \qpt{} senders.

Informally, a bit commitment can be thought of as the digital analogue of placing a bit in a sealed envelope.
A sender first commits to a bit $\bit\in\{0,1\}$ without revealing its value to the receiver, and later opens the commitment to reveal the bit.
The two basic security requirements capture the properties expected from such a sealed envelope: \emph{hiding} prevents the receiver from learning the committed bit before it is opened, while \emph{binding} prevents the sender from changing its mind after committing.

Formally, a classical bit commitment scheme~$\bc=(\commit,\open)$ consists of a commit phase and an open phase between a sender and a receiver~\cite{Naor1991,shbc_from_owf,shbc_existence}.
During the commit phase, the sender chooses a bit~$\bit\in\{0,1\}$ and interacts with the receiver, producing a classical transcript~$\trans$.
At this point, the bit is committed but has not yet been revealed to the receiver.
During the open phase, the sender reveals a bit~$\bit$ together with a decommitment string~$\decomm$, which serves as evidence that this is a valid opening of the earlier commitment.
The receiver verifies the opening by evaluating
\begin{equation}
	\open(\trans,\bit,\decomm)\in\{\top,\bot\} \; ,
\end{equation}
where $\top$ indicates that the opening is accepted and $\bot$ that it is rejected.

Our protocol requires the underlying bit commitment to satisfy the following two security properties.

\paragraph{Statistical hiding.}

The statistical hiding property means that the commit phase reveals essentially no information about the committed bit: the receiver's view when an honest sender commits to~$0$ is statistically indistinguishable from its view when the sender commits to~$1$, even for a computationally \emph{unbounded} receiver.

Although statistical hiding is conventionally a security property of a commitment scheme, in our construction it is used to establish the completeness of the randomness-certification protocol, \iidProtocol{}.
Specifically, it ensures that the two commitment branches have comparable weight, so that the honest prover continues the protocol with non-negligible probability.

\paragraph{Post-quantum computational binding.}
\label{paragraph:double_opening_binding}

Throughout this work, we consider classical bit commitment schemes: the commit protocol, all communication, the commitment transcript and decommitment strings, and the opening verification algorithm are all classical.
The adversarial sender, however, may be quantum.

For classical commitment schemes, the basic notion of computational binding requires that no \ppt{} adversarial sender can produce valid openings of the same commitment to two distinct messages~\cite{goldreich2001foundations,katz_lindell}.
In the bit-commitment setting, this means that no \ppt{} sender can output a pair of classical decommitment strings~$(\decomm_0,\decomm_1)$ such that
\begin{equation}
	\label{eq:double_opening_condition}
	\open(\trans,0,\decomm_0)=\top
	\quad\wedge\quad
	\open(\trans,1,\decomm_1)=\top
	\;,
\end{equation}
except with negligible probability.
We require the same condition to hold against \qpt{} adversarial senders, and refer to this notion as \emph{post-quantum computational binding}.

\pq{} computational binding provides a strictly weaker security guarantee than other \pq{} binding notions considered in the literature, such as sum-binding and collapse-binding~\cite{unruh16,spooner2023collapsing}.
In particular, it does not imply that the sender is committed to a definite classical value at the end of the commit phase.
Indeed~\cite{aru14} shows that, relative to an oracle, a classical commitment scheme may be computationally binding against quantum senders while nevertheless allowing a quantum sender to choose which message to open only after the commitment has been produced.

This weakness is not problematic for our application.
Our proof never requires the committed bit to be fixed at the end of the commit phase.

Instead, we show that any \qpt{} algorithm that predicts the virtual input with non-negligible advantage can be transformed into a \qpt{} sender that outputs, in a single execution, a pair of classical decommitment strings~$(\decomm_0,\decomm_1)$ satisfying the computational binding condition in Equation~\eqref{eq:double_opening_condition}, which is exactly the event ruled out by \pq{} computational binding.
Therefore, stronger \pq{} binding notions, such as sum-binding or collapse-binding,\footnote{Unlike for general commitments, these notions are equivalent when restricted to \emph{bit} commitments~\cite{spooner2023collapsing}.} which better capture the spirit of bit commitment against quantum senders, are not required for our proof.
Our cryptographic assumption is thus simply the existence of a \shbc{} satisfying \pq{} computational binding.

This should be contrasted with the assumption underlying the original \moya{} \poq{}.
There, the existence of \owfs{} suffices, since \shbc{} schemes satisfying computational binding against \ppt{} senders can be constructed generically from \owfs{}~\cite{shbc_from_owf}.
In our setting, however, the corresponding binding guarantee must hold against \qpt{} senders, and no analogous generic construction from arbitrary \pq{} \owfs{} is known.
Recent work by Chen, Wang, and Xie~\cite{chen2026minicrypt} establishes quantum sum-binding security for the Novy commitment protocol~\cite{novy}, showing that \pq{} one-way permutations suffice to construct statistically hiding commitments satisfying a stronger binding notion than the one required here.
Their construction also extends to efficiently verifiable one-way puzzles with nearly uniform puzzle distributions.
Nevertheless, whether arbitrary \pq{} \owfs{} suffice remains open.
We therefore state our assumption directly at the level of the commitment primitive and discuss concrete instantiations in Section~\ref{sec:instantiations}.

\subsection{Classical versus Quantum Hidden Virtual Inputs}
\label{subsec:intro_hidden_virtual_inputs}

A broad class of single-device verification protocols can be expressed in the canonical two-phase form studied in simple tests of quantumness and computational Bell inequalities~\cite{brakerski2023simple,cbi}.
In such protocols, a prover interacts with a classical verifier in two phases and attempts to convince the verifier that it possesses quantum computational capabilities.

In Phase~A, the verifier and prover interact through classical communication.
At the end of this phase, they hold a classical transcript~$\trans\in\Trans$, represented by the random variable~$\transRV$, and the prover retains a residual register~$\proverReg$, which may contain classical or quantum information.

We associate with the protocol two deterministic functions
\begin{equation}
	\imap,\omap:\Trans\to\{0,1\}
	\; ,
\end{equation}
which map each Phase~A transcript to a \emph{virtual input} and \emph{virtual output}, respectively.
Accordingly, we define the corresponding random variables by
\begin{equation}
	\virtualInputRV=\imap(\transRV)
	\; ,
	\qquad
	\virtualOutputRV=\omap(\transRV)
	\; .
\end{equation}
The virtual input-output pair~$(\virtualInput,\virtualOutput)$ determined from Phase~A plays the role of the input and output of one of the two parties participating in a CHSH experiment.

In Phase~B, the verifier samples a fresh challenge~$\bellChallenge\in\{0,1\}$, sends it to the prover, and the prover responds with~$\bellOutcome\in\{0,1\}$.
The fresh challenge-response pair~$(\bellChallenge,\bellOutcome)$ plays the role of the input and output of the second party.
The verifier can therefore associate each execution of the single-prover protocol with a complete CHSH tuple~$(\virtualInput,\virtualOutput,\bellChallenge,\bellOutcome)$ and accepts according to the usual CHSH winning condition
\begin{equation}
\virtualOutput \oplus \bellOutcome
=
\virtualInput \cdot \bellChallenge \; .
\end{equation}
Importantly,~$\virtualInput$ or~$\virtualOutput$ are never explicitly sent to the prover; they are determined only virtually from the Phase~A transcript.

Building on the framework of computational Bell inequalities~\cite{cbi}, we distinguish two notions of virtual-input hiding according to the computational model of the adversary, which may either be \ppt{} or \qpt{}.
Both notions require the virtual input to be unpredictable from the Phase~A transcript and the prover's residual information, but differ in whether this information and the corresponding predictor are classical or quantum.

Let~$\transStaterv$ denote the prover's residual state in register~$\proverReg$ at the end of Phase~A.
For both notions, the relevant hiding condition is that every adversarial predictor~$\adversary$, given the transcript and the prover's residual state, satisfies
\begin{equation}
	\Pr\!\left[
		\adversary(\transRV,\transStaterv)
		=
		\virtualInputRV
	\right]
	\leq
	\frac12+\negl(\security)
	\; .
\end{equation}

We say that the virtual input is \emph{classically hidden} when this condition holds for every \ppt{} predictor whose residual information~$\transStaterv$ is classical.
Classical virtual-input hiding is sufficient for a \poq{}.
Indeed, under the computational Bell mapping, an efficient classical strategy achieving correlations beyond the computational classical bound could be transformed into an efficient classical predictor for~$\virtualInputRV$, contradicting classical virtual-input hiding.
Thus, a quantum prover achieving correlations beyond this bound certifies a separation from all efficient classical strategies and hence constitutes a \poq{}.

Randomness certification, however, requires the stronger notion of \emph{quantum virtual-input hiding}, in which the same condition must hold for every \qpt{} predictor whose post-Phase-A residual state~$\transStaterv$ may contain quantum information.
Quantum virtual-input hiding implies classical virtual-input hiding, since classical computation is a special case of quantum computation, but the converse need not hold.
The virtual input may be hidden from every efficient classical predictor while remaining efficiently recoverable by a quantum predictor from its residual quantum state.

The distinction between classical and quantum hiding has a direct operational consequence.
As we show in Section~\ref{sec:randomness_certification}, if a quantum prover can recover~$\virtualInputRV$ from the transcript~$\transRV$ and the residual state~$\proverReg$, it can condition its Phase~B operation on the virtual input and reproduce correlations that cannot be achieved classically without implementing measurements whose incompatibility forces uncertainty.
Such a protocol may still separate efficient quantum and classical computation, and therefore certify quantumness, while failing to certify randomness against the quantum prover's side information.
Quantum hiding prevents such a strategy by ensuring that the virtual input remains unpredictable even given the prover's residual quantum state.

\subsection{Upgrading \poqshort{} to Randomness Certification}
\label{subsec:intro_moya_randomness}

The protocol we study originates from the \poq{} of Morimae and Yamakawa~\cite{MYowf23}.
Their construction uses a \shbc{} together with the KCVY protocol~\cite{kcvy} to prepare a hidden BB84 state on the prover's side.
Neither the original \moya{} analysis nor the KCVY protocol is formulated in terms of virtual inputs or a Bell mapping.

Our first observation is that the same commitment-based construction nevertheless admits such an interpretation.
In particular, we identify a virtual input directly from the transcript of the protocol and show that its hiding property captures the distinction between proofs of quantumness and randomness certification.
The classical hiding guarantee already implicit in the original construction suffices to recover the \poq{} result, whereas strengthening it to quantum hiding yields randomness certification.

An honest quantum prover can execute the sender's commitment procedure coherently and, after the hashing and postselection steps, retain one valid opening~$\decomm_\bit$ for each commitment value~$\bit\in\{0,1\}$.
At an informal level, the resulting state has the form
\begin{equation}
\ket{\Psi}
	\propto
	\ket{0}\ket{\decomm_0}
	+
	\ket{1}\ket{\decomm_1}
	\; .
\end{equation}

The \moya{} construction uses the KCVY protocol~\cite{kcvy} to map the coherent pair of openings to a hidden BB84 state.
In our construction, we replace this component with the simplified hidden-qubit procedure of presented in~\cite{brakerski2023simple}, which is particularly convenient for identifying the corresponding Bell mapping.
As part of this procedure, the verifier samples random strings~$\rand_0$ and~$\rand_1$, which are included in the transcript~$\trans$.
Motivated by the basis of the resulting BB84 state, we choose the virtual-input function~$\imap$ such that
\begin{equation}
	\label{eq:intro_virtual_input}
	\imap(\trans)
	=
	(\decomm_0\cdot\rand_0)
	\oplus
	(\decomm_1\cdot\rand_1)
	\; .
\end{equation}

The security of the resulting protocol therefore depends on whether the virtual input remains hidden from the prover.
Computational binding against \ppt{} senders suffices for hiding against efficient classical provers, while \pq{} computational binding allows us to establish the corresponding guarantee against quantum provers, as summarized in the following theorem.

\begin{theorem}[Classical and quantum virtual-input hiding, informal]
	\label{thm:intro_virtual_input_hiding}
	Assume that the underlying \shbc{} satisfies computational binding against \ppt{} (\qpt{}) senders.
	Then, whenever \iidProtocol{} reaches Phase~B with non-negligible probability, its virtual input is computationally hidden from any \ppt{} (\qpt{}) prover, even given its residual post-Phase-A classical (quantum) state.
\end{theorem}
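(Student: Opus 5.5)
We argue by reduction to computational binding: from any predictor for the virtual input we build a sender that outputs two valid openings of the same commitment. Fix a (\ppt{} or \qpt{}) prover $\prover$ that reaches Phase~B with non-negligible probability $\pce$. Fix also a predictor $\adversary$ that, given the transcript $\transRV$ and the residual state $\transStaterv$, guesses $\virtualInputRV=\imap(\transRV)=(\decomm_0\cdot\rand_0)\oplus(\decomm_1\cdot\rand_1)$ with probability $\tfrac12+\varepsilon$ for non-negligible $\varepsilon$.

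The key observation is that $\imap$ is a Goldreich--Levin-type inner product of the concatenated opening string $\decomm_0\|\decomm_1$ with the verifier's random string $\rand_0\|\rand_1$. The strings $\rand_0,\rand_1$ are sampled by the verifier only after the commitment transcript and hash values are fixed. Hence, conditioned on the commitment-phase prefix of the transcript, the map $(\rand_0,\rand_1)\mapsto\adversary$'s guess acts as a noisy oracle for the Hadamard encoding of $\decomm_0\|\decomm_1$. Here $(\decomm_0,\decomm_1)$ are the unique openings consistent with the hashing and postselection step, which the verifier itself uses for $\imap$.

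The reduction $\sender^*$ proceeds in three steps.
\begin{enumerate}
\item It simulates the verifier and $\prover$ through the commitment phase with the real receiver, obtaining $\trans_{\mathrm{bc}}$ and the hash values.
\item It runs the remainder of Phase~A internally up to the point where $\rand_0,\rand_1$ are requested.
\item It runs a Goldreich--Levin decoder, invoking $\prover$'s remaining Phase~A step followed by $\adversary$ on chosen queries $(\rand_0,\rand_1)$.
\end{enumerate}
A Markov/averaging argument over prefixes then gives the following. With probability at least $\varepsilon\pce/2$ over the commitment-phase prefix, the prefix is ``good'': conditioned on it, the protocol continues and the prediction advantage is at least $\varepsilon/2$. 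For good prefixes, Goldreich--Levin decoding recovers $\decomm_0\|\decomm_1$ with inverse-polynomial probability, up to a list whose candidates $\sender^*$ can check with $\open$. This produces a double opening satisfying Equation~\eqref{eq:double_opening_condition} with non-negligible probability, contradicting binding. One point needs checking here: uniqueness of the openings defining $\imap$. If the verifier's definition instead allowed the prover to choose among many openings, we would need to pin down which ones the predictor is correlated with.

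In the classical case this works directly: $\prover$'s state is classical, so it can be copied and rewound freely across Goldreich--Levin queries. The main obstacle is the quantum case, where the residual register $\proverReg$ cannot be cloned, so $\adversary$ cannot be queried on many different $(\rand_0,\rand_1)$ with the same state. To handle this, I would use the quantum Goldreich--Levin theorem of Adcock--Cleve. Coherently query $\prover$ (last Phase~A step) and then $\adversary$ on a uniform superposition over $(\rand_0,\rand_1)$, phase-kick the predicted bit, uncompute, and apply Hadamards. A single query then yields $\decomm_0\|\decomm_1$ with probability $\Omega(\varepsilon^2)$ for each good prefix. This requires running the post-commitment part of $\prover$ and $\adversary$ as unitaries with measurements deferred, which is fine since they are \qpt{}. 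It also requires that the verifier's continue/abort decision be computed coherently and postselected, which introduces a further loss factor of $\pce$.

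Crucially, the output is a single classical pair $(\decomm_0,\decomm_1)$ produced in one execution, so only plain \pq{} computational binding is used, not sum-binding or collapse-binding. Combining the bounds, the binding adversary's success is $\Omega(\varepsilon^3\pce^2)$, which is non-negligible, a contradiction; hence $\varepsilon$ must be negligible.
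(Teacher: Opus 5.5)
Your reduction is the paper's own. You fix the classical prefix $(t_{\mathrm{bc}},j,h_0,h_1,z)$ produced before $r_0,r_1$ are sampled, and treat $r_0\Vert r_1\mapsto$ (prediction) as a noisy Hadamard oracle for $x_0\Vert x_1$. In the classical case you use rewinding with classical Goldreich--Levin and an $\mathsf{Open}$ filter. In the quantum case you use a single-shot Adcock--Cleve extraction, with success at least $4\beta_\sigma^2$ on a prefix of bias $\beta_\sigma$. One classical double opening is produced, so plain post-quantum binding suffices.

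One ingredient you list is wrong and has to be dropped: the sender cannot compute the verifier's continue/abort decision, coherently or otherwise. That decision asks whether $\mathcal{X}_{b,t_{\mathrm{bc}}}\cap h_b^{-1}(z)$ contains exactly one valid opening for each $b$. For a generic SHBC this is not efficiently computable; it is precisely the step that forces exhaustive search in the QROM instantiation. A sender that performs it is therefore not QPT, and the contradiction with binding is lost. It is also unnecessary. Continuation is a deterministic function of the classical prefix, which is fixed before any Goldreich--Levin query. So the sender just runs the extraction on whatever prefix it obtained, and outputs the candidate pair only if both $\mathsf{Open}$ checks pass. The continuation set appears only in the analysis, via the mass of continuing prefixes. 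Your averaging over good prefixes already accounts for this, so the extra factor $\gamma$ was double-counted.

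A smaller difference: the paper measures the advantage in weighted form, $\delta=\Pr[\text{guess}=X\wedge\mathtt{CE}]-\tfrac12\Pr[\mathtt{CE}]$. In the quantum case it replaces thresholding by Cauchy--Schwarz over positive-bias prefixes, obtaining success at least $4\delta^2/\gamma\geq4\delta^2$. This is tighter than your $\Omega(\varepsilon^3\gamma)$ and needs no lower bound on $\gamma$. Your conditional formulation instead implicitly requires $\varepsilon$ and $\gamma$ to be simultaneously noticeable on a common infinite set of $\lambda$. That holds under the $\gamma\geq1/\mathrm{poly}(\lambda)$ reading of the statement.
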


The classical statement recovers the \poq{} guarantee of the original \moya{} construction by excluding efficient classical strategies.
The quantum statement is the new ingredient needed for randomness certification.

At a high level, the quantum statement is proved by showing that any \qpt{} algorithm that predicts~$\imap(\trans)$ from the transcript and its residual quantum state with non-negligible advantage can be converted, using a quantum Goldreich--Levin procedure~\cite{qgl}, into a \qpt{} sender that recovers both~$\decomm_0$ and~$\decomm_1$ from a single commit-phase execution.
These values yield valid openings of the same commitment transcript to both~$0$ and~$1$, contradicting \pq{} computational binding.

The quantum virtual-input hiding guarantee in Theorem~\ref{thm:intro_virtual_input_hiding}, together with the computational Bell inequality framework of~\cite{cbi}, yields our main entropy-certification result.

\begin{theorem}[Entropy certification]
    \label{thm:main}

    Assume the existence of a \shbc{} satisfying \pq{} computational binding.
    Let~$\pce(\security)>0$ denote the probability that \iidProtocol{} reaches Phase~B.
    Conditioned on reaching Phase~B, suppose that the prover is accepted with probability
    \begin{equation}
        \chshopt-\varepsilon
        \; .
    \end{equation}
    where~$\chshopt$ is the optimal quantum CHSH acceptance probability.
    Then, there exist a universal constant~$\entropyConst>0$ such that
    \begin{equation}
        \label{eq:thm_main}
        \vne\!\left(
            \bellOutcomeRV
            \mid
            \transRV,\bellChallengeRV,\eveReg
        \right)_{\rho\mid\contEvent}
        \geq
        \entropyConst
        -
        \O
		\left(
			\sqrt{\varepsilon}
			\hspace{2pt}
		\right)
        -
        \frac{1}{\pce}
        \negl(\security)
        \; .
    \end{equation}
    Here,~$\vne$ denotes the conditional von Neumann entropy,~$\negl$ denotes a negligible function of the security parameter,~$\transRV$ is the classical Phase-A transcript,~$\bellChallengeRV$ is the public Phase-B challenge, and~$\eveReg$ is arbitrary quantum side information held by an adversary.
\end{theorem}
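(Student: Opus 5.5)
The plan is to reduce the entropy bound to an uncertainty relation for the prover's two Phase-B measurements. The quantum virtual-input hiding of Theorem~\ref{thm:intro_virtual_input_hiding} is what forces those measurements to be approximately anticommuting. I would first move to the post-Phase-A cq state $\postPhaseState{}$ conditioned on $\contEvent$, classical on $\transRV$ and quantum on $\proverReg\eveReg$. Without loss of generality the adversary purifies the prover, so $\eveReg$ holds a purification of $\proverReg$ for each transcript. The prover's Phase-B behaviour on challenge $\bellChallenge$ is modelled, after Naimark dilation, as a binary projective observable $\bellMeasurement_\bellChallenge^{\trans}$ acting on $\proverReg$, with outcome $\bellOutcome$.

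Under the Bell mapping, the acceptance condition $\virtualOutput\oplus\bellOutcome=\virtualInput\cdot\bellChallenge$ says the following. On challenge $0$ the prover must output $\omap(\trans)$. On challenge $1$ it must output $\omap(\trans)\oplus\imap(\trans)$.

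\textbf{Step 1: from near-optimal winning to approximate anticommutation.} Write $\anticommSquare$ for the averaged anticommutator quantity
\begin{equation*}
\anticommSquare=\E_{\trans}\big\|\{\bellMeasurement_0^{\trans},\bellMeasurement_1^{\trans}\}\big\|^2_{\rho_\trans},
\end{equation*}
where $\rho_\trans$ is the residual state given transcript $\trans$. The aim is to show that $\anticommSquare\le \O(\varepsilon)+\frac{1}{\pce}\negl(\security)$. This follows the computational-Bell argument of~\cite{cbi}, in which a large anticommutator is turned into a predictor for the virtual input. I would define a \qpt{} predictor $\adversary$ that, given $(\trans,\rho_\trans)$, first measures $\bellMeasurement_0$ to learn (approximately) $\omap(\trans)$, then measures $\bellMeasurement_1$, and outputs the XOR of the two outcomes.

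Its advantage is governed by the success probability of the sequential measurement, which equals the individual winning probabilities corrected by a term of order the commutator. A CHSH-type sum-of-squares decomposition (the standard self-testing SOS for $\chshopt$) then relates the two quantities: winning probability $\chshopt-\varepsilon$ and predictor advantage at most $\negl/\pce$ together force $\bellMeasurement_0,\bellMeasurement_1$ to be $\O(\sqrt{\varepsilon})$-close to anticommuting. Quantum hiding enters exactly here, because the predictor uses the residual quantum state. I expect this to be the main obstacle, for two reasons. The virtual input is determined by $\trans$ jointly with the virtual output rather than being sampled independently, so the usual CHSH SOS has to be redone with the "Alice" observable replaced by transcript-dependent classical bits. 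And the conditioning on $\contEvent$ turns the negligible advantage into $\negl/\pce$. If the direct SOS proves awkward, I would first try the rigidity-style bound from~\cite{cbi} as a black box, passing through Theorem~\ref{thm:intro_virtual_input_hiding} conditioned on continuing.

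\textbf{Step 2: entropic uncertainty.} Conditioned on $\trans$, I would apply the entropic uncertainty relation of~\cite{eur} with quantum memory $\eveReg$ to the two measurements, using the bound
\begin{equation*}
H(\bellOutcomeRV_0|\eveReg)+H(\bellOutcomeRV_1|\eveReg)\ge -\log c,
\end{equation*}
where the overlap $c$ is controlled by the anticommutator; since the prover's state is purified by $\eveReg$, the term $H(\proverReg|\eveReg)$ is nonnegative, or it can be handled through the guessing-probability form. Approximate anticommutation gives $c\le \frac12+\O(\sqrt{\anticommSquare})$. Averaging over the uniform challenge $\bellChallengeRV$ then yields
\begin{equation*}
H(\bellOutcomeRV|\transRV\bellChallengeRV\eveReg)\ge \tfrac12\big(1-\log(1+\O(\sqrt{\anticommSquare}))\big).
\end{equation*}
The bound must hold pointwise per transcript, so Jensen/concavity is used to pass to averages.

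\textbf{Step 3: combining.} Substituting the bound on $\anticommSquare$ from Step 1 gives the $\O(\sqrt{\varepsilon})$ term and the $\frac{1}{\pce}\negl(\security)$ term, and $\entropyConst$ is the ideal value ($1/2$ or a constant below it). Some care is needed with the square root of the negligible term, but $\sqrt{\negl}$ is still negligible.
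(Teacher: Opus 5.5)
Your Step~2 fails, and it carries the whole proof. With quantum memory, the uncertainty relation for two measurements on $\proverReg$ reads $\vne(\bellOutcomeRVAt{0}\mid\eveReg)+\vne(\bellOutcomeRVAt{1}\mid\eveReg)\geq\log(1/c)+\vne(\proverReg\mid\eveReg)$. When $\eveReg$ purifies $\proverReg$, one has $\vne(\proverReg\mid\eveReg)=-\vne(\proverReg)\leq 0$, not $\geq 0$. Your purification convention therefore makes the correction term as negative as possible. Concretely, if the prover's qubit is maximally entangled with $\eveReg$, then $\eveReg$ predicts the outcomes of two perfectly anticommuting observables with certainty. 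The left-hand side is then $0$ while $\log(1/c)=1$. Any variant that conditions both outcomes on the same purifying $\eveReg$, including guessing-probability forms, has the same defect. So anticommutation alone certifies nothing against entangled side information. What excludes the EPR strategy is the CHSH condition itself: it forces the prover's output to agree with the transcript-determined bit $\omap(\transRV)$, and this has to enter the entropy bound.

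The paper does this with the tripartite form $\vne(\bellMeasurement_1\mid\eveReg)\geq\log(1/c)-\vne(\bellMeasurement_0)$, which has an \emph{unconditioned} entropy on the right. It then shows that $\bellMeasurement_0$ has constant bias. Writing $2\pacc-1=S_{0,0}+S_{0,1}+S_1$, it bounds $S_1$ through the anticommutator. It bounds $S_{0,1}\leq\frac14+\frac{1}{\pce}\negl(\security)$ through the balance of $\virtualInputRV$, which follows from quantum hiding applied to constant predictors. Together these give an $\bellMeasurement_0$-bias of at least $\sqrt2-1-\O(\sqrt{\varepsilon})$ on transcripts with $\virtualInputRV=0$. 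The uncertainty relation is then applied only to challenge~$1$ on those transcripts. This detour costs constant factors, which is why the paper's constant is $\entropyConst=\goodBlockEntropy/(4(1+10\sqrt2))$ rather than the value near $\frac12$ your Step~2 would produce.

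There is also a localization gap. The overlap $c$ in the uncertainty relation is a state-independent worst case over eigenvectors of the full observables. An averaged, state-dependent bound on $\anticommSquare$ therefore cannot give $c\leq\frac12+\O(\sqrt{\anticommSquare})$. A single Jordan block on which $\bellMeasurement_0$ and $\bellMeasurement_1$ commute makes $c=1$, however small its weight, and averaging over transcripts does not help. You need the paper's Step~2 machinery:
\begin{itemize}
\item Jordan's lemma, to split the space into qubit blocks;
\item conditioning on the block index, which commutes with $\bellMeasurement_1$ and so only lowers the entropy;
\item the uncertainty relation applied block by block;
\item Markov's inequality, to bound the weight of nearly commuting blocks.
\end{itemize}
Your Step~1, by contrast, is fine. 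Measuring both observables and outputting the XOR predicts $\virtualInputRV$, which is the mechanism behind \cite[Theorem~4.7]{brakerski2023simple}. The paper uses that theorem as a black box.
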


The entropy bound follows from the computational Bell inequality framework once quantum virtual-input hiding is established.
We also supply an \emph{ad hoc} proof tailored to this protocol in the appendix, as it may be insightful and of independent interest.
The direct proof uses Jordan's lemma to decompose the prover's state and measurements into two-dimensional blocks, each of which can be viewed as a qubit equipped with a pair of binary measurements.
Near-optimal performance implies approximate anticommutation on average, and the analysis identifies a constant fraction of blocks on which the measurements are nearly complementary and the state has the bias required for entropy certification.
Applying entropic uncertainty relations~\cite{eur} on these blocks then yields a lower bound on the conditional von Neumann entropy of the prover's output given the protocol transcript and any quantum side information held by an adversary.

\paragraph{Entropy accumulation.}
Entropy-accumulation techniques and their subsequent refinements provide general tools for converting suitable single-round entropy bounds into finite-size guarantees for sequential protocols~\cite{dupuis_2019,metger2022generalised,arqand2024mutualinformationchainrules,eat_marginal,reyni_eat}, and have been applied to a variety of quantum-cryptographic tasks, including quantum key distribution, randomness certification and expansion, and other device-independent cryptographic protocols~\cite{Arnon-Friedman2018,qkd_eat,randomness_eat}.

The conditional von Neumann entropy bound of Theorem~\ref{thm:main} provides the single-round entropy guarantee needed for such an analysis.
In Section~\ref{sec:eat}, we use this bound to construct a min-tradeoff function and apply the entropy accumulation theorem to a sequential repetition of \iidProtocol{}.
The resulting protocol uses parameter estimation to test the continuation probability and CHSH winning probability, and certifies smooth min-entropy in the generation-round outputs conditioned on the adversary's quantum side information.
This completes the passage from the single-round computational Bell analysis to a full sequential randomness-certification protocol.

\subsection{Instantiations and the Complexity of Verification}
\label{subsec:into_instantiations}

Theorem~\ref{thm:main} assumes the existence of a \shbc{} satisfying \pq{} computational binding.
This raises two distinct questions.
First, can such a commitment primitive be instantiated from concrete cryptographic assumptions?
Second, for a given instantiation, can the verifier efficiently identify and recover the valid openings required by the hidden-qubit preparation procedure?
The first question concerns the cryptographic assumptions required for the security guarantee of Theorem~\ref{thm:main}, whereas the second concerns the computational complexity of verification and is not required for the theorem itself.

We address these questions through two instantiations, based respectively on \pq{} \tcfs{} and the \qrom{}.
Here, we only summarize the properties relevant to security and verifier efficiency; the constructions and their cryptographic analyses are given in detail in Section~\ref{sec:instantiations}, with the \tcf{} and \qrom{} instantiations presented in Subsections~\ref{subsec:tcf_commitment} and~\ref{sec:qrom_commitment}, respectively.
Both instantiations satisfy the cryptographic requirements of Theorem~\ref{thm:main}, but differ in the additional structure available to the verifier for recovering and certifying the relevant openings.

\paragraph{Quantum-secure \tcfshort{}.}
Our first instantiation uses a \pq{} \tcf{} family.
The underlying commitment is the natural \pq{} analogue of the standard bit commitment from a claw-free pair of functions~\cite{tcf_commitment,Chase2013}.
The common-image and injectivity properties make this commitment statistically hiding, while quantum claw-freeness prevents a \qpt{} sender from producing valid openings of the same commitment to both bits.
The trapdoor additionally allows the verifier to efficiently recover the relevant openings and certify their uniqueness, yielding an efficient classical verifier.

This instantiation is not intended to improve on existing \tcf{}-based randomness-certification protocols~\cite{brakerski2021cryptographic,merkulov2023entropy}.
Rather, expressing the construction through the commitment abstraction separates the cryptographic properties needed for security from the additional structure needed for efficient verification.
In the \tcf{} instantiation, statistical hiding and \pq{} binding provide the security properties required by our framework, while the trapdoor is used only to efficiently recover and certify the relevant openings.
This separation becomes meaningful in the \qrom{} instantiation below, where the same security properties can be obtained without a trapdoor, but efficient verification is lost.
Thus, the \tcf{} instantiation serves both as a consistency check for our abstraction and as a point of comparison that isolates the role of the trapdoor in efficient verification.

\paragraph{Quantum random oracle.}
Our second instantiation works in the \qrom{}.
Let~$\oracle$ denote the random oracle, and consider the standard hash-based commitment~\cite{qrom_commitment,wonderful_qrom}
\begin{equation}
    c
    =
    \oracle(\bit\Vert\rand)
    \; .
\end{equation}
where~$\bit\in\{0,1\}$ is the committed bit,~$\rand$ is a uniformly random string held by the sender until the opening phase, and~$c$ is the resulting commitment string.
When the randomness space is sufficiently larger than the output space of~$\oracle$, the commitment is statistically hiding.
Two valid openings to different bits yield a collision in the random oracle, which gives \pq{} computational binding.
Thus, the \qrom{} provides the cryptographic properties required by Theorem~\ref{thm:main} without an additional computational assumption.

The verifier in this instantiation is, however, inefficient.
For a commitment~$c$ and bit~$\bit$, let
\begin{equation}
    \Decomm_{\bit,c}
    \coloneqq
    \left\{
        \rand :
        \oracle(\bit\Vert\rand)=c
    \right\}
    \; .
\end{equation}
Recall that the hidden-qubit preparation procedure uses pairwise-independent hashing to isolate a unique valid opening for each commitment value.
In particular, it samples hash functions~$\hashfunc_0,\hashfunc_1$ and obtains a hash value~$\hashval$, as described in \iidProtocol{}; the corresponding \qrom{} instantiation is presented in detail in Subsection~\ref{sec:qrom_commitment}.
The verifier must then determine whether
\begin{equation}
    \left|
        \Decomm_{\bit,c}
        \cap
        \hashfunc_{\bit}^{-1}(\hashval)
    \right|
    =
    1
\end{equation}
for both~$\bit\in\{0,1\}$.
In other words, for each commitment branch~$\bit$, the verifier checks that exactly one valid opening~$\rand$ is consistent with the observed hash value~$\hashval$, and recovers this unique opening.
The random oracle supplies no trapdoor or algebraic structure for this exact search-and-counting task, so our verifier performs it by exhaustive search.
The \qrom{} construction therefore gives an inefficiently verifiable, commitment-based route to randomness certification and isolates unique-opening decoding as the source of the inefficiency.

\subsection{Related Work}

A substantial line of work studies verification of quantum behavior and randomness certification using a single quantum device and a classical verifier~\cite{brakerski2021cryptographic,merkulov2023entropy,fefferman_randomness,aaronson2023certified,cbi}.
Unlike standard Bell tests with spatially separated devices, these protocols replace the physical no-signaling assumption with computational restrictions on what a single efficient prover can learn or compute.
\tcfs{} have played a central role in efficient classical verification and randomness certification~\cite{mahadev2018classical,kcvy,brakerski2021cryptographic,mahadev2022efficient,brakerski2023simple}.

\paragraph{Single-device verification and \poqs{}.}
The protocol we present in this work originates in the Morimae--Yamakawa \poq{} from \owfs{}~\cite{MYowf23}.
The~\moya{} protocol uses the \poq{} of Kahanamoku-Meyer \emph{et al.}~\cite{kcvy} as a subroutine.
We replace this component with the simplified hidden-qubit construction of Brakerski \emph{et al.}~\cite{brakerski2023simple}, which makes the virtual-input analysis and the reduction to commitment binding more transparent.
The resulting construction preserves the commitment-based structure while exposing the precise strengthening needed for entropy certification.

\paragraph{Remote state preparation and qubit testing.}
Closely related to our hidden-qubit preparation step is the literature on remote state preparation with verifiability, which formalizes the task of preparing and certifying a quantum state held by a remote quantum server.
Zhang~\cite{zhang2025formulations} develops a systematic framework for this primitive, studies its relation to tests of a qubit and to verifiable preparation of BB84 states, and gives constructions for broader state families, including coherent two-branch states of the form
\begin{equation}
    \frac{1}{\sqrt{2}}
    \left(
        \ket{0}\ket{x_0}
        +
        \ket{1}\ket{x_1}
    \right)
    \; .
\end{equation}
This is conceptually related to the coherent pair of valid openings produced in our construction.
The objectives and security guarantees are nevertheless different: rather than establishing a full remote-state-preparation functionality, we use the resulting hidden qubit to define a computationally hidden virtual input and derive randomness against quantum side information.

\paragraph{Randomness certification in the \qrom{}.}

Although our construction is not specific to the \qrom{}, one of its instantiations yields randomness certification in this model.
We therefore compare this instantiation with previous single-device randomness-certification protocols in the \qrom{}.

The Fourier-sampling construction of~\cite{fefferman_randomness} obtains certified randomness without additional computational assumptions, but does not establish security against an adversary holding quantum side information entangled with the device.
Extending their randomness guarantee to this adversarial setting is left as an open question in~\cite{fefferman_randomness}.
Aaronson and Hung~\cite{aaronson2023certified}, on the other hand, establish randomness certification in the random-oracle model against adversaries holding entangled quantum side information, albeit with inefficient classical verification.

The construction of~\cite{no_structure_quantum_advantage} achieves efficient verification and security against quantum side information, conditional on the Aaronson--Ambainis conjecture~\cite{aaronsonambainis2009}.
More recently, Khurana \emph{et al.}~\cite{randomness_no_aa} removed this conjecture for adversaries of quantum-query depth~$o(\log\security)$.
Subsequently, Coladangelo \emph{et al.}~\cite{unconditional_randomness} established efficient, publicly verifiable randomness certification against computationally unbounded adversaries making a subexponential number of adaptive quantum random-oracle queries.
Their result substantially strengthens the unconditional security guarantees available in the \qrom{}, although it does not explicitly establish conditional min-entropy against an external adversary holding quantum side information entangled with the device.

Our construction follows a different approach.
Rather than relying on the structure of the Yamakawa--Zhandry protocol, we establish quantum virtual-input hiding from the computational binding of statistically hiding commitments.
This yields randomness certification against quantum side information without additional computational assumptions in the \qrom{}, at the cost of inefficient classical verification.
Table~\ref{tab:randomness_comparison} summarizes the different assumptions, security guarantees, and verification complexities.

\begin{table}[ht]
    \centering
    \small
    \setlength{\tabcolsep}{3pt}
    \renewcommand{\arraystretch}{2.6}
    \newcommand{\lineSpacing}{-2.5pt}

    \begin{tabular}{lcccccc}
        \hline
        \textbf{Protocol}
        & \cite{no_structure_quantum_advantage}
        & \cite{randomness_no_aa}
        & \cite{unconditional_randomness}
        & \cite{fefferman_randomness}
        & \cite{aaronson2023certified}
        & \textbf{This work}
        \\
        \hline

        \makecell[l]{
            \textbf{Efficient}\\[\lineSpacing]
            \textbf{verification}
        }
        & Yes
        & Yes
        & Yes
        & No
        & No
        & No
        \\

        \makecell[l]{
            \textbf{Additional}\\[\lineSpacing]
            \textbf{assumption}
        }
        & \makecell{AA\\[\lineSpacing]conjecture}
        & None
        & None
        & None
        & None
        & None
        \\

        \makecell[l]{
            \textbf{Quantum}\\[\lineSpacing]
            \textbf{side}\\[\lineSpacing]
            \textbf{information}
        }
        & Yes
        & Yes
        & \makecell{Not\\[\lineSpacing]established}
        & \makecell{Not\\[\lineSpacing]established}
        & Yes
        & Yes
        \\

        \makecell[l]{
            \textbf{Quantum-oracle}\\[\lineSpacing]
            \textbf{query restriction}
        }
        & None
        & \makecell{$o(\log\security)$\\[\lineSpacing]query depth}
        & None
        & None
        & None
        & None
        \\

        \hline
    \end{tabular}

    \caption{
        \footnotesize
        Comparison of single-device randomness-certification protocols in the \qrom{}.
        The comparison is adapted from Table~1 of~\cite{fefferman_randomness}, with additional works and comparison criteria relevant to the present setting.        
		The quantum side information row indicates whether the cited security guarantee explicitly covers an external adversary holding quantum side information entangled with the device.
		The quantum-oracle query restriction row records restrictions beyond the polynomial-time quantum adversary model used as the baseline for comparison.
		Here, ``None'' indicates that the result applies to this baseline without an additional query-depth or query-count restriction.
		In particular,~\cite{unconditional_randomness} establishes security against a larger class of computationally unbounded, subexponential-query adversaries, whereas~\cite{randomness_no_aa} requires quantum-query depth~$o(\log\security)$.
		``Not established'' indicates that the cited result does not explicitly prove the corresponding quantum-side-information guarantee.
    }
    \label{tab:randomness_comparison}
\end{table}

\paragraph{Computational Bell inequalities.}
Our work builds heavily on the framework of computational Bell inequalities introduced in~\cite{cbi}; see Subsection~\ref{subsec:intro_hidden_virtual_inputs} for an overview.
Although~\cite{cbi} primarily applies this framework to analyze existing cryptographic protocols through their associated Bell mappings, here we use it as a tool for constructing and proving the security of a new randomness-certification protocol.
In particular, once quantum hiding of the virtual input is established from the underlying commitment scheme, the computational Bell inequality framework allows us to translate this cryptographic property into an entropy guarantee against quantum side information.
Thus, our construction illustrates how the framework can be used modularly: the protocol-specific cryptographic analysis establishes the required virtual-input hiding property, while the Bell analysis then yields randomness certification.

\subsection{Organization}
\label{subsec:organization}

Section~\ref{sec:preliminaries} introduces the required notation and background on conditional entropy, entropic uncertainty relations, bit commitments, computational Bell inequalities, hidden virtual inputs, and \tcfs{}.
Section~\ref{sec:randomness_certification} presents our commitment-based single-round protocol and its Bell mapping.
We establish classical and quantum virtual-input hiding, proving the latter through a reduction to \pq{} computational binding using the quantum Goldreich--Levin theorem.
We then derive the corresponding \poq{} and single-round randomness-certification guarantees.
Section~\ref{sec:eat} extends the construction to a sequential protocol.
Using the single-round entropy bound, we construct a min-tradeoff function and apply the entropy accumulation theorem to obtain a lower bound on the conditional smooth min-entropy of the generation string against quantum side information.
Section~\ref{sec:instantiations} instantiates the commitment primitive using \tcfs{} and the \qrom{}.
We examine the distinction between efficient and inefficient verification and identify the unique-opening check as the main obstacle to efficient verification in the plain \qrom{}.

\section{Preliminaries}
\label{sec:preliminaries}

\subsection{Notation}

We let~$\N$ denote the set of natural numbers and use~$\security\in\N$ for the security parameter.
We use the terminology of negligible functions.
A function~$\mu:\N\to\mathbb{R}_{\geq 0}$ is \emph{negligible} if, for every polynomial~$\poly$,
\begin{equation}
	\poly(\security)\mu(\security)
	\xrightarrow{\security\to\infty}
	0
	\; .
\end{equation}
We write~$\negl(\security)$ to denote an arbitrary negligible function of~$\security$.

We denote the Pauli operators by~$\sigmax$, $\sigmay$, and~$\sigmaz$, and the identity operator by~$\id$.
For an operator~$A$, we denote its operator norm by
\begin{equation}
	\|A\|_\infty
	\coloneqq
	\sup_{\ket{\psi}:\,\|\ket{\psi}\|=1}
	\|A\ket{\psi}\|
	\; .
\end{equation}
For any statement~$S$, we denote its indicator by~$\indicator{S}$, which equals~$1$ if~$S$ is true and~$0$ otherwise.
For random variables~$X$ and~$Y$ over a finite sample space~$\Omega$, we denote their statistical distance by
\begin{equation}
	\Delta(X,Y)
	\coloneqq
	\frac{1}{2}
	\sum_{\omega\in\Omega}
	\left|
		\Pr[X=\omega]
		-
		\Pr[Y=\omega]
	\right|
	\; .
\end{equation}

\subsection{Entropic Uncertainty Relations and Jordan's Lemma}
\label{sec:entropic_uncertainty_relations}

\begin{lemma}[Jordan's lemma]\label{lemma:jordan}
	Let~$U_1,U_2$ be two self-adjoint unitary operators acting on a Hilbert space~$\mathcal{H}$ of~a countable dimension.
	Let~$L$ be a normal operator acting on the same space such that~
	$[L,U_{1}]=[L,U_{2}]=0$.
	Then, there exists a decomposition of the Hilbert space into a direct sum of orthogonal subspaces~$\mathcal{H} = \oplus_{\alpha} \mathcal{H}_{\alpha}$
	such that for all~$\alpha$, $\dim{(\mathcal{H}_{\alpha})} \leq 2$ and given~$\ket{\psi} \in
	\mathcal{H}_\alpha$, all three operators satisfy~
	$U_{1}\ket{\psi},U_{2}\ket{\psi},L\ket{\psi}\in\mathcal{H}_{\alpha}$.
\end{lemma}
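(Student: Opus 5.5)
The statement to prove is Jordan's lemma (Lemma~\ref{lemma:jordan}). I will derive it from the standard two-projection argument, adding the commuting normal operator $L$ afterwards.

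\textbf{Step 1: reduce to finite-dimensional blocks that also diagonalize $L$.}
Since $L$ is normal and commutes with $U_1$ and $U_2$, the Fuglede--Putnam theorem shows that $L^\dagger$ also commutes with both. The plan is to first decompose $\mathcal{H}$ into joint invariant pieces on which $L$ acts as a scalar. The spectral projections of $L$ commute with $U_1,U_2$, so each eigenspace $\mathcal{H}_\mu=\ker(L-\mu)$ is invariant under $U_1$ and $U_2$.

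For a countable-dimensional space where $L$ has continuous spectrum, I would not work with the whole space at once. Instead I would work inside the algebra generated by $U_1U_2$: the block structure comes from the spectral decomposition of $W=U_1U_2$ together with $U_1$. The cleanest route is to treat the case where $L$ is diagonalizable first, restrict to each $\mathcal{H}_\mu$, and prove the lemma there for $U_1,U_2$ alone. Every subspace found inside $\mathcal{H}_\mu$ is then automatically $L$-invariant, since $L=\mu\id$ there.

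\textbf{Step 2: the two-reflection decomposition.}
On a space invariant under both reflections, consider the unitary $W=U_1U_2$. It satisfies $U_1WU_1=W^\dagger$, and likewise $U_2WU_2=W^\dagger$.

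Take an eigenvector $\ket{v}$ of $W$ with eigenvalue $e^{i\theta}$. Then $U_1\ket{v}$ is an eigenvector with eigenvalue $e^{-i\theta}$, and $U_2\ket{v}=U_1W\ket{v}=e^{i\theta}U_1\ket{v}$. Hence $\mathrm{span}\{\ket{v},U_1\ket{v}\}$ is invariant under $U_1$, $U_2$ and $W$, and has dimension at most $2$.

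To obtain an orthogonal direct sum, I would do the following:
\begin{enumerate}[label=(\roman*)]
\item Group the spectrum of $W$ into pairs $\{e^{i\theta},e^{-i\theta}\}$ with $\theta\in(0,\pi)$, plus the real eigenvalues $\pm1$.
\item For each $\theta\in(0,\pi)$, pick an orthonormal basis $\{\ket{v_k}\}$ of $\ker(W-e^{i\theta})$ and pair each $\ket{v_k}$ with $U_1\ket{v_k}$. These images lie in the orthogonal eigenspace $\ker(W-e^{-i\theta})$ and remain orthonormal, because $U_1$ is unitary. This gives orthogonal two-dimensional blocks.
\item On $\ker(W\mp1)$, $U_1$ and $U_2=\pm U_1$ commute. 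Diagonalizing $U_1$ there gives one-dimensional blocks.
\end{enumerate}

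\textbf{Step 3: the main obstacle.}
The hard part is countable (infinite) dimension, where $W$ need not have pure point spectrum. I would first try to assume point spectrum, which is all the paper needs since its registers are finite-dimensional. If full generality is required, I would appeal to the structure theory of the $C^*$-algebra generated by two projections $P_i=(\id+U_i)/2$. By Halmos' two-subspace theorem, $\mathcal{H}$ splits into a generic part unitarily equivalent to $\mathbb{C}^2\otimes\mathcal{K}$, plus commuting pieces on which $U_1$ and $U_2$ act as $\pm1$.

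With that caveat, I would state and prove the finite/point-spectrum version, which is the form used later with the prover's measurement observables and the commuting operator $L$.
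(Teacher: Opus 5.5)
The paper states this lemma without proof, as a standard fact. Your argument is correct in finite dimension, and more generally whenever $L$ and $W=U_1U_2$ (restricted to each $\ker(L-\mu)$) have pure point spectrum. It is the standard argument: split $\mathcal{H}$ into eigenspaces of $L$, pair each eigenvector $\ket{v}$ of $W$ with eigenvalue $e^{i\theta}$, $\theta\in(0,\pi)$, with $U_1\ket{v}$, and diagonalize $U_1$ on $\ker(W\mp\id)$. In (ii), state explicitly that $U_1$ maps $\ker(W-e^{i\theta})$ \emph{onto} $\ker(W-e^{-i\theta})$. This holds because $U_1$ maps each space into the other and $U_1^2=\id$, and it is what guarantees that your blocks exhaust $\mathcal{H}$.

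The genuine gap is the countably-infinite-dimensional case, and it cannot be closed, because the statement is false there. Take $\mathcal{H}=L^2([0,1])\otimes\mathbb{C}^2$, $L=\id$, $U_1=\id\otimes\sigmaz$, and $U_2$ the multiplication operator $x\mapsto\cos(\pi x/2)\sigmaz+\sin(\pi x/2)\sigmax$. Then $W=U_1U_2$ acts pointwise with eigenvalues $e^{\pm i\pi x/2}$. For any fixed $\lambda$ this eigenvalue condition holds only on a null set, so $W$ has no eigenvectors. Yet any nonzero finite-dimensional subspace invariant under $U_1$ and $U_2$ is invariant under $W$, and so contains an eigenvector of $W$. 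The $L$ part fails in the same way: take $U_1=U_2=\id$ and $L$ multiplication by $x$ on $L^2([0,1])$; every block would have to contain an eigenvector of $L$, and there is none.

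Consequently your fallback via Halmos' two-subspace theorem does not deliver the conclusion. The generic part $\mathbb{C}^2\otimes\mathcal{K}$ splits into two-dimensional reducing subspaces only when the angle operator on $\mathcal{K}$ has pure point spectrum, which is exactly what fails in the example above. Your remark about handling continuous-spectrum $L$ ``inside the algebra generated by $U_1U_2$'' fails for the same reason.

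Rather than hedging, restrict the statement to finite-dimensional $\mathcal{H}$, or add a pure-point-spectrum hypothesis. That is all the entropy analysis in the appendix needs, since the prover's registers consist of polynomially many qubits, and for that version your proof is complete.
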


\begin{definition}[Von Neumann entropy]
	Let $\rho \in \mathcal{D}(\mathcal{H})$ be a density operator.
	The \emph{von Neumann entropy} of $\rho$ is defined as
	\begin{equation}
		\vne(\rho)
		\coloneqq
		-\tr\!\left[
			\rho \log \rho
		\right].
	\end{equation}
\end{definition}

\begin{definition}[Conditional von Neumann entropy]
	Let $\rho_{AE}$ be a density over~$\mathcal{H}_A \otimes \mathcal{H}_E$.
	The \emph{conditional von Neumann entropy} of $A$ given $E$ is
	\begin{equation}
		\vne(A \mid E)_\rho
		\coloneqq
		\vne(\rho_{AE})
		-
		\vne(\rho_E) \; ,
	\end{equation}
	where
	\begin{equation}
		\rho_E
		=
		\tr_A(\rho_{AE}) \; .
	\end{equation}
\end{definition}

\begin{definition}[Von Neumann entropy of a measurement]
	Let~$M={M_0,M_1}$ be a binary measurement on a register~$A$, and let~$\rho_{AE}$ be a density operator.
	Measuring~$A$ with~$M$ produces a classical outcome~$X\in\{0,1\}$ and the classical--quantum state
	\begin{equation}
		\rho_{XE}
		\coloneqq
		\sum_{x\in\{0,1\}}
		\dyad{x}
		\otimes
		\tr_A\!\left[
		(M_x\otimes I_E)\rho_{AE}
		\right]
		\; .
	\end{equation}
	We define the von Neumann entropy of the measurement~$M$, and its conditional von Neumann entropy given~$E$, as the corresponding entropies of its classical outcome:
	\begin{equation}
		\vne(M)_\rho
		\coloneqq
		\vne(X)_{\rho}
		\qquad\text{and}\qquad
		\vne(M\mid E)_\rho
		\coloneqq
		\vne(X\mid E)_{\rho}
		\; .
	\end{equation}
\end{definition}

\begin{definition}[Squared overlap]
    \label{def:squared_overlap}
    Let~$\{\ket{u_i}\}_i$ and~$\{\ket{v_j}\}_j$ be two orthonormal bases of a finite-dimensional Hilbert space.
    Their \emph{squared overlap} is defined as
    \begin{equation}
        c
        \coloneqq
        \max_{i,j}
        \left|
            \langle u_i \mid v_j \rangle
        \right|^2
        \; .
    \end{equation}
    In this work, we only consider binary qubit observables.
    In particular, for
    \begin{equation}
        B_0 = \sigmaz
        \qquad\text{and}\qquad
        B_1 =
        \cos(\theta)\sigmaz
        +
        \sin(\theta)\sigmax
        \; ,
    \end{equation}
    where~$\theta\in[0,\pi/2]$ is the angle between their measurement axes, the squared overlap of their eigenbases is
    \begin{equation}
        c
        =
        \cos^2\left(\frac{\theta}{2}\right)
        =
        \frac{1}{2}
        +
        \frac{1}{2}\cos(\theta)
        \; .
    \end{equation}
\end{definition}

\begin{remark}
    The qubit representation above is without loss of generality for a pair of nondegenerate binary qubit observables.
    Up to a choice of basis and a relabeling of measurement outcomes, any such pair can be written in the stated form for some~$\theta\in[0,\pi/2]$.
\end{remark}

\begin{lemma}[Entropic uncertainty principle~{\cite{eur} (Supplementary---Corollary 2)}]
	\label{lemma:uncertainty_relations}
	Let~$\Pi$ and~$M$ be two binary observables acting on a qubit over~$\mathcal{H}_\proverReg$, and let~$c$ denote the squared overlap of their eigenbases.
	For any density operator
	\begin{equation}
		\rho
		\in
		\mathcal{H}_\verifierReg
		\otimes
		\mathcal{H}_\proverReg
		\otimes
		\mathcal{H}_\eveReg
		\; ,
	\end{equation}
	it holds that
	\begin{equation}
		\vne(\Pi \mid \eveReg)
		\geq
		\log_2(1/c)
		-
		\vne(M)
		\; .
	\end{equation}
\end{lemma}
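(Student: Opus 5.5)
Because this is a cited result from~\cite{eur}, I would derive it from the tripartite entropic uncertainty relation with quantum memory, in the relative-entropy formulation. I would prove
\begin{equation}
	\vne(\Pi\mid\eveReg)+\vne(M\mid\verifierReg)\geq\log_2(1/c)\; ,
\end{equation}
and then use $\vne(M\mid\verifierReg)\leq\vne(M)$, since conditioning does not increase entropy. The stated bound follows immediately. Both observables act on the qubit $\proverReg$; $\verifierReg$ plays the role of the second memory and $\eveReg$ the first.

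\textbf{Step 1 (purification).} Purify $\rho_{\verifierReg\proverReg\eveReg}$ by enlarging $\verifierReg$ to $\verifierReg'$, so that $\rho_{\proverReg\verifierReg'\eveReg}$ is pure. This only helps: by data processing, $\vne(M\mid\verifierReg')\leq\vne(M\mid\verifierReg)$, so it suffices to prove the relation for the pure state. Write $\{\Pi_x\}$ and $\{M_z\}$ for the rank-one eigenprojectors of the two observables.

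\textbf{Step 2 (duality).} Show that for a pure tripartite state and a rank-one projective measurement on $\proverReg$,
\begin{equation}
	\vne(\Pi\mid\eveReg)=D\!\left(\rho_{\proverReg\verifierReg'}\,\middle\|\,\textstyle\sum_x(\Pi_x\otimes\id)\rho_{\proverReg\verifierReg'}(\Pi_x\otimes\id)\right)\; .
\end{equation}
I would prove this by writing the measurement through a Stinespring isometry $\ket{x}_X\ket{x}_{X'}$. Purity gives $\vne(X\eveReg)=\vne(\proverReg X'\verifierReg')$ after the isometry and $\vne(\eveReg)=\vne(\proverReg\verifierReg')$. The right-hand side then becomes $\vne$ of the dephased state minus $\vne(\rho_{\proverReg\verifierReg'})$, which is exactly the relative entropy to the pinched state. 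As sanity checks, this gives $0$ for $\ket{0}$ measured in $\sigmaz$, and $1$ for $\ket{+}$ or for a maximally entangled $\proverReg\verifierReg'$.

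\textbf{Step 3 (data processing and overlap).} Apply the measurement channel $\{M_z\}$ to $\proverReg$ on both arguments of $D$. The first argument becomes $\rho_{Z\verifierReg'}$. The second becomes
\begin{equation}
	\sigma_{Z\verifierReg'}=\sum_{z,x}\dyad{z}\otimes\tr_{\proverReg}\!\left[(M_z\Pi_x M_z\otimes\id)\rho\right]\; .
\end{equation}
Since $M_z\Pi_xM_z=|\langle u_x\mid v_z\rangle|^2M_z\leq c\,M_z$, this gives
\begin{equation}
	\sigma_{Z\verifierReg'}\leq c\sum_z\dyad{z}\otimes\rho_{\verifierReg'}=c\,\id_Z\otimes\rho_{\verifierReg'}\; .
\end{equation}
Operator monotonicity of the logarithm in the second argument of $D$ then yields
\begin{equation}
	D(\rho_{Z\verifierReg'}\,\|\,\sigma_{Z\verifierReg'})\geq D(\rho_{Z\verifierReg'}\,\|\,\id_Z\otimes\rho_{\verifierReg'})+\log_2(1/c)=-\vne(M\mid\verifierReg')+\log_2(1/c)\; .
\end{equation}
Combining with Steps 1 and 2 finishes the proof. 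If some observable is degenerate (a multiple of the identity), the corresponding projectors are not rank one, but then $c=1$ and the bound is trivial.

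I expect the main obstacle to be Step 2: getting the duality identity exactly right, including which register carries the memory and how the isometry copies $x$. The overlap estimate in Step 3 is routine once that identity is in place.
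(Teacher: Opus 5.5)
Step~3 as written has an operator-ordering error, and it breaks the key inequality. Applying the $M$-measurement to the pinched state gives
\begin{equation*}
	\tr_{\proverReg}\!\Big[(M_z\otimes\id)\sum_x(\Pi_x\otimes\id)\,\rho\,(\Pi_x\otimes\id)\Big]
	=
	\sum_x\tr_{\proverReg}\!\big[(\Pi_xM_z\Pi_x\otimes\id)\,\rho\big]
	\; ,
\end{equation*}
so the sandwich is $\Pi_xM_z\Pi_x$, not $M_z\Pi_xM_z$. With your expression, $\sum_xM_z\Pi_xM_z=M_z$, so your $\sigma_{Z\verifierReg'}$ would simply equal $\rho_{Z\verifierReg'}$ and the relative entropy would be zero. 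The claimed bound $\sigma_{Z\verifierReg'}\leq c\,\id_Z\otimes\rho_{\verifierReg'}$ would then be false: take $\rho_{\proverReg}$ to be an eigenstate of $M$ and $c<1$.

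The fix is one line. Use $\Pi_xM_z\Pi_x=|\langle u_x\mid v_z\rangle|^2\,\Pi_x\leq c\,\Pi_x$ together with $\sum_x\tr_{\proverReg}[(\Pi_x\otimes\id)\rho]=\rho_{\verifierReg'}$. This gives exactly $\sigma_{Z\verifierReg'}\leq c\,\id_Z\otimes\rho_{\verifierReg'}$, and the rest of Step~3 goes through.

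There is also a harmless bookkeeping slip in Step~2. If your isometry maps $\proverReg$ onto $XX'$, the complement of $X\eveReg$ is $X'\verifierReg'$ rather than $\proverReg X'\verifierReg'$. Either way its entropy is that of the pinched state.

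With this correction your argument is a complete proof, and it is a different route from the paper's. The paper gives no argument and simply imports the bound from \cite{eur}, whose original derivation is more technical (it goes through min- and max-entropies). Yours is the short duality-plus-data-processing proof in the style of Coles, Colbeck, Yu and Zwolak.

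Your route buys self-containedness. It also makes explicit that the stronger tripartite bound $\vne(\Pi\mid\eveReg)+\vne(M\mid\verifierReg)\geq\log_2(1/c)$ holds; the stated lemma is just its weakening after discarding $\verifierReg$. Citing \cite{eur} buys brevity, at no loss here, since the paper only applies the lemma to nondegenerate observables on two-dimensional Jordan blocks. That case is exactly the one your rank-one argument covers.
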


\begin{lemma}[Anticommutator and overlap]
	\label{lemma:anticommutator_overlap}
	Let~$\Pi$ and~$M$ be two binary observables acting on a qubit.
	Let~$c$ denote the squared overlap of their eigenbases.
	Then
	\begin{equation}
		\left\{
			\Pi,M
		\right\}^2
		=
		4(2c-1)^2\id
		\; .
	\end{equation}
	Equivalently,
	\begin{equation}
		c
		=
		\frac{1}{2}
		+
		\frac{1}{4}
		\sqrt{
			\frac{1}{2}
			\tr\left[
				\left\{
					\Pi,M
				\right\}^2
			\right]
		} = \frac{1}{2} +\frac{1}{2}\cos(\theta)
		\; ,
	\end{equation}
	where $\theta$ is the angle between the two observables in the Bloch representation.
\end{lemma}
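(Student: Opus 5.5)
The plan is to reduce to the case of two nondegenerate qubit observables and compute directly in the canonical form of Definition~\ref{def:squared_overlap}. By the Remark following that definition, we may choose a basis and relabel outcomes so that
\begin{equation}
	\Pi=\sigmaz
	\qquad\text{and}\qquad
	M=\cos(\theta)\sigmaz+\sin(\theta)\sigmax
\end{equation}
for some $\theta\in[0,\pi/2]$.

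First, we note that relabeling outcomes replaces an observable $A$ by $-A$. This can flip the sign of the anticommutator, but it leaves $\{\Pi,M\}^2$ invariant. A unitary change of basis conjugates $\{\Pi,M\}^2$; since the result will be a multiple of the identity, it is unchanged as well. The squared overlap $c$ is also invariant under both operations, since it depends only on the eigenbases. It therefore suffices to treat the canonical pair.

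Next, we use the Pauli relations $\sigmaz^2=\id$ and $\{\sigmaz,\sigmax\}=0$ to compute
\begin{equation}
	\{\Pi,M\}
	=2\cos(\theta)\sigmaz^2+\sin(\theta)\{\sigmaz,\sigmax\}
	=2\cos(\theta)\id \; ,
\end{equation}
so that $\{\Pi,M\}^2=4\cos^2(\theta)\id$. Definition~\ref{def:squared_overlap} gives $c=\tfrac12+\tfrac12\cos(\theta)$, hence $2c-1=\cos(\theta)$, and the first identity $\{\Pi,M\}^2=4(2c-1)^2\id$ follows.

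For the equivalent form, we take the trace over the qubit to get $\tfrac12\tr[\{\Pi,M\}^2]=4(2c-1)^2$. Taking the square root yields $2|2c-1|$. Because $\theta\in[0,\pi/2]$, we have $c\geq\tfrac12$, so the absolute value can be dropped, giving $\sqrt{\cdot}=2(2c-1)$. Rearranging gives $c=\tfrac12+\tfrac14\sqrt{\tfrac12\tr[\{\Pi,M\}^2]}$, and the final equality is the formula for $c$ from Definition~\ref{def:squared_overlap}.

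The only delicate point is the sign convention. One should state clearly that $c\geq 1/2$ is automatic, because $c$ is a maximum of squared overlaps over two orthonormal qubit bases and so is always at least $1/2$. This is what justifies taking the positive square root. It is also why the Bloch angle should be understood as the angle after relabeling outcomes, which lies in $[0,\pi/2]$.
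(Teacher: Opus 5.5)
Your proof is correct. The paper states this lemma without a written proof, and your argument is exactly the direct computation in the canonical form set up by Definition~\ref{def:squared_overlap} and its Remark, including the check that relabeling outcomes and changing basis leave both $\{\Pi,M\}^2$ and $c$ unchanged. That setup implicitly assumes nondegenerate observables, since otherwise the eigenbases and hence $c$ are not well defined. Your observation that $c\geq 1/2$ justifies taking the positive square root handles the one delicate point properly. So does your remark that the final expression $\tfrac12+\tfrac12\cos(\theta)$ requires $\theta\in[0,\pi/2]$; for a general Bloch angle one gets $\tfrac12+\tfrac12|\cos(\theta)|$, which is the form the appendix actually uses.
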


\subsection{Commitments}
\label{subsec:commitments}

\begin{definition}[Bit Commitment Scheme]
    \label{def:bit_commitment}

    A classical bit commitment scheme~$\bc=(\commit,\open)$ consists of a classical interactive commit protocol~$\commit$ between a \ppt{} sender~$\sender$ and a \ppt{} receiver~$\receiver$, together with a deterministic polynomial-time classical verification algorithm~$\open$.
    All communication between the sender and receiver is classical.
    The commitment scheme consists of two stages, a commit phase and an open phase, defined as follows.

    \paragraph{Commit Phase.}
    On input a bit~$\bit\in\{0,1\}$ and the unary representation~$1^\security$ of the security parameter~$\security$, the sender and receiver interact according to~$\commit$.
    At the end of the interaction, a classical commitment transcript~$\trans$ is generated, and the sender retains a classical decommitment string~$\decomm$.

    \paragraph{Open Phase.}
    The sender sends~$(\bit,\decomm)$ to the receiver.
    The receiver evaluates
        \begin{equation}
            \open(\trans,\bit,\decomm)\in\{\top,\bot\}
            \; .
        \end{equation}

	We say that~$\bc$ is correct if, for all~$\security\in\N$ and all~$\bit\in\{0,1\}$,
	\begin{equation}
		\Pr
		\left[
			\open(\trans,\bit,\decomm)=\top
			:
			(\trans,\decomm)\leftarrow\commit(1^\security,\bit)
		\right]
		=1 \; .
	\end{equation}
\end{definition}

\begin{remark}[Canonical opening representation]
    \label{remark:canonical_opening_representation}
    Without loss of generality, we assume that the decommitment string~$\decomm\in\{0,1\}^{\ell}$ is the complete random tape of the honest sender, for some~$\ell=\poly(\security)$.
    Given~$(\trans,\bit,\decomm)$, the opening algorithm replays the honest sender's commit-phase algorithm using~$\decomm$ and the receiver messages recorded in~$\trans$, and accepts only if the resulting transcript is consistent with~$\trans$ and the original opening-verification algorithm accepts.
    This canonicalization preserves statistical hiding, since the commit phase is unchanged, and preserves computational binding, since any pair of valid openings in the canonicalized scheme induces a pair of valid openings in the original scheme.

    Consequently, for every commitment transcript~$\trans$, we may identify
    \begin{equation}
        \Decomm_{\bit,\trans}
        \coloneqq
        \left\{
            \decomm\in\{0,1\}^{\ell}
            :
            \decomm
            \text{ is an honest sender random tape consistent with }
            (\bit,\trans)
        \right\}
        \; .
    \end{equation}
\end{remark}

\begin{remark}
    In the security definitions below, the honest parties execute the classical protocols specified by~$\bc$, while the adversarial party may belong to a specified computational class.
    In particular, we consider both \ppt{} and \qpt{} senders and computationally unbounded receivers.
    When the adversarial sender is quantum, the interaction takes place in the \qccc{} model: the sender may perform quantum computations internally, but all communication with the honest receiver, including any openings, remains classical.
\end{remark}

\begin{definition}[Hiding experiment]
	Sample~$\bit\leftarrow\{0,1\}$.  The honest sender~$\sender(1^\security,\bit)$
	and the adversarial receiver~$\receiver^*(1^\security)$ execute the
	commit phase.  At the end of the interaction,~$\receiver^*$ outputs a
	bit~$\bit'$.
	The experiment outputs~$1$ iff~$\bit'=\bit$.
\end{definition}

\begin{definition}[$\epsilon$-Hiding]
	\label{def:hiding}
	A bit commitment scheme~$\bc=(\commit,\open)$ is \emph{$\epsilon$-hiding} if for every computationally unbounded adversarial receiver~$\receiver^*$,
	\begin{equation}
		\Pr\left[
			\hideExp_{\receiver^*,\bc}(1^\security)=1
		\right]
		\leq
		\frac12+\epsilon(\security) \; ,
	\end{equation}
	where the probability is over the randomness of the commitment scheme and the internal randomness of~$\receiver^*$.
	If~$\epsilon(\security)=\negl(\security)$, the scheme is \emph{statistically hiding}, and we denote it by~$\shbcscheme$.
\end{definition}

\begin{definition}[Binding Experiment]
	Let $\bc=(\commit,\open)$ be a bit commitment scheme and let~$\sender^*$ be an adversarial sender.
	Define the experiment $\bindExp_{\sender^*,\bc}(1^\security)$ as follows.
	\begin{enumerate}
		\item $\sender^*$ interacts with the honest receiver~$\receiver$ in the commit phase and produces a commitment transcript~$\trans$.
		\item $\sender^*$ outputs a pair of classical decommitment strings~$(\decomm_0,\decomm_1)$.
		\item The receiver~$\receiver$ evaluates both openings, and the experiment outputs~$1$ iff
		\begin{equation} \label{eq:double_open}
			\open(\trans,0,\decomm_0) = \top
			\quad\wedge\quad
			\open(\trans,1,\decomm_1) = \top \; .
		\end{equation}
	\end{enumerate}
\end{definition}

\begin{definition}[Computational Binding]
	\label{def:classical_binding}
	A classical bit commitment scheme~$\bc$ is computationally binding if for every \ppt{} adversarial sender~$\sender^*$,
	\begin{equation}
		\Pr\left[
			\bindExp_{\sender^*,\bc}(1^\security)=1
		\right]
		\leq
		\negl(\security)
		\; .
	\end{equation}
\end{definition}

\begin{definition}[Post-Quantum Computational Binding\footnote{See Subsection~\ref{subsec:post_quantum_binding} for a discussion of this notion and its relation to stronger \pq{} binding notions.}]
	\label{def:post_quantum_binding}
	A classical bit commitment scheme~$\bc$ is \pq{} computationally binding if the conclusion of Definition~\ref{def:classical_binding} holds for every \qpt{} adversarial sender~$\sender^*$.
\end{definition}





\subsection{\tcfshort{} Families}
\label{subsec:tcf}

\begin{definition}[\tcf{} Family]
	\label{def:tcf}
	A \tcf{} family is a collection
	\begin{equation}
		\Tcfunc
		=
		\left\{
			(\tcfunc_{\key,0},\tcfunc_{\key,1})
		\right\}_{\key\in\Key}
	\end{equation}
	together with \ppt{} algorithms
	\begin{equation}
		(\gen,\eval,\inv) \; ,
	\end{equation}
	satisfying the following properties.

	\paragraph{Key Generation.}
	The algorithm
	\begin{equation}
		(\key,\trapdoor)\leftarrow \gen(1^\security)
	\end{equation}
	outputs a public key~$\key\in\Key$ and trapdoor~$\trapdoor$.

	\paragraph{Function Pair.}
	For every key~$\key\in\Key$, the pair
	\begin{equation}
		(\tcfunc_{\key,0},\tcfunc_{\key,1})
	\end{equation}
	consists of efficiently computable functions
	\begin{equation}
		\tcfunc_{\key,0},\tcfunc_{\key,1}:\Preimage\to\Image \; .
	\end{equation}

	\paragraph{Injectivity.}
	For every~$\bit\in\{0,1\}$, the map~$\tcfunc_{\key,\bit}$ is injective.

	\paragraph{Common Image.}
	The images of~$\tcfunc_{\key,0}$ and~$\tcfunc_{\key,1}$ are identical.

	\paragraph{Trapdoor Inversion.}
	Given~$(\trapdoor,\bit,\image)$ such that
	\begin{equation}
		\image=\tcfunc_{\key,\bit}(\preimage) \; ,
	\end{equation}
	the inversion algorithm
	\begin{equation}
		\inv(\trapdoor,\bit,\image)
	\end{equation}
	outputs~$\preimage$.

	\paragraph{Efficient Evaluation.}
	For every~$\bit\in\{0,1\}$ and~$\preimage\in\Preimage$,
	\begin{equation}
		\eval(\key,\bit,\preimage)=\tcfunc_{\key,\bit}(\preimage) \; .
	\end{equation}

	\paragraph{Claw-Free Property.}
	For every \ppt{} adversary~$\adversary$,
	\begin{equation}
		\Pr\!\left[
			\begin{array}{c}
				\preimage_0,\preimage_1 \leftarrow \adversary(\key) : \\
				\tcfunc_{\key,0}(\preimage_0)=\tcfunc_{\key,1}(\preimage_1)
			\end{array}
		\right]
		\leq
		\negl(\security) \; ,
	\end{equation}
	where the probability is over
	\begin{equation}
		(\key,\trapdoor)\leftarrow\gen(1^\security)
	\end{equation}
	and the internal randomness of~$\adversary$.
\end{definition}

\begin{definition}[Post-quantum trapdoor claw-free function family]
	\label{def:pq_tcf}
	A \tcf{} family is called a \emph{post-quantum trapdoor claw-free (\pq{} \tcf{}) family} if its claw-freeness property holds against all $\qpt{}$ adversaries.
\end{definition}

\subsection{Computational Bell Inequalities and Hidden Virtual Inputs}

\begin{samepage}
	\begin{definition}[Canonical Form Protocol]
		\label{def:canonical_form_protocol}
		Let~$(\verifier,\prover)$ be the verifier-prover pair performing the following interactive protocol. 
	
		For every security parameter~$\security$, fix input set~$\Y$, output set~$\B$ and transcript set~$\Trans$.
		\begin{enumerate}[label=Phase \Alph*., leftmargin=60pt]
			\item ~\vspace{-7pt}
			\begin{enumerate}[label=\arabic*., leftmargin=15pt]
				\item The verifier and prover interact classically and produce an interaction transcript~$\trans \in \Trans$.
				The verifier then outputs,~$\flag \in \{\accept, \abort, \continue\}$, and the prover holds a quantum state~$\transState$.
			\end{enumerate}
			\item (Conditioned on~$\flag = \continue$)
			\begin{enumerate}[label=\arabic*., leftmargin=15pt]
				\item The verifier samples a challenge~$\bellChallenge \leftarrow \BellChallenge$ uniformly at random and sends it to the prover.
				\item The prover responds with an output~$\bit \in \B$, which the verifier receives.
			\end{enumerate}
		\end{enumerate}
	\end{definition}
\end{samepage}

\begin{definition}[Virtual input, virtual output, and Bell mapping]
	\label{def:virtual_input_output}
	Let~$(\verifier,\prover)$ be a canonical-form protocol, and let~$\Trans$ denote the set of possible transcripts produced in Phase~A.
	Let
	\begin{equation}
		\imap:\Trans\to\X
		\qquad\text{and}\qquad
		\omap:\Trans\to\A
	\end{equation}
	be a pair of deterministic functions.
	For every transcript~$\trans\in\Trans$, we call~$\imap(\trans)$ the \emph{virtual input} and~$\omap(\trans)$ the \emph{virtual output} associated with~$\trans$.
	Together, the pair~$(\imap,\omap)$ defines a \emph{Bell mapping} from the correlations generated by the canonical-form protocol to correlations in the Bell scenario~$(\A,\B,\X,\Y)\in\N^{4}$.
\end{definition}

\begin{remark}
	For an execution of the protocol conditioned on~$\flag=\continue$, the Phase~A transcript~$\trans$ determines the virtual input-output pair
	\begin{equation}
		\bigl(\imap(\trans),\omap(\trans)\bigr)
		\in
		\X\times\A
		\; ,
	\end{equation}
	while in Phase~B the verifier samples an integer input~$y\in\Y$ and the prover returns an integer output~$b\in\B$.
	The Bell mapping therefore associates each protocol execution with variables
	\begin{equation}
		\virtualInput=\imap(\trans)\in\VirtualInput,
		\qquad
		\virtualOutput=\omap(\trans)\in\VirtualOutput,
		\qquad
		\bellChallenge\in\BellChallenge,
		\qquad
		\bellOutcome\in\BellOutcome
		\; ,
	\end{equation}
	and thereby induces the joint correlation
	\begin{equation}
		\left\{
			p(\virtualOutput,\bellOutcome,\virtualInput,\bellChallenge)
			\;\middle|\;
			\virtualOutput\in\VirtualOutput,\,
			\bellOutcome\in\BellOutcome,\,
			\virtualInput\in\VirtualInput,\,
			\bellChallenge\in\BellChallenge
		\right\}
	\end{equation}
	over the random variables~$(\virtualOutputRV,\bellOutcomeRV,\virtualInputRV,\bellChallengeRV)$, corresponding to the Bell scenario~$(\A,\B,\X,\Y)\in\N^{4}$.
\end{remark}

\begin{definition}[Quantum hidden virtual input]
    \label{def:hidden_virtual_input_quantum}
    Let~$\security\in\N$ be a security parameter.
    Let~$(\verifier,\prover)$ be a canonical-form protocol.
    Let~$\imap:\Trans\to\X$ be a function.

    We say that the virtual input~$\virtualInputRV=\imap(\transRV)$ is \emph{quantum hidden} if the following holds.
    For every adversarial \qpt{} prover~$\prover^*$ (possibly with advice) and every \qpt{} predictor~$\adversary$, consider an execution of the protocol~$(\verifier,\prover^*)$ conditioned on~$\flag=\continue$.
    Let~$\transRV$ denote the resulting transcript random variable, and let~$\transStaterv$ denote the adversarial prover's residual quantum state after Phase~A, including the classical transcript.
    Let~$\widehat{\virtualInputRV}$ denote the guess produced by~$\adversary$ given~$\transStaterv$.
    Then,
    \begin{equation}
        \Pr\!\left[
            \widehat{\virtualInputRV}
            =
            \virtualInputRV
            \;\middle|\;
            \flag=\continue
        \right]
        \leq
        \frac{1}{|\X|}
        +
        \negl(\security)
        \; .
    \end{equation}
    The probability is over the randomness of~$\verifier$, the adversarial prover, and the predictor's internal measurements.
\end{definition}

\begin{definition}[Classical hidden virtual input]
    \label{def:hidden_virtual_input_classical}
    We say that the virtual input~$\virtualInputRV$ is \emph{classical hidden} if the conclusion of Definition~\ref{def:hidden_virtual_input_quantum} holds for every adversarial \ppt{} prover~$\prover^*$ and every \ppt{} predictor~$\adversary$, with classical residual information~$\transStaterv$.
\end{definition}

\begin{definition}[Weighted hidden binary virtual input]
    \label{def:weighted_hidden_virtual_input}
    Let~$\virtualInputRV=\imap(\transRV)\in\{0,1\}$ denote the virtual input of a canonical-form protocol, and let~$\transStaterv$ denote the prover's residual state after Phase~A, including the classical transcript.
    For an adversarial prover~$\prover^*$ and a predictor~$\adversary$, let~$\widehat{\virtualInputRV}$ denote the guess produced by~$\adversary$ given~$\transStaterv$.

    We say that~$\virtualInputRV$ is \emph{weighted hidden} from an adversary class~$\mathsf{Adv}$ if, for every prover~$\prover^*$ and predictor~$\adversary$ in~$\mathsf{Adv}$, there exists a negligible function~$\nu$ such that
    \begin{equation}
        \Pr\!\left[
            \widehat{\virtualInputRV}
            =
            \virtualInputRV
            \;\wedge\;
            \flag=\continue
        \right]
        \leq
        \frac{1}{2}
        \Pr[\flag=\continue]
        +
        \nu(\security)
        \; .
    \end{equation}
    In the classical version, both the prover's residual information and the predictor are classical, whereas in the quantum version they may be quantum.
\end{definition}

\begin{remark}
    If~$\Pr[\flag=\continue]\geq 1/\poly(\security)$, then Definition~\ref{def:weighted_hidden_virtual_input} implies
    \begin{equation}
        \Pr\!\left[
            \widehat{\virtualInputRV}
            =
            \virtualInputRV
            \;\middle|\;
            \flag=\continue
        \right]
        \leq
        \frac{1}{2}
        +
        \negl(\security)
        \; .
    \end{equation}
    We use the weighted formulation because it remains meaningful without assuming a lower bound on the continuation probability.
\end{remark}

\begin{proposition}[Weighted computational CHSH bound]
    \label{prop:weighted_computational_chsh}
    Consider a binary canonical-form protocol with Bell-mapped variables~$(\virtualOutputRV,\bellOutcomeRV,\virtualInputRV,\bellChallengeRV)$.
    Suppose that~$\bellChallengeRV$ is sampled uniformly and independently after Phase~A and that~$\virtualInputRV$ is weighted hidden from every classical polynomial-time prover and predictor.
    Then, for every classical polynomial-time prover, there exists a negligible function~$\nu$ such that
    \begin{equation}
        \Pr\!
        \left[
            \flag=\continue
            \;\wedge\;
            \virtualOutputRV\oplus\bellOutcomeRV
            =
            \virtualInputRV\bellChallengeRV
        \right]
        \leq
        \frac{3}{4}\Pr[\flag=\continue]
        +
        \nu(\security)
        \; .
    \end{equation}
\end{proposition}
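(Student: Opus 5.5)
The proof is a reduction from a CHSH-winning classical prover to a virtual-input predictor. Fix a classical polynomial-time prover~$\prover^*$. After Phase~A, its residual information~$\transStaterv$, which includes the transcript~$\trans$, is classical. Its Phase~B behaviour is therefore a classical efficient randomized map from~$(\transStaterv,\bellChallenge)$ to~$\bellOutcome$. Since the residual state is classical, we may rewind it: we fix the internal randomness once and compute both responses~$\bellOutcomeRVAt{0}$ and~$\bellOutcomeRVAt{1}$ from the same state. Conditioned on~$\flag=\continue$, this gives a joint distribution on~$(\virtualOutputRV,\bellOutcomeRVAt{0},\bellOutcomeRVAt{1},\virtualInputRV)$. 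Because~$\bellChallengeRV$ is uniform and independent of everything after Phase~A, the real winning probability (jointly with continuation) is
\begin{equation}
\Pr[\flag=\continue\wedge \mathrm{win}]
=\tfrac12\Pr[\flag=\continue\wedge \virtualOutputRV\oplus\bellOutcomeRVAt{0}=0]
+\tfrac12\Pr[\flag=\continue\wedge \virtualOutputRV\oplus\bellOutcomeRVAt{1}=\virtualInputRV].
\end{equation}

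Next comes the standard local-deterministic CHSH argument. From~$\virtualOutputRV\oplus\bellOutcomeRVAt{0}=0$ and~$\virtualOutputRV\oplus\bellOutcomeRVAt{1}=\virtualInputRV$ we get~$\virtualInputRV=\bellOutcomeRVAt{0}\oplus\bellOutcomeRVAt{1}$, and this identity does not involve the virtual output at all. Define the predictor~$\adversary$ that, on input~$\transStaterv$, runs the Phase~B strategy on both challenges with shared randomness and outputs~$\widehat{\virtualInputRV}=\bellOutcomeRVAt{0}\oplus\bellOutcomeRVAt{1}$. This~$\adversary$ is classical and polynomial-time.

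Let~$E_0,E_1$ be the two winning events (each intersected with continuation). Then~$E_0\cap E_1\subseteq\{\widehat{\virtualInputRV}=\virtualInputRV\wedge\flag=\continue\}$. By inclusion--exclusion,~$\Pr[E_0]+\Pr[E_1]\le\Pr[\flag=\continue]+\Pr[E_0\cap E_1]$. Weighted hiding (Definition~\ref{def:weighted_hidden_virtual_input}) bounds the last term by~$\tfrac12\Pr[\flag=\continue]+\nu$. Combining with the averaging identity gives
\begin{equation}
\Pr[\flag=\continue\wedge\mathrm{win}]\le\tfrac12\left(\tfrac32\Pr[\flag=\continue]+\nu\right)=\tfrac34\Pr[\flag=\continue]+\tfrac{\nu}{2},
\end{equation}
which is the claim with~$\nu/2$ as the negligible function.

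The delicate step is justifying the rewinding that defines~$\bellOutcomeRVAt{0}$ and~$\bellOutcomeRVAt{1}$ on the same sample. We must check that running the Phase~B strategy twice on a copy of the classical residual state, with the same randomness, reproduces each marginal exactly. This is fine because~$\bellChallengeRV$ is drawn independently after Phase~A and the state is classical and copyable. We must also check that the resulting predictor and prover pair falls in the adversary class of the hiding definition, i.e., that it is efficient. This copying is exactly what fails for quantum provers, which is why the statement is only classical.
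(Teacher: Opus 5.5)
Your proof is correct. Rewinding the classical prover to obtain both answers $\bellOutcomeRVAt{0},\bellOutcomeRVAt{1}$, predicting the virtual input by $\bellOutcomeRVAt{0}\oplus\bellOutcomeRVAt{1}$, and applying inclusion--exclusion gives the bound $\frac34\Pr[\flag=\continue]+\nu/2$. The paper does not write out a proof of this proposition; it attributes it to the framework of~\cite{cbi} and only sketches this reduction (a classical prover beating $3/4$ must adapt its Phase-B answer to the virtual input, which yields an efficient predictor contradicting hiding), so your argument is a self-contained version of the intended one.
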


The hiding of the virtual input allows us to apply the computational Bell-inequality framework of~\cite{cbi}.
In particular, Proposition~\ref{prop:weighted_computational_chsh} bounds the CHSH winning probability of any efficient classical prover.
For entropy certification, we additionally require the virtual input to remain hidden from quantum provers, as established below.

The BGKM protocol~\cite{brakerski2023simple} is a prominent example of a canonical-form protocol that certifies nonclassical behavior through computationally hidden measurement choices.
Beyond illustrating the general framework, its hidden-qubit preparation procedure serves as a building block in our construction, similarly to the role played by the \cite{kcvy} protocol in the original Morimae--Yamakawa construction~\cite{MYowf23}.
We recall the \cite{brakerski2023simple} protocol and its honest quantum implementation in Appendix~\ref{app:bgkm_protocol}, emphasizing the features relevant to our construction.

\subsection{Smooth Min-Entropy and Entropy Accumulation}
\label{subsec:smooth_min_entropy_eat}

The single-round analysis in this work lower-bounds a conditional von Neumann entropy.
To obtain a composable statement for a sequential protocol, we use the smooth conditional min-entropy and the \eat{}~\cite{Arnon-Friedman2018,dupuis_2019,metger2022generalised}.
We follow the computational formulation used in~\cite{merkulov2023entropy}.
All logarithms in this subsection are base two.

\begin{definition}[Conditional min-entropy]
    Let~$\rho_{AE}$ be a subnormalized state.
    Its conditional min-entropy is
    \begin{equation}
        H_{\min}(A\mid E)_{\rho}
        \coloneqq
        \sup_{\sigma_E}
        \sup\left\{
            h\in\mathbb{R}:
            \rho_{AE}
            \leq
            2^{-h}\id_A\otimes\sigma_E
        \right\}
        \; ,
    \end{equation}
    where the first supremum is over normalized states~$\sigma_E$.
\end{definition}

\begin{definition}[Smooth conditional min-entropy]
    Let~$\epsilon\in(0,1)$.
    The~$\epsilon$-smooth conditional min-entropy is
    \begin{equation}
        H_{\min}^{\epsilon}(A\mid E)_{\rho}
        \coloneqq
        \sup_{\widetilde{\rho}_{AE}:\,
        P(\widetilde{\rho}_{AE},\rho_{AE})\leq\epsilon}
        H_{\min}(A\mid E)_{\widetilde{\rho}}
        \; ,
    \end{equation}
    where~$P(\cdot,\cdot)$ denotes the purified distance.
\end{definition}

\begin{definition}[\eat{} channels]
    \label{def:eat_channels}
    Let
    \begin{equation}
        \mathcal{M}_i:
        R_{i-1}
        \longrightarrow
        R_i O_i S_i Q_i
        \qquad
        (i\in[n])
    \end{equation}
    be quantum channels.
    They are called \emph{\eat{} channels} if the following conditions hold.
    \begin{enumerate}[label=\arabic*., leftmargin=30pt, rightmargin=10pt]
        \item
        Each~$O_i$ has finite dimension~$d_O$, and each~$Q_i$ is classical with a common finite alphabet~$\mathcal{Q}$.
        \item
        The value of~$Q_i$ can be computed from~$O_iS_i$ without changing the marginal state on~$O_iS_i$.
        \item
        For every initial state~$\rho_{R_0E}$, the final state satisfies the Markov conditions
        \begin{equation}
            O_1^{i-1}
            \longleftrightarrow
            S_1^{i-1}E
            \longleftrightarrow
            S_i
            \qquad
            (i\in[n])
            \; .
        \end{equation}
    \end{enumerate}
\end{definition}

\begin{definition}[Min-tradeoff function]
    \label{def:min_tradeoff_function}
    Let~$\{\mathcal{M}_i\}_{i\in[n]}$ be EAT channels with score alphabet~$\mathcal{Q}$.
    A differentiable convex function~$f_{\min}$ on the probability distributions over~$\mathcal{Q}$ is a \emph{min-tradeoff function} if, for every~$i$ and every distribution~$q$ over~$\mathcal{Q}$,
    \begin{equation}
        f_{\min}(q)
        \leq
        \inf_{\substack{
            \sigma_{R_{i-1}R'}:\\
            \mathcal{M}_i(\sigma)_{Q_i}=q
        }}
        \vne\!\left(
            O_i
            \mid
            S_iR'
        \right)_{\mathcal{M}_i(\sigma)}
        \; ,
    \end{equation}
    where~$R'$ purifies the input state.
\end{definition}

\begin{theorem}[Entropy Accumulation Theorem~(\eatshort{})]
    \label{thm:eat_rate_form}
    Let~$\{\mathcal{M}_i\}_{i\in[n]}$ be \eat{} channels, let~$f_{\min}$ be a min-tradeoff function, and let~$\Omega$ be an event determined by~$Q_1^n$.
    Let~$p_{\Omega}=\Pr[\Omega]$.
    Suppose that the set of empirical score distributions compatible with~$\Omega$ is convex and that every such distribution satisfies
    \begin{equation}
        f_{\min}(\operatorname{freq}(Q_1^n))
        \geq
        h_{\Omega}
        \; .
    \end{equation}
    Then, for every~$\epsilon_{\mathrm{s}}\in(0,1)$,
    \begin{equation}
        \label{eq:eat_rate_form}
        \vne_{\min}^{\epsilon_{\mathrm{s}}}\!\left(
            O_1^n
            \mid
            S_1^nE
        \right)_{\rho\mid\Omega}
        \geq
        n h_{\Omega}
        -
        \mu\sqrt{n}
        \; ,
    \end{equation}
    where one may take
    \begin{equation}
        \mu
        =
        2\left(
            \log(1+2d_O)
            +
            \left\lceil
                \lVert\nabla f_{\min}\rVert_{\infty}
            \right\rceil
        \right)
        \sqrt{
            1
            -
            2\log(\epsilon_{\mathrm{s}}p_{\Omega})
        }
        \; .
    \end{equation}
\end{theorem}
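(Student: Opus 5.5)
The plan is to follow the Dupuis--Fawzi--Renner route to the entropy accumulation theorem~\cite{dupuis_2019}, in the streamlined form of~\cite{metger2022generalised}. The idea is to pass from smooth min-entropy to sandwiched R\'enyi entropies $\vne_\alpha$ with $\alpha=1+\Theta(1/\sqrt{n})$, accumulate R\'enyi entropy round by round using a chain rule, and convert back at the end.

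\textbf{Step 1 (from smooth min-entropy to R\'enyi, and conditioning on $\Omega$).} For $\alpha\in(1,2]$, I will use the standard bound
\begin{equation}
\vne_{\min}^{\epsilon_{\mathrm{s}}}(O_1^n\mid S_1^nE)_{\rho\mid\Omega}\geq \vne_\alpha(O_1^n\mid S_1^nE)_{\rho\mid\Omega}-\frac{g(\epsilon_{\mathrm{s}})}{\alpha-1},
\end{equation}
where $g(\epsilon_{\mathrm{s}})=O(\log(1/\epsilon_{\mathrm{s}}))$. Since $\rho\mid\Omega$ is obtained by projecting onto the classical event $\Omega$, renormalizing costs at most $\frac{\alpha}{\alpha-1}\log(1/p_\Omega)$. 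This is where the factor $\sqrt{1-2\log(\epsilon_{\mathrm{s}}p_\Omega)}$ will eventually come from.

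\textbf{Step 2 (absorbing the tradeoff function).} I will introduce auxiliary classical registers $D_i$. Each $D_i$ is produced from $Q_i$ by a channel that outputs a mixed state of dimension chosen so that $\vne(D_i)=\bar f-f_{\min}(\delta_{Q_i})$, with $\bar f$ a fixed constant. The $D_i$ are needed to cancel the round-dependent tradeoff values, so that the per-round entropy becomes a single constant.

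By affinely extending $f_{\min}$, the quantity $\vne(O_iD_i\mid S_iR')$ is at least $\bar f$ for every input state. The range of $\vne(D_i)$ is controlled by $\lVert\nabla f_{\min}\rVert_\infty$, which is why $\lceil\lVert\nabla f_{\min}\rVert_\infty\rceil$ appears in $\mu$.

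On $\Omega$, every empirical score distribution satisfies $f_{\min}(\operatorname{freq}(Q_1^n))\geq h_\Omega$. Convexity of the compatible set lets me use the affine extension. Together these give
\begin{equation}
\vne_\alpha(O_1^n\mid S_1^nE)\geq \vne_\alpha(O_1^nD_1^n\mid S_1^nE)-n\bar f+nh_\Omega-\text{(lower order)}.
\end{equation}

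\textbf{Step 3 (chain rule).} This is the main obstacle. Here the Markov conditions $O_1^{i-1}\leftrightarrow S_1^{i-1}E\leftrightarrow S_i$ of Definition~\ref{def:eat_channels} are used to prove
\begin{equation}
\vne_\alpha(O_1^nD_1^n\mid S_1^nE)\geq\sum_i\inf_{\sigma}\vne_\alpha(O_iD_i\mid S_iR')_{\mathcal{M}_i(\sigma)}.
\end{equation}
I would invoke the R\'enyi chain rule of~\cite{dupuis_2019} (Corollary~3.5), proved via operator-inequality and duality arguments, rather than rederive it. Alternatively, one can drop the Markov chain in favour of the non-signalling condition and use the generalized chain rule of~\cite{metger2022generalised}.

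\textbf{Step 4 (continuity and optimization).} Each single-round R\'enyi term is compared to its von Neumann counterpart via
\begin{equation}
\vne_\alpha\geq\vne-(\alpha-1)\log^2(1+2d_{O}d_{D}).
\end{equation}
Combining this with Step~2 gives a per-round bound of $h_\Omega-O(\alpha-1)$. Putting Steps~1--4 together yields $nh_\Omega-n(\alpha-1)V^2-\frac{\log(1/(\epsilon_{\mathrm{s}}p_\Omega))}{\alpha-1}$, where
\begin{equation}
V=\log(1+2d_O)+\lceil\lVert\nabla f_{\min}\rVert_\infty\rceil.
\end{equation}
Choosing $\alpha-1\propto\sqrt{1-2\log(\epsilon_{\mathrm{s}}p_\Omega)}/(V\sqrt{n})$ balances the two error terms and produces exactly $\mu\sqrt{n}$ with the stated $\mu$.
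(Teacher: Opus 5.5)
The paper does not prove this theorem; it imports it from the entropy-accumulation literature (Dupuis--Fawzi--Renner~\cite{dupuis_2019}, in the $O_i,S_i$ formulation of~\cite{Arnon-Friedman2018}). Your outline follows the DFR architecture correctly: smoothing to a sandwiched R\'enyi entropy, the $\frac{\alpha}{\alpha-1}\log(1/p_\Omega)$ cost of conditioning, auxiliary registers that flatten the tradeoff function, the R\'enyi chain rule under the Markov conditions, per-round continuity, and optimizing over $\alpha$. As written, however, it does not establish the statement as stated.

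The first gap is the convex case. Your per-round bound $H(O_iD_i\mid S_iR')\geq\bar f$ relies on $f_{\min}(q)=\sum_x q(x)f_{\min}(\delta_x)$, i.e.\ on $f_{\min}$ being affine. For a convex non-affine $f_{\min}$ one only has $f_{\min}(q)\leq\sum_x q(x)f_{\min}(\delta_x)$, which points the wrong way, and you never say how to pass to an affine function. The missing step is to replace $f_{\min}$ by its tangent $g$ at a minimizer $q^*$ of $f_{\min}$ over the closure of the convex set of frequencies compatible with $\Omega$. Convexity of $f_{\min}$ gives $g\leq f_{\min}$, so $g$ is again a min-tradeoff function. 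Convexity of the compatible set plus first-order optimality give $g\geq f_{\min}(q^*)\geq h_\Omega$ on that set. Finally, $\lVert\nabla g\rVert_\infty\leq\lVert\nabla f_{\min}\rVert_\infty$. That is where both convexity hypotheses are used. Once $f_{\min}$ is affine, $\sum_i f_{\min}(\delta_{Q_i})=n f_{\min}(\operatorname{freq}(Q_1^n))$ holds identically, so your Step~2 needs no convexity of $\Omega$, contrary to what you wrote. (In the paper's application $f_{\min}$ is affine, so this matters only for the general statement.)

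The second gap is the constant you claim to obtain ``exactly''. With a classical $D_i$ of nonnegative entropy $\bar f-f_{\min}(\delta_{Q_i})$, you need $\bar f\geq\max_x f_{\min}(\delta_x)$. Hence $\log d_D\geq\max_x f_{\min}(\delta_x)-\min_x f_{\min}(\delta_x)$, which can equal $2\lVert\nabla f_{\min}\rVert_\infty$ (already for $f_{\min}(q)=q(0)-q(1)$). Your dimension-based continuity bound with $\log(1+2d_Od_D)$ then gives $\log(1+2d_O)+2\lceil\lVert\nabla f_{\min}\rVert_\infty\rceil$ in place of the stated prefactor.

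DFR reach the stated $\mu$ differently. They append a pair $D_i\bar D_i$ prepared in a mixture of a maximally entangled and a maximally mixed state of local dimension $2^{\lceil\lVert\nabla f_{\min}\rVert_\infty\rceil}$, whose $\bar D_i$-marginal is uniform. They put $\bar D_i$ into the conditioning and center at the midpoint of $\{f_{\min}(\delta_x)\}_x$. Because $H(D_i\mid\bar D_i)$ can be negative, only half the range is needed. Removing these registers at the end costs at most $n(\bar f-h_\Omega)$, because $H_\alpha\leq H$ for $\alpha>1$; your ``lower-order'' term is therefore actually zero.

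There are also two smaller bookkeeping issues. The bound $H_\alpha\geq H-(\alpha-1)\log^2(1+2d)$ holds only for $1<\alpha<1+1/\log(1+2d)$. You must therefore treat separately the regime $n\leq 1-2\log(\epsilon_{\mathrm{s}}p_\Omega)$, where your optimized $\alpha$ leaves this range; there the claimed bound is vacuous. In addition, the term to balance against $n(\alpha-1)V^2$ is $(1-2\log(\epsilon_{\mathrm{s}}p_\Omega))/(\alpha-1)$, not $\log(1/(\epsilon_{\mathrm{s}}p_\Omega))/(\alpha-1)$. It comes from $g(\epsilon_{\mathrm{s}})\leq 1-2\log\epsilon_{\mathrm{s}}$ together with $\frac{\alpha}{\alpha-1}\leq\frac{2}{\alpha-1}$ for $\alpha\leq 2$.
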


\begin{definition}[Efficient sequential prover with advice]
    \label{def:efficient_sequential_prover_advice}
    A sequential prover is computationally bounded if its private memory consists of polynomially many qubits and every round map is implemented by a quantum circuit of size polynomial in~$(\security,n)$.
    The initial prover--adversary state may be an arbitrary polynomial-size advice state, possibly entangled with~$\eveReg$, and need not be efficiently preparable.
    The advice state must be independent of the verifier's fresh random coins.
\end{definition}

The advice convention in Definition~\ref{def:efficient_sequential_prover_advice} is important.
The infimum in Definition~\ref{def:min_tradeoff_function} ranges over arbitrary input states of the round channel, rather than only states that have a known efficient preparation.
Accordingly, the single-round computational entropy statement used to construct~$f_{\min}$ must hold uniformly for polynomial-size advice states, as in~\cite{merkulov2023entropy}.

\section{Hidden Virtual Inputs and Randomness Certification}
\label{sec:randomness_certification}

In this section, we analyze a commitment-based canonical-form protocol derived from the \poq{} of Morimae and Yamakawa~\cite{MYowf23} and the hidden-qubit procedure of Brakerski \emph{et al.}~\cite{brakerski2023simple}.
We associate the protocol with a virtual Bell input and output and establish two corresponding hiding guarantees.
Classical computational binding hides the virtual input from efficient classical predictors, whereas \pq{} computational binding hides it even from quantum predictors with access to the residual post-Phase-A state.
The former enables a \poq{}, while the latter enables randomness certification.

\subsection{Protocol from \shbcshort{}}
\label{subsec:protocol_from_shbc}

In this subsection, we introduce our \pq{} \shbc{}-based single-round randomness-generation protocol.
We first describe the protocol as an interaction between the verifier and an arbitrary prover, specifying only the exchanged messages and the verifier's actions.
We then give the honest quantum implementation of the same protocol, which uses the commitment and hashing steps to isolate a pair of valid openings and the hidden-qubit procedure of Brakerski \emph{et al.}~\cite{brakerski2023simple} to prepare the state used in Phase~B.
This separation between the protocol interface and its honest implementation will allow us to analyze arbitrary prover strategies while retaining an explicit quantum strategy achieving the desired behavior.

\begin{algorithm}[H]
    \caption{Single-round randomness-generation protocol}
    \label{protocol:single_round}

    \vspace{0.25cm}

    Fix~$\varepsilon\in(0,1)$, $\hashNum\in\N$, and a \shbc{} scheme~$\shbcscheme$.

    \vspace{0.25cm}

    \textbf{Phase A1}

    \vspace{-0.15cm}

    \begin{enumerate}
        \item
        The verifier~$\verifier$ and prover~$\prover$ execute the commit phase of~$\shbcscheme$.
        Let~$\bctrans$ denote the resulting commitment transcript.

        \item
        The verifier uniformly samples a hash-scale index~$\hashIndex \leftarrow \{0,\ldots,\hashNum-1\}$.

        \item
        Let
        \begin{equation}
            \hashfam_{\hashIndex}
            =
            \left\{
                \hashfunc:
                \{0,1\}^{\ell}
                \rightarrow
                [\hashRangeSize_{\hashIndex}]
            \right\}
            \; ,
        \end{equation}
        be a pairwise-independent family of hash functions, where
        \begin{equation}
            \hashRangeSize_{\hashIndex}
            \coloneqq
            \left\lceil
                (1+\hashEpsilon)^{\hashIndex}
            \right\rceil
            \; .
        \end{equation}
        The verifier uniformly samples~$\hashfunc_0,\hashfunc_1 \leftarrow \hashfam_{\hashIndex}$
        and sends~$(\hashIndex,\hashfunc_0,\hashfunc_1)$ to the prover.

        \item
    	The prover sends a hash value~$\hashval\in[\hashRangeSize_{\hashIndex}]$.

		\item
		The verifier checks whether, for both~$\bit\in\{0,1\}$,
		\begin{equation}
			\left|
				\Decomm_{\bit,\bctrans}
				\cap
				\hashfunc_{\bit}^{-1}(\hashval)
			\right|
			=
			1
			\; .
		\end{equation}
		If so, the verifier sets~$\flag=\continue$.
		Otherwise, the verifier sets~$\bellOutcome=0$ and aborts.

        \item
        Conditioned on~$\flag=\continue$, let~$\decomm_0,\decomm_1$ denote the unique elements satisfying
        \begin{equation}
            \decomm_{\bit}
            \in
            \Decomm_{\bit,\bctrans}
            \cap
            \hashfunc_{\bit}^{-1}(\hashval)
            \qquad
            \text{for }
            \bit\in\{0,1\}
            \; .
        \end{equation}
    \end{enumerate}

    \vspace{-0.15cm}

    \textbf{Phase A2}

    \vspace{-0.15cm}

    \begin{enumerate}[resume]
        \item
        The verifier uniformly samples~$\rand_0,\rand_1 \leftarrow \{0,1\}^{\ell}$ and sends~$(\rand_0,\rand_1)$ to the prover.

        \item
        The prover sends a string~$\hadx \in \{0,1\}^{\ell}$.
        The resulting Phase-A transcript is
        \begin{equation}
            \trans
            =
            (
                \bctrans,
                \hashIndex,
                \hashfunc_0,
                \hashfunc_1,
                \hashval,
                \hadx,
                \rand_0,
                \rand_1
            )
            \; .
        \end{equation}
    \end{enumerate}

    \vspace{-0.15cm}

    \textbf{Phase B}

    \vspace{-0.15cm}

    \begin{enumerate}[resume]
        \item
        The verifier uniformly samples~$\bellChallenge \leftarrow \{0,1\}$ and sends~$\bellChallenge$ to the prover.

        \item
        The prover sends a bit~$\bellOutcome \in \{0,1\}$.
    \end{enumerate}
\end{algorithm}

\iidProtocol{} specifies the interaction with an arbitrary prover and therefore makes no assumptions about the prover's internal implementation.
We next give a second description of the \emph{same protocol}, that describes the behavior of an honest quantum prover.
This description makes the prover's internal state evolution explicit and will be useful for establishing completeness and identifying the hidden qubit produced at the end of Phase~A.

\addtocounter{algorithm}{-1}
\begin{algorithm}[H]
	\caption{Single-round randomness-generation protocol: \textbf{Honest implementation}}
	\label{protocol:single_round_honest}
	\vspace{0.25cm}
	
	Fix~$\varepsilon\in(0,1)$, $\hashNum\in\N$, and a \shbc{} scheme~$\shbcscheme$.

	\vspace{0.25cm}
	\textbf{Phase A1}
	\vspace{-0.15cm}
	\begin{enumerate}
		\item
		The verifier and prover execute the commit phase of~$\shbcscheme$, with the prover coherently implementing the honest classical sender's algorithm over a uniform superposition of its internal randomness tapes.
		Let~$\bctrans$ denote the resulting commitment transcript and~$\decomm$ the local information retained by the prover for opening the commitment.
		The prover then holds a superposition over all openings consistent with~$\bctrans$,
		\vspace{-0.2cm}
		\begin{equation}
			\sum_{\bit\in\{0,1\}}
			\sum_{\decomm\in\Decomm_{\bit,\bctrans}}
			\ket{\bit}\ket{\decomm}
			\; .
			\vspace{-0.4cm}
		\end{equation}

		\item 
		Verifier uniformly samples a hash-scale index~$\hashIndex\leftarrow\{0,\ldots,\hashNum-1\}$.
		
		\item 
		Let~$\hashfam_\hashIndex = \{\hashfunc:\{0,1\}^{\ell}\rightarrow[\hashRangeSize_\hashIndex]\}$ be a pairwise-independent family of hash functions, where~$\hashRangeSize_{\hashIndex}\coloneqq\left\lceil(1+\hashEpsilon)^{\hashIndex}\right\rceil$.
		The verifier uniformly samples~$\hashfunc_0,\hashfunc_1\leftarrow\hashfam_{\hashIndex}$ and sends~$(\hashIndex,\hashfunc_0,\hashfunc_1)$ to the prover.

		\item
		The prover coherently evaluates the corresponding hash function on each branch, obtaining
		\begin{equation}
			\sum_{\bit\in\{0,1\}}
			\sum_{\decomm\in\Decomm_{\bit,\bctrans}}
			\ket{\bit}
			\ket{\decomm}
			\ket{\hashfunc_{\bit}(\decomm)}
			\; .
		\end{equation}
		It measures the last register in the computational basis, obtaining a hash value~$\hashval$, and sends~$\hashval$ to the verifier.

		\item
		If, for all~$\bit\in\{0,1\}$,
		\begin{equation}
			\left|
				\Decomm_{\bit,\bctrans}
				\cap
				\hashfunc_\bit^{-1}(\hashval)
			\right|
			=
			1
			\; ,
		\end{equation}
		the verifier sets~$\flag=\continue$.
		Otherwise it aborts.

		\item
		Conditioned on~$\flag=\continue$, let $\decomm_0,\decomm_1$ denote the unique openings.
		The honest prover holds
		\begin{equation}
			\ket{0}\ket{\decomm_0}
			+
			\ket{1}\ket{\decomm_1}
			\; .
		\end{equation}
	\end{enumerate}

	\vspace{-0.15cm}
	\textbf{Phase A2} - The verifier and prover execute the hidden-qubit procedure in \hyperref[protocol:bgkm_protocol]{Protocol~\ref{protocol:bgkm_protocol}}.
	\begin{enumerate}[resume]
		\item
		The verifier uniformly samples~$\rand_0,\rand_1\leftarrow\{0,1\}^{\ell}$ and sends~$(\rand_0,\rand_1)$ to the prover.
		The prover coherently computes the corresponding inner product on each branch, yielding
		\begin{equation}
			\ket{0}\ket{\decomm_0}\ket{\decomm_0\cdot\rand_0}
			+
			\ket{1}\ket{\decomm_1}\ket{\decomm_1\cdot\rand_1}
			\; .
		\end{equation}
		\item
		\label{step:rc_phase_a2_measurement}
		The prover uncomputes the first register and measures the decommitment register in the Hadamard basis, obtaining a string~$\hadx\in\{0,1\}^{\ell}$, which it sends to the verifier.
		The resulting transcript is~$\trans =(\bctrans,\hashIndex,\hashfunc_0,\hashfunc_1,\hashval,\hadx,\rand_0,\rand_1)$
		and the prover holds
		\begin{equation}
			\label{eq:rc_post_phase_a_state}
			\ket{\transState}
			\propto
			\ket{\decomm_0\cdot\rand_0}
			+
			(-1)^{
				\hadx\cdot(\decomm_0\oplus\decomm_1)
			}
			\ket{\decomm_1\cdot\rand_1}
			\; .
		\end{equation}
	\end{enumerate}
	\textbf{Phase B}
	\vspace{-0.15cm}
	\begin{enumerate}[resume]
		\item
		The verifier samples~$\bellChallenge\leftarrow\{0,1\}$ uniformly.

		\item
		The prover measures
		\begin{equation}
			\bellMeasurement_{\bellChallenge}
			=
			\left(
				\sigmaz +
				(-1)^{\bellChallenge}\sigmax
			\right) / \sqrt{2}
		\end{equation}
		on~$\ket{\transState}$ and returns the outcome~$\bellOutcome$.
	\end{enumerate}
\end{algorithm}

A central event in the subsequent analysis is whether the hashing step successfully isolates a unique valid opening for each committed bit.
We denote this \emph{continuation event} by~$\contEvent$ and its probability, as a function of the security parameter~$\security$, by
\begin{equation}
	\label{eq:rc_continuation_probability}
	\pce(\security)
	\coloneqq
	\Pr[\contEvent]
	=
	\Pr[\flag=\continue]
	\; .
\end{equation}
Conditioned on~$\contEvent$, the prover holds a coherent pair of unique valid openings at the end of Phase~A1, and the protocol proceeds to the hidden-qubit procedure in Phase~A2 and subsequently to Phase~B.

\begin{lemma}[Completeness of Phase~A1]
    \label{lemma:rc_phase_a_success}
    Set~$\hashEpsilon=1/100$, and choose
    \begin{equation}
		\hashNum
		=
		\left\lceil
			\frac{\ell+1}{\log_2(1+\hashEpsilon)}
		\right\rceil
		\; .
    \end{equation}
    Then, for an honest \qpt{} prover in Protocol~\ref{protocol:single_round}, the continuation probability satisfies
    \begin{equation}
        \label{eq:MY_completeness}
        \Pr[\contEvent]
        \geq
        \frac{1}{10\hashNum}
        -
        \negl(\security)
        \; .
    \end{equation}
    In particular, since~$\ell=\poly(\security)$, we have~$\hashNum=\poly(\security)$, and hence the honest prover continues to Phase~A2 with inverse-polynomial probability.
\end{lemma}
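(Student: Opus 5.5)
The plan is to condition on the commitment transcript $\bctrans$ and reduce to a counting statement about the sizes $N_\bit\coloneqq|\Decomm_{\bit,\bctrans}|$. First, statistical hiding will show that $N_0$ and $N_1$ are typically within a constant factor of each other. Then one of the $\hashNum$ hash scales is matched to their common size. At that scale, pairwise independence makes both hash-isolation events occur with constant probability.

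\emph{Step 1 (balanced branches).} The honest sender uses a uniform tape $\decomm\in\{0,1\}^\ell$, so $\Pr[\bctrans\mid\bit]\propto N_\bit(\bctrans)$ up to the receiver's coins. Statistical hiding says the transcript distributions for $\bit=0$ and $\bit=1$ are $\negl$-close. Together with a Markov/averaging argument over transcripts, this gives that with probability at least $1-\negl(\security)$ over $\bctrans$ (drawn from the honest coherent execution, which induces the average of the two distributions) we have $N_1/N_0\in[1/2,2]$, say. Concretely, the mass of transcripts with ratio outside $[1/2,2]$ is at most a constant times the statistical distance.

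\emph{Step 2 (choice of scale).} Fix such a transcript and let $N\coloneqq N_0$. Since $1\le N\le 2^\ell$ and $\hashNum\ge(\ell+1)/\log_2(1+\hashEpsilon)$, the geometric grid $\hashRangeSize_\hashIndex=\lceil(1+\hashEpsilon)^\hashIndex\rceil$ covers $[1,2^{\ell+1}]$. Hence there is an index $\hashIndex^*$ with $\hashRangeSize_{\hashIndex^*}\in[N,(1+\hashEpsilon)N+1]$, i.e.\ $N_\bit/\hashRangeSize_{\hashIndex^*}$ lies in a fixed constant window for both $\bit$. The verifier picks $\hashIndex=\hashIndex^*$ with probability $1/\hashNum$, which is the source of the factor $1/\hashNum$.

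\emph{Step 3 (isolation via pairwise independence).} At this scale, the prover measures $\hashval$ on the state $\sum_{\bit,\decomm}\ket{\bit}\ket{\decomm}\ket{\hashfunc_\bit(\decomm)}$. Therefore $\hashval=\hashfunc_\bit(\decomm)$ for a uniformly random pair $(\bit,\decomm)$ among all $N_0+N_1$ openings. Conditioned on this pair, the branch $\bit$ has exactly one preimage iff no other $\decomm'\in\Decomm_{\bit,\bctrans}$ collides. By pairwise independence and a union bound, this happens with probability at least $1-(N_\bit-1)/\hashRangeSize$. The opposite branch $1-\bit$ needs exactly one element of $\Decomm_{1-\bit,\bctrans}$ hashing to $\hashval$, under an independent $\hashfunc_{1-\bit}$. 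This is the standard ``exactly one'' bound: $\Pr\ge \mu-\mu^2$ by inclusion--exclusion (Bonferroni) with pairwise independence, where $\mu=N_{1-\bit}/\hashRangeSize$. I expect this to be the main obstacle. Both constraints, $N_\bit/\hashRangeSize$ small enough for no collision and $N_{1-\bit}/\hashRangeSize$ large enough for a hit, must be met simultaneously. The ratio window from Step 1 and the grid spacing $\hashEpsilon=1/100$ must therefore be tuned so that the product of the two bounds is at least $1/10$.

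If the window with $\hashRangeSize_{\hashIndex^*}\approx N$ is too crude, I would instead pick the scale with $\hashRangeSize\approx 2N$ or $4N$, optimising $\mu(1-\mu)$ against the collision term. The slack between the true optimum and $1/10$ should absorb the factor-$2$ imbalance. Finally, averaging over transcripts gives $\Pr[\contEvent]\ge\frac{1}{\hashNum}\cdot\frac{1}{10}-\negl(\security)$. Since $\ell=\poly(\security)$ gives $\hashNum=\poly(\security)$, the last claim of the statement follows.
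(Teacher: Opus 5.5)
Your three steps follow the paper's route exactly. The paper's proof imports them from Morimae--Yamakawa: branch sizes balanced by statistical hiding (their Lemma~4.1), a good hash scale selected with probability $1/m$ (Lemma~4.2), and constant-probability isolation by pairwise independence (Lemma~5.3). Several of your ingredients are correct as stated:
\begin{itemize}
\item Step~1: the mass of transcripts with ratio outside $[1/2,2]$ is at most $3$ times the statistical distance between the two transcript distributions.
\item Step~3: $z=h_b(x)$ for a uniform pair $(b,x)$ that is independent of $h_0,h_1$.
\item Your two pairwise-independence bounds, $1-(N_b-1)/k$ for no collision and $\mu-\mu^2$ for exactly one hit.
\end{itemize}
The paper itself does not carry out the constant computation; it cites Lemma~5.3. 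In your version the scale calibration you flagged does not close as written. With $k\approx N_0$ from Step~2, the no-collision factor $1-(N_b-1)/k$ is essentially $0$, and it is negative when $N_1=2N_0$.

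To finish, set $a_b=N_b/k$ and average your two bounds over $b$ with weights $N_b/(N_0+N_1)=a_b/(a_0+a_1)$. This gives
\[
\Pr[\text{continue}\mid t,\,k]\;\ge\;\frac{2a_0a_1(1-a_0)(1-a_1)}{a_0+a_1}.
\]
Both of your fallbacks have problems:
\begin{itemize}
\item $k\approx 2N_0$ still gives $0$ when $N_1=2N_0$, since then $a_1\approx 1$.
\item $k\approx 4N_0$ gives at least about $0.109$ on the window $[1/2,2]$. However, it need not lie on the grid once $N_0$ exceeds roughly $2^{\ell-1}$, because the stated $m$ only guarantees grid points up to about $2^{\ell+1}$.
\end{itemize}
A choice that works is $k\approx 2\max(N_0,N_1)$. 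If the larger $a_b$ is $1/2$ and the smaller is $a\in[1/4,1/2]$, the bound equals $a(1-a)/(2a+1)\ge 1/8$. The $1.01$ grid spacing, including the case at the top where you must use the largest grid point $\ge 2^{\ell+1}/1.01$, keeps it above $0.12$. Also, $2\max(N_0,N_1)\le 2^{\ell+1}$ is exactly the range that the condition $(1+\epsilon)^m\ge 2^{\ell+1}$ is designed to cover.

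Alternatively, rerun your Step~1 averaging with threshold $\gamma=\sqrt{\delta}$, where $\delta$ is the statistical distance. This gives $N_1/N_0\in[1-\gamma,1+\gamma]$ except on mass $O(\sqrt{\delta})$. Then $k\approx 2N_0$ yields $\alpha(1-\alpha)^2\approx 1/8$ at $\alpha\approx 1/2$.
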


\begin{proof}
    The argument follows the completeness analysis of Morimae and Yamakawa~\cite[Sec.~5.1]{MYowf23}.
    We recall the main ingredients to explain the role of the hashing procedure.

    By statistical hiding, except with negligible probability over the commitment transcript~$\bctrans$, the sets of consistent honest sender random tapes~$\Decomm_{0,\bctrans}$ and~$\Decomm_{1,\bctrans}$ have approximately equal cardinalities~\cite[Lemma~4.1]{MYowf23}.

    The verifier does not know the cardinalities of these sets.
    It therefore samples the hash-scale index~$\hashIndex$ uniformly from~$\{0,\ldots,\hashNum-1\}$.
    Our choice of~$\hashNum$ guarantees that
    \begin{equation}
        (1+\hashEpsilon)^{\hashNum}
        \geq
        2^{\ell+1}
        \; .
    \end{equation}
    Consequently, for every transcript satisfying the above cardinality condition, Lemma~4.2 of~\cite{MYowf23} guarantees the existence of a suitable hash scale at which the expected number of surviving openings in each branch is of constant order.
    Since~$\hashIndex$ is chosen uniformly, such a scale is selected with probability at least~$1/\hashNum$.

    Finally, conditioned on selecting a suitable scale, the pairwise-independent hashing analysis of~\cite[Lemma~5.3]{MYowf23} shows that the hash value produced by the honest prover isolates exactly one valid decommitment in each branch with probability at least~$1/10$.

    Combining these events gives
    \begin{align}
        \Pr[\contEvent]
        &\geq
        \left(
            1-\negl(\security)
        \right)
        \frac{1}{\hashNum}
        \frac{1}{10}
        \nonumber\\
        &\geq
        \frac{1}{10\hashNum}
        -
        \negl(\security)
        \; ,
    \end{align}
    as claimed.
\end{proof}

Following~\cite{cbi}, we associate each continuing transcript with a virtual Bell input and output through a pair of functions~$\imap$ and~$\omap$.
The choice of these functions is an important part of the Bell mapping: in particular, we seek a virtual-input function~$\imap$ for which~$\imap(\trans)$ remains computationally hidden from the prover even given its residual post-Phase-A state.
Recall from Equation~\eqref{eq:rc_post_phase_a_state} that the honest post-Phase-A state is one of the four BB84 states.
Motivated by this state, we choose~$\imap$ and~$\omap$ so that the honest execution admits the desired CHSH interpretation and, as we show below, the resulting virtual input satisfies the required quantum hiding property.

\begin{definition}[Virtual input and virtual output for {\iidProtocol{}}]
    \label{def:rc_virtual_input_output}
    Let~$\trans$ be a transcript for which~$\flag=\continue$.
    The virtual-input function~$\imap$ is
    \begin{equation}
        \imap(\trans)
        \coloneqq
        (\decomm_0\cdot\rand_0)
        \oplus
        (\decomm_1\cdot\rand_1)
        \; ,
    \end{equation}
    and the virtual-output function~$\omap$ is
    \begin{equation}
        \omap(\trans)
        \coloneqq
        \begin{cases}
            \decomm_0\cdot\rand_0
            \; ,
            &
            \text{if }\imap(\trans)=0
            \; ,
            \\[1.8mm]
            \hadx\cdot(\decomm_0\oplus\decomm_1)
            \; ,
            &
            \text{if }\imap(\trans)=1
            \; .
        \end{cases}
    \end{equation}
\end{definition}

Let~$\transRV$ denote the transcript random variable, and define the corresponding
virtual-input and virtual-output random variables by
\begin{equation}
    \virtualInputRV
    \coloneqq
    \imap(\transRV)
    \; ,
    \qquad
    \virtualOutputRV
    \coloneqq
    \omap(\transRV)
    \; .
\end{equation}
Let~$\bellChallengeRV$ and~$\bellOutcomeRV$ denote the Phase~B challenge and output
random variables, respectively.
\begin{definition}[Acceptance predicate and probability]
    \label{def:rc_acceptance_predicate}
    Conditioned on~$\contEvent$, the verifier accepts if and only if
    \begin{equation}
        \label{eq:rc_acceptance_predicate}
        \bellOutcomeRV
        \oplus
        \virtualOutputRV
        =
        \bellChallengeRV\cdot\virtualInputRV
        \; .
    \end{equation}
    This is precisely the CHSH winning condition under the Bell mapping chosen above.

    We denote the corresponding acceptance probability by
    \begin{equation}
        \label{eq:rc_acceptance_probability}
        \pacc
        \coloneqq
        \Pr\left[
            \bellOutcomeRV
            \oplus
            \virtualOutputRV
            =
            \bellChallengeRV\cdot\virtualInputRV
            \;\middle|\;
            \contEvent
        \right] = \chshopt - \varepsilon
        \; .
    \end{equation}
	where $\chshopt \approx 0.85$ is the optimal quantum CHSH value and~$\varepsilon$ is the deviation from optimality.
\end{definition}

The security of the protocol reduces to proving the computational hiding of~$\virtualInputRV=\imap(\transRV)$.

\newcommand{\virtualHidingRef}[1]{%
    \hyperref[thm:virtual_input_hiding]{%
        Theorem~\ref*{thm:intro_virtual_input_hiding}%
        \IfStrEqCase{#1}{%
            {}{}%
            {classical}{.\ref*{item:classical_hiding}}%
            {quantum}{.\ref*{item:quantum_hiding}}%
        }%
    }%
}

\begin{virtualinputrestate}[Classical and quantum hiding of the virtual input]
	\phantomsection
	\label{thm:virtual_input_hiding}
	Let~$\transStaterv$ denote the residual state held by the adversarial prover at the end of Phase~A2 of \iidProtocol{}, after Step~\ref{step:rc_phase_a2_measurement} (see Equation~\eqref{eq:rc_post_phase_a_state} for the corresponding honest-prover state).

	\begin{enumerate}[label=\alph*.,ref=\alph*,leftmargin=30pt,rightmargin=10pt]
		\item
		\label{item:classical_hiding}
		Suppose that the underlying \shbc{} is computationally binding according to Definition~\ref{def:classical_binding}.
		Then, for every adversarial \ppt{} prover~$\prover^*$ and every \ppt{} predictor~$\adversary$, there exists a negligible function~$\nu_{\prover^*,\adversary}$ such that
		\begin{equation}
			\Pr\left[
				\adversary(\transStaterv)
				=
				\virtualInputRV
				\;\wedge\;
				\contEvent
			\right]
			\leq
			\frac12\pce(\security)
			+
			\nu_{\prover^*,\adversary}(\security)
			\; .
		\end{equation}
		\item
		\label{item:quantum_hiding}
		Suppose that the underlying \shbc{} is \pq{} computationally binding according to Definition~\ref{def:post_quantum_binding}.
		Then, for every adversarial \qpt{} prover~$\prover^*$ and every \qpt{} predictor~$\qadversary$, there exists a negligible function~$\nu_{\prover^*,\qadversary}$ such that
		\begin{equation}
			\Pr\left[
				\qadversary(\transStaterv)
				=
				\virtualInputRV
				\;\wedge\;
				\contEvent
			\right]
			\leq
			\frac12\pce(\security)
			+
			\nu_{\prover^*,\qadversary}(\security)
			\; .
		\end{equation}
	\end{enumerate}

	Consequently, whenever~$\pce(\security)\geq1/\poly(\security)$, the corresponding conditional prediction probability is at most $1/2+\negl(\security)$.
\end{virtualinputrestate}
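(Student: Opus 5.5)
We prove both parts by contraposition with a single reduction, carried out in the quantum setting; the classical case is the same argument with every step replaced by its \ppt{} analogue, using the classical Goldreich--Levin theorem. Suppose some \qpt{} prover~$\prover^*$ and predictor~$\qadversary$ satisfy
\begin{equation}
\Pr\left[\qadversary(\transStaterv)=\virtualInputRV\wedge\contEvent\right]\geq\tfrac12\pce(\security)+\delta(\security)
\end{equation}
for a non-negligible~$\delta$. From this pair we will build a \qpt{} sender~$\sender^*$ that wins $\bindExp$ with non-negligible probability.

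The sender~$\sender^*$ runs~$\prover^*$ against the honest receiver in the commit phase, producing~$\bctrans$. It then simulates the verifier's remaining Phase~A1 messages itself: it samples~$\hashIndex,\hashfunc_0,\hashfunc_1$ and obtains~$\hashval$ from~$\prover^*$. Call the resulting post-A1 state~$\sigma$. A Markov/averaging argument gives a set of Phase~A1 prefixes with two properties: the prefixes lie in~$\contEvent$, and their total weight is at least~$\delta/2$. On each such prefix, the predictor's advantage, averaged over uniform~$(\rand_0,\rand_1)$ and over~$\prover^*$'s response~$\hadx$, is at least~$\delta/2$. On these prefixes~$\decomm_0,\decomm_1$ are well defined, even though~$\sender^*$ does not know them.

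Write~$\rand=(\rand_0,\rand_1)$ and~$\decomm=(\decomm_0,\decomm_1)\in\{0,1\}^{2\ell}$, so that~$\imap(\trans)=\decomm\cdot\rand$. The composed map $\rand\mapsto\qadversary(\prover^*_{\mathrm{A2}}(\sigma,\rand))$ is then a quantum algorithm that, given~$\rand$ and the auxiliary state~$\sigma$, predicts the Goldreich--Levin hardcore bit~$\decomm\cdot\rand$ with advantage~$\delta/2$. Here~$\prover^*_{\mathrm{A2}}$ sends~$\hadx$ and returns its residual state. We invoke the quantum Goldreich--Levin theorem~\cite{qgl}, which takes a quantum predictor with auxiliary quantum input and advantage~$\delta/2$ and outputs~$\decomm$ with probability~$\poly(\delta)$. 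The sender outputs~$(\decomm_0,\decomm_1)$. These satisfy $\open(\bctrans,0,\decomm_0)=\open(\bctrans,1,\decomm_1)=\top$, because by Remark~\ref{remark:canonical_opening_representation} they are elements of~$\Decomm_{\bit,\bctrans}$. The sender can also check validity itself and simply output whatever it gets. Its overall success probability is therefore at least $(\delta/2)\cdot\poly(\delta)$, which is non-negligible and contradicts Definition~\ref{def:post_quantum_binding}.

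The main obstacle is applying Goldreich--Levin when the auxiliary state is quantum and the prediction is destructive. The predictor measures~$\sigma$, and the honest-looking prover's A2 message~$\hadx$ also disturbs it. The reconstruction must still work from a single copy of~$\sigma$, because we cannot rewind to the same~$\bctrans$. We handle this with the Adcock--Cleve style quantum Goldreich--Levin~\cite{qgl}. We purify~$\prover^*_{\mathrm{A2}}$ and~$\qadversary$ into a unitary~$U_\rand$ acting on~$\sigma$, where the value of~$\hadx$ remains an internal classical register. We then run~$U_\rand$ in superposition over~$\rand$ with phase kickback on the predicted bit, uncompute, and Fourier-sample. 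This recovers~$\decomm$ with probability at least~$(2\cdot\delta/2)^2$ on good prefixes, which is the step where the one-query, one-copy nature of the reduction matters. A secondary care point is the weighted formulation: all probabilities must be taken jointly with~$\contEvent$ rather than conditioned on it, so that no lower bound on~$\pce$ is needed. The final ``consequently'' sentence of the theorem then follows by dividing by~$\pce\geq1/\poly$.
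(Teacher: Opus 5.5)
Your proposal is correct and follows essentially the same route as the paper. Both simulate the prover through the commit and hashing steps, view $\imap(\trans)=\concRand\cdot\concDecomm_\sigma$ as a Goldreich--Levin bit of the concatenated unique openings, and extract both openings from a single copy of the residual state via the one-query quantum Goldreich--Levin bound $4\beta_\sigma^2$, contradicting post-quantum binding. The differences are minor: you select good prefixes by a threshold (Markov-type) argument, giving $\Omega(\delta^3)$, where the paper uses Cauchy--Schwarz to get $4\delta^2/\pce$; and for part (a) the paper makes explicit that the classical configuration can be saved and rewound so the multi-query classical list decoder can run, which is what your ``PPT analogue'' must mean.
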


\subsection{Certification of Nonclassicality from \shbcshort{}}
\label{subsec:nonclassicality_from_shbc}

Classical hiding of the virtual input already turns the protocol into a proof of quantumness.
The reason is that a classical prover that wins the Bell-mapped CHSH test with probability greater than~$3/4$ must adapt its Phase-B response to the virtual input, whereas \virtualHidingRef{classical} rules out learning that input with non-negligible advantage.
This is precisely the computational replacement for the locality constraint in an ordinary Bell test.

To include executions that do not reach Phase~B, define
\begin{equation}
    \mathsf{Score}_{\mathrm{PoQ}}(\prover)
    \coloneqq
    \Pr_{\prover}
    \left[
        \contEvent
        \;\wedge\;
        \virtualOutputRV\oplus\bellOutcomeRV
        =
        \virtualInputRV\bellChallengeRV
    \right]
    +
    \frac{3}{4}\Pr_{\prover}[\neg\contEvent]
    \; .
\end{equation}

\begin{theorem}[Computational Bell violation]
    \label{thm:shbc_nonclassicality}
    Suppose that the commitment scheme is statistically hiding and computationally binding against polynomial-time classical senders.
    Then every polynomial-time classical prover~$\prover^*$ satisfies
    \begin{equation}
        \mathsf{Score}_{\mathrm{PoQ}}(\prover^*)
        \leq
        \frac{3}{4}
        +
        \negl(\security)
        \; .
    \end{equation}
    The honest quantum prover satisfies
    \begin{equation}
        \mathsf{Score}_{\mathrm{PoQ}}(\prover_{\mathsf{hon}})
        \geq
        \frac{3}{4}
        +
        \frac{\sqrt{2}-1}{40\hashNum}
        -
        \negl(\security)
        \; .
    \end{equation}
\end{theorem}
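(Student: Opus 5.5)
The proof splits into a soundness bound and a completeness bound.

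\textbf{Soundness.} Combine classical virtual-input hiding, \virtualHidingRef{classical}, with the weighted computational CHSH bound, Proposition~\ref{prop:weighted_computational_chsh}. First check the hypotheses. Protocol~\ref{protocol:single_round} is in canonical form, with $\flag=\continue$ exactly the event $\contEvent$. The Phase~B challenge $\bellChallengeRV$ is sampled uniformly and independently after Phase~A. \virtualHidingRef{classical} states precisely that $\virtualInputRV$ is weighted hidden (Definition~\ref{def:weighted_hidden_virtual_input}) from every \ppt{} prover and predictor, using only computational binding against \ppt{} senders.

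Proposition~\ref{prop:weighted_computational_chsh} then gives, for every \ppt{} prover,
\begin{equation}
\Pr[\contEvent\wedge \virtualOutputRV\oplus\bellOutcomeRV=\virtualInputRV\bellChallengeRV]\le \tfrac34\Pr[\contEvent]+\nu(\security).
\end{equation}
Adding $\tfrac34\Pr[\neg\contEvent]$ to both sides yields $\mathsf{Score}_{\mathrm{PoQ}}\le \tfrac34+\nu$. The weighted formulation matters here: it means no lower bound on $\pce$ is needed for an adversarial prover. If one wanted to unpack the proposition, the underlying idea is that a classical prover's Phase-B strategy can be rewound on both values of $\bellChallenge$. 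Its two answers then determine a guess for $\virtualInputRV$, and a CHSH win rate above $3/4$ translates into a predictor with advantage, contradicting hiding. Since the proposition is already stated, I would just invoke it.

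\textbf{Completeness.} Write the honest score as $\Pr[\contEvent]\,\pacc+\tfrac34(1-\Pr[\contEvent])=\tfrac34+\Pr[\contEvent](\pacc-\tfrac34)$. It therefore suffices to show two things: that $\Pr[\contEvent]\ge\frac{1}{10\hashNum}-\negl$, which is Lemma~\ref{lemma:rc_phase_a_success}, and that the honest acceptance probability conditioned on $\contEvent$ equals $\cos^2(\pi/8)=\frac{2+\sqrt2}{4}$. Then $\pacc-\tfrac34=\frac{\sqrt2-1}{4}$, and the product gives $\frac{\sqrt2-1}{40\hashNum}-\negl$.

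For the conditional acceptance probability, start from the post-Phase-A state in Equation~\eqref{eq:rc_post_phase_a_state}. Set $u=\decomm_0\cdot\rand_0$, $v=\decomm_1\cdot\rand_1$ and $s=\hadx\cdot(\decomm_0\oplus\decomm_1)$.
\begin{itemize}
\item If $\virtualInputRV=0$, then $u=v$ and the state is (up to global phase) the computational basis state $\ket{u}$. Here $\virtualOutputRV=u$, so the state is a $\sigmaz$ eigenstate with eigenvalue $(-1)^{\virtualOutputRV}$.
\item If $\virtualInputRV=1$, then $\{u,v\}=\{0,1\}$ and the state is $\ket{0}\pm\ket{1}$ with relative sign $(-1)^{s}$. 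Here $\virtualOutputRV=s$, so the state is a $\sigmax$ eigenstate with eigenvalue $(-1)^{\virtualOutputRV}$.
\end{itemize}
So the prover holds the eigenstate of $\sigmaz$ (for $\virtualInputRV=0$) or $\sigmax$ (for $\virtualInputRV=1$) with eigenvalue $(-1)^{\virtualOutputRV}$. Measuring $\bellMeasurement_{\bellChallenge}=(\sigmaz+(-1)^{\bellChallenge}\sigmax)/\sqrt2$ gives $\mathbb{E}[(-1)^{\bellOutcomeRV\oplus\virtualOutputRV}]=\pm1/\sqrt2$, with the sign equal to $(-1)^{\virtualInputRV\bellChallenge}$. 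Hence each of the four $(\virtualInputRV,\bellChallenge)$ cases wins with probability $\frac12(1+\frac1{\sqrt2})=\cos^2(\pi/8)$.

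\textbf{Main obstacle.} The only delicate points are bookkeeping. First, in the case $\virtualInputRV=1$ with $u=1,v=0$, the relative phase must be tracked correctly, since it determines whether the eigenvalue is $(-1)^{s}$; this depends on the convention used for the uncompute step. Second, the conditional honest win rate must be exactly $\cos^2(\pi/8)$, independent of the transcript, so that it factors out of the product. Beyond these, the argument is an application of the earlier results.
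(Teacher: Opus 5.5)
Your proposal is correct and follows the paper's proof essentially step for step. For soundness you combine classical virtual-input hiding (part a of Theorem~\ref{thm:intro_virtual_input_hiding}) with Proposition~\ref{prop:weighted_computational_chsh} and add $\frac34\Pr[\neg\contEvent]$; for completeness you combine Lemma~\ref{lemma:rc_phase_a_success} with the honest conditional CHSH value $\cos^2(\pi/8)$. Your explicit BB84 computation of that value fills in a step the paper only asserts. The phase issue you flag is harmless, since $\ket{1}+(-1)^s\ket{0}=(-1)^s\bigl(\ket{0}+(-1)^s\ket{1}\bigr)$ for $s\in\{0,1\}$, so the eigenvalue is $(-1)^s$ regardless of the uncompute convention.
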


\begin{proof}
    \virtualHidingRef{classical} and Proposition~\ref{prop:weighted_computational_chsh} give
    \begin{equation}
        \Pr
        \left[
            \contEvent
            \;\wedge\;
            \mathsf{CHSH\ win}
        \right]
        \leq
        \frac{3}{4}\Pr[\contEvent]
        +
        \negl(\security)
        \; .
    \end{equation}
    Adding~$\frac{3}{4}\Pr[\neg\contEvent]$ proves the classical bound.
    For the honest quantum prover, the conditional CHSH winning probability is~$\cos^2(\pi/8)$ and Lemma~\ref{lemma:rc_phase_a_success} gives
    \begin{equation}
        \Pr[\contEvent]
        \geq
        \frac{1}{10\hashNum}
        -
        \negl(\security)
        \; .
    \end{equation}
    Substitution gives the claimed inverse-polynomial separation.
\end{proof}

Conditioned on~$\contEvent$, the honest Bell-mapped behavior achieves the CHSH winning probability~$\cos^2(\pi/8)$, giving a constant violation of~$(\sqrt{2}-1)/4$ over the classical bound.
However, the continuation event occurs with probability only~$\Omega(1/\poly(\security))$, so the completeness-soundness gap of the complete one-shot protocol is inverse polynomial.
This is also the case in the original construction of Morimae and Yamakawa~\cite{MYowf23}, who show that the gap can be amplified by sequential repetition using a stronger notion of soundness.
Such amplification is not immediate for inefficient-verifier protocols, since standard repetition arguments need not apply.

\subsection{Certification of Randomness from \pqshort{}-\shbcshort{}}
\label{subsec:randomness_from_pq_shbc}

Randomness certification requires strengthening the above classical hiding guarantee to quantum hiding.
A quantum prover may retain a residual quantum state from which the virtual input can be recovered efficiently, even when it remains hidden from every efficient classical prover.
Such a prover could condition its Phase~B measurement on the recovered virtual input and reproduce the desired Bell correlations without exhibiting the measurement incompatibility required for randomness certification.

The quantum hiding guarantee of \virtualHidingRef{quantum} rules out this possibility more generally by ensuring that no \qpt{} prover can predict the virtual input with non-negligible advantage from its residual post-Phase-A state.
Requiring the \shbc{} to additionally satisfy computational binding against \qpt{} senders ensures that the virtual input remains hidden even from a \qpt{} prover with access to its residual post-Phase-A quantum state.
Thus, strengthening the binding guarantee from classical to \pq{} security upgrades the protocol from a \poq{} to a randomness-certification protocol.

\begin{proof}[Proof of~\virtualHidingRef{quantum}]
	Fix an adversarial \qpt{} prover~$\prover^*$ and a \qpt{} predictor~$\qadversary$.
	Let~$\transStaterv$ denote the residual state held by~$\prover^*$ after Phase~A, including the classical transcript as classical side information.

	Assume towards a contradiction that the weighted prediction advantage
	\begin{equation}
		\label{eq:quantum_hiding_advantage}
		\delta(\security)
		\coloneqq
		\Pr\left[
			\qadversary(\transStaterv)
			=
			\virtualInputRV
			\;\wedge\;
			\contEvent
		\right]
		-
		\frac12
		\Pr[\contEvent]
	\end{equation}
	is non-negligible.
	Then there exist a polynomial~$q$ and an infinite set~$\mathcal I\subseteq\N$ such that
	\begin{equation}
		\label{eq:quantum_hiding_noticeable_advantage}
		\delta(\security)
		\geq
		\varepsilon_0(\security)
		\coloneqq
		\frac{1}{q(\security)}
		\end{equation}
	for every~$\security\in\mathcal I$.
	We restrict attention to these security parameters.

	We construct a \qpt{} sender~$\guesser$ that violates \pq{} computational binding.
	During the commit phase,~$\guesser$ internally executes~$\prover^*$ and forwards the messages between~$\prover^*$ and the honest receiver.
	After the commit phase is complete,~$\guesser$ executes the hashing step of Phase~A1 internally, sampling all verifier parameters exactly according to the protocol and sending the corresponding messages to~$\prover^*$.

	Let~$\Sigma$ denote the complete classical prefix immediately after~$\hashval$ is produced and immediately before~$\rand_0,\rand_1$ are sampled.
	Thus,~$\Sigma$ contains all public values fixed before~$\rand_0,\rand_1$ are sampled, including
	\begin{equation}
		\Sigma
		=
		\left(
			\bctrans,
			\hashIndex,
			\hashRangeSize,
			\hashfunc_0,
			\hashfunc_1,
			\hashval
		\right)
		\; .
	\end{equation}
	For every prefix~$\sigma$ with positive probability, let
	\begin{equation}
		\mu(\sigma)
		\coloneqq
		\Pr[\Sigma=\sigma]
		\; ,
	\end{equation}
	and let~$\rho_\sigma$ denote the residual quantum state held by~$\prover^*$ conditioned on~$\Sigma=\sigma$.
	Any auxiliary or advice registers used by~$\prover^*$ and~$\qadversary$ are included among the auxiliary registers throughout the reduction.

	Let~$\continuingPrefixes$ denote the set of prefixes for which~$\contEvent$ holds.
	Since~$\contEvent$ is determined by the classical prefix~$\Sigma$, we have
	\begin{equation}
		\label{eq:quantum_continuing_prefix_mass}
		\sum_{\sigma\in\continuingPrefixes}
		\mu(\sigma)
		=
		\Pr[\contEvent]
		=
		\pce
		\; .
	\end{equation}
	The reduction~$\guesser$ does not decide whether its sampled prefix belongs to~$\continuingPrefixes$.
	The set~$\continuingPrefixes$ is used only in the analysis.

	Fix a prefix~$\sigma\in\continuingPrefixes$.
	By the definition of the continuation event, there exist unique strings~$\decomm_{0,\sigma}$ and~$\decomm_{1,\sigma}$ such that
	\begin{equation}
		\Decomm_{\bit,\bctrans^\sigma}
		\cap
		\hashfunc_{\bit,\sigma}^{-1}(\hashval_\sigma)
		=
		\left\{
			\decomm_{\bit,\sigma}
		\right\}
		\qquad
		\text{for every }\bit\in\{0,1\}
		\; .
	\end{equation}
	Using the fixed polynomial-length encoding of decommitment strings, define
	\begin{equation}
		\concDecomm_\sigma
		\coloneqq
		\decomm_{0,\sigma}
		\mathbin\Vert
		\decomm_{1,\sigma}
		\in
		\{0,1\}^{2\ell}
		\; .
	\end{equation}
	For
	\begin{equation}
		\concRand
		\coloneqq
		\rand_0
		\mathbin\Vert
		\rand_1
		\in
		\{0,1\}^{2\ell}
		\; ,
	\end{equation}
	the virtual input satisfies
	\begin{align}
		\imap(\trans)
		&=
		(\rand_0\cdot\decomm_{0,\sigma})
		\oplus
		(\rand_1\cdot\decomm_{1,\sigma})
		\nonumber\\
		&=
		\concRand\cdot\concDecomm_\sigma
		\; .
		\label{eq:quantum_concatenated_inner_product}
	\end{align}

	For every~$\sigma\in\continuingPrefixes$, define
	\begin{equation}
		p_\sigma
		\coloneqq
		\Pr\left[
			\qadversary(\transStaterv)
			=
			\concRand\cdot\concDecomm_\sigma
			\;\middle|\;
			\Sigma=\sigma
		\right]
		\; ,
	\end{equation}
	where the probability is over uniform and independent
	$\concRand\leftarrow\{0,1\}^{2\ell}$ and over all measurements performed in the remainder of Phase~A and by~$\qadversary$.
	Define the corresponding bias by
	\begin{equation}
		\beta_\sigma
		\coloneqq
		p_\sigma-\frac12
		\; .
	\end{equation}

	By Equations~\eqref{eq:quantum_hiding_advantage} and~\eqref{eq:quantum_continuing_prefix_mass},
	\begin{align}
		\sum_{\sigma\in\continuingPrefixes}
		\mu(\sigma)\beta_\sigma
		&=
		\sum_{\sigma\in\continuingPrefixes}
		\mu(\sigma)
		\left(
			p_\sigma-\frac12
		\right)
		\nonumber\\
		&=
		\Pr\left[
			\qadversary(\transStaterv)
			=
			\virtualInputRV
			\;\wedge\;
			\contEvent
		\right]
		-
		\frac12\Pr[\contEvent]
		\nonumber\\
		&=
		\delta(\security)
		\nonumber\\
		&\geq
		\varepsilon_0(\security)
		\; .
		\label{eq:quantum_average_bias}
	\end{align}

	Define the set of positive-bias continuing prefixes by
	\begin{equation}
		\mathcal C_+
		\coloneqq
		\left\{
			\sigma\in\continuingPrefixes:
			\beta_\sigma>0
		\right\}
		\; .
	\end{equation}
	Since the prefixes outside~$\mathcal C_+$ contribute nonpositively to Equation~\eqref{eq:quantum_average_bias},
	\begin{equation}
		\label{eq:quantum_positive_bias_mass}
		\sum_{\sigma\in\mathcal C_+}
		\mu(\sigma)\beta_\sigma
		\geq
		\delta(\security)
		\geq
		\varepsilon_0(\security)
		\; .
	\end{equation}

	We now describe the quantum Goldreich--Levin procedure associated with a fixed prefix~$\sigma\in\mathcal C_+$.
	Let
	\begin{equation}
		\ket{\psi_\sigma}_{\proverReg\eveReg}
	\end{equation}
	be an arbitrary purification of~$\rho_\sigma$, where~$\proverReg$ contains all registers accessible to~$\guesser$ and~$\eveReg$ is an inaccessible reference register.
	The reference register is used only for the analysis, and all operations below act trivially on~$\eveReg$.

	By applying deferred measurement and retaining coherent copies of all measurement outcomes, the remainder of Phase~A together with the execution of~$\qadversary$ can be represented by a polynomial-size unitary~$U_\sigma$.
	On input~$\ket{\concRand}$, this unitary performs the following operations coherently:
	\begin{enumerate}
		\item it copies~$\concRand$ into separate message registers while leaving the original query register unchanged;
		\item it executes the remainder of~$\prover^*$'s Phase-A strategy and records all classical messages and measurement outcomes in auxiliary registers;
		\item it executes~$\qadversary$ on the resulting residual state and classical transcript; and
		\item it writes the predictor's output into a designated output qubit.
	\end{enumerate}
	Because the original query register is used only as a read-only control, the first~$2\ell$ qubits of every computational-basis query~$\ket{\concRand}$ remain in the state~$\ket{\concRand}$.
	Furthermore, measuring the designated output qubit after applying~$U_\sigma$ gives
	\begin{equation}
		\Pr\left[
			\predictorGuess
			=
			\concRand\cdot\concDecomm_\sigma
			\;\middle|\;
			\Sigma=\sigma
		\right]
		=
		\frac12+\beta_\sigma
		\; .
	\end{equation}

	Thus, for every~$\sigma\in\mathcal C_+$, the unitary~$U_\sigma$, its inverse~$U_\sigma^\dagger$, and the auxiliary state~$\rho_\sigma$ define a quantum inner-product query with hidden string~$\concDecomm_\sigma$ and bias~$\beta_\sigma$.
	Although~$\rho_\sigma$ may be mixed, the quantum Goldreich--Levin theorem applies by purifying~$\rho_\sigma$ and extending~$U_\sigma$ by the identity on the purifying register.
	The extractor itself acts only on the registers accessible to~$\guesser$ and does not require access to~$\eveReg$.

	By the quantum Goldreich--Levin theorem~\cite[Theorem~2.2]{brakerski2023simple}, based on~\cite{qgl}, there is a fixed quantum polynomial-time extraction procedure that uses~$U_\sigma$ and~$U_\sigma^\dagger$ and outputs a string
	\begin{equation}
		\widehat\concDecomm
		\in
		\{0,1\}^{2\ell}
	\end{equation}
	satisfying
	\begin{equation}
		\label{eq:quantum_qgl_extraction}
		\Pr\left[
			\widehat\concDecomm
			=
			\concDecomm_\sigma
			\;\middle|\;
			\Sigma=\sigma
		\right]
		\geq
		4\beta_\sigma^2
		\; .
	\end{equation}
	This is precisely the concatenated-string application of quantum Goldreich--Levin used in~\cite[Section~5.3]{brakerski2023simple}.

	The binding adversary~$\guesser$ applies this same extraction procedure to its single sampled prefix~$\Sigma=\sigma$, without deciding whether~$\sigma$ belongs to~$\continuingPrefixes$ or~$\mathcal C_+$.
	It parses the resulting string as
	\begin{equation}
		\widehat\concDecomm
		=
		\widehat\decomm_0
		\mathbin\Vert
		\widehat\decomm_1
		\; .
	\end{equation}
	If
	\begin{equation}
		\open(\bctrans^\sigma,0,\widehat\decomm_0)
		=
		\top
		\qquad\text{and}\qquad
		\open(\bctrans^\sigma,1,\widehat\decomm_1)
		=
		\top
		\; ,
	\end{equation}
	then~$\guesser$ outputs~$(\widehat\decomm_0,\widehat\decomm_1)$ in the binding experiment.
	Otherwise, it outputs an arbitrary pair of strings.

	Whenever~$\sigma\in\mathcal C_+$ and $\widehat\concDecomm=\concDecomm_\sigma$, both opening checks accept.
	Therefore,
	\begin{equation}
		\label{eq:quantum_binding_success_from_biases}
		\Pr\left[
			\bindExp_{\guesser,\bc}(1^\security)
			=
			1
		\right]
		\geq
		4
		\sum_{\sigma\in\mathcal C_+}
		\mu(\sigma)\beta_\sigma^2
		\; .
	\end{equation}

	By the Cauchy--Schwarz inequality,
	\begin{align}
		\left(
			\sum_{\sigma\in\mathcal C_+}
			\mu(\sigma)\beta_\sigma
		\right)^2
		&\leq
		\left(
			\sum_{\sigma\in\mathcal C_+}
			\mu(\sigma)
		\right)
		\left(
			\sum_{\sigma\in\mathcal C_+}
			\mu(\sigma)\beta_\sigma^2
		\right)
		\nonumber\\
		&\leq
		\pce
		\sum_{\sigma\in\mathcal C_+}
		\mu(\sigma)\beta_\sigma^2
		\; .
		\label{eq:quantum_bias_cauchy_schwarz}
	\end{align}
	Equation~\eqref{eq:quantum_positive_bias_mass} implies that~$\pce>0$.
	Combining Equations~\eqref{eq:quantum_binding_success_from_biases} and~\eqref{eq:quantum_bias_cauchy_schwarz}, we obtain
	\begin{align}
		\Pr\left[
			\bindExp_{\guesser,\bc}(1^\security)
			=
			1
		\right]
		&\geq
		\frac{4}{\pce}
		\left(
			\sum_{\sigma\in\mathcal C_+}
			\mu(\sigma)\beta_\sigma
		\right)^2
		\nonumber\\
		&\geq
		\frac{4\delta(\security)^2}{\pce}
		\nonumber\\
		&\geq
		4\delta(\security)^2
		\nonumber\\
		&\geq
		4\varepsilon_0(\security)^2
		\; ,
	\end{align}
	where the penultimate inequality uses~$\pce\leq1$.
	The final quantity is non-negligible on the infinite set~$\mathcal I$.
	This contradicts \pq{} computational binding.

	We conclude that, for the fixed pair~$(\prover^*,\qadversary)$, there exists a negligible function~$\nu_{\prover^*,\qadversary}$ such that
	\begin{equation}
		\Pr\left[
			\qadversary(\transStaterv)
			=
			\virtualInputRV
			\;\wedge\;
			\contEvent
		\right]
		\leq
		\frac12\Pr[\contEvent]
		+
		\nu_{\prover^*,\qadversary}(\security)
		\; .
	\end{equation}
\end{proof}

This quantum hidden-input property enables us to certify entropy in the prover's Phase~B output.

\newcommand{\entthmref}{%
    \hyperref[theorem:entropy_certification]{%
        Theorem~\ref*{thm:main}%
    }%
}

\begin{mainrestate}
	\phantomsection
    \label{theorem:entropy_certification}
    Let~$\bc$ be a \shbc{} that satisfies computational binding against \qpt{} senders.
    Let~$\pce(\security)=\Pr[\contEvent]>0$.
    Conditioned on~$\contEvent$, suppose that the prover achieves acceptance probability
    \begin{equation}
        \Pr[\mathsf{accept}\mid\contEvent]
        =
        \chshopt-\varepsilon
        \; .
    \end{equation}
    Then, there exist a universal constant~$\entropyConst>0$ and a negligible function~$\nu_{\mathrm{hid}}$ such that
    \begin{equation}
        \label{eq:entropy_certification_formal}
        \vne\!\left(
            \bellOutcomeRV
            \mid
            \transRV,\bellChallengeRV,\eveReg
        \right)_{\rho\mid\contEvent}
        \geq
        \entropyConst
        -
        \O(\sqrt{\varepsilon})
        -
        \frac{\nu_{\mathrm{hid}}(\security)}{\pce(\security)}
        \; .
    \end{equation}
    In particular, if~$\pce(\security)\geq 1/\poly(\security)$, then the final term is negligible.
\end{mainrestate}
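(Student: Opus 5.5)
We will prove the bound by reducing it to a qubit-level uncertainty argument, following the Jordan-lemma route sketched after the statement of Theorem~\ref{thm:main}. Conditioned on $\contEvent$ and on a fixed transcript $\trans$, the prover's Phase~B behaviour is described by two binary observables $M_0,M_1$ on $\proverReg$, one for each challenge $\bellChallenge$, acting on the post-Phase-A state $\rho^{\trans}_{\proverReg\eveReg}$. We may take the measurements projective by Naimark dilation, so the $M_\bellChallenge$ are self-adjoint unitaries. Lemma~\ref{lemma:jordan} then decomposes $\proverReg=\oplus_\alpha\mathcal H_\alpha$ into blocks of dimension at most two, on each of which $M_0,M_1$ act as qubit observables with some angle $\theta_\alpha$. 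Let $\blockProjector{\alpha}$ be the block projectors. Measuring $\{\blockProjector{\alpha}\}$ commutes with both $M_\bellChallenge$, so it does not change the output statistics. It can also only decrease $\vne(\bellOutcomeRV\mid\transRV\bellChallengeRV\eveReg)$ if we hand the block label $\alpha$ to Eve. It therefore suffices to lower-bound the average over $\alpha$, weighted by $\blockProbability{\alpha}{\trans}=\tr[\blockProjector{\alpha}\rho^{\trans}]$, of the per-block entropy.

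\textbf{Step 1: use quantum hiding to isolate the incompatible part.} We form the observables $A_0=(M_0+M_1)/\sqrt2$ and $A_1=(M_0-M_1)/\sqrt2$ (up to normalisation $\|A_\virtualInput\|\le\sqrt2$). The CHSH score can be written as
\[
\pacc=\tfrac12+\tfrac14\,\E\bigl[(-1)^{\virtualOutputRV}\langle M_0+(-1)^{\virtualInputRV}M_1\rangle\bigr].
\]
We then split according to the two values of the virtual input. The key claim is that, by \virtualHidingRef{quantum}, the state conditioned on $\virtualInputRV=0$ and the state conditioned on $\virtualInputRV=1$ cannot be told apart by any efficient measurement. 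Consider the QPT predictor that measures the block label and uses the anticommutator $\{M_0,M_1\}$ (an efficient binary measurement built from the prover's own circuits) to test whether the state is aligned with $M_0+M_1$ or with $M_0-M_1$. This predictor would gain advantage unless both values of $\virtualInputRV$ produce comparable correlation, up to $\nu_{\mathrm{hid}}/\pce$ after conditioning. A classical-style CHSH argument then gives
\[
\E_\alpha\bigl[\tfrac12\|\{M_0,M_1\}_\alpha\|\bigr]\le O(\sqrt\varepsilon)+\nu_{\mathrm{hid}}/\pce.
\]
This is the standard rigidity statement that near-optimal CHSH forces approximate anticommutation, except that the role of spatial separation is played by hiding. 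It is the same reasoning as in Proposition~\ref{prop:weighted_computational_chsh}, but now applied to quantum strategies through the framework of~\cite{cbi}, which we invoke directly rather than rederive.

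\textbf{Step 2: per-block entropy.} By Lemma~\ref{lemma:anticommutator_overlap}, on block $\alpha$ the overlap is $c_\alpha=\tfrac12+\tfrac12|\cos\theta_\alpha|$, and $|\cos\theta_\alpha|$ is controlled by the block anticommutator. We apply Lemma~\ref{lemma:uncertainty_relations} with $\Pi=M_\bellChallenge$ and with $M$ equal to the complementary observable $M_{1-\bellChallenge}$ rotated by $\pi/2$ within the block. This yields
\[
\vne(M_\bellChallenge\mid\eveReg)\ge\log(1/c_\alpha)-\vne(M').
\]
To make this useful we need $\vne(M')$ bounded away from $1$ on a constant fraction of blocks, i.e.\ the state must have bias along some axis. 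Near-optimal CHSH gives exactly this: the correlator $|\langle A_\virtualInput\rangle|\approx1$ forces the block state to be nearly pure along $A_{\virtualInput}$, and $A_{\virtualInput}$ is complementary to $M_\bellChallenge$ when $\theta_\alpha\approx\pi/2$. A Markov argument over $\alpha$ then produces a set $\goodPairSet$ of blocks with constant weight on which both $\theta_\alpha$ is close to $\pi/2$ and the bias is large. Each such block contributes a constant $\goodBlockEntropy$, while the remaining blocks contribute at least $0$. The deviations are linear in $\sqrt\varepsilon$, since the angle deficit enters quadratically in the CHSH value, and continuity of $\log(1/c)$ handles the rest. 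Averaging over $\trans$ and $\bellChallengeRV$ gives the constant $\entropyConst$ minus $O(\sqrt\varepsilon)$ minus the hiding term.

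\textbf{Expected obstacle.} The main difficulty is Step~1: turning \virtualHidingRef{quantum}, which is a weighted statement about predicting a bit, into an operator inequality about the average anticommutator. This requires an explicit efficient predictor whose advantage is proportional to the gap between the $\virtualInputRV=0$ and $\virtualInputRV=1$ correlators. A natural candidate runs the prover's Phase~B circuits as a Jordan-block "swap test" between $M_0\pm M_1$. I would first reuse the construction from~\cite{cbi} and only verify that the prediction advantage divided by $\pce$ produces the $\nu_{\mathrm{hid}}/\pce$ term. I would also check that conditioning on $\contEvent$, which is determined by the prefix, preserves efficiency, using the advice convention of Definition~\ref{def:efficient_sequential_prover_advice}.
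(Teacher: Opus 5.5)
Your architecture (approximate anticommutation from quantum hiding, a Jordan-block Markov argument, the entropic uncertainty relation block by block) is the paper's, and Step~1 is fine as a plan. The paper does not rederive it either. It imports $\E_t\tr[\{M_0,M_1\}^2\rho_t]=O(\varepsilon+\negl(\security)/\pce)$ as Lemma~\ref{lemma:rc_approx_anticommutation} from \cite[Theorem~4.7]{brakerski2023simple}, and your bound on $\E_\alpha|\cos\theta_\alpha|$ follows from it by Jensen. Your predictor can be made concrete without measuring block labels: measure $M_0$, then $M_1$, and output $b_0\oplus b_1$. This bit equals $x$ whenever both answers satisfy the CHSH predicate. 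Its $\pm1$-expectation is $\tfrac12\tr[\{M_0,M_1\}\rho]$, which on block $\alpha$ is $\cos\theta_\alpha$ regardless of the state within the block. So it detects the sign of $\cos\theta_\alpha$, not the direction of the state. Since $\|M_0+(-1)^xM_1\|_\alpha=\sqrt{2(1+(-1)^x\cos\theta_\alpha)}$, a non-anticommuting strategy can only score near $\chshopt$ by correlating that sign with $x$. Concavity of $c\mapsto\sqrt{2\pm2c}$ then turns the hiding bound into the anticommutation bound.

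The genuine error is in Step~2. In the plane of the two Bloch axes, $M_{1-y}$ rotated by $\pi/2$ is just $\pm M_y$, with overlap $c=1$. The observable you actually rely on is the normalized bisector $\hat A_x\propto M_0+(-1)^xM_1$, along which CHSH pushes the state, and it is \emph{not} complementary to $M_y$. It sits at Bloch angle about $\pi/4$ from $M_y$, so $c\approx\cos^2(\pi/8)\approx0.854$ and $\log(1/c)\approx0.23$, not $1$. With $c=1/2$ your per-block inequality is false: for the honest BB84 qubit, pure along $\hat A_x$, it would give $\vne(M_y\mid E)\ge1$, whereas the true value is $h(\cos^2(\pi/8))\approx0.60$.

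The route can be repaired as follows.
\begin{itemize}
\item $\|M_0+(-1)^xM_1\|_\alpha$ is within $\sqrt2|\cos\theta_\alpha|$ of $\sqrt2$. Together with the CHSH deficit and Step~1, this gives $\E[1-\beta_{t,\alpha}]=O(\sqrt\varepsilon)+\negl(\security)/\pce$ for the signed bias $\beta_{t,\alpha}=(-1)^a\tr[\hat A_x\rho_{t,\alpha}]$.
\item By Markov, most blocks then have $|\cos\theta_\alpha|$ small and $\beta_{t,\alpha}\ge1-\eta$.
\item You must choose $\eta$ small enough that $h(1-\eta/2)<\log(1/\cos^2(\pi/8))$, roughly $\eta<0.07$.
\item Lemma~\ref{lemma:uncertainty_relations} with $(\Pi,M)=(M_y,\hat A_x)$ then gives a positive constant per block.
\end{itemize}

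Once repaired, this is a genuinely different route from the paper's. The paper pairs $M_1$ with $M_0$ itself, which on good blocks really are nearly complementary: $\log(1/c)\ge1-\log(1+10^{-3/2})$. It therefore needs only a moderate bias $\eta_*=9(\sqrt2-1)/10$ along $M_0$, and only on $x=0$ transcripts. It extracts this bias from CHSH bookkeeping. $S_1$ is capped near $1/(2\sqrt2)$ via approximate anticommutation (Lemma~\ref{lemma:app_xone_contribution}), and $S_{0,1}\le1/4$ (Lemma~\ref{lemma:app_zero_one_contribution}). Both bounds use the balance $\Pr[x=0]\approx\tfrac12$ obtained from hiding against constant predictors. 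Together they force $\E[|\tr[M_0\rho_t]|\mid x=0]\ge\sqrt2-1-O(\sqrt\varepsilon)$ (Corollary~\ref{cor:app_bzero_bias_asymptotic}).

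Your corrected route needs neither the $x$-conditioning nor the balance in this step, covers both challenges at once, and yields a much larger constant as $\varepsilon\to0$. Its price is that it only bites once the bias threshold is close to $1$, i.e.\ for small $\varepsilon$, which the $O(\sqrt\varepsilon)$ slack absorbs. The paper's route tolerates a far weaker bias, at the cost of the tiny constant $\entropyConst=\goodBlockEntropy/(4(1+10\sqrt2))$.
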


\paragraph{Entropy certification and computational Bell inequalities.}
The computational-SoC hierarchy of~\cite{cbi} provides semidefinite-programming bounds on the conditional entropy of a Bell output as a function of the CHSH winning probability.
In particular, the analysis considers the conditional min-entropy of the output for fixed generation inputs, which also provides a lower bound on the corresponding conditional von Neumann entropy.

The applicability of this framework relies on the virtual input being computationally hidden from the prover.
In~\cite{cbi}, the guessing advantage is assumed to be negligible.
In our setting, \virtualHidingRef{quantum} establishes the weighted hiding bound, which, conditioned on continuation, gives
\begin{equation}
    \Pr\left[
        \qadversary(\transStaterv)
        =
        \virtualInputRV
        \;\middle|\;
        \contEvent
    \right]
    \leq
    \frac{1}{2}
    +
    \frac{\nu_{\mathrm{hid}}(\security)}{\pce(\security)}
    \; .
\end{equation}
Whenever~$\pce(\security)\geq 1/\poly(\security)$, the ratio on the right-hand side is negligible.
Thus, conditioned on continuation, our protocol satisfies the negligible-leakage hidden-input assumption underlying the computational Bell inequality framework of~\cite{cbi}.

The entropy bound obtained in~\cite{cbi} concerns a fixed choice of generation inputs.
In contrast, \entthmref{} establishes a bound on the conditional von Neumann entropy given the full Phase-A transcript and Phase-B challenge.
For completeness, we provide a self-contained analytical proof of this stronger statement in Appendix~\ref{app:entropy_proof}, explicitly accounting for the dependence on the continuation probability.

\begin{proof}[Proof of \entthmref{}]
    By \virtualHidingRef{quantum}, the virtual input is hidden from every \qpt{} predictor with conditional guessing advantage at most
    \begin{equation}
        \frac{\nu_{\mathrm{hid}}(\security)}{\pce(\security)}
        \; .
    \end{equation}
    The analytical entropy bound established in Appendix~\ref{app:entropy_proof} applies to the Bell-mapped behavior of Protocol~\ref{protocol:single_round} under this hiding guarantee.
    Substituting the conditional CHSH winning probability~$\chshopt-\varepsilon$ yields Equation~\eqref{eq:entropy_certification_formal}, as claimed.
\end{proof}

\section{Entropy Accumulation}
\label{sec:eat}

We now convert the fixed-challenge single-round entropy bound in Equation~\eqref{eq:entropy_certification_formal} into a lower bound on the smooth min-entropy produced by a sequential protocol.
Our construction follows the spot-checking and entropy-accumulation strategy of~\cite{merkulov2023entropy}.
The essential point is that each entropy-accumulation channel contains one \emph{complete} execution of the single-round protocol, including Phases~A1, A2, and~B.
Consequently, the fresh commitment and hashing randomness used in round~$i$ is sampled inside the channel~$\mathcal{M}_i$ and is independent of its input state.
This allows the computational hidden-input guarantee to be applied uniformly to every polynomial-size input state of the round channel.

Throughout this section, the adversary's register~$\eveReg$ is not acted upon during the protocol.
The public side information released after the protocol consists of the round-type variables, the Phase-B challenges, and the test data defined below.
The Phase-A transcript of each round is used internally by the verifier to compute the virtual variables and the score, and may then be erased.
It is not included in the public side information of the sequential protocol.
This convention is necessary for the standard EAT formulation used below.
If the complete Phase-A transcript is to be released, an additional leakage argument or a correspondingly enlarged EAT output register is required.
The single-round entropy bound is proved even after conditioning on the Phase-A transcript, so tracing out this private register does not weaken the per-round entropy estimate.

\subsection{Sequential Spot-Checking Protocol}
\label{subsec:sequential_spot_checking_protocol}

Fix a security parameter~$\security\in\N$, a number of rounds~$n=\poly(\security)$, a test probability~$\generationRoundBernoulli\in(0,1)$, a test-frequency width~$\delta_{\mathrm{test}}\in(0,\generationRoundBernoulli)$, a continuation threshold~$p_{\mathrm{cont}}^{\min}>0$, a CHSH threshold~$\omega_{\min}\in(0,\chshopt)$, and a smoothing parameter~$\smoothingParam\in(0,1)$.
We use the convention that~$\generationRoundRV_i=0$ denotes a test round and~$\generationRoundRV_i=1$ denotes a generation round, with
\begin{equation}
    \Pr[\generationRoundRV_i=0]
    =
    \generationRoundBernoulli
    \; .
\end{equation}

The round-type variable~$\generationRoundRV_i$ is not revealed to the prover.
On a test round, the verifier samples the Phase-B challenge uniformly.
On a generation round, the verifier fixes the Phase-B challenge to~$1$, which is the challenge for which Equation~\eqref{eq:rc_fixed_challenge_entropy_corrected} gives a direct entropy lower bound.
The prover receives only the challenge~$\bellChallengeRV_i$ and therefore applies the same response operation whenever~$\bellChallengeRV_i=1$, regardless of whether the round is a test round or a generation round.

\begin{algorithm}[H]
    \caption{Sequential randomness certification from \shbcshort{}}
    \label{protocol:sequential_rand_certification}

    \small

    \begin{enumerate}[leftmargin=24pt]
        \item
        For every~$i\in[n]$, execute one complete instance of the single-round protocol as follows.

        \begin{enumerate}[label=(\alph*), leftmargin=24pt]
            \item
            Execute Phases~A1 and~A2 of Protocol~\ref{protocol:single_round} using fresh verifier randomness.
            Record the Phase-A transcript~$\transRV_i$ and continuation indicator~$\contEvent_i$.

            \item
            Sample~$\generationRoundRV_i\leftarrow\Ber(1-\generationRoundBernoulli)$, where~$\generationRoundRV_i=0$ is a test round, and keep~$\generationRoundRV_i$ hidden from the prover.

            \item
            If~$\generationRoundRV_i=0$, sample~$\bellChallengeRV_i\leftarrow\{0,1\}$ uniformly.
            If~$\generationRoundRV_i=1$, set~$\bellChallengeRV_i=1$.

            \item
            If~$\contEvent_i=1$, send~$\bellChallengeRV_i$ to the prover, receive~$\bellOutcomeRV_i\in\{0,1\}$, and compute the CHSH winning indicator~$\winRV_i$.
            If~$\contEvent_i=0$, send no Phase-B message and set~$\bellOutcomeRV_i=\winRV_i=0$.

            \item
            Compute the tagged output~$O_i$, score~$Q_i$, generation symbol~$Z_i$, and test symbol~$D_i$ according to Equations~\eqref{eq:tagged_round_output}--\eqref{eq:test_data_symbol}.
        \end{enumerate}

        \item
        Compute~$N_{\mathrm{test}}$, $N_{\mathrm{cont}}$, and~$N_{\mathrm{win}}$ as in Equation~\eqref{eq:sequential_test_counts}.
        Abort unless Equations~\eqref{eq:test_frequency_condition}--\eqref{eq:chsh_condition} hold.
        Let~$\Omega$ denote the event that the verifier does not abort.

        \item
        Output the tagged string~$Z_1^n$.
        Treat~$D_1^n$, $\generationRoundRV_1^n$, and~$\bellChallengeRV_1^n$ as public side information.
    \end{enumerate}
\end{algorithm}

For every round, define
\begin{equation}
    \winRV_i
    \coloneqq
    \begin{cases}
        \indicator{
            \virtualOutputRV_i
            \oplus
            \bellOutcomeRV_i
            =
            \virtualInputRV_i\bellChallengeRV_i
        }
        &
        \text{if~$\contEvent_i=1$}
        \; ,
        \\
        0
        &
        \text{if~$\contEvent_i=0$}
        \; .
    \end{cases}
\end{equation}
Define the tagged round output~$O_i$ by
\begin{equation}
    O_i
    \coloneqq
    \begin{cases}
        \mathsf{g}_{\bot}
        &
        \text{if~$\generationRoundRV_i=1$ and~$\contEvent_i=0$}
        \; ,
        \\
        \mathsf{g}_{\bellOutcomeRV_i}
        &
        \text{if~$\generationRoundRV_i=1$ and~$\contEvent_i=1$}
        \; ,
        \\
        \mathsf{t}_{0}
        &
        \text{if~$\generationRoundRV_i=0$ and~$\contEvent_i=0$}
        \; ,
        \\
        \mathsf{t}_{1}
        &
        \text{if~$\generationRoundRV_i=0$,~$\contEvent_i=1$, and~$\winRV_i=0$}
        \; ,
        \\
        \mathsf{t}_{2}
        &
        \text{if~$\generationRoundRV_i=0$,~$\contEvent_i=1$, and~$\winRV_i=1$}
        \; .
    \end{cases}
    \label{eq:tagged_round_output}
\end{equation}
Define the score symbol~$Q_i\in\{\bot,0,1,2\}$ by
\begin{equation}
    Q_i
    \coloneqq
    \begin{cases}
        \bot
        &
        \text{if~$O_i\in\{\mathsf{g}_{\bot},\mathsf{g}_0,\mathsf{g}_1\}$}
        \; ,
        \\
        j
        &
        \text{if~$O_i=\mathsf{t}_j$ for~$j\in\{0,1,2\}$}
        \; .
    \end{cases}
    \label{eq:score_symbol}
\end{equation}
For parameter estimation, set
\begin{equation}
    N_{\mathrm{test}}
    \coloneqq
    \sum_{i=1}^{n}
    \indicator{Q_i\neq\bot}
    \; ,
    \qquad
    N_{\mathrm{cont}}
    \coloneqq
    \sum_{i=1}^{n}
    \indicator{Q_i\in\{1,2\}}
    \; ,
    \qquad
    N_{\mathrm{win}}
    \coloneqq
    \sum_{i=1}^{n}
    \indicator{Q_i=2}
    \; .
    \label{eq:sequential_test_counts}
\end{equation}
The passing conditions are
\begin{align}
    n
    \left(
        \generationRoundBernoulli
        -
        \delta_{\mathrm{test}}
    \right)
    &\leq
    N_{\mathrm{test}}
    \leq
    n
    \left(
        \generationRoundBernoulli
        +
        \delta_{\mathrm{test}}
    \right)
    \; ,
    \label{eq:test_frequency_condition}
    \\
    N_{\mathrm{cont}}
    &\geq
    p_{\mathrm{cont}}^{\min}
    N_{\mathrm{test}}
    \; ,
    \label{eq:continuation_condition}
    \\
    N_{\mathrm{win}}
    &\geq
    \omega_{\min}
    N_{\mathrm{cont}}
    \; .
    \label{eq:chsh_condition}
\end{align}
Finally, define
\begin{equation}
    Z_i
    \coloneqq
    \begin{cases}
        \bot
        &
        \text{if~$O_i\notin\{\mathsf{g}_0,\mathsf{g}_1\}$}
        \; ,
        \\
        b
        &
        \text{if~$O_i=\mathsf{g}_b$ for~$b\in\{0,1\}$}
        \; ,
    \end{cases}
    \label{eq:tagged_generation_output}
\end{equation}
and
\begin{equation}
    D_i
    \coloneqq
    \begin{cases}
        \bot
        &
        \text{if~$Q_i=\bot$}
        \; ,
        \\
        Q_i
        &
        \text{if~$Q_i\in\{0,1,2\}$}
        \; .
    \end{cases}
    \label{eq:test_data_symbol}
\end{equation}

The tagged alphabet~$\{\bot,0,1\}$ makes the output length independent of the number of successful generation rounds.
An injective binary encoding of~$Z_1^n$ preserves its conditional smooth min-entropy.
The test-data string~$D_1^n$ is included in the conditioning system of the final theorem, so the result remains valid even if the verifier reveals all test outcomes after the protocol.

\subsection{EAT Channels}
\label{subsec:sequential_eat_channels}

Let~$R_{i-1}$ contain the prover's private memory immediately before the beginning of round~$i$.
All verifier work registers are freshly initialized inside the round channel, and private transcript registers that are not needed in later rounds are traced out at the end of the round.
In particular, the complete Phase-A execution of round~$i$ is part of the channel~$\mathcal{M}_i$ and is not part of its input.
Let~$\eveReg$ denote arbitrary polynomial-size quantum side information that may be entangled with~$R_0$ and is independent of the verifier's fresh coins.

\begin{definition}[Sequential round channels]
    \label{def:sequential_eat_channels}
    For every~$i\in[n]$, let
    \begin{equation}
        \mathcal{M}_i:
        R_{i-1}
        \longrightarrow
        R_i O_i S_i Q_i
        \label{eq:complete_round_eat_channel}
    \end{equation}
    be the CPTP map describing the complete~$i$-th round of Protocol~\ref{protocol:sequential_rand_certification}, where
    \begin{equation}
        S_i
        \coloneqq
        \generationRoundRV_i\bellChallengeRV_i
        \; ,
    \end{equation}
    the output~$O_i$ is defined in Equation~\eqref{eq:tagged_round_output}, and the score~$Q_i$ is defined in Equation~\eqref{eq:score_symbol}.
\end{definition}

\begin{lemma}[EAT-channel conditions]
    \label{lemma:complete_round_eat_channels}
    The channels~$\{\mathcal{M}_i\}_{i\in[n]}$ from Definition~\ref{def:sequential_eat_channels} are EAT channels in the sense of Definition~\ref{def:eat_channels}.
    Moreover, the output dimension is~$d_O=6$.
\end{lemma}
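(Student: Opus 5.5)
We verify the three conditions of Definition~\ref{def:eat_channels} in order, using the structure of a single round of Protocol~\ref{protocol:sequential_rand_certification} as encoded in Definition~\ref{def:sequential_eat_channels}.

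\textbf{Finite output and classical score.}
By Equation~\eqref{eq:tagged_round_output}, $O_i$ takes values in the six-element alphabet
\[
\{\mathsf{g}_\bot,\mathsf{g}_0,\mathsf{g}_1,\mathsf{t}_0,\mathsf{t}_1,\mathsf{t}_2\},
\]
so $d_O=6$. By Equation~\eqref{eq:score_symbol}, $Q_i$ is classical with the common alphabet $\mathcal{Q}=\{\bot,0,1,2\}$, independent of $i$.

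\textbf{$Q_i$ is a function of $O_iS_i$.}
Equation~\eqref{eq:score_symbol} defines $Q_i$ as a deterministic function of the classical register $O_i$ alone. Computing it therefore amounts to a classical copy followed by relabeling of a register that is already classical. Concretely, $O_i$ is diagonal in the tag basis, so the map
\[
\ket{o}\mapsto\ket{o}\ket{q(o)}
\]
leaves the marginal on $O_iS_i$ unchanged.

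\textbf{Markov condition.}
This is the main step. The plan is to show that $S_i=\generationRoundRV_i\bellChallengeRV_i$ is generated inside $\mathcal{M}_i$ from fresh verifier coins that are independent of everything produced before round $i$.

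\begin{itemize}
\item $\generationRoundRV_i\leftarrow\Ber(1-\generationRoundBernoulli)$ is sampled freshly.
\item $\bellChallengeRV_i$ is either sampled uniformly (test round) or fixed to $1$ (generation round). In both cases it is a function of $\generationRoundRV_i$ and fresh coins only. It does not depend on $R_{i-1}$, the Phase-A transcript, or $\contEvent_i$: whether the challenge is actually sent does not affect the value recorded in $S_i$.
\end{itemize}

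Hence the joint state has the product form
\[
\rho_{O_1^{i-1}S_1^{i-1}E\,S_i}=\rho_{O_1^{i-1}S_1^{i-1}E}\otimes\rho_{S_i}.
\]
This gives $I(O_1^{i-1}:S_i\mid S_1^{i-1}E)=0$, which is the required Markov chain.

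Two points need to be checked carefully here.

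\begin{itemize}
\item $E$ is never acted upon and is independent of the verifier's fresh coins, by the convention of Subsection~\ref{subsec:sequential_eat_channels}.
\item The private Phase-A transcripts of earlier rounds are traced out, so they do not appear in $S_1^{i-1}$. Even if they did, they would not correlate with the round-$i$ coins.
\end{itemize}

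I expect the only subtlety to be confirming that no part of $S_i$ is influenced by the prover's round-$i$ behavior, for example through abort-dependent challenge sampling. The protocol samples $\bellChallengeRV_i$ before consulting $\contEvent_i$, which rules this out.
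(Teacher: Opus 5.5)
Your proof is correct and follows the same route as the paper. The six-element tag alphabet gives $d_O=6$, and $Q_i$ is a deterministic function of $O_i$. The Markov chain holds because $S_i=(\generationRoundRV_i,\bellChallengeRV_i)$ is drawn from fresh verifier coins, independent of $R_{i-1}$, the earlier outputs, and $\eveReg$; your explicit product-form step (zero conditional mutual information) and your check that $\bellChallengeRV_i$ is set independently of $\contEvent_i$ just spell out what the paper leaves implicit.
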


\begin{proof}
    The tagged output~$O_i$ is classical with alphabet
    \begin{equation}
        \mathcal{O}
        =
        \left\{
            \mathsf{g}_{\bot},
            \mathsf{g}_0,
            \mathsf{g}_1,
            \mathsf{t}_0,
            \mathsf{t}_1,
            \mathsf{t}_2
        \right\}
        \; ,
    \end{equation}
    and hence~$d_O=6$.
    The score~$Q_i$ is a deterministic function of~$O_i$, so it can be computed from~$O_iS_i$ without changing the marginal state on~$O_iS_i$.

    It remains to verify the Markov condition.
    The pair~$S_i=(\generationRoundRV_i,\bellChallengeRV_i)$ is sampled using fresh verifier randomness inside round~$i$.
    Its distribution is independent of~$R_{i-1}$, of the previous outputs, and of~$\eveReg$.
    Therefore, for every initial state~$\rho_{R_0\eveReg}$ and every~$i\in[n]$, the final state satisfies
    \begin{equation}
        O_1^{i-1}
        \longleftrightarrow
        S_1^{i-1}\eveReg
        \longleftrightarrow
        S_i
        \; .
    \end{equation}
    Thus all the conditions in Definition~\ref{def:eat_channels} are satisfied.
\end{proof}

The placement of Phase~A inside~$\mathcal{M}_i$ is essential.
If the EAT channel began only after Phase~A, its input could be an arbitrary state already containing the virtual input, and the computational hidden-input theorem would not provide a lower bound for the infimum in Definition~\ref{def:min_tradeoff_function}.
The complete-round construction avoids this problem because all verifier coins on which the virtual input depends are sampled after the channel input has been fixed.

\subsection{Single-Round Entropy and an Affine Min-Tradeoff Function}
\label{subsec:sequential_min_tradeoff}

We first rewrite the fixed-challenge entropy estimate in a form convenient for entropy accumulation.
Equation~\eqref{eq:rc_fixed_challenge_entropy_corrected} implies that there exist universal constants~$h_{\mathrm{gen}}>0$ and~$c_{\mathrm{gen}}>0$ and a negligible function~$\nu_{\mathrm{gen}}$ such that
\begin{equation}
    h_{\mathrm{gen}}
    =
    \frac{\goodBlockEntropy}{2(1+10\sqrt{2})}
    \; ,
    \label{eq:generation_entropy_constant}
\end{equation}
and the following holds.
For every polynomial-size auxiliary quantum system~$R'$, every input state independent of the verifier's fresh round coins, and every efficient round strategy, let
\begin{equation}
    p
    \coloneqq
    \Pr[\contEvent]
    \; ,
    \qquad
    \omega
    \coloneqq
    \Pr[\mathsf{CHSH\ win}\mid\contEvent]
    \; .
\end{equation}
Then
\begin{equation}
    \vne\!\left(
        \bellOutcomeRVAt{1}
        \mid
        \transRV,R'
    \right)_{\rho\mid\contEvent}
    \geq
    h_{\mathrm{gen}}
    -
    c_{\mathrm{gen}}
    \sqrt{
        \left(
            \chshopt-\omega
        \right)_{+}
    }
    -
    \frac{\nu_{\mathrm{gen}}(\security)}{p}
    \; ,
    \label{eq:fixed_challenge_entropy_eat_form}
\end{equation}
where~$(x)_{+}\coloneqq\max\{x,0\}$.
When~$p=0$, all expressions below are understood by continuity and the corresponding generation output is deterministic.

Define
\begin{equation}
    \Delta_{*}
    \coloneqq
    \left(
        \frac{h_{\mathrm{gen}}}{c_{\mathrm{gen}}}
    \right)^2
    \; ,
    \qquad
    \omega_{0}
    \coloneqq
    \chshopt-\Delta_{*}
    \; ,
    \label{eq:tradeoff_root}
\end{equation}
and
\begin{equation}
    s_{*}
    \coloneqq
    \frac{h_{\mathrm{gen}}}{1-\omega_0}
    =
    \frac{h_{\mathrm{gen}}}{1-\chshopt+\Delta_*}
    \; .
    \label{eq:tradeoff_slope}
\end{equation}
By increasing~$c_{\mathrm{gen}}$ if necessary, we may assume that
\begin{equation}
    0<\Delta_{*}\leq\chshopt
    \qquad\text{and}\qquad
    \frac{h_{\mathrm{gen}}}{c_{\mathrm{gen}}}
    \leq
    \sqrt{1-\chshopt}
    \; .
    \label{eq:tradeoff_constant_condition}
\end{equation}

\begin{lemma}[Global affine lower bound]
    \label{lemma:global_affine_entropy_bound}
    For every~$\omega\in[0,1]$,
    \begin{equation}
        \max\!\left\{
            0,
            h_{\mathrm{gen}}
            -
            c_{\mathrm{gen}}
            \sqrt{
                \left(
                    \chshopt-\omega
                \right)_{+}
            }
        \right\}
        \geq
        s_{*}
        \left(
            \omega-\omega_0
        \right)
        \; .
        \label{eq:global_affine_entropy_bound}
    \end{equation}
\end{lemma}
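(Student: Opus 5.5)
The plan is to compare both sides of Equation~\eqref{eq:global_affine_entropy_bound} directly, splitting $[0,1]$ into three ranges of~$\omega$. Write $g(\omega)\coloneqq h_{\mathrm{gen}}-c_{\mathrm{gen}}\sqrt{(\chshopt-\omega)_+}$, so the left-hand side is $\max\{0,g(\omega)\}$. By Equation~\eqref{eq:tradeoff_root}, $\sqrt{\Delta_*}=h_{\mathrm{gen}}/c_{\mathrm{gen}}$, hence $g(\omega_0)=0$. Also, Equation~\eqref{eq:tradeoff_constant_condition} gives $0\leq\omega_0<\chshopt<1$, so the three ranges below are well defined and $s_*>0$.

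\emph{Case $\omega\leq\omega_0$.} Since $s_*>0$, the right-hand side $s_*(\omega-\omega_0)$ is nonpositive. The left-hand side is at least $0$, so the inequality holds.

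\emph{Case $\omega\geq\chshopt$.} Here $(\chshopt-\omega)_+=0$, so the left-hand side equals $h_{\mathrm{gen}}$. Since $\omega\leq1$, the right-hand side is at most $s_*(1-\omega_0)$, which equals $h_{\mathrm{gen}}$ by the definition of $s_*$ in Equation~\eqref{eq:tradeoff_slope}.

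\emph{Case $\omega_0<\omega<\chshopt$.} This is the only case needing care. On this interval $g(\omega)=h_{\mathrm{gen}}-c_{\mathrm{gen}}\sqrt{\chshopt-\omega}$. The map $\omega\mapsto\sqrt{\chshopt-\omega}$ is concave, so $g$ is convex and differentiable on $[\omega_0,\chshopt)$. Convexity therefore gives the tangent-line bound
\begin{equation}
    g(\omega)\geq g(\omega_0)+g'(\omega_0)(\omega-\omega_0)=\frac{c_{\mathrm{gen}}}{2\sqrt{\Delta_*}}(\omega-\omega_0)=\frac{c_{\mathrm{gen}}^2}{2h_{\mathrm{gen}}}(\omega-\omega_0)\; .
\end{equation}
Since $\omega-\omega_0>0$, it suffices to check that this tangent slope dominates the affine slope, $c_{\mathrm{gen}}^2/(2h_{\mathrm{gen}})\geq s_*=h_{\mathrm{gen}}/(1-\chshopt+\Delta_*)$. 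This is equivalent to $1-\chshopt+\Delta_*\geq2h_{\mathrm{gen}}^2/c_{\mathrm{gen}}^2=2\Delta_*$, that is, $\Delta_*\leq1-\chshopt$. That is exactly the second condition in Equation~\eqref{eq:tradeoff_constant_condition}. Hence $\max\{0,g(\omega)\}\geq g(\omega)\geq s_*(\omega-\omega_0)$.

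The main obstacle is this middle case. A chord argument would fail, because $g$ is convex rather than concave and so lies below its chords. What makes the argument work is the tangent at $\omega_0$, together with the observation that the constant condition in Equation~\eqref{eq:tradeoff_constant_condition} is precisely the condition for the tangent slope to be at least $s_*$. I would double-check the derivative computation, $g'(\omega)=c_{\mathrm{gen}}/\bigl(2\sqrt{\chshopt-\omega}\bigr)$, and the algebraic equivalence that reduces the slope comparison to $\Delta_*\leq1-\chshopt$.
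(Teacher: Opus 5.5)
Your proof is correct and follows the paper's argument: the same three-case split, with the middle case reduced to the slope comparison $c_{\mathrm{gen}}^2/(2h_{\mathrm{gen}})\geq s_*$, which is exactly the second condition in Equation~\eqref{eq:tradeoff_constant_condition}. The only cosmetic difference is in that middle case: you use convexity of $g$ and its tangent line at $\omega_0$, whereas the paper bounds $g'(\omega)\geq g'(\omega_0)$ on the whole interval and integrates from $g(\omega_0)=0$, and both rest on the same monotone-derivative fact.
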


\begin{proof}
    For~$\omega\leq\omega_0$, the right-hand side of Equation~\eqref{eq:global_affine_entropy_bound} is nonpositive, whereas the left-hand side is nonnegative.
    For~$\omega\in[\omega_0,\chshopt]$, define
    \begin{equation}
        g(\omega)
        \coloneqq
        h_{\mathrm{gen}}
        -
        c_{\mathrm{gen}}
        \sqrt{\chshopt-\omega}
        \; .
    \end{equation}
    We have~$g(\omega_0)=0$ and
    \begin{equation}
        g'(\omega)
        =
        \frac{c_{\mathrm{gen}}}{2\sqrt{\chshopt-\omega}}
        \geq
        \frac{c_{\mathrm{gen}}^2}{2h_{\mathrm{gen}}}
        \geq
        s_*
        \; ,
    \end{equation}
    where the final inequality follows from Equation~\eqref{eq:tradeoff_constant_condition}.
    Hence~$g(\omega)\geq s_*(\omega-\omega_0)$ throughout this interval.
    Finally, for~$\omega\in[\chshopt,1]$, the left-hand side equals~$h_{\mathrm{gen}}$, while
    \begin{equation}
        s_*(\omega-\omega_0)
        \leq
        s_*(1-\omega_0)
        =
        h_{\mathrm{gen}}
        \; .
    \end{equation}
\end{proof}

Let~$q$ be a probability distribution over~$\{\bot,0,1,2\}$.
Define the unconditional continuing-test and winning-test masses
\begin{equation}
    r(q)
    \coloneqq
    q(1)+q(2)
    \; ,
    \qquad
    w(q)
    \coloneqq
    q(2)
    \; .
    \label{eq:score_masses}
\end{equation}
For every distribution generated by a round channel,
\begin{equation}
    q(0)+q(1)+q(2)
    =
    \generationRoundBernoulli
    \; ,
\end{equation}
and, whenever~$r(q)>0$,
\begin{equation}
    p(q)
    =
    \frac{r(q)}{\generationRoundBernoulli}
    \; ,
    \qquad
    \omega(q)
    =
    \frac{w(q)}{r(q)}
    \; .
    \label{eq:score_to_protocol_parameters}
\end{equation}
The first identity follows because the continuation event is determined before the hidden round-type variable is sampled.
The second identity is the CHSH winning probability conditioned on a continuing test round.

\begin{lemma}[Affine min-tradeoff function]
    \label{lemma:affine_min_tradeoff}
    There exists a negligible function~$\nu_{\mathrm{EAT}}$ such that
    \begin{equation}
        f_{\min}(q)
        \coloneqq
        \frac{1-\generationRoundBernoulli}{\generationRoundBernoulli}
        s_*
        \left[
            w(q)
            -
            \omega_0 r(q)
        \right]
        -
        \left(
            1-\generationRoundBernoulli
        \right)
        \nu_{\mathrm{EAT}}(\security)
        \label{eq:affine_min_tradeoff}
    \end{equation}
    is a min-tradeoff function for the channels~$\{\mathcal{M}_i\}_{i\in[n]}$.
\end{lemma}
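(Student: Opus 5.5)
To show that $f_{\min}$ in Equation~\eqref{eq:affine_min_tradeoff} is a min-tradeoff function, we fix a round index $i$, an input state $\sigma_{R_{i-1}R'}$ (polynomial-size, independent of the verifier's fresh round coins, as in Definition~\ref{def:efficient_sequential_prover_advice}), and write $q=\mathcal{M}_i(\sigma)_{Q_i}$. It then suffices to lower bound $\vne(O_i\mid S_iR')$ by $f_{\min}(q)$.

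The first step is to reduce the entropy of the tagged output to the generation-round entropy. Since $S_i=(\generationRoundRV_i,\bellChallengeRV_i)$ is part of the conditioning and $O_i$ is a tagged classical variable, we have
\begin{equation}
\vne(O_i\mid S_iR')\geq \Pr[\generationRoundRV_i=1]\,\vne(O_i\mid \generationRoundRV_i=1,\bellChallengeRV_i=1,R')
=(1-\generationRoundBernoulli)\,\vne(O_i\mid \generationRoundRV_i=1,\bellChallengeRV_i=1,R') \; .
\end{equation}
The bound discards the nonnegative test-round contribution. On generation rounds, $O_i=\mathsf{g}_\bot$ exactly on $\neg\contEvent$, and $\contEvent$ is a function of the (traced-out) Phase-A transcript. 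Conditioning on $\contEvent$ and on $\transRV$ can only decrease entropy, because $\transRV$ is generated inside the channel. We therefore obtain
\begin{equation}
\vne(O_i\mid \generationRoundRV_i=1,\bellChallengeRV_i=1,R')\geq p\,\vne(\bellOutcomeRVAt{1}\mid\transRV,R')_{\rho\mid\contEvent} \; .
\end{equation}
This uses that the prover's response on challenge $1$ is the same in test and generation rounds, since $\generationRoundRV_i$ is hidden. Because $\transRV$ is included in the conditioning, the indicator of $\contEvent$ is a deterministic function of the conditioning system; this justifies the decomposition into continuing and non-continuing parts.

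The second step applies Equation~\eqref{eq:fixed_challenge_entropy_eat_form} together with nonnegativity of the entropy of a classical bit and Lemma~\ref{lemma:global_affine_entropy_bound}. Together they give
\begin{equation}
p\,\vne(\bellOutcomeRVAt{1}\mid\transRV,R')_{\rho\mid\contEvent}\geq p\,s_*(\omega-\omega_0)-\nu_{\mathrm{gen}}(\security) \; .
\end{equation}
The factor $p$ cancels the $1/p$ loss, which is precisely why the weighted hiding formulation of \virtualHidingRef{quantum} matters. A subtlety is that the max with $0$ in Lemma~\ref{lemma:global_affine_entropy_bound} must be applied before subtracting $\nu_{\mathrm{gen}}/p$. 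We handle it by using $\vne\geq\max\{0,\cdot\}$ and noting that $\max\{0,a\}-\nu/p\leq\max\{0,a-\nu/p\}$ is false in general. The safe route is therefore to bound $p\,\vne\geq\max\{0,p(h_{\mathrm{gen}}-c_{\mathrm{gen}}\sqrt{(\chshopt-\omega)_+})-\nu_{\mathrm{gen}}\}\geq p\max\{0,h_{\mathrm{gen}}-\cdots\}-\nu_{\mathrm{gen}}$, which holds since $\nu_{\mathrm{gen}}\geq0$.

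The last step substitutes Equation~\eqref{eq:score_to_protocol_parameters}, namely $p=r(q)/\generationRoundBernoulli$ and $p\,\omega=w(q)/\generationRoundBernoulli$. This yields
\begin{equation}
\vne(O_i\mid S_iR')\geq(1-\generationRoundBernoulli)\Bigl[\tfrac{s_*}{\generationRoundBernoulli}\bigl(w(q)-\omega_0r(q)\bigr)-\nu_{\mathrm{gen}}\Bigr] \; ,
\end{equation}
which is $f_{\min}(q)$ with $\nu_{\mathrm{EAT}}=\nu_{\mathrm{gen}}$. When $r(q)=0$ both sides reduce consistently, with the right side nonpositive. Affinity gives convexity and differentiability.

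The main obstacle is making $\nu_{\mathrm{gen}}$ uniform over all admissible input states $\sigma$ and all round indices, since the infimum in Definition~\ref{def:min_tradeoff_function} ranges over arbitrary inputs. I would address this by invoking the advice-uniform form of Equation~\eqref{eq:fixed_challenge_entropy_eat_form}. The reduction in the proof of \virtualHidingRef{quantum} treats the input state as nonuniform advice to the binding adversary, and the argument concludes by a standard contradiction over the family of polynomial-size advice states. The remaining concern is that a min-tradeoff function is defined only on distributions $q$ arising from the channel; for those, $q(0)+q(1)+q(2)=\generationRoundBernoulli$ holds automatically.
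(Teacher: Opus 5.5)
Your proposal is correct and follows essentially the same route as the paper's proof. Both discard all but the $(\generationRoundRV_i,\bellChallengeRV_i)=(1,1)$ branch, condition on the transcript to extract the factor $p(q)$, combine Equation~\eqref{eq:fixed_challenge_entropy_eat_form} with nonnegativity and Lemma~\ref{lemma:global_affine_entropy_bound}, and substitute $p=r/\beta$ and $p\,\omega=w/\beta$. One small correction: for $c\geq 0$ the inequality $\max\{0,a\}-c\leq\max\{0,a-c\}$ is in fact true. Your ``safe route'' is exactly this inequality scaled by $p$, so the paper's direct step $\vne\geq\max\{0,a\}-\nu_{\mathrm{gen}}/p$ is equally valid.
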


\begin{proof}
    Fix~$i\in[n]$, a distribution~$q$, and an input state~$\sigma_{R_{i-1}R'}$ satisfying
    \begin{equation}
        \mathcal{M}_i(\sigma)_{Q_i}
        =
        q
        \; .
    \end{equation}
    Since~$O_i$ and~$S_i$ are classical, conditioning on the round type gives
    \begin{align}
        \vne(O_i\mid S_iR')_{\mathcal{M}_i(\sigma)}
        &\geq
        \left(
            1-\generationRoundBernoulli
        \right)
        \vne\!\left(
            Z_i
            \mid
            R',
            \generationRoundRV_i=1,
            \bellChallengeRV_i=1
        \right)
        \; .
        \label{eq:discard_test_entropy}
    \end{align}
    The symbol~$Z_i$ contains the continuation flag through the value~$\bot$.
    Consequently,
    \begin{equation}
		\label{eq:weight_continuation_entropy}
        \vne\!\left(
            Z_i
            \mid
            R',
            \generationRoundRV_i=1,
            \bellChallengeRV_i=1
        \right)
        \geq
        p(q)
        \vne\!\left(
            \bellOutcomeRVAt{1}
            \mid
            \transRV_i,R',\contEvent_i=1
        \right)
        \; .
    \end{equation}
    Here we also used the fact that conditioning on the classical transcript~$\transRV_i$ can only decrease the conditional entropy.

    Since~$\bellOutcomeRVAt{1}$ is classical, its conditional von Neumann entropy is nonnegative.
    Equation~\eqref{eq:fixed_challenge_entropy_eat_form} therefore also implies
    \begin{equation}
        \vne\!\left(
            \bellOutcomeRVAt{1}
            \mid
            \transRV_i,R',\contEvent_i=1
        \right)
        \geq
        \max\!\left\{
            0,
            h_{\mathrm{gen}}
            -
            c_{\mathrm{gen}}
            \sqrt{
                \left(
                    \chshopt-\omega(q)
                \right)_{+}
            }
        \right\}
        -
        \frac{\nu_{\mathrm{gen}}(\security)}{p(q)}
        \; .
        \label{eq:nonnegative_fixed_challenge_bound}
    \end{equation}
    Applying Equation~\eqref{eq:nonnegative_fixed_challenge_bound} and Lemma~\ref{lemma:global_affine_entropy_bound} gives
    \begin{align}
        &\vne(O_i\mid S_iR')_{\mathcal{M}_i(\sigma)}
        \nonumber\\
        &\qquad\geq
        \left(
            1-\generationRoundBernoulli
        \right)
        \left[
            p(q)s_*
            \left(
                \omega(q)-\omega_0
            \right)
            -
            \nu_{\mathrm{EAT}}(\security)
        \right]
        \nonumber\\
        &\qquad=
        f_{\min}(q)
        \; ,
    \end{align}
    where the final equality follows from Equation~\eqref{eq:score_to_protocol_parameters}.
    The function~$f_{\min}$ is affine and therefore differentiable and convex.
\end{proof}

The gradient of~$f_{\min}$ is constant.
Writing the coordinates of~$q$ in the order~$(\bot,0,1,2)$, we may take
\begin{equation}
    L_*
    \coloneqq
    \left\lVert
        \nabla f_{\min}
    \right\rVert_{\infty}
    =
    \frac{1-\generationRoundBernoulli}{\generationRoundBernoulli}
    s_*
    \max\left\{
        \omega_0,
        1-\omega_0
    \right\}
    \; .
    \label{eq:min_tradeoff_gradient}
\end{equation}

\subsection{Smooth Min-Entropy of the Sequential Output}
\label{subsec:sequential_smooth_min_entropy}

Let~$\widehat{\Omega}$ denote the set of empirical score distributions satisfying Equations~\eqref{eq:test_frequency_condition}--\eqref{eq:chsh_condition}.
This set is convex because all three conditions are linear in the empirical frequencies of~$Q_1^n$.
For every~$q\in\widehat{\Omega}$, we have
\begin{equation}
    r(q)
    \geq
    p_{\mathrm{cont}}^{\min}
    \left(
        q(0)+q(1)+q(2)
    \right)
    \geq
    p_{\mathrm{cont}}^{\min}
    \left(
        \generationRoundBernoulli
        -
        \delta_{\mathrm{test}}
    \right)
    \; ,
    \label{eq:pass_continuation_mass}
\end{equation}
and
\begin{equation}
    w(q)
    \geq
    \omega_{\min}r(q)
    \; .
    \label{eq:pass_winning_mass}
\end{equation}
Assume from now on that
\begin{equation}
    \omega_{\min}
    >
    \omega_0
    =
    \chshopt
    -
    \left(
        \frac{h_{\mathrm{gen}}}{c_{\mathrm{gen}}}
    \right)^2
    \; .
    \label{eq:positive_score_threshold}
\end{equation}
Combining Equations~\eqref{eq:affine_min_tradeoff}, \eqref{eq:pass_continuation_mass}, and~\eqref{eq:pass_winning_mass}, every~$q\in\widehat{\Omega}$ satisfies
\begin{equation}
    f_{\min}(q)
    \geq
    h_{\Omega}
    -
    \left(
        1-\generationRoundBernoulli
    \right)
    \nu_{\mathrm{EAT}}(\security)
    \; ,
    \label{eq:passing_tradeoff_rate}
\end{equation}
where
\begin{equation}
    h_{\Omega}
    \coloneqq
    \frac{1-\generationRoundBernoulli}{\generationRoundBernoulli}
    s_*
    \left(
        \omega_{\min}-\omega_0
    \right)
    p_{\mathrm{cont}}^{\min}
    \left(
        \generationRoundBernoulli
        -
        \delta_{\mathrm{test}}
    \right)
    \; .
    \label{eq:asymptotic_accumulation_rate}
\end{equation}

Let~$p_{\Omega}=\Pr[\Omega]>0$.
Applying Theorem~\ref{thm:eat_rate_form} with~$d_O=6$ and Equation~\eqref{eq:min_tradeoff_gradient} yields
\begin{equation}
    \vne_{\min}^{\smoothingParam}\!\left(
        O_1^n
        \mid
        \generationRoundRV_1^n,
        \bellChallengeRV_1^n,
        \eveReg
    \right)_{\rho\mid\Omega}
    \geq
    n h_{\Omega}
    -
    \mu_{\mathrm{EAT}}\sqrt{n}
    -
    n
    \left(
        1-\generationRoundBernoulli
    \right)
    \nu_{\mathrm{EAT}}(\security)
    \; ,
    \label{eq:full_tagged_output_entropy}
\end{equation}
where
\begin{equation}
    \mu_{\mathrm{EAT}}
    \coloneqq
    2
    \left(
        \log(13)
        +
        \left\lceil
            L_*
        \right\rceil
    \right)
    \sqrt{
        1
        -
        2\log
        \left(
            \smoothingParam p_{\Omega}
        \right)
    }
    \; .
    \label{eq:sequential_eat_second_order}
\end{equation}

The tagged symbol~$O_i$ is in one-to-one correspondence with the pair~$(Z_i,D_i)$ defined in Equations~\eqref{eq:tagged_generation_output} and~\eqref{eq:test_data_symbol}.
Conditioned on~$\generationRoundRV_1^n$, the test-data string~$D_1^n$ has at most~$3^{N_{\mathrm{test}}}$ possible values.
On~$\Omega$, Equation~\eqref{eq:test_frequency_condition} therefore gives
\begin{equation}
    \log
    \left|
        \operatorname{supp}
        \left(
            D_1^n
            \mid
            \generationRoundRV_1^n,
            \Omega
        \right)
    \right|
    \leq
    n
    \left(
        \generationRoundBernoulli
        +
        \delta_{\mathrm{test}}
    \right)
    \log 3
    \; .
    \label{eq:test_data_dimension_bound}
\end{equation}
The dimension chain rule for smooth min-entropy and Equations~\eqref{eq:full_tagged_output_entropy}--\eqref{eq:test_data_dimension_bound} now give the desired bound on the generation output.

\begin{theorem}[Sequential smooth min-entropy]
    \label{thm:sequential_smooth_min_entropy}
    Let~$n=\poly(\security)$, and let an efficient sequential prover with polynomial-size advice execute Protocol~\ref{protocol:sequential_rand_certification}.
    Assume that the underlying commitment is statistically hiding and computationally binding against \qpt{} senders with polynomial-size advice independent of the verifier's fresh coins.
    Let~$p_{\Omega}=\Pr[\Omega]>0$, and suppose that Equation~\eqref{eq:positive_score_threshold} holds.
    Then
    \begin{align}
        &\vne_{\min}^{\smoothingParam}\!\left(
            Z_1^n
            \mid
            D_1^n,
            \generationRoundRV_1^n,
            \bellChallengeRV_1^n,
            \eveReg
        \right)_{\rho\mid\Omega}
        \nonumber\\
        &\qquad\geq
        n R_{\mathrm{seq}}
        -
        \mu_{\mathrm{EAT}}\sqrt{n}
        -
        n
        \left(
            1-\generationRoundBernoulli
        \right)
        \nu_{\mathrm{EAT}}(\security)
        \; ,
        \label{eq:sequential_smooth_min_entropy}
    \end{align}
    where
    \begin{equation}
        R_{\mathrm{seq}}
        \coloneqq
        h_{\Omega}
        -
        \left(
            \generationRoundBernoulli
            +
            \delta_{\mathrm{test}}
        \right)
        \log 3
        \; ,
        \label{eq:sequential_entropy_rate}
    \end{equation}
    and~$h_{\Omega}$ and~$\mu_{\mathrm{EAT}}$ are defined in Equations~\eqref{eq:asymptotic_accumulation_rate} and~\eqref{eq:sequential_eat_second_order}, respectively.
\end{theorem}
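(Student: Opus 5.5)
The plan is to derive Theorem~\ref{thm:sequential_smooth_min_entropy} from the entropy bound already established for the full tagged output $O_1^n$ in Equation~\eqref{eq:full_tagged_output_entropy}. That bound comes from Theorem~\ref{thm:eat_rate_form} applied to the channels of Definition~\ref{def:sequential_eat_channels}. Lemma~\ref{lemma:complete_round_eat_channels} shows these are EAT channels, Lemma~\ref{lemma:affine_min_tradeoff} gives the min-tradeoff function, and Equation~\eqref{eq:passing_tradeoff_rate} bounds it on the convex passing set $\widehat{\Omega}$. The remaining work is to pass from $O_1^n$ given $(\generationRoundRV_1^n,\bellChallengeRV_1^n,\eveReg)$ to $Z_1^n$ given additionally $D_1^n$. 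Before that, I would check two hypotheses of the EAT application. First, the advice convention (Definition~\ref{def:efficient_sequential_prover_advice}) holds: each round's input state on $R_{i-1}R'$ is polynomial-size and independent of the fresh coins of round~$i$. Second, Equation~\eqref{eq:fixed_challenge_entropy_eat_form}, and hence Theorem~\ref{thm:intro_virtual_input_hiding}.\ref{item:quantum_hiding}, therefore applies uniformly to every such input. For this I need post-quantum binding against senders with such advice, which is exactly the stated assumption.

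The key step is the decomposition $O_i\leftrightarrow(Z_i,D_i)$, which is a bijection. This gives the smooth min-entropy equality
\begin{equation*}
\vne_{\min}^{\smoothingParam}(O_1^n\mid \generationRoundRV_1^n\bellChallengeRV_1^n\eveReg)=\vne_{\min}^{\smoothingParam}(Z_1^nD_1^n\mid \generationRoundRV_1^n\bellChallengeRV_1^n\eveReg),
\end{equation*}
since an isometric relabelling of a classical register leaves the smooth min-entropy unchanged. I would then apply the chain rule $\vne_{\min}^{\epsilon}(AB\mid C)\le \vne_{\min}^{\epsilon}(A\mid BC)+\log|\operatorname{supp} B|$ with $A=Z_1^n$ and $B=D_1^n$. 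This is valid because $D_1^n$ is classical, so conditioning on it costs at most the logarithm of its alphabet size. Here the support is taken conditionally on $\generationRoundRV_1^n$, which lies inside the conditioning system $C$. On $\Omega$, Equation~\eqref{eq:test_data_dimension_bound} bounds this logarithm by $n(\generationRoundBernoulli+\delta_{\mathrm{test}})\log 3$. Subtracting this from Equation~\eqref{eq:full_tagged_output_entropy} and collecting terms gives exactly $nR_{\mathrm{seq}}$ with the same second-order term $\mu_{\mathrm{EAT}}\sqrt n$ and the same negligible correction.

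The main obstacle is making the dimension bound legitimate for the state conditioned on $\Omega$. The support of $D_1^n$ given $\generationRoundRV_1^n$ has size $3^{N_{\mathrm{test}}}$, and $N_{\mathrm{test}}$ is random. To handle this, I would observe that conditioned on $\Omega$ the state is supported only on $\generationRoundRV_1^n$ values with $N_{\mathrm{test}}\le n(\generationRoundBernoulli+\delta_{\mathrm{test}})$. Since $\generationRoundRV_1^n$ is classical and part of the conditioning, I would apply the chain rule blockwise for each fixed value of $\generationRoundRV_1^n$, using the uniform worst-case bound. In doing so I must make sure the smoothing does not move weight onto excluded values; I would use the version of the chain rule in which the optimal smoothed state may be chosen classical on $D_1^n\generationRoundRV_1^n$ with the same support. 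A minor point also needs checking: $D_1^n$ is a function of $O_1^n$ and $\generationRoundRV_1^n$. So revealing it in the final theorem costs nothing beyond this dimension term, and no additional leakage from the erased Phase-A transcripts needs to be accounted for.
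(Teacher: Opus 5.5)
Your proposal is correct and is essentially the paper's proof. Both start from Equation~\eqref{eq:full_tagged_output_entropy}, use the bijection $O_i\leftrightarrow(Z_i,D_i)$, and apply the dimension chain rule to the classical register $D_1^n$. On $\Omega$ this costs at most $\log 3^{N_{\mathrm{test}}}\le n(\generationRoundBernoulli+\delta_{\mathrm{test}})\log 3$.

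Your blockwise treatment over the classical value of $\generationRoundRV_1^n$, together with the support-preserving choice of smoothed state, makes explicit a point the paper's three-line proof glosses over. The bound $M=3^{N_{\mathrm{test}}}$ is random, but it is fixed by the classical conditioning register $\generationRoundRV_1^n$.
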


\begin{proof}
    Equation~\eqref{eq:full_tagged_output_entropy} provides a lower-bound on the smooth min-entropy of the joint tagged string~$(Z_1^n,D_1^n)$.
    Revealing a classical register with support size at most~$M$ decreases the conditional smooth min-entropy by at most~$\log M$.
    Applying this dimension bound with~$M=3^{N_{\mathrm{test}}}$ and then using Equation~\eqref{eq:test_data_dimension_bound} proves Equation~\eqref{eq:sequential_smooth_min_entropy}.
\end{proof}

Since~$n=\poly(\security)$, the term~$n\nu_{\mathrm{EAT}}(\security)$ remains negligible in the security parameter.
The leading-order rate is positive whenever
\begin{equation}
    \frac{1-\generationRoundBernoulli}{\generationRoundBernoulli}
    s_*
    \left(
        \omega_{\min}-\omega_0
    \right)
    p_{\mathrm{cont}}^{\min}
    \left(
        \generationRoundBernoulli
        -
        \delta_{\mathrm{test}}
    \right)
    >
    \left(
        \generationRoundBernoulli
        +
        \delta_{\mathrm{test}}
    \right)
    \log 3
    \; .
    \label{eq:positive_sequential_rate_condition}
\end{equation}
As in~\cite{merkulov2023entropy}, decreasing the test probability improves the leading-order rate but increases the gradient in Equation~\eqref{eq:min_tradeoff_gradient} and therefore worsens the finite-size term.
The test probability must consequently be optimized as a function of~$n$, the continuation threshold, and the desired completeness and soundness errors.

\subsection{Completeness}
\label{subsec:sequential_completeness}

For completeness, consider the honest sequential strategy that executes the honest single-round strategy independently in every round.
Let
\begin{equation}
    p_{\mathrm{hon}}
    \coloneqq
    \Pr[\contEvent]
    \geq
    \frac{1}{10\hashNum}
    -
    \nu_{\mathrm{hon}}(\security)
    \; ,
    \label{eq:honest_sequential_continuation}
\end{equation}
where the inequality follows from Lemma~\ref{lemma:rc_phase_a_success}.
Conditioned on continuation, the honest strategy wins the CHSH test with probability~$\chshopt$.

\begin{proposition}[Honest completeness]
    \label{prop:sequential_honest_completeness}
    Suppose that
    \begin{equation}
        p_{\mathrm{cont}}^{\min}
        <
        p_{\mathrm{hon}}
        \qquad\text{and}\qquad
        \omega_{\min}
        <
        \chshopt
        \; .
    \end{equation}
    Then the honest strategy satisfies
    \begin{align}
        \Pr[\Omega]
        \geq
        1
        &-
        2\exp\!\left(
            -2n\delta_{\mathrm{test}}^2
        \right)
        \nonumber\\
        &-
        \exp\!\left(
            -2n
            \left(
                \generationRoundBernoulli
                -
                \delta_{\mathrm{test}}
            \right)
            \left(
                p_{\mathrm{hon}}
                -
                p_{\mathrm{cont}}^{\min}
            \right)^2
        \right)
        \nonumber\\
        &-
        \exp\!\left(
            -2n
            \left(
                \generationRoundBernoulli
                -
                \delta_{\mathrm{test}}
            \right)
            p_{\mathrm{cont}}^{\min}
            \left(
                \chshopt
                -
                \omega_{\min}
            \right)^2
        \right)
        \; .
        \label{eq:honest_completeness_bound}
    \end{align}
\end{proposition}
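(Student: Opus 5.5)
The plan is to bound the three ways the honest execution can abort separately, using Hoeffding's inequality, and then combine them with a union bound. Because the honest sequential strategy runs the honest single-round strategy independently in each round with fresh verifier coins, the round variables $(\generationRoundRV_i,\contEvent_i,\winRV_i)$ are i.i.d. across $i\in[n]$. Within a round, the continuation event is determined in Phase~A1, before the hidden round type $\generationRoundRV_i$ is sampled, so $\contEvent_i$ is independent of $\generationRoundRV_i$. Let $p\coloneqq\Pr[\contEvent_i]$. By Equation~\eqref{eq:honest_sequential_continuation}, $p\ge p_{\mathrm{hon}}$. Conditioned on $\contEvent_i$, the honest state is the BB84 state of Equation~\eqref{eq:rc_post_phase_a_state} measured with $\bellMeasurement_{\bellChallenge}$. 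Hence, for either value of the uniformly sampled test challenge, $\winRV_i=1$ with probability exactly $\chshopt=\cos^2(\pi/8)$, independently of everything else in that round.

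Write $\Omega^c\subseteq A\cup(\neg A\wedge B)\cup(\neg A\wedge\neg B\wedge C)$, where $A$, $B$, $C$ are the failures of Equations~\eqref{eq:test_frequency_condition}, \eqref{eq:continuation_condition} and~\eqref{eq:chsh_condition}, respectively.

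\begin{enumerate}
\item For $A$: $N_{\mathrm{test}}$ is a sum of $n$ i.i.d.\ $\Ber(\generationRoundBernoulli)$ variables. The two-sided Hoeffding bound gives $\Pr[A]\le2\exp(-2n\delta_{\mathrm{test}}^2)$.
\item For $\neg A\wedge B$: condition on the set of test rounds, with size $N\ge n(\generationRoundBernoulli-\delta_{\mathrm{test}})$. By the independence of $\contEvent_i$ from $\generationRoundRV_i$, $N_{\mathrm{cont}}$ is then $\mathrm{Bin}(N,p)$. Since $p\ge p_{\mathrm{hon}}>p_{\mathrm{cont}}^{\min}$, the one-sided Hoeffding bound gives $\Pr[N_{\mathrm{cont}}<p_{\mathrm{cont}}^{\min}N]\le\exp(-2N(p_{\mathrm{hon}}-p_{\mathrm{cont}}^{\min})^2)$. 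Using the lower bound on $N$ and averaging over the conditioning yields the second term.
\item For $\neg A\wedge\neg B\wedge C$: additionally condition on the set of continuing test rounds, with size $M\ge p_{\mathrm{cont}}^{\min}N\ge p_{\mathrm{cont}}^{\min}n(\generationRoundBernoulli-\delta_{\mathrm{test}})$. Then $N_{\mathrm{win}}$ is $\mathrm{Bin}(M,\chshopt)$, and Hoeffding gives at most $\exp(-2M(\chshopt-\omega_{\min})^2)$, which is bounded by the third term.
\end{enumerate}

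Summing the three bounds gives Equation~\eqref{eq:honest_completeness_bound}.

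The main obstacle is justifying the conditional binomial structure rigorously. The counts are random, so the conditioning must be done via the law of total probability over realizations of the index sets of test rounds and of continuing test rounds. Each such conditioning must leave the remaining per-round variables i.i.d.\ with the stated marginals; in particular, the winning indicators must stay independent of which rounds continued. I will verify this from the round structure: $\contEvent_i$ depends only on the Phase-A1 coins, $\generationRoundRV_i$ is sampled afterwards from fresh coins, and $\winRV_i$ depends on the fresh challenge and the Phase-B measurement.

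A further point is that Lemma~\ref{lemma:rc_phase_a_success} only lower-bounds $p$. The Hoeffding step should therefore be applied to the true $p$, using $p-p_{\mathrm{cont}}^{\min}\ge p_{\mathrm{hon}}-p_{\mathrm{cont}}^{\min}>0$.
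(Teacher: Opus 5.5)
Your proposal is correct and takes the same route as the paper. The paper also applies Hoeffding three times: two-sided for $N_{\mathrm{test}}$, one-sided for $N_{\mathrm{cont}}$ given the test rounds, and one-sided for $N_{\mathrm{win}}$ given the continuing test rounds, then takes a union bound; your conditioning on index sets only makes its sketch more explicit. One small correction: the paper defines $p_{\mathrm{hon}}$ as $\Pr[\contEvent]$ itself, and Lemma~\ref{lemma:rc_phase_a_success} lower-bounds $p_{\mathrm{hon}}$, so your distinction between the true $p$ and $p_{\mathrm{hon}}$ is unnecessary, though harmless.
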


\begin{proof}
    The number of test rounds is binomial with parameters~$(n,\generationRoundBernoulli)$, so the first failure probability follows from Hoeffding's inequality.
    Conditioned on~$N_{\mathrm{test}}$, the honest continuation indicators in the test rounds are independent Bernoulli variables with mean~$p_{\mathrm{hon}}$.
    On the event~$N_{\mathrm{test}}\geq n(\generationRoundBernoulli-\delta_{\mathrm{test}})$, Hoeffding's inequality gives the second term in Equation~\eqref{eq:honest_completeness_bound}.
    Finally, conditioned on~$N_{\mathrm{cont}}$, the honest CHSH winning indicators are independent Bernoulli variables with mean~$\chshopt$.
    On the event~$N_{\mathrm{cont}}\geq p_{\mathrm{cont}}^{\min}N_{\mathrm{test}}$, a final application of Hoeffding's inequality gives the third term.
\end{proof}

Theorem~\ref{thm:sequential_smooth_min_entropy} gives the desired lower bound on the conditional smooth min-entropy of the complete tagged output string.
A quantum-proof strong seeded extractor may subsequently be applied to~$Z_1^n$ to obtain nearly uniform random bits, with output length determined by Equation~\eqref{eq:sequential_smooth_min_entropy} and the chosen extractor error.

\section{Instantiations}
\label{sec:instantiations}

\entthmref{} assumes the existence of a \pq{} \shbc{} scheme.
A natural question is therefore whether such commitment schemes can be realized concretely and what verification complexity the resulting randomness-certification protocols achieve.
In this section, we address these questions through two concrete instantiations.

Our first instantiation is based on \tcfs{}.
Although \poqs{} and randomness certification from \tcfs{} are already known~\cite{brakerski2021cryptographic,merkulov2023entropy}, our purpose here is different.
The \tcf{} instantiation, presented in Subsection~\ref{subsec:tcf_commitment}, demonstrates that our commitment-based framework recovers randomness certification from this well-established cryptographic primitive.
Moreover, it identifies the additional structure that enables efficient verification within our construction.
The common-image and injectivity properties yield a statistically hiding commitment, while quantum claw-freeness provides \pq{} computational binding.
The trapdoor additionally allows the verifier to efficiently recover the relevant openings and certify their uniqueness, resulting in an efficient classical verifier.

Our second instantiation is based on a simple commitment scheme in the \qrom{}.
As shown in Subsection~\ref{sec:qrom_commitment}, this construction satisfies the cryptographic assumptions of \entthmref{} without additional computational assumptions beyond the quantum random oracle.
However, unlike the \tcf{} instantiation, it provides no efficient mechanism for recovering and certifying the unique valid openings required by our protocol, resulting in an inefficient classical verifier.
Recent works have established efficient randomness certification in the \qrom{} without additional computational assumptions~\cite{randomness_no_aa,unconditional_randomness}, demonstrating that this inefficiency is not inherent to randomness certification in the \qrom{}.
Rather, it arises from the particular unique-opening verification procedure employed by our commitment-based construction.
We examine this limitation and the possibility of overcoming it in Subsection~\ref{sec:qrom_verifier_barriers}.

Both commitment schemes are based on standard constructions.
The \tcf{}-based scheme is the natural \pq{} analogue of the standard bit commitment from a claw-free pair of functions~\cite{tcf_commitment,Chase2013}, while the \qrom{} construction uses the hash-based commitment~$c=\oracle(\bit\Vert\rand)$~\cite{qrom_commitment,wonderful_qrom}, with parameters chosen to ensure statistical hiding.
Our contribution is therefore not the commitment constructions themselves, but their application to randomness certification through quantum virtual-input hiding and the identification of the additional structure that enables efficient verification within this framework.

These instantiations should not be interpreted as identifying the weakest cryptographic assumption sufficient for our construction.
As discussed in Subsection~\ref{subsec:post_quantum_binding}, statistically hiding commitments satisfying classical computational binding can be constructed generically from \owfs{}~\cite{shbc_from_owf}.
In the \pq{} setting, recent work by Chen, Wang, and Xie~\cite{chen2026minicrypt} established quantum sum-binding security for the NOVY commitment protocol~\cite{novy} in the \qccc{} model, showing that \pq{} one-way permutations suffice to construct statistically hiding commitments satisfying an even stronger binding notion than the one required here.
Their result therefore provides an additional instantiation of our underlying commitment primitive, beyond the two constructions examined in this section.
Nevertheless, no analogous generic construction from arbitrary post-quantum \owfs{} is currently known.
Since our security proof requires only computational binding against \qpt{} senders, rather than stronger notions such as sum-binding or collapse-binding, whether arbitrary post-quantum \owfs{} suffice remains an open problem, which we return to in Section~\ref{sec:conclusion}.

Together, these instantiations demonstrate that the cryptographic properties sufficient for randomness certification need not themselves provide the additional algorithmic structure required for efficient verification.

\subsection{\tcfshort{}-based Construction}
\label{subsec:tcf_commitment}

We first instantiate the commitment primitive using a \pq{} \tcf{} family.
Recall from Subsection~\ref{subsec:tcf} that a \tcf{} family consists of a pair of injective functions with a common image, together with a trapdoor that enables efficient inversion.
We assume throughout this subsection that the family is quantum-secure, as defined in Definition~\ref{def:pq_tcf}.

The commitment scheme is the natural \pq{} analogue of the standard bit commitment from a claw-free pair of functions~\cite{tcf_commitment,Chase2013}.
We recall the construction explicitly in order to verify that it satisfies the statistical-hiding and \pq{} computational-binding properties required by our protocol, and to identify the additional role played by the trapdoor in making verification efficient.

\begin{definition}[\tcf{}-Based Bit Commitment]
	\label{def:tcf_commitment}
	Let~$\F$ be a \tcf{} pair family.
	We define the \tcf{}-based bit commitment scheme~$\tcfbc$ as follows.

	\paragraph{Setup.}
	The receiver~$\receiver$ samples
	\begin{equation}
		(\key,\trapdoor)\leftarrow\gen(1^\security) \; ,
	\end{equation}
	and sends the public key~$\key$ to the sender~$\sender$ (the trapdoor~$\trapdoor$ is unused).

	\paragraph{Commit Phase.}
	To commit to a bit~$\bit\in\{0,1\}$,~$\sender$ does the following:
	\begin{enumerate}
		\item samples~$\preimage\leftarrow \X$
		uniformly at random,

		\item computes~$\image=\tcfunc_{\key,\bit}(\preimage)$,
		\item sends the commitment~$c\coloneqq\image$ to the receiver,
		\item stores the decommitment~$d\coloneqq\preimage$.
	\end{enumerate}

	\paragraph{Open Phase.}
	To open the commitment, the sender~$\sender$ sends~$(\bit,\preimage)$.
	The receiver~$\receiver$ accepts iff~$\tcfunc_{\key,\bit}(\preimage)=\image$.
\end{definition}

\begin{lemma}
	\label{lemma:tcfs_are_binding}
	Let~$\Tcfunc$ be a pq{} \tcf{} family.
	Then, the corresponding \tcf{}-instantiated bit commitment scheme~$\tcfbc$ is a \pq{} computational binding commitment scheme (Definition~\ref{def:post_quantum_binding}).
\end{lemma}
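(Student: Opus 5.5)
We argue by a direct reduction: any \qpt{} adversarial sender~$\sender^*$ that wins $\bindExp_{\sender^*,\tcfbc}$ with non-negligible probability is turned into a \qpt{} adversary~$\adversary$ that finds a claw for~$\Tcfunc$ with the same probability. This contradicts the claw-freeness of Definition~\ref{def:pq_tcf}. Since the commit phase of~$\tcfbc$ is a single message exchange (receiver sends~$\key$, sender replies with~$\image$), the reduction is straightforward. There is no rewinding, so no quantum-specific difficulty arises.

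\textbf{Construction of~$\adversary$.} On input a key~$\key$, where $(\key,\trapdoor)\leftarrow\gen(1^\security)$ is sampled by the claw-freeness challenger, $\adversary$ plays the honest receiver toward~$\sender^*$.
\begin{enumerate}
	\item It forwards~$\key$ as the setup message. This is legitimate because the honest receiver never uses~$\trapdoor$, so $\sender^*$'s view is distributed exactly as in $\bindExp_{\sender^*,\tcfbc}(1^\security)$.
	\item It runs~$\sender^*$ coherently, exactly as $\sender^*$ would run, until $\sender^*$ outputs a classical commitment~$\image$ and then a classical pair~$(\decomm_0,\decomm_1)$.
	\item It outputs $(\preimage_0,\preimage_1)\coloneqq(\decomm_0,\decomm_1)$.
\end{enumerate}
The running time of~$\adversary$ is that of~$\sender^*$ plus polynomial overhead. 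Any advice given to~$\sender^*$ is passed through to~$\adversary$ unchanged.

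\textbf{Analysis.} The binding experiment outputs~$1$ exactly when $\tcfunc_{\key,0}(\decomm_0)=\image$ and $\tcfunc_{\key,1}(\decomm_1)=\image$. In that case $\tcfunc_{\key,0}(\preimage_0)=\tcfunc_{\key,1}(\preimage_1)$, which is a claw. The joint distribution of $(\key,\image,\decomm_0,\decomm_1)$ is identical in the two experiments. Hence the probability that $\adversary$ outputs a claw is at least $\Pr[\bindExp_{\sender^*,\tcfbc}(1^\security)=1]$. Claw-freeness bounds the former by a negligible function, so the latter is negligible for every \qpt{} $\sender^*$, which is Definition~\ref{def:post_quantum_binding}.

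\textbf{Points to check.} The only real care is in two places.
\begin{itemize}
	\item \emph{Identical distributions.} The key sampled by the claw-freeness challenger must have the same distribution as the key in the binding experiment. This holds because both come from $\gen(1^\security)$.
	\item \emph{Valid claws.} The decommitments output by $\sender^*$ must lie in~$\Preimage$. If the opening check does not already verify membership in the domain, we make it part of $\open$; otherwise an invalid string cannot satisfy the evaluation equality anyway.
\end{itemize}
I do not expect a genuine obstacle here. The subtle issue, namely that quantum senders may evade stronger binding notions, is exactly what Definition~\ref{def:post_quantum_binding} sidesteps by requiring a single execution to output both openings.
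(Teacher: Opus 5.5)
Your proposal is correct and takes essentially the same approach as the paper. The paper's proof is the one-line observation that two accepting openings satisfy $f_{\kappa,0}(x_0)=y=f_{\kappa,1}(x_1)$, i.e.\ form a claw, which post-quantum claw-freeness rules out for QPT senders; you have written this same reduction out explicitly, including the key forwarding, the identical distribution of $\kappa$, and the domain-membership check.
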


\begin{proof}
	By definition of \tcf{}, this scheme is \pq{} binding with respect to definition~\ref{def:post_quantum_binding} since no \qpt{} device can retrieve two preimages for the same image.
\end{proof}

\renewcommand{\Im}{\text{Image}}
\begin{lemma}
	\label{lemma:tcfs_are_hiding}
	Let~$\F$ be a \tcf{} family.
	Then, the \tcf{} instantiated bit commitment scheme~$\tcfbc$ is a \shbc{} (Definition~\ref{def:hiding}).
	Namely, for every unbounded receiver~$\adversary$,
	\begin{equation}
		\Pr\left[
			\hideExp_{\receiver^*,\bc}(1^\security)=1
		\right]
		\leq
		\frac12+\negl(\security) \; .
	\end{equation}
\end{lemma}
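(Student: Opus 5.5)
The plan is to show that the receiver's view $(\key,\image)$ has exactly the same distribution whether the sender commits to $0$ or to $1$, for every fixed key $\key$. Then the hiding experiment gives the receiver no information about $\bit$, and its success probability is exactly $1/2$. Since the receiver is unbounded, we may assume it is deterministic, or average over its coins. The only messages it sees are the key $\key$ it sampled itself and the commitment $\image=\tcfunc_{\key,\bit}(\preimage)$ with $\preimage\leftarrow\Preimage$ uniform. It therefore suffices to compare the two distributions $D_{\key,0}$ and $D_{\key,1}$ on $\Image$, conditioned on $\key$.

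Fix $\key$ and $\bit$. Injectivity of $\tcfunc_{\key,\bit}$ means the pushforward of the uniform distribution on $\Preimage$ is uniform on $\Im(\tcfunc_{\key,\bit})$: each image point has exactly one preimage, so it carries mass $1/|\Preimage|$. By the common-image property, $\Im(\tcfunc_{\key,0})=\Im(\tcfunc_{\key,1})$, so both distributions are uniform on the same set and hence $D_{\key,0}=D_{\key,1}$. Consequently, for any receiver strategy,
\begin{equation}
\Pr[\bit'=\bit]=\tfrac12\sum_{\bit}\Pr[\bit'=\bit\mid\bit]=\tfrac12\,\E_{\key,\image}\big[\Pr[\bit'=0\mid \key,\image]+\Pr[\bit'=1\mid\key,\image]\big]=\tfrac12 .
\end{equation}
The second equality uses that $(\key,\image)$ has the same law under both values of $\bit$. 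This gives the bound with $\negl(\security)=0$.

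The main obstacle is whether the definition is meant exactly or only approximately. Many concrete post-quantum \tcf{}s, such as the LWE-based noisy families, satisfy injectivity and common image only up to statistical distance, or only for keys outside a negligible-probability bad set. If that is the intended reading, I would replace the exact equality by a statistical-distance bound: $\Delta(D_{\key,0},D_{\key,1})\le\negl(\security)$ for all but a negligible fraction of keys. The argument is then finished by the standard fact that an unbounded distinguisher's advantage is at most the statistical distance, plus the probability of sampling a bad key. Under Definition~\ref{def:tcf} as stated, the exact argument above already suffices.
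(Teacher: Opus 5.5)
Your proposal is correct and follows essentially the same argument as the paper. Both fix the key, use injectivity to show the commitment is uniform on $\Im(\tcfunc_{\key,\bit})$, and use the common-image property to conclude that the two distributions coincide, so even an unbounded receiver succeeds with probability exactly $\tfrac12$. The paper likewise observes in a footnote that this holds for any valid key, including one chosen adversarially. Your remark about approximate (noisy) families is a sensible extension, but it is not needed under Definition~\ref{def:tcf} as stated.
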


\begin{proof}
	Fix a security parameter~$\security$ and a valid public key~$\key\in\Key$.\footnote{The hiding argument holds for any valid key~$\key\in\Key$, even if it is chosen adversarially by the receiver, since injectivity and the common-image property hold for every~$\key\in\Key$. Here and throughout, we restrict attention to valid TCF keys.}
	Let
	\begin{equation}
		I_\key \coloneqq \Im(\tcfunc_{\key,0}) = \Im(\tcfunc_{\key,1}) \; ,
	\end{equation}
	where the equality follows from the common-image property of Definition~\ref{def:tcf}.

	For~$\bit\in\{0,1\}$, a commitment to~$\bit$ is generated by sampling $\preimage\leftarrow\X$ uniformly and outputting~$c=\tcfunc_{\key,\bit}(\preimage)$.
	Since~$\tcfunc_{\key,\bit}$ is injective, the random variable~$\tcfunc_{\key,\bit}(\preimagerv)$ is uniformly
	distributed over~$\Im(\tcfunc_{\key,\bit})$.
	Therefore, for both~$\bit=0$ and~$\bit=1$, the commitment transcript is uniformly
	distributed over the same set~$I_\key$.

	Hence the two commitment distributions are identical, even conditioned on the public key~$\key$.
	Thus every receiver, including an unbounded receiver~$\receiver^*$, has zero distinguishing advantage:
	\begin{equation}
		\Pr\left[
			\hideExp_{\receiver^*,\tcfbc}(1^\security)=1
		\right]
		=
		\frac12 \; .
	\end{equation}
	In particular,
	\begin{equation}
		\Pr\left[
			\hideExp_{\receiver^*,\tcfbc}(1^\security)=1
		\right]
		\leq
		\frac12+\negl(\security) \; .
	\end{equation}
	Therefore~$\tcfbc$ is statistically hiding.
\end{proof}

\begin{corollary}
	\pq{} \tcf{} imply randomness certification in the sense of~\entthmref{}.
\end{corollary}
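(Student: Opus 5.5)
The plan is to obtain the corollary by checking that $\tcfbc$ meets every hypothesis of \entthmref{} and then quoting that theorem unchanged. \entthmref{} only asks for a classical bit commitment scheme in the sense of Definition~\ref{def:bit_commitment} that is statistically hiding and computationally binding against \qpt{} senders. The entropy bound in Equation~\eqref{eq:entropy_certification_formal} then holds for \iidProtocol{} instantiated with that scheme. Efficient verification is not part of the statement, so the trapdoor can be ignored for this purpose.

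The checks go in three steps.

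\textbf{Step 1: correctness and syntax.} $\tcfbc$ is a classical scheme with \ppt{} parties. Its setup message $\key$ is part of the commit transcript $\bctrans$, and the opening check $\tcfunc_{\key,\bit}(\preimage)=\image$ is a deterministic polynomial-time algorithm by Efficient Evaluation. Correctness is immediate.

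\textbf{Step 2: canonical openings.} I must make sure the canonical opening representation of Remark~\ref{remark:canonical_opening_representation} is compatible with the scheme. The honest sender's random tape is exactly $\preimage$, so $\Decomm_{\bit,\bctrans}$ is the set of preimages of $\image$ under $\tcfunc_{\key,\bit}$. By injectivity, this set is a singleton for every $\image$ in the common image. The remark states that canonicalization preserves both hiding and binding, so the security properties survive this rewriting.

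\textbf{Step 3: security.} Statistical hiding is Lemma~\ref{lemma:tcfs_are_hiding}. Post-quantum computational binding is Lemma~\ref{lemma:tcfs_are_binding}. I will spell out the latter as a direct reduction: a \qpt{} sender $\sender^*$ in $\bindExp$ receives $\key\leftarrow\gen(1^\security)$ from the honest receiver and outputs $(\decomm_0,\decomm_1)$ with $\tcfunc_{\key,0}(\decomm_0)=\image=\tcfunc_{\key,1}(\decomm_1)$. Such a sender is itself a \qpt{} claw-finding adversary on input $\key$, with the same success probability. This contradicts Definition~\ref{def:pq_tcf}.

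With these three steps done, \virtualHidingRef{quantum} applies, and \entthmref{} yields the entropy bound. In addition, Lemma~\ref{lemma:rc_phase_a_success} gives $\pce\geq 1/\poly(\security)$ for the honest prover, so the term $\nu_{\mathrm{hid}}/\pce$ is negligible.

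The main obstacle I anticipate is a mismatch in the hiding experiment. Definition~\ref{def:hiding} quantifies over an arbitrary receiver, who could send an invalid key. I would handle this as in the footnote to Lemma~\ref{lemma:tcfs_are_hiding}: either restrict to valid keys, or note that in \iidProtocol{} the verifier acting as receiver is honest, since hiding is used only for completeness. The completeness argument needs only that $|\Decomm_{0,\bctrans}|\approx|\Decomm_{1,\bctrans}|$ for honestly generated $\key$, and here both sets have size exactly one. As a consistency check, in this instantiation the hashing step continues whenever the hash index is $\hashIndex=0$, which only improves $\pce$.
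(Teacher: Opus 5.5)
Your proposal is correct and follows the paper's route. The paper gives no separate proof of this corollary: it follows from Lemma~\ref{lemma:tcfs_are_hiding} (statistical hiding) and Lemma~\ref{lemma:tcfs_are_binding} (post-quantum binding via the direct claw-finding reduction), which together verify the hypotheses of Theorem~\ref{thm:main}, exactly as in the explicit proof of the analogous Corollary~\ref{corollary:qrom_randomness_certification}. Your extra checks are consistent with the paper but not needed for the statement: the singleton opening sets, handling invalid keys as in the footnote to Lemma~\ref{lemma:tcfs_are_hiding}, and guaranteed continuation at $\hashIndex=0$.
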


\subsection{\qromshort{}-based Construction}
\label{sec:qrom_commitment}

Our second instantiation is in the quantum random oracle model (\qrom{}).
In the \qrom{}, a uniformly sampled random function is made available as an oracle, and quantum parties are allowed to query this oracle coherently on superpositions of inputs.
This distinguishes the \qrom{} from the classical random oracle model, in which oracle queries are restricted to classical inputs; see~\cite{qrom_boneh,qrom_vs_rom,zhandry_hashes} for further discussion of the model and the distinction between classical and quantum oracle access.

We instantiate the commitment primitive using the standard hash-based commitment scheme~$c=\oracle(\bit\Vert\rand)$.\footnote{Random-oracle-based commitments of this form, also referred to as hash-based commitments, are well known; see, for example,~\cite{qrom_commitment,wonderful_qrom}.}
We choose the parameters so that the resulting commitment is statistically hiding, and show below that it additionally satisfies \pq{} computational binding in the \qrom{}.

\begin{definition}[Random-oracle bit commitment]
    \label{def:qrom_commitment}
    Let
    \begin{equation}
        \oracle:\{0,1\}^* \to \{0,1\}^{n}
    \end{equation}
    be a random oracle, where~$n=n(\security)=\Theta(\security)$.
    Let~$\rho=\rho(\security)$ be a randomness length satisfying
    \begin{equation}
        \rho-n=\omega(\log \security)
        \; .
    \end{equation}
    For concreteness, one may take~$n=\security$ and~$\rho=3\security$.

    We define the random-oracle-based bit commitment scheme~$\bc_{\mathsf{RO}}=(\commit_{\mathsf{RO}},\open_{\mathsf{RO}})$ as follows.

    \paragraph{Commit Phase.}
    To commit to a bit~$\bit\in\{0,1\}$, the sender samples
    \begin{equation}
        \rand\leftarrow\{0,1\}^{\rho}
    \end{equation}
    uniformly at random and computes
    \begin{equation}
        c
        \coloneqq
        \oracle(\bit\Vert\rand)
        \in\{0,1\}^{n}
        \; .
    \end{equation}
    The commitment transcript is~$\trans=c$, and the decommitment is~$\decomm=\rand$.

    \paragraph{Open Phase.}
    To open~$c$ as the bit~$\bit$, the sender sends~$(\bit,\rand)$.
    The receiver accepts iff
    \begin{equation}
        \oracle(\bit\Vert\rand)=c
        \; .
    \end{equation}
\end{definition}

\begin{lemma}[Statistical hiding]
	\label{lemma:qrom_commitment_hiding}
	For a fixed oracle~$\oracle$ and bit~$\bit$, let~$D_\bit^\oracle$
	denote the distribution of the commitment transcript
	\begin{equation}
		c=\oracle(\bit\,\|\,R),
		\qquad
		R\leftarrow\{0,1\}^{\rho}.
	\end{equation}
	Then
	\begin{equation}
		\underset{\oracle}{\E}\,
		\Delta\!\left(
			D_0^\oracle,
			D_1^\oracle
		\right)
		\leq
		2^{(n-\rho)/2}.
	\end{equation}
	Consequently,~$\bc_{\mathsf{RO}}$ is statistically hiding whenever
	\begin{equation}
		\rho-n=\omega(\log\security).
	\end{equation}
\end{lemma}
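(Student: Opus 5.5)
The bound will follow from the standard collision-probability (leftover-hash style) argument applied to each branch separately, and then the triangle inequality. Write $U$ for the uniform distribution on $\{0,1\}^n$. For each fixed $\bit$, the values $\oracle(\bit\Vert\rand)$ for $\rand\in\{0,1\}^\rho$ are, over the choice of $\oracle$, i.i.d. uniform in $\{0,1\}^n$. The domains $\{0\Vert\rand\}$ and $\{1\Vert\rand\}$ are disjoint, but we never need joint independence: we will show
\begin{equation}
    \underset{\oracle}{\E}\,\Delta\!\left(D_\bit^\oracle,U\right)\leq \frac12\, 2^{(n-\rho)/2}
\end{equation}
for each $\bit$. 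Then $\Delta(D_0^\oracle,D_1^\oracle)\leq\Delta(D_0^\oracle,U)+\Delta(U,D_1^\oracle)$, and linearity of expectation gives the claimed $2^{(n-\rho)/2}$.

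For the single-branch bound, fix $\bit$ and write $N=2^\rho$ and $M=2^n$. The first step is to pass from $\ell_1$ to $\ell_2$ by Cauchy--Schwarz:
\begin{equation}
    \Delta\!\left(D_\bit^\oracle,U\right)=\tfrac12\bigl\|D_\bit^\oracle-U\bigr\|_1\leq\tfrac12\sqrt{M}\,\bigl\|D_\bit^\oracle-U\bigr\|_2 .
\end{equation}
Jensen's inequality then moves the expectation inside the square root. Next compute
\begin{equation}
    \underset{\oracle}{\E}\,\bigl\|D_\bit^\oracle-U\bigr\|_2^2=\underset{\oracle}{\E}\,\mathrm{CP}\!\left(D_\bit^\oracle\right)-\frac1M ,
\end{equation}
where the collision probability is
\begin{equation}
    \mathrm{CP}\!\left(D_\bit^\oracle\right)=\Pr_{R,R'}\!\left[\oracle(\bit\Vert R)=\oracle(\bit\Vert R')\right].
\end{equation}
Averaging over $\oracle$, the event $R=R'$ contributes $1/N$, and for $R\neq R'$ the collision probability is exactly $1/M$. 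So the expected collision probability is at most $1/N+1/M$. Hence $\E\|D_\bit^\oracle-U\|_2^2\leq 1/N$, and $\E\,\Delta(D_\bit^\oracle,U)\leq\frac12\sqrt{M/N}=\frac12 2^{(n-\rho)/2}$.

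For the consequence, the hiding experiment is run with a random oracle, and the receiver may be unbounded and may know $\oracle$. For each fixed $\oracle$, the receiver's success probability is at most $\frac12+\frac12\Delta(D_0^\oracle,D_1^\oracle)$. Averaging over $\oracle$ bounds the advantage by $2^{(n-\rho)/2}$, which is negligible exactly when $\rho-n=\omega(\log\security)$. No step is a real obstacle. The only points needing care are to avoid assuming independence across the two branches, which the triangle-inequality route sidesteps, and to apply Jensen's inequality in the correct direction, $\E\sqrt{X}\leq\sqrt{\E X}$.
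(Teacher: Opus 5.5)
Your proof is correct and takes the same route as the paper: bound each $\E_{\oracle}\,\Delta(D_\bit^\oracle,U)$ by $\tfrac12\, 2^{(n-\rho)/2}$ and then combine the two branches with the triangle inequality. The only difference is that the paper gets the per-branch bound by citing a standard lemma on the expected total variation distance of an empirical distribution~\cite[Lemma~5]{binomialbound}, whereas you prove it inline with the collision-probability, Cauchy--Schwarz and Jensen argument, which is the usual proof of that lemma.
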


\begin{proof}
	\newcommand{\uniform}{U}
	For a fixed oracle~$\oracle$ and bit~$\bit$, let~$D_\bit^\oracle$ denote the distribution of
	\begin{equation}
		\oracle(\bit\,\|\,R)\;,
		\qquad
		R\leftarrow\{0,1\}^{\rho} \;.
	\end{equation}
	Equivalently, for each~$c\in\{0,1\}^{n}$,
	\begin{equation}
		D_\bit^\oracle(c)
		=
		\frac{1}{2^\rho}
		\left|
			\left\{
				\rand\in\{0,1\}^{\rho}
				:
				\oracle(\bit\,\|\,\rand)=c
			\right\}
		\right| \; .
	\end{equation}
	Let~$\uniform_n$ denote the uniform distribution over~$\{0,1\}^{n}$.
	For each fixed~$\bit$, the values
	\begin{equation}
		\left\{
			\oracle(\bit\,\|\,\rand)
		\right\}_{r\in\{0,1\}^{\rho}}
	\end{equation}
	are independent uniform samples from~$\{0,1\}^{n}$.
	Applying the standard bound on the expected total variation distance between an empirical distribution and its underlying distribution~\cite[Lemma~5]{binomialbound}, with sample size~$2^\rho$ and support size~$2^n$, gives
	\begin{equation}
		\underset{\oracle}{\E}\,
		\Delta(D_\bit^\oracle,\uniform_n)
		\leq
		\frac12 \cdot 2^{(n-\rho)/2} \; .
	\end{equation}
	Therefore, by the triangle inequality,
	\begin{align}
		\underset{\oracle}{\E}\,
		\Delta(D_0^\oracle,D_1^\oracle)
		&\leq
		\underset{\oracle}{\E}\,
		\Delta(D_0^\oracle,\uniform_n)
		+
		\underset{\oracle}{\E}\,
		\Delta(D_1^\oracle,\uniform_n) \\
		&\leq
		2^{(n-\rho)/2} \; .
	\end{align}
	Since~$\rho-n=\omega(\log\security)$, this quantity is negligible.
	This bounds the distinguishing advantage even of an unbounded receiver given the entire random oracle, and therefore also of any \qpt{} receiver with quantum oracle access.
\end{proof}

\begin{lemma}[\qrom{} computational binding]
	\label{lemma:qrom_commitment_double_opening_binding}
	The commitment scheme~$\bc_{\mathsf{RO}}$ satisfies
	\pq{} computational binding against \qpt{} senders in the \qrom{}.
	More precisely, there exists a universal constant~$C>0$ such that every
	quantum sender making at most~$q=q(\security)$ oracle queries satisfies
	\begin{equation}
		\Pr\left[
			\bindExp_{\sender^*,\bc_{\mathsf{RO}}}(1^\security)
			=
			1
		\right]
		\leq
		\frac{
			C\left(q(\security)+1\right)^3
		}{
			2^n
		}.
	\end{equation}
\end{lemma}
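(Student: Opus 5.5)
The plan is to reduce binding to a known quantum query lower bound: finding a claw between two random functions, or equivalently a collision in a random function. A successful double opening consists of strings $\rand_0,\rand_1$ with
\begin{equation}
\oracle(0\Vert\rand_0)=c=\oracle(1\Vert\rand_1)\; .
\end{equation}
The two inputs $0\Vert\rand_0$ and $1\Vert\rand_1$ are distinct because their first bits differ. Hence any sender winning $\bindExp_{\sender^*,\bc_{\mathsf{RO}}}$ outputs a collision of $\oracle$ restricted to the domain $\{0,1\}^{1+\rho}$, whose range is $\{0,1\}^n$. The commitment value $c$ plays no role beyond certifying that the two images agree. We can therefore build a collision finder $\mathcal{C}$. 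It runs $\sender^*$ with the honest receiver, who is passive in this scheme, and forwards all of $\sender^*$'s quantum oracle queries. When $\sender^*$ outputs $(\rand_0,\rand_1)$, the finder outputs $(0\Vert\rand_0,1\Vert\rand_1)$.

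Next we account for queries. The receiver's check $\open_{\mathsf{RO}}$ evaluates $\oracle$ twice, but the finder does not need the receiver's check itself. Whenever $\sender^*$ wins, the pair output by $\mathcal{C}$ is a collision, so $\Pr[\mathcal{C}\text{ outputs a collision}]\geq\Pr[\bindExp=1]$. The finder makes $q$ queries. If one prefers the finder to verify its output before returning it, it makes at most $q+2$ queries, which is still $O(q+1)$.

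The key step is to invoke Zhandry's tight quantum collision bound for random functions (also obtainable via compressed oracles). Any algorithm making $q$ quantum queries to a uniformly random function into a range of size $N=2^n$ finds a collision with probability $O((q+1)^3/N)$. This holds provided the domain is large enough, and $\rho+1>n$ holds here. Substituting $N=2^n$ gives the claimed bound $C(q+1)^3/2^n$. Since $q=\poly(\security)$ and $n=\Theta(\security)$, the bound is negligible, which yields \pq{} computational binding as in Definition~\ref{def:post_quantum_binding}.

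The main obstacle is applying the collision bound cleanly. It requires care about the uniformity and domain conditions, and about the fact that the commitment phase itself involves oracle evaluations by the sender. These are ordinary queries already counted in $q$, so they should not cause trouble. I would state the bound for an arbitrary-domain uniform function, with any domain size at least the range. If needed, I would alternatively restrict to the claw-finding version for the two independent random functions $\rand\mapsto\oracle(0\Vert\rand)$ and $\rand\mapsto\oracle(1\Vert\rand)$. The same $O((q+1)^3/2^n)$ bound follows from the collision bound on their combined function, so either formulation gives the result.
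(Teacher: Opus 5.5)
Your proposal is correct and follows the paper's proof. The paper likewise builds a collision finder that runs $\sender^*$ and outputs $(0\Vert\rand_0,\,1\Vert\rand_1)$; these inputs are distinct and share the image $c$ whenever the double opening succeeds. It then invokes Zhandry's $C(q+1)^3/2^n$ quantum collision bound for random functions. Your additional remarks on query accounting and the claw-finding reformulation are harmless refinements, and the domain-size caveat you attach to the collision bound is not actually needed for the upper bound.
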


\begin{proof}
	\newcommand{\collisionFinder}{\mathscr{C}}
	Let~$\sender^*$ be a quantum adversarial sender making at most
	$q=q(\security)$ queries to~$\oracle$.

	We construct a quantum collision-finding algorithm~$\collisionFinder$
	that runs~$\sender^*$ in the binding experiment.

	Suppose that~$\sender^*$ produces a commitment transcript~$c$ and
	two candidate openings~$\rand_0$ and~$\rand_1$.
	The algorithm~$\collisionFinder$ outputs
	\begin{equation}
		x_0
		\coloneqq
		0\,\|\,\rand_0
		\qquad\text{and}\qquad
		x_1
		\coloneqq
		1\,\|\,\rand_1.
	\end{equation}

	If the binding experiment accepts, then
	\begin{equation}
		\oracle(0\,\|\,\rand_0)
		=
		c
		=
		\oracle(1\,\|\,\rand_1).
	\end{equation}
	Moreover,
	\begin{equation}
		0\,\|\,\rand_0
		\neq
		1\,\|\,\rand_1,
	\end{equation}
	because the first input bits are distinct.
	Therefore, every successful double opening yields a collision in~$\oracle$.

	\newcommand{\zhandryConst}{C_{_Z}}
	A quantum algorithm~$\collisionFinder$ making at most~$q$ queries to a random function
	with range size~$2^n$ outputs a collision with probability at most
	\begin{equation}
		\Pr[\collisionFinder\text{ outputs a collision}]
		\leq
		\frac{
			\zhandryConst(q+1)^3
		}{
			2^n
		}
	\end{equation}
	for a universal constant~$\zhandryConst$~\cite[Theorem~3.1]{zhandry2015collision}.
	Hence,
	\begin{equation}
		\Pr\left[
			\bindExp_{\sender^*,\bc_{\mathsf{RO}}}(1^\security)
			=
			1
		\right]
		\leq
		\frac{
			\zhandryConst\left(q(\security)+1\right)^3
		}{
			2^n
		}.
	\end{equation}
	Since~$q(\security)$ is polynomially bounded and
	$n=\Theta(\security)$, this probability is negligible.
	Thus~$\bc_{\mathsf{RO}}$ satisfies \pq{} computational binding.
\end{proof}

\begin{corollary}[Randomness certification in the \qrom{}]
	\label{corollary:qrom_randomness_certification}
	In the \qrom{}, \iidProtocol{} instantiated with a \qrom{}-based commitment scheme satisfies the conclusion of Theorem~\ref{thm:main}.
\end{corollary}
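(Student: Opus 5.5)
The corollary will follow by checking that $\bc_{\mathsf{RO}}$ satisfies the two hypotheses of Theorem~\ref{thm:main}, and then invoking that theorem with the random oracle $\oracle$ made available to every party. Concretely, I would argue in three steps.

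\textbf{Step 1 (hiding and completeness).} Lemma~\ref{lemma:qrom_commitment_hiding} gives statistical hiding on average over $\oracle$. This is the only place hiding is used: it yields the cardinality balance $|\Decomm_{0,\bctrans}|\approx|\Decomm_{1,\bctrans}|$ needed in Lemma~\ref{lemma:rc_phase_a_success}. I would apply Markov's inequality to $\E_\oracle\Delta(D_0^\oracle,D_1^\oracle)\le 2^{(n-\rho)/2}$. This shows that, except for a negligible fraction of oracles, the commitment is $2^{(n-\rho)/4}$-hiding. Hence the completeness bound $\pce\ge 1/(10\hashNum)-\negl(\security)$ holds averaged over $\oracle$, with $\ell=\rho$ the randomness length.

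\textbf{Step 2 (relativizing the hiding reduction).} Lemma~\ref{lemma:qrom_commitment_double_opening_binding} gives post-quantum computational binding against query-bounded senders. I would then re-examine the proof of Theorem~\ref{thm:intro_virtual_input_hiding}.\ref{item:quantum_hiding} and confirm that it relativizes. The reduction $\guesser$ runs $\prover^*$ and $\qadversary$ as black boxes. It builds $U_\sigma$ and $U_\sigma^\dagger$ from their circuits, and each oracle gate is a self-inverse unitary $\ket{x}\ket{y}\mapsto\ket{x}\ket{y\oplus\oracle(x)}$, so inverting them is no problem. The quantum Goldreich--Levin extraction calls $U_\sigma$ and $U_\sigma^\dagger$ only a constant number of times. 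Each opening check $\open_{\mathsf{RO}}$ costs one classical query. So $\guesser$ makes $q_{\guesser}=O(q_{\prover^*}+q_{\qadversary})+2=\poly(\security)$ queries. Its binding success $4\delta^2$ would then contradict the bound $C(q_{\guesser}+1)^3/2^n$, which is negligible because $n=\Theta(\security)$.

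\textbf{Step 3 (entropy bound).} With hiding in place, the appendix entropy argument, as used in the proof of Theorem~\ref{thm:main}, carries over verbatim. That argument is information-theoretic given the hiding bound, namely Jordan's lemma plus the uncertainty relations. The verifier's inefficiency in checking $|\Decomm_{\bit,c}\cap\hashfunc_\bit^{-1}(\hashval)|=1$ does not matter, because the reduction never has to decide $\contEvent$.

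The main obstacle I expect is that all probabilities now include the randomness of $\oracle$, while the statements of Definition~\ref{def:hiding} and Lemma~\ref{lemma:rc_phase_a_success} are per-scheme. I would handle this by treating $\oracle$ as part of the protocol's random coins throughout. The weighted-hiding argument averages over prefixes $\sigma$ anyway, so including $\oracle$ in the averaging only changes the measure $\mu$. The Cauchy--Schwarz step of the hiding proof then goes through unchanged.
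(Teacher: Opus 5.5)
Your proposal is correct and follows the paper's route, which simply cites Lemma~\ref{lemma:qrom_commitment_hiding} for statistical hiding and Lemma~\ref{lemma:qrom_commitment_double_opening_binding} for \pq{} computational binding and then invokes Theorem~\ref{thm:main}; your Markov step and your check that the quantum Goldreich--Levin reduction relativizes (self-inverse oracle gates, polynomially many queries, $\oracle$ absorbed into the averaged randomness) make explicit details the paper leaves implicit. One wording caveat is that Step~3 is not purely information-theoretic, because the approximate-anticommutation step applies the hiding bound to a predictor built from the prover's own Phase~B operations; that predictor is still an efficient oracle algorithm, so your conclusion stands.
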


\begin{proof}
	By Lemma~\ref{lemma:qrom_commitment_hiding}, the commitment scheme is
	statistically hiding.
	By Lemma~\ref{lemma:qrom_commitment_double_opening_binding}, it satisfies
	\pq{} computational binding as in
	Definition~\ref{def:post_quantum_binding}.
	Therefore, the hypotheses of Theorem~\ref{thm:main} are satisfied.
\end{proof}

\subsection{Efficient vs. Inefficient Verifiers}

The protocol presented in this work separates the cryptographic generation of the hidden qubit in Phase~A from the Bell-type test performed in Phase~B.
This separation makes it possible to identify precisely which parts of the construction determine whether the verifier can be implemented efficiently.

The \tcf{} instantiation of Subsection~\ref{subsec:tcf_commitment} yields an efficient verifier.
This does not, by itself, provide a new method for efficiently verifying quantum advantage or certifying randomness from \tcfs{}.
Several previous works already use \tcfs{} to achieve such goals, including~\cite{brakerski2021cryptographic}.
The significance of the construction here is instead conceptual.

In \iidProtocol{}, the verifier must be able to determine whether the hash buckets selected in Phase~A contain unique valid openings for both commitment values.
For a generic \shbc{}, identifying or counting these openings may be computationally infeasible.
In the \tcf{} construction, however, the trapdoor allows the verifier to recover the relevant preimages efficiently.
Consequently, the uniqueness condition in Phase~A can be checked in polynomial time.

Thus, the \tcf{} instantiation demonstrates that the protocol developed in this work can be implemented efficiently whenever the possible openings of the commitment can be recovered, counted, or otherwise managed efficiently by the verifier.
The role of the \tcf{} construction is therefore not to establish a new \tcf{}-based randomness certification result, but to illustrate that the abstract protocol is compatible with efficient verification once the underlying commitment primitive provides sufficient structure.

\subsection{Verifier Efficiency and Cryptographic Barriers in the \qromshort{}}
\label{sec:qrom_verifier_barriers}

The \tcf{} and \qrom{} instantiations separate the cryptographic properties required for security from the algorithmic structure required for efficient verification.
Statistical hiding provides the balancing property needed for the honest Phase~A procedure to continue with non-negligible probability.
Computational binding against \qpt{} senders establishes quantum hiding of the virtual input.
Neither property, by itself, gives the verifier an efficient method for decoding the transcript.

In \iidProtocol{}, the verifier must decide whether the selected hash bucket contains exactly one valid opening for each commitment value.
For a commitment transcript~$\bctrans$, define
\begin{equation}
    \Decomm_{\bit,\bctrans}
    \coloneqq
    \left\{
        \decomm :
        \open(\bctrans,\bit,\decomm)
        =
        \top
    \right\} \; .
\end{equation}
The verifier continues only if
\begin{equation}
    \left|
        \Decomm_{\bit,\bctrans}
        \cap
        \hashfunc_\bit^{-1}(\hashval)
    \right|
    =
    1
\end{equation}
for both~$\bit\in\{0,1\}$.
It must then recover the two unique values in order to evaluate the virtual input.

For the \tcf{} instantiation, the trapdoor supplies precisely this additional algorithmic structure.
Injectivity ensures uniqueness, while trapdoor inversion allows the verifier to recover the relevant openings in polynomial time.
The resulting protocol therefore admits an efficient classical verifier.

For the random-oracle commitment
\begin{equation}
    c
    =
    \oracle(\bit\Vert \rand) \; ,
\end{equation}
the corresponding set of valid openings is
\begin{equation}
    \Decomm_{\bit,c}
    \coloneqq
    \left\{
        \rand\in\{0,1\}^{\rho} :
        \oracle(\bit\Vert \rand)
        =
        c
    \right\} \; .
\end{equation}
The random oracle provides no trapdoor or algebraic structure for determining the exact size of
\begin{equation}
    \Decomm_{\bit,c}
    \cap
    \hashfunc_\bit^{-1}(\hashval) \; .
\end{equation}
The verifier in our construction therefore performs an exhaustive search over the opening space.
This is the sole source of verifier inefficiency in the \qrom{} instantiation.

From an operational perspective, an inefficient verifier substantially weakens the usual motivation for classical verification.
A computationally unbounded classical verifier can simulate the prescribed honest quantum strategy in exponential time and sample from its output distribution without interacting with a quantum device.
Our \qrom{} instantiation is therefore primarily a complexity-theoretic result.
It shows that the cryptographic security properties needed for quantum virtual-input hiding exist in the \qrom{}, while leaving open the complexity of recognizing the corresponding transcript relation.

Recent work by Bartusek and Shalit~\cite{itay_shalit_2026} gives additional evidence that this efficiency question is nontrivial.
They define a test of non-commutation using an efficient quantum prover and an efficient classical verifier, and show that such tests imply classical-communication key agreement in appropriate completeness--soundness parameter regimes.
Since a full qubit test implies a test of non-commutation, their result applies a fortiori to efficiently verifiable qubit tests satisfying their formal requirements.

Our \qrom{} construction does not directly instantiate this implication.
The verifier in our protocol is inefficient, whereas both the test of non-commutation and the resulting key-agreement protocol in~\cite{itay_shalit_2026} require polynomial-time honest parties.
If one nevertheless substitutes our verifier into the first step of their construction, the verifier-side key-agreement party inherits the exponential computation of the uniqueness check.
The result is therefore, at most, a nonstandard agreement primitive with an inefficient honest party, rather than standard cryptographic key agreement.

Suppose instead that the verifier in \iidProtocol{} could be made polynomial-time while preserving the following properties:
\begin{enumerate}
    \item the protocol can be placed in the canonical form required by~\cite{itay_shalit_2026};
    \item its completeness and commuting-strategy soundness parameters lie in the regime covered by their key-agreement theorem;
    \item its soundness holds against \qpt{} algorithms with access to the same quantum random oracle and in the required advice model.
\end{enumerate}
Under these conditions, the resulting test would imply efficient classical-communication key agreement relative to the quantum random oracle.
In the basic bit-agreement construction, one party plays the efficient classical verifier and the other plays the efficient quantum prover.

This consequence is striking, but it does not presently yield a contradiction.
Austrin \emph{et al.}~\cite{austrin2022qromka} show that when one honest party is classical, the other is quantum, and the parties agree with probability exactly~$1$, a protocol making~$d$ random-oracle queries admits a classical attack using~$\O(d^2)$ oracle queries.
Their unconditional result therefore concerns perfect agreement and is stated in the query-complexity model.

By contrast, the standard key-agreement notion obtained in~\cite{itay_shalit_2026} guarantees agreement with probability
\begin{equation}
    1-\negl(\security) \; ,
\end{equation}
rather than agreement with probability exactly~$1$.
The perfect-agreement impossibility theorem therefore does not directly rule out the imperfect-completeness regime relevant to our protocol, although more recent work establishes impossibility in certain restricted imperfect-completeness settings~\cite{kitagawa2026impossibilityimperfectlycompletekey}.
Furthermore, translating a polynomial-query attack into a contradiction with a particular time-bounded security definition requires that the remaining computation of the attack satisfy the corresponding efficiency requirement.

If an efficient \qrom{} qubit test additionally yielded perfectly correct key agreement with one classical and one quantum party, then it would run directly into the query-complexity barrier of~\cite{austrin2022qromka}.
For the CHSH-type tests considered here, however, the honest winning probability is below~$1$, and the key-agreement amplification generally produces negligible rather than zero disagreement.
The existing impossibility theorem consequently leaves open precisely the regime relevant to our protocol.

The combined conclusions of~\cite{itay_shalit_2026,austrin2022qromka}, together with more recent impossibility results for restricted imperfect-completeness settings~\cite{kitagawa2026impossibilityimperfectlycompletekey}, should therefore be interpreted as evidence rather than as a general impossibility theorem.
An efficiently verifiable test of non-commutation in the plain \qrom{} would imply unexpectedly strong cryptographic structure.
Nevertheless, existing impossibility results do not rule out such tests in the general setting with negligible completeness error.
In particular, they do not establish that the inefficient verifier in our construction cannot be replaced by an efficient one, nor that qubit tests with efficient classical verifiers are impossible in the \qrom{}.

Our results isolate a more specific open problem.
Can the exact unique-opening check in \iidProtocol{} be replaced by an efficiently verifiable condition while preserving quantum hiding of the virtual input and non-negligible honest continuation probability?
A positive answer, together with the corresponding soundness-to-entropy guarantee, would yield an efficient randomness certification protocol in the plain \qrom{} and, subject to the formal conditions above, would also have consequences for classical-communication key agreement.
A negative answer would require a new lower bound that applies to imperfect-completeness protocols and to the relevant \qrom{} adversarial model.

\section{Discussion and Open Problems}
\label{sec:conclusion}

The main conceptual observation of this work is that hidden virtual inputs naturally admit two computational notions.
Classical hiding suffices for computational Bell inequalities and \poqs{}, whereas randomness certification requires quantum hiding against \qpt{} provers with access to their residual state.
This distinction makes the computational Bell inequality framework modular: the Bell inequality and entropy-certification machinery developed in~\cite{cbi} remain unchanged, while upgrading a \poq{} to randomness certification amounts to establishing quantum virtual-input hiding.
Once this property has been established, the general entropy-certification framework applies immediately.

Our results also suggest a more abstract perspective on the cryptographic primitive underlying randomness certification.
For security, the protocol can be understood as requiring a statistically hiding commitment satisfying \pq{} computational binding, rather than a particular structured primitive such as a \tcf{}.
The \tcf{} and \qrom{} instantiations demonstrate that this abstraction admits realizations with different algorithmic properties.
In this sense, the commitment primitive provides the cryptographic mechanism responsible for hiding the virtual input, while the Bell test and entropy analysis remain independent of its particular instantiation.

Finally, our constructions illustrate a separation between the cryptographic properties required for security and the algorithmic requirements for verifier efficiency.
Statistical hiding and \pq{} computational binding suffice for the security proof.
Efficient verification, however, additionally requires the verifier to efficiently certify the uniqueness condition arising during the hidden-qubit generation procedure.
The \tcf{} instantiation provides this additional structure through its trapdoor, whereas the plain \qrom{} construction does not.
Thus, the cryptographic properties sufficient for security can be separated from the additional algorithmic structure needed for efficient verification.

\subsection{Open Problems}

Our work raises several natural directions for future research.

\paragraph{Efficient verification in the plain \qrom{}.}
Our \qrom{} instantiation yields randomness certification without computational assumptions beyond the quantum random oracle, but requires an inefficient classical verifier.
Recent works have established efficient randomness certification in the \qrom{} without the Aaronson--Ambainis conjecture.
Khurana \emph{et al.}~\cite{randomness_no_aa} proved security against adversaries making~$o(\log\security)$ adaptive quantum-query layers, while Coladangelo \emph{et al.}~\cite{unconditional_randomness} subsequently established efficient, publicly verifiable randomness certification against computationally unbounded adversaries making a subexponential number of adaptive quantum queries.
These results demonstrate that inefficient verification is not inherent to randomness certification in the \qrom{}.

In our commitment-based construction, however, the verifier must determine whether the selected hash buckets contain unique valid openings.
This exact unique-opening check is the source of the verifier's inefficiency.
A natural open problem is whether it can be replaced by an efficiently verifiable condition while preserving quantum hiding of the virtual input, non-negligible honest continuation probability, and the corresponding entropy-certification guarantee.
Such an improvement would provide an efficient realization of our commitment-based approach and further clarify the relationship between quantum virtual-input hiding, efficient verification, and classical-communication key agreement in the \qrom{}.
As discussed in Subsection~\ref{sec:qrom_verifier_barriers}, existing impossibility results do not rule out such an improvement in the general setting.

\paragraph{Characterizing efficient commitment instantiations.}
Our instantiations suggest that statistical hiding and \pq{} computational binding capture the cryptographic requirements for security, but do not characterize when the resulting protocol can be verified efficiently.
The \tcf{} construction provides additional structure that allows the verifier to efficiently recover the relevant openings and certify their uniqueness, whereas the \qrom{} construction does not.
It is therefore natural to ask which additional properties of a commitment scheme suffice for efficient hidden-qubit generation and verification.
Such a characterization could provide a general criterion for identifying commitment primitives that yield efficient randomness-certification protocols within our framework.

\paragraph{Verifiable remote state preparation from commitments.}
Zhang~\cite{zhang2025formulations} studies verifiable remote state preparation of BB84 states and develops constructions and transformations for broader state families, including coherent two-branch states and equatorial states, with applications to classical verification of quantum computation.
This suggests a natural question: can our commitment-based hidden-qubit preparation be strengthened to satisfy the corresponding verifiable remote-state-preparation functionality?
A positive answer, under compatible security definitions and with an efficient verifier, could extend the implications of our commitment-based framework beyond randomness certification to verifiable remote state preparation and potentially classical verification of quantum computation.

\paragraph{Complexity of inefficiently verifiable qubit tests.}
Our \qrom{} construction also gives a qubit test with an inefficient classical verifier.
Although such a protocol is of limited practical use, it raises a natural complexity-theoretic question concerning the tradeoff between verifier complexity and cryptographic assumptions.
In particular, how does the computational complexity of the classical verifier affect the cryptographic assumptions sufficient for qubit testing?
The commitment-based perspective developed here suggests that relaxing verifier efficiency may substantially weaken the required assumptions.
Understanding this tradeoff, particularly in intermediate regimes of superpolynomial but bounded verifier complexity, remains an interesting direction.

\paragraph{\pq{} commitments from \owfs{}.}
A particularly appealing direction is to construct the \pq{} commitment primitive required by our framework from arbitrary \pq{} \owfs{}.
The relationship between \pq{} \owfs{} and quantum bit commitments has been studied in~\cite{bc_properties}, which raises related questions concerning the assumptions needed for quantum commitments.

Recently, Chen, Wang, and Xie~\cite{chen2026minicrypt} established quantum sum-binding security for the NOVY commitment protocol~\cite{novy} in the \qccc{} model, showing that post-quantum one-way permutations suffice to construct statistically hiding commitments satisfying a stronger binding notion than the one required here.
Their construction also extends to efficiently verifiable one-way puzzles with nearly uniform puzzle distributions.

An intriguing possibility is that statistically hiding commitments could be constructed from arbitrary \pq{} \owfs{} by relaxing the required binding guarantee from quantum sum-binding to ordinary \pq{} computational binding.
Indeed, our security proof only requires that no \qpt{} sender can produce valid openings of the same commitment to both bits, without requiring the stronger guarantees provided by sum-binding or collapse-binding.
This raises a natural question: can the construction of statistically hiding commitments from \pq{} one-way permutations be extended to arbitrary \pq{} \owfs{} when only computational binding against quantum senders is required?
A positive answer would yield randomness certification from arbitrary \pq{} \owfs{} without resolving the corresponding open problem for stronger notions of quantum binding.

\paragraph{Acknowledgments.} 
	The authors would like to thank Thomas A. Hahn, Omri Shmueli and Thomas Vidick for helpful discussions, and Itay Shalit for discussions concerning the connection to classical-communication key agreement.

\newpage
\appendix

\section{Classical Hiding of the Virtual Input}
\label{app:classical_hiding}

\begin{proof}[Proof of~\virtualHidingRef{classical}]
	Assume towards a contradiction that there exists a \ppt{} prover~$\adversary$ and a non-negligible function~$\delta$ such that
	\begin{equation}
		\label{eq:classical_hiding_contradiction}
		\Pr\left[
			\adversary(\transStaterv) =
			\virtualInputRV
			\;\wedge\;
			\contEvent
		\right]
		>
			\frac12 \Pr[\contEvent]
			+
			\delta(\security)
		\; .
	\end{equation}
	Since~$\delta$ is non-negligible, there exist a polynomial~$q$ and an infinite set~$\mathcal I\subseteq\N$ such that
	\begin{equation}
		\delta(\security)
		\geq
		\varepsilon_0(\security)
		\coloneqq
		\frac{1}{q(\security)}
		\; ,
	\end{equation}
	for every~$\security\in\mathcal I$.
	We restrict attention to these security parameters.

	We construct a \ppt{} sender~$\guesser$ that violates classical computational binding.
	The sender~$\guesser$ internally executes~$\adversary$ through the commit phase and the hashing step of Phase~A1.
	Let~$\Sigma$ denote the complete classical configuration immediately after~$\hashval$ is produced and immediately before~$\rand_0,\rand_1$ are sampled.
	For every possible value~$\sigma$ of~$\Sigma$, let
	\begin{equation}
		\mu(\sigma)
		\coloneqq
		\Pr[\Sigma=\sigma] \; .
	\end{equation}
	The configuration~$\sigma$ contains the commitment transcript, the hash functions, the hash value, the internal state of~$\adversary$, and its random tape up to that point.
	Because~$\adversary$ is classical,~$\guesser$ can save and restore~$\sigma$ polynomially many times.

	Let~$\continuingPrefixes$ be the set of configurations for which~$\contEvent$ holds.
	Then
	\begin{equation}
		\sum_{\sigma\in\continuingPrefixes}
		\mu(\sigma) =
		\Pr[\contEvent] =
		\pce \; .
	\end{equation}
	For every~$\sigma\in\continuingPrefixes$, there exist unique strings~$\decomm_{0,\sigma}$ and~$\decomm_{1,\sigma}$ such that
	\begin{equation}
		\Decomm_{\bit,\bctrans^{\sigma}}
		\cap
		\hashfunc_{\bit,\sigma}^{-1}(\hashval_\sigma)
		=
		\left\{
			\decomm_{\bit,\sigma}
		\right\}
		\qquad
		\text{for every }\bit\in\{0,1\}
		\; .
	\end{equation}
	Define
	\begin{equation}
		\concDecomm_\sigma
		\coloneqq
		\decomm_{0,\sigma}
		\mathbin\Vert
		\decomm_{1,\sigma}
		\in
		\{0,1\}^{2\ell} \; .
	\end{equation}

	For every query
	\begin{equation}
		\concRand
		\coloneqq
		\rand_0
		\mathbin\Vert
		\rand_1
		\in
		\{0,1\}^{2\ell} \; ,
	\end{equation}
	define a randomized oracle~$\O_\sigma$ as follows.
	The oracle~$\O_\sigma$ rewinds~$\adversary$ to the configuration~$\sigma$, sends~$(\rand_0,\rand_1)$ to~$\adversary$, completes Phase~A2, and returns the predictor's final output.
	On every~$\sigma\in\continuingPrefixes$,
	\begin{align}
		\imap(\trans)
		&=
		(\rand_0\cdot\decomm_{0,\sigma})
		\oplus
		(\rand_1\cdot\decomm_{1,\sigma})
		\nonumber\\
		&=
		\concRand\cdot\concDecomm_\sigma
		\; .
		\label{eq:classical_concatenated_inner_product}
	\end{align}
	Thus,~$\mathcal O_\sigma$ is a noisy oracle for the Hadamard encoding of the entire pair~$(\decomm_{0,\sigma},\decomm_{1,\sigma})$ with noise level determined by the success rate of~$\adversary$.

	For every~$\sigma\in\continuingPrefixes$, define
	\begin{equation}
		p_\sigma
		\coloneqq
		\Pr_{\concRand\leftarrow\{0,1\}^{2\ell}}\left[
			\mathcal O_\sigma(\concRand)
			=
			\concRand\cdot\concDecomm_\sigma
		\right]
		\; ,
	\end{equation}
	and
	\begin{equation}
		\beta_\sigma
		\coloneqq
		p_\sigma-\frac12
		\; .
	\end{equation}
	Equation~\eqref{eq:classical_hiding_contradiction} implies
	\begin{align}
		\label{eq:classical_average_bias}
		\sum_{\sigma\in\continuingPrefixes}
		\mu(\sigma)\beta_\sigma
		& =
		\mu(\sigma)\left( p_\sigma-\frac12\right)
		\\
		& =
		\sum_{\sigma\in\continuingPrefixes}
		\mu(\sigma)
		\Pr\left[
			\adversary(\transStaterv) =
			\virtualInputRV
			\mid
			\Sigma=\sigma
		\right]
		-\frac12 \pce 
		\\
		& =
		\pce
		\Pr\left[
			\adversary(\transStaterv) =
			\virtualInputRV
			\mid
			\contEvent
		\right]
		-\frac12 \pce
		\\
		& = 
		\Pr\left[
			\adversary(\transStaterv) =
			\virtualInputRV
			\;\wedge\;
			\contEvent
		\right]
		-\frac12 \pce
		\\
		& >
		\delta(\security)
		\; .
	\end{align}

	Define
	\begin{equation}
		\goodPrefixes
		\coloneqq
		\left\{
			\sigma\in\continuingPrefixes:
			\beta_\sigma
			\geq
			\frac{\varepsilon_0}{2}
		\right\}
		\; .
	\end{equation}
	Since~$\beta_\sigma\leq1/2$, we have
	\begin{align}
		\sum_{\sigma\in\continuingPrefixes}
		\mu(\sigma)\beta_\sigma
		&\leq
		\frac12
		\sum_{\sigma\in\goodPrefixes}
		\mu(\sigma)
		+
		\frac{\varepsilon_0}{2}
		\sum_{\sigma\in\continuingPrefixes\setminus\goodPrefixes}
		\mu(\sigma)
		\nonumber\\
		&\leq
		\frac12
		\sum_{\sigma\in\goodPrefixes}
		\mu(\sigma)
		+
		\frac{\varepsilon_0}{2}
		\; .
	\end{align}
	Combining this with Equation~\eqref{eq:classical_average_bias} gives
	\begin{equation}
		\label{eq:classical_good_prefix_mass}
		\sum_{\sigma\in\goodPrefixes}
		\mu(\sigma)
		>
		\varepsilon_0
		\; .
	\end{equation}

	Fix~$\sigma\in\goodPrefixes$.
	Then
	\begin{equation}
		\Pr_{\concRand\leftarrow\{0,1\}^{2\ell}}
		\left[
			\mathcal O_\sigma(\concRand)
			=
			\concRand\cdot\concDecomm_\sigma
		\right]
		\geq
		\frac12
		+
		\frac{\varepsilon_0}{2}
		\; .
	\end{equation}
	By the Goldreich--Levin theorem~\cite{goldreich1989hardcore}, a probabilistic polynomial-time list decoder with oracle access to~$\mathcal O_\sigma$ outputs a list~$\mathcal L_\sigma\subseteq\{0,1\}^{2\ell}$ of size polynomial in~$\security$ and~$1/\varepsilon_0$ such that
	\begin{equation}
		\Pr\left[
			\concDecomm_\sigma
			\in
			\mathcal L_\sigma
		\right]
		\geq
		\frac23
		\; .
	\end{equation}
	The constant success probability follows by standard amplification.

	For each~$\widehat\concDecomm\in\mathcal L_\sigma$, it parses
	\begin{equation}
		\widehat\concDecomm
		=
		\widehat\decomm_0
		\mathbin\Vert
		\widehat\decomm_1
		\; ,
	\end{equation}
	and checks whether
	\begin{equation}
		\open(\bctrans^{\sigma},0,\widehat\decomm_0)
		=
		\top
		\qquad\text{and}\qquad
		\open(\bctrans^{\sigma},1,\widehat\decomm_1)
		=
		\top
		\; .
	\end{equation}
	Whenever both checks accept,~$\guesser$ outputs the pair~$(\widehat\decomm_0,\widehat\decomm_1)$ in the binding experiment.
	For every~$\sigma\in\goodPrefixes$, the true string~$\concDecomm_\sigma$ passes both checks.
	Therefore,
	\begin{align}
		\Pr\left[
			\bindExp_{\guesser,\bc}(1^\security)
			=
			1
		\right]
		&\geq
		\frac23
		\sum_{\sigma\in\goodPrefixes}
		\mu(\sigma)
		\nonumber\\
		&>
		\frac{2}{3}\varepsilon_0
		\; .
	\end{align}
	This probability is non-negligible on the infinite set~$\mathcal I$, contradicting classical computational binding.

	We conclude that
	\begin{equation}
		\Pr\left[
			\adversary(\transRV)
			=
			\imap(\transRV)
			\;\wedge\;
			\contEvent
		\right]
		\leq
		\frac12\Pr[\contEvent]
		+
		\negl(\security)
		\; .
	\end{equation}
	Whenever~$\Pr[\contEvent]\geq1/\poly(\security)$, division by~$\Pr[\contEvent]$ gives the corresponding conditional hiding statement.
\end{proof}

\section{Analytical Entropy Bound}
\label{app:entropy_proof}

This appendix provides a self-contained analytical proof of \entthmref{}.
Throughout the appendix, all states, probabilities, and expectations are conditioned on~$\contEvent$.
We use the notation introduced in Subsection~\ref{subsec:randomness_from_pq_shbc}.

\subsection{Setup and Proof Strategy}

The proof consists of three steps.
First, near-optimal acceptance implies that the two Phase~B observables approximately anti-commute.
Second, Jordan's lemma reduces the analysis to subspaces of dimension at most two, and quantum virtual-input hiding implies that a constant fraction of the state lies on blocks where the observables are nearly complementary and one observable has constant bias.
Third, an entropic uncertainty relation converts these two properties into a positive lower bound on the conditional entropy of the Phase~B output.

For every transcript~$\trans$, let~$\postPhaseState{\trans}$ denote the normalized joint post-Phase-A state of~$\proverReg\eveReg$, conditioned on~$\transRV=\trans$.
Here,~$\eveReg$ denotes arbitrary quantum side information held by an adversary and may be arbitrarily correlated with the prover's post-Phase-A state.
Define
\begin{equation}
	\postPhaseStateP{\trans}
	\coloneqq
	\tr_{\eveReg}\!\left[
		\postPhaseState{\trans}
	\right]
	\; .
\end{equation}

Let~$\bellMeasurement_0$ and~$\bellMeasurement_1$ denote the prover's binary Phase~B observables, with eigenvalues~$\pm1$ corresponding to outcomes~$0$ and~$1$, respectively.
We include a classical copy of~$\transRV$ in the prover's register, so transcript-dependent strategies may be represented by fixed observables that are block diagonal in~$\transRV$.
For~$\bellChallenge\in\{0,1\}$, let~$\bellOutcomeRVAt{\bellChallenge}$ denote the classical outcome obtained by measuring~$\bellMeasurement_\bellChallenge$.
The actual Phase~B output is
\begin{equation}
	\bellOutcomeRV
	=
	\bellOutcomeRVAt{\bellChallengeRV}
	\; .
\end{equation}
Moreover,
\begin{equation}
	\tr\!\left[
		\bellMeasurement_\bellChallenge
		\postPhaseStateP{\trans}
	\right]
	=
	\Pr\left[
		\bellOutcomeRVAt{\bellChallenge}=0
		\;\middle|\;
		\transRV=\trans
	\right]
	-
	\Pr\left[
		\bellOutcomeRVAt{\bellChallenge}=1
		\;\middle|\;
		\transRV=\trans
	\right]
	\; .
\end{equation}

Quantum hiding implies that the virtual input is balanced.
Indeed, applying \virtualHidingRef{quantum} to the two constant predictors gives, for all~$\virtualInput\in\{0,1\}$,
\begin{equation}
	\label{eq:rc_virtual_input_balance}
	\left|
		\Pr[\virtualInputRV=\virtualInput] - \frac12
	\right|
	\leq
	\frac{1}{\pce}
	\negl(\security) \; .
\end{equation}

\subsection{Step 1: Approximate Anticommutation}

\begin{lemma}[Approximate anticommutation]
	\label{lemma:rc_approx_anticommutation}
	Suppose that the prover is accepted with probability
	\begin{equation}
		\pacc =
		\chshopt -\varepsilon \; .
	\end{equation}
	Then,
	\begin{equation}
		\label{eq:rc_approx_anticommutation}
		\anticommSquare
		\coloneqq
		\underset{\trans}{\E}~
		\tr\!\left[
			\left\{
				\bellMeasurement_0,
				\bellMeasurement_1
			\right\}^2
			\postPhaseStateP{\trans}
		\right]
		=
		\O
		\left(
			\varepsilon + \frac{1}{\pce}\negl(\security)
		\right)
		\; .
	\end{equation}
	Here, for each transcript~$\trans$, the quantity inside the expectation measures the squared anticommutator of the two Phase~B prover observables on the corresponding post-Phase-A state~$\postPhaseStateP{\trans}$.
    Thus, near-optimal acceptance implies that the two observables approximately anti-commute on average over the post-Phase-A states.
\end{lemma}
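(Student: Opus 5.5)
We will prove the lemma by writing the CHSH winning probability, conditioned on $\contEvent$, as a Bell-operator expectation. In this operator the virtual input enters only through the classical transcript, so quantum hiding (\virtualHidingRef{quantum}) lets us replace it by an effectively uniform, independent bit. The standard sum-of-squares (SOS) decomposition of the CHSH operator then turns near-optimality into a bound on the anticommutator.

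\textbf{Step 1: rewrite the acceptance probability.} For a fixed transcript $\trans$, write $x=\imap(\trans)$ and $a=\omap(\trans)$. Since $\bellChallengeRV$ is uniform and independent of Phase~A,
\begin{equation}
	\pacc
	=
	\frac12
	+
	\frac14\,
	\underset{\trans}{\E}\,
	(-1)^{a}
	\tr\!\left[
		\left(
			\bellMeasurement_0
			+
			(-1)^{x}\bellMeasurement_1
		\right)
		\postPhaseStateP{\trans}
	\right]
	\; .
\end{equation}
Next define the transcript-dependent ``virtual Alice'' observables $A_0(\trans)=(-1)^{a}$ if $x=0$ and $0$ otherwise, and $A_1(\trans)=(-1)^{a}$ if $x=1$ and $0$ otherwise. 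With these, the right-hand side becomes $\E_\trans\tr[(A_0+A_1)\otimes\bellMeasurement_0+(A_0-A_1)\otimes\bellMeasurement_1]$, up to normalization.

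\textbf{Step 2: remove the dependence on $x$.} The virtual input must not be correlated with the prover's state. We make this precise with a guessing argument. Suppose the winning probability restricted to $x=0$ and to $x=1$ differs in a way exploitable by the prover, meaning the prover's observables behave differently on the conditional states $\E[\postPhaseStateP{\trans}\mid x]$. Then a \qpt{} predictor exists: it measures $\bellMeasurement_0$ and $\bellMeasurement_1$ on the residual state (it may run the prover's Phase~B with both challenges, using a copy of the transcript, via sequential measurement or by sampling one challenge) and outputs a guess for $x$. By \virtualHidingRef{quantum}, any such advantage is at most $\negl(\security)/\pce$ after conditioning on $\contEvent$. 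Together with the balance bound \eqref{eq:rc_virtual_input_balance}, this lets us treat $x$ as an unbiased bit independent of the quantum state, up to an additive error $\O(\negl(\security)/\pce)$ in $\pacc$.

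In this idealized setting the effective Bell operator is $\tfrac{1}{\sqrt2}$ times a genuine CHSH operator. Concretely, we can lift the virtual party to a qubit reference system holding $x$ in superposition via the hidden-qubit structure; alternatively, we simply apply the qubit-free operator inequality below directly to $\bellMeasurement_0$ and $\bellMeasurement_1$.

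\textbf{Step 3: apply the SOS identity.} We use
\begin{equation}
	2\sqrt2\,\id
	-
	\left(
		\bellMeasurement_0+\bellMeasurement_1
	\right)
	\otimes Z
	-
	\left(
		\bellMeasurement_0-\bellMeasurement_1
	\right)
	\otimes X
	=
	\frac{1}{\sqrt2}
	\left[
		\left(
			Z-\tfrac{\bellMeasurement_0+\bellMeasurement_1}{\sqrt2}
		\right)^2
		+
		\left(
			X-\tfrac{\bellMeasurement_0-\bellMeasurement_1}{\sqrt2}
		\right)^2
	\right]
	\; ,
\end{equation}
or equivalently the simpler bound $\|(\bellMeasurement_0+\bellMeasurement_1)\psi\|^2+\|(\bellMeasurement_0-\bellMeasurement_1)\psi\|^2=4$. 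The latter gives $\E\,\|(\bellMeasurement_0\pm\bellMeasurement_1)\psi\|$ with deficit controlled by $\varepsilon$. Expanding,
\begin{equation}
	\|(\bellMeasurement_0+\bellMeasurement_1)\psi\|^2
	-
	\|(\bellMeasurement_0-\bellMeasurement_1)\psi\|^2
	=
	2\langle\{\bellMeasurement_0,\bellMeasurement_1\}\rangle
	\; .
\end{equation}
Near-saturation of the Cauchy--Schwarz step, $\E(\|u\|+\|v\|)\le\sqrt{2\E(\|u\|^2+\|v\|^2)}$, forces both norms to be close to $\sqrt2$, with deviation $\O(\sqrt{\varepsilon'})$, where $\varepsilon'=\varepsilon+\negl/\pce$. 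Working with the squared norms of the operators $(\bellMeasurement_0\pm\bellMeasurement_1)^2=2\id\pm\{\bellMeasurement_0,\bellMeasurement_1\}$, we would then bound $\E\tr[\{\bellMeasurement_0,\bellMeasurement_1\}^2\rho]$ linearly in $\varepsilon'$. The natural tool is the SOS residual: its expectation is exactly $\sqrt2\cdot 4\varepsilon'$, and the anticommutator squared is dominated by it.

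\textbf{Main obstacle.} The main obstacle is Step 2. Converting ``$x$ is hidden from \qpt{} predictors given the residual state'' into a bound on the correlation between $x$ and the Bell operator requires a \emph{single} efficient predictor built from the prover's own Phase-B measurements, and those measurements do not commute. We would first try the predictor that samples $y$ uniformly, measures $\bellMeasurement_y$, and outputs a guess for $x$ from $(a,y,b)$ using $a=\omap(\trans)$. The difficulty is that $a$ itself depends on the openings. We would handle this by reducing to a Bell-type lifting: run the prover on the BB84-like structure and compare with the idealized behaviour, as in the computational Bell inequality framework of~\cite{cbi}, absorbing the loss into $\negl/\pce$.
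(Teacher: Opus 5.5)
There is a genuine gap in Step~2, and that step carries the whole content of the lemma. Quantum hiding bounds the advantage of one \emph{specific} efficient predictor. It does not let you ``treat $x$ as an unbiased bit independent of the quantum state''. Moreover, your Step~1 expression is weighted by $(-1)^{\omap(\trans)}$, which no predictor can evaluate. You mention sequential measurement, and that is the right predictor. However, you never state an inequality tying its bias to the CHSH deficit; you defer this to an unspecified ``Bell-type lifting''.

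Step~3 then has nothing solid to stand on. The virtual party is classical: your $A_0(\trans),A_1(\trans)$ are scalars, at most one of them nonzero. So the SOS identity with a qubit carrying $Z,X$ does not apply. The qubit-free Cauchy--Schwarz argument on $\|(\bellMeasurement_0\pm\bellMeasurement_1)\psi\|$ controls only the first moment $\langle\{\bellMeasurement_0,\bellMeasurement_1\}\rangle$. That quantity can vanish on a mixture of Jordan blocks with opposite signs of $\cos\theta_i$, so it does not bound $\anticommSquare$. Hence ``the anticommutator squared is dominated by the SOS residual'' is unsupported.

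For comparison, the paper does not reprove the lemma. It invokes \cite[Theorem~4.7]{brakerski2023simple}, where the hidden bit is the parity of the two correct Phase-B answers, which is exactly $\imap(\trans)$. Its footnote sharpens the $\sqrt{s}$ dependence there to $s$.

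The step can be closed along your lines as follows.
\begin{enumerate}
\item Remove $a$ using $(-1)^{a}\tr[O\rho]\le\tr[|O|\rho]$ for $O=\bellMeasurement_0+(-1)^{x}\bellMeasurement_1$.
\item Use the operator identity $|O|=\frac{O^2+2\id}{2\sqrt2}-\frac{(|O|-\sqrt2\id)^2}{2\sqrt2}$ together with $O^2=2\id+(-1)^{x}\{\bellMeasurement_0,\bellMeasurement_1\}$. Writing $x=\imap(\trans)$, this gives
\begin{equation*}
\frac{1}{\sqrt2}-2\varepsilon
=
2\pacc-1
\le
\frac{1}{\sqrt2}
+
\frac{1}{4\sqrt2}\,
\E_{\trans}\,(-1)^{x}\,
\tr[\{\bellMeasurement_0,\bellMeasurement_1\}\postPhaseStateP{\trans}]
-
\frac{R}{4\sqrt2}
\; ,
\end{equation*}
where $R\coloneqq\E_{\trans}\tr[(|\bellMeasurement_0+(-1)^{x}\bellMeasurement_1|-\sqrt2\id)^2\postPhaseStateP{\trans}]\ge0$.
\item The middle term is the only place $x$ enters, and it equals $s/\sqrt2$. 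Here $s$ is the conditional advantage of the following efficient predictor: measure $\bellMeasurement_0$ and then $\bellMeasurement_1$ projectively (run the prover's purified Phase-B circuit, copy the output bit, uncompute), and output $b_0\oplus b_1$. This uses $\E[(-1)^{b_0\oplus b_1}]=\frac12\tr[\{\bellMeasurement_0,\bellMeasurement_1\}\rho]$ for sequential projective measurements.
\item Quantum hiding gives $s\le\nu(\security)/\pce$, hence $R\le 8\sqrt2\,\varepsilon+4s$.
\item In a Jordan block with angle $\theta$, $|\bellMeasurement_0+\bellMeasurement_1|=2|\cos(\theta/2)|$ and $|\bellMeasurement_0-\bellMeasurement_1|=2|\sin(\theta/2)|$. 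Also $(2|\cos(\theta/2)|-\sqrt2)^2=4\cos^2\theta/(2|\cos(\theta/2)|+\sqrt2)^2\ge 4\cos^2\theta/(2+\sqrt2)^2$, and likewise with $\sin$. So $R\ge\anticommSquare/(2+\sqrt2)^2$.
\end{enumerate}
Therefore $\anticommSquare\le(2+\sqrt2)^2(8\sqrt2\,\varepsilon+4\nu(\security)/\pce)$, which is the claimed linear bound.
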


\begin{proof}
	See~\cite[Theorem~4.7]{brakerski2023simple}.\footnote
	{
		\cite[Theorem~4.7]{brakerski2023simple} gives an~$\O(\varepsilon+\sqrt{s})$ bound, where~$s$ denotes the advantage over a uniform guess in predicting the parity of the two correct answers. However, the proof yields the stronger bound~$\O(\varepsilon+s)$: in Equation~(18) of~\cite{brakerski2023simple}, the estimate $\sqrt{1/2+s}=1/\sqrt{2}+\O(s)$ can be used in place of the weaker~$1/\sqrt{2}+\O(\sqrt{s})$ estimate.
		We thank Thomas Vidick for correspondence confirming this correction.
	}
\end{proof}

\subsection{Step 2: Good Jordan Blocks}
\label{par:step_2}

The analysis follows the strategy used in previous works on entropic guarantees~\cite{pironio2009collective,brakerski2023simple,merkulov2023entropy}.
We apply Jordan's lemma to represent the prover's Phase~B measurements as a collection of qubit strategies and identify a set of \emph{good} transcript--block pairs.
On every good pair, the two observables are nearly complementary and the state has sufficient bias with respect to~$\bellMeasurement_0$.
These are precisely the properties used in \hyperref[par:step_3]{Step~3}.

Let~$\{\blockProjector{\blockIndex}\}_{\blockIndex}$ be the Jordan-block projections associated with~$\bellMeasurement_0$ and~$\bellMeasurement_1$.
Inside every two-dimensional block~$\blockIndex$, let~$\measurementAngle{\blockIndex}\in[0,\pi]$ denote the angle between the Bloch vectors corresponding to the binary observables~$\bellMeasurement_0^{(\blockIndex)}$ and~$\bellMeasurement_1^{(\blockIndex)}$.
Choose a basis such that
\begin{equation}
    \bellMeasurement_0^{(\blockIndex)} = \sigmaz \; ,
    \qquad
    \bellMeasurement_1^{(\blockIndex)} =
    \cos\!\left(\measurementAngle{\blockIndex}\right)\sigmaz +
    \sin\!\left(\measurementAngle{\blockIndex}\right)\sigmax
    \; .
\end{equation}

For every transcript~$\trans$ and block~$\blockIndex$, define the probability of measuring~$\blockIndex$ given a transcript~$\trans$ by
\begin{equation}
	\blockProbability{\trans}{\blockIndex}
	\coloneqq
	\tr\!\left[
		(\blockProjector{\blockIndex}\otimes\id_{\eveReg})
		\postPhaseState{\trans}
	\right]
	\; .
\end{equation}
Whenever~$\blockProbability{\trans}{\blockIndex}>0$, define the normalized post-Phase-A state conditioned on measuring~$\blockIndex$ by
\begin{equation}
	\blockPostPhaseState{\trans}{\blockIndex}
	\coloneqq
	\frac{
		(\blockProjector{\blockIndex}\otimes\id_{\eveReg})
		\postPhaseState{\trans}
		(\blockProjector{\blockIndex}\otimes\id_{\eveReg})
	}{
		\blockProbability{\trans}{\blockIndex}
	}
	\; ,
\end{equation}
and the corresponding state on the prover side
\begin{equation}
	\blockPostPhaseStateP{\trans}{\blockIndex}
	\coloneqq
	\tr_{\eveReg}\!\left[
		\blockPostPhaseState{\trans}{\blockIndex}
	\right]
	\; .
\end{equation}
Let~$\blockIndexRV$ denote the corresponding block-index random variable, so that
\begin{equation}
	\Pr\left[
		\blockIndexRV=\blockIndex
		\;\middle|\;
		\transRV=\trans
	\right]
	=
	\blockProbability{\trans}{\blockIndex}
	\; .
\end{equation}

For~$0<\angleThreshold<1$, define the set of Jordan blocks on which the two Phase~B observables are nearly complementary by
\begin{equation}
	\label{eq:rc_angle_block_set}
	\angleBlockSet{\angleThreshold}
	\coloneqq
	\left\{
		\blockIndex:
		\cos^2\!\left(
			\measurementAngle{\blockIndex}
		\right)
		\leq
		\angleThreshold
	\right\}
	\; .
\end{equation}
For~$0<\projThreshold<1$, define the set of transcript--block pairs on which the state has bias at least~$\projThreshold$ with respect to~$\bellMeasurement_0$ by
\begin{equation}
	\label{eq:rc_bias_pair_set}
	\biasPairSet{\projThreshold}
	\coloneqq
	\left\{
		(\trans,\blockIndex):
		\left|
			\tr\!\left[
				\bellMeasurement_0^{(\blockIndex)}
				\blockPostPhaseStateP{\trans}{\blockIndex}
			\right]
		\right|
		\geq
		\projThreshold
	\right\}
	\; .
\end{equation}

Set
\begin{equation}
	\label{eq:rc_concrete_thresholds}
	\projThreshold_*
	\coloneqq
	\frac{9(\sqrt2-1)}{10}
	\; ,
	\qquad
	\angleThreshold_*
	\coloneqq
	10^{-3}
	\; .
\end{equation}

\subsubsection{Weight on Nearly Complementary Blocks}

\begin{lemma}[Weight on nearly complementary blocks]
	\label{lemma:app_nearly_complementary_weight}
	For every~$0<\angleThreshold<1$,
	\begin{equation}
		\Pr\left[
			\blockIndexRV
			\in
			\angleBlockSet{\angleThreshold}
		\right]
		\geq
		1
		-
		\frac{\anticommSquare}{4\angleThreshold}
		\; .
	\end{equation}
\end{lemma}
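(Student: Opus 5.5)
This is a Markov-inequality argument on the block decomposition of~$\anticommSquare$. First we compute the anticommutator inside a single Jordan block. With the basis fixed as above, $\bellMeasurement_0^{(\blockIndex)}=\sigmaz$ and $\bellMeasurement_1^{(\blockIndex)}=\cos(\measurementAngle{\blockIndex})\sigmaz+\sin(\measurementAngle{\blockIndex})\sigmax$. Since $\{\sigmaz,\sigmaz\}=2\id$ and $\{\sigmaz,\sigmax\}=0$, this gives
\begin{equation}
	\left\{\bellMeasurement_0^{(\blockIndex)},\bellMeasurement_1^{(\blockIndex)}\right\}
	=
	2\cos\!\left(\measurementAngle{\blockIndex}\right)\id
	\; ,
	\qquad
	\left\{\bellMeasurement_0^{(\blockIndex)},\bellMeasurement_1^{(\blockIndex)}\right\}^2
	=
	4\cos^2\!\left(\measurementAngle{\blockIndex}\right)\id
	\; ,
\end{equation}
which agrees with Lemma~\ref{lemma:anticommutator_overlap}. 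One-dimensional blocks also need handling. There both observables act as~$\pm1$, so the squared anticommutator equals $4$. We adopt the convention $\cos^2(\measurementAngle{\blockIndex})=1$ for such blocks, so they lie outside $\angleBlockSet{\angleThreshold}$ whenever $\angleThreshold<1$ and the same formula holds.

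Next we decompose~$\anticommSquare$ along the blocks. By Lemma~\ref{lemma:jordan}, the operator $\{\bellMeasurement_0,\bellMeasurement_1\}^2$ is block diagonal, so it equals $\sum_\blockIndex 4\cos^2(\measurementAngle{\blockIndex})\blockProjector{\blockIndex}$. The observables are block diagonal in the classical transcript copy, so this holds for each~$\trans$ separately. It follows that
\begin{equation}
	\anticommSquare
	=
	\underset{\trans}{\E}\sum_{\blockIndex}
	4\cos^2\!\left(\measurementAngle{\blockIndex}\right)
	\blockProbability{\trans}{\blockIndex}
	=
	4\,\E\!\left[\cos^2\!\left(\measurementAngle{\blockIndexRV}\right)\right]
	\; ,
\end{equation}
where the expectation is over the joint distribution of $(\transRV,\blockIndexRV)$ conditioned on~$\contEvent$.

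Finally, $\cos^2(\measurementAngle{\blockIndexRV})$ is a nonnegative random variable, so Markov's inequality gives $\Pr[\cos^2(\measurementAngle{\blockIndexRV})>\angleThreshold]\leq \anticommSquare/(4\angleThreshold)$. The complement of this event is exactly $\blockIndexRV\in\angleBlockSet{\angleThreshold}$, which proves the lemma. The only delicate point is bookkeeping: the Jordan decomposition, and hence the angles, may depend on~$\trans$, since the observables are block diagonal in the transcript copy. The cleanest fix is to index blocks jointly by $(\trans,\blockIndex)$. The argument is otherwise routine.
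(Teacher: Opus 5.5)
Your proof is correct and follows the paper's argument: both write $\anticommSquare = 4\,\E\bigl[\cos^2(\measurementAngle{\blockIndexRV})\bigr]$ using the per-block anticommutator identity (Lemma~\ref{lemma:anticommutator_overlap}), then apply Markov's inequality. Your explicit handling of one-dimensional blocks and of the transcript-dependence of the Jordan decomposition is slightly more careful than the paper, which leaves these points implicit.
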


\begin{proof}
	Define the nonnegative random variable
	\begin{equation}
		Z
		\coloneqq
		\cos^2\!\left(
			\measurementAngle{\blockIndexRV}
		\right)
		\; .
	\end{equation}
	By Lemma~\ref{lemma:anticommutator_overlap} and the definition of~$\blockIndexRV$, for every transcript~$\trans$,
	\begin{equation}
		\tr\!\left[
			\left\{
				\bellMeasurement_0,
				\bellMeasurement_1
			\right\}^2
			\postPhaseStateP{\trans}
		\right]
		=
		4
		\sum_{\blockIndex}
		\blockProbability{\trans}{\blockIndex}
		\cos^2\!\left(
			\measurementAngle{\blockIndex}
		\right)
		\; .
	\end{equation}
	Averaging over~$\trans$ therefore gives
	\begin{align}
		\anticommSquare
		&=
		4
		\underset{\trans}{\E}
		\sum_{\blockIndex}
		\blockProbability{\trans}{\blockIndex}
		\cos^2\!\left(
			\measurementAngle{\blockIndex}
		\right)
		\nonumber\\
		&=
		4\E[Z]
		\; .
	\end{align}
	By the definition of~$\angleBlockSet{\angleThreshold}$,
	\begin{align}
		\Pr\left[
			\blockIndexRV
			\notin
			\angleBlockSet{\angleThreshold}
		\right]
		&=
		\Pr\left[
			Z>\angleThreshold
		\right]
		\nonumber\\
		&\leq
		\Pr\left[
			Z\geq\angleThreshold
		\right]
		\; .
	\end{align}
	Applying Markov's inequality to~$Z$ yields
	\begin{equation}
		\Pr\left[
			Z\geq\angleThreshold
		\right]
		\leq
		\frac{\E[Z]}{\angleThreshold}
		=
		\frac{\anticommSquare}{4\angleThreshold}
		\; .
	\end{equation}
	Hence,
	\begin{equation}
		\Pr\left[
			\blockIndexRV
			\in
			\angleBlockSet{\angleThreshold}
		\right]
		\geq
		1
		-
		\frac{\anticommSquare}{4\angleThreshold}
		\; .
	\end{equation}
\end{proof}

\subsubsection{Measurement Bias from the CHSH Correlation}

We next extract a second consequence of near-optimal CHSH performance that will be needed for the entropy analysis.
In addition to forcing the two Phase~B observables to be nearly complementary, a near-optimal CHSH correlation implies that, conditioned on one value of the virtual input, the outcome of one of these observables cannot be close to uniform.
We will use this measurement bias, together with the near-complementarity established above, to identify the good Jordan blocks required for the entropic uncertainty argument.

Recall that, given the functions~$\imap,\omap:\Trans\rightarrow\{0,1\}$ that define the Bell mapping, the Phase~A transcript~$\trans$ determines the virtual input and output as~$\virtualInput=\imap(\trans)$ and~$\virtualOutput=\omap(\trans)$.
For every~$(\virtualOutput,\virtualInput)\in\{0,1\}^2$ with~$\Pr[\virtualOutputRV=\virtualOutput,\virtualInputRV=\virtualInput]>0$, define the normalized post-Phase-A states
\begin{equation}
	\rho_{\proverReg}^{\virtualOutput,\virtualInput}
	\coloneqq
	\frac{1}{
		\Pr[
			\virtualOutputRV=\virtualOutput,
			\virtualInputRV=\virtualInput
		]
	}
	\sum_{\substack{\trans:\\
		\omap(\trans)=\virtualOutput,\\
		\imap(\trans)=\virtualInput}}
	\Pr[\transRV=\trans]
	\postPhaseStateP{\trans}
	\; .
\end{equation}

\begin{definition}[CHSH contribution terms]
	\label{def:app_chsh_contributions}
	For~$\virtualInput,\bellChallenge\in\{0,1\}$, define the contribution to the CHSH correlation corresponding to the virtual input~$\virtualInput$ and challenge~$\bellChallenge$ by
	\begin{equation}
		S_{\virtualInput,\bellChallenge}
		\coloneqq
		\sum_{\virtualOutput\in\{0,1\}}
		\Pr[\bellChallengeRV=\bellChallenge]
		\Pr[
			\virtualOutputRV=\virtualOutput,
			\virtualInputRV=\virtualInput
		]
		\tr\!\left[
			(-1)^{
				\virtualOutput
				\oplus
				(\virtualInput\cdot\bellChallenge)
			}
			\bellMeasurement_\bellChallenge
			\rho_{\proverReg}^{\virtualOutput,\virtualInput}
		\right]
		\; .
	\end{equation}
	For each~$\virtualInput\in\{0,1\}$, define its total contribution by
	\begin{equation}
		S_{\virtualInput}
		\coloneqq
		\sum_{\bellChallenge\in\{0,1\}}
		S_{\virtualInput,\bellChallenge}
		\; .
	\end{equation}
	Since~$\bellChallengeRV$ is sampled independently of the Phase~A transcript, with this notation the CHSH correlation satisfies
	\begin{equation}
		\pacc
		=
		\frac{1}{2}
		+
		\frac
		{S_0+S_1}
		{2}
		\; .
	\end{equation}
\end{definition}

\begin{lemma}[Upper bound on the~$\virtualInput=1$ contribution]
	\label{lemma:app_xone_contribution}
	It holds that
	\begin{equation}
		S_1
		\leq
		\sqrt{
			\frac{
				\sqrt2
				+
				\sqrt{\anticommSquare}
			}{
				8\sqrt2
			}
		}
		+
		\frac{1}{\pce}
		\negl(\security)
		\; ,
	\end{equation}
	where~$\pce(\security)\coloneqq\Pr[\contEvent]$ denotes the probability that Phase~A1 does not abort.
\end{lemma}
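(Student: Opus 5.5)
The plan is to make the bound a consequence of Cauchy--Schwarz and the approximate anticommutation of Lemma~\ref{lemma:rc_approx_anticommutation}, together with the balance of the virtual input in Equation~\eqref{eq:rc_virtual_input_balance}; no further use of hiding should be needed.

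\emph{Step 1 (rewrite $S_1$).} For $\virtualInput=1$ the sign in Definition~\ref{def:app_chsh_contributions} is $(-1)^{\virtualOutput\oplus\bellChallenge}$. Since $\Pr[\bellChallengeRV=\bellChallenge]=\tfrac12$, summing over $\bellChallenge$ gives
\begin{equation}
S_1=\frac12\sum_{\virtualOutput}\Pr[\virtualOutputRV=\virtualOutput,\virtualInputRV=1]\,\tr\!\left[(-1)^{\virtualOutput}(\bellMeasurement_0-\bellMeasurement_1)\rho_{\proverReg}^{\virtualOutput,1}\right].
\end{equation}
Expanding $\rho_{\proverReg}^{\virtualOutput,1}$ as the average of $\postPhaseStateP{\trans}$ over transcripts with $\imap(\trans)=1$, $S_1$ becomes $\tfrac12\Pr[\virtualInputRV=1]$ times a conditional expectation over $\trans\mid\virtualInputRV=1$. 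The quantity averaged is a signed expectation of the Hermitian operator $D\coloneqq\bellMeasurement_0-\bellMeasurement_1$.

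\emph{Step 2 (Cauchy--Schwarz and concavity).} For each $\trans$, bound $|\tr[D\postPhaseStateP{\trans}]|\leq\sqrt{\tr[D^2\postPhaseStateP{\trans}]}$. Since $\bellMeasurement_\bellChallenge^2=\id$, we have $D^2=2\id-\{\bellMeasurement_0,\bellMeasurement_1\}$. Concavity of the square root (Jensen over $\trans\mid\virtualInputRV=1$) then gives
\begin{equation}
S_1\leq\frac12\Pr[\virtualInputRV=1]\sqrt{2-\E_{\trans\mid\virtualInputRV=1}\tr\!\left[\{\bellMeasurement_0,\bellMeasurement_1\}\postPhaseStateP{\trans}\right]}.
\end{equation}

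\emph{Step 3 (anticommutator term).} Apply Cauchy--Schwarz and Jensen once more:
\begin{equation}
-\E_{\trans\mid\virtualInputRV=1}\tr[\{\cdot\}\postPhaseStateP{\trans}]\leq\sqrt{\E_{\trans\mid\virtualInputRV=1}\tr[\{\cdot\}^2\postPhaseStateP{\trans}]}.
\end{equation}
Because $\{\cdot\}^2\succeq0$, the conditional expectation is at most $\anticommSquare/\Pr[\virtualInputRV=1]$. By Equation~\eqref{eq:rc_virtual_input_balance}, $\Pr[\virtualInputRV=1]=\tfrac12\pm\negl/\pce$. Hence the anticommutator term is at most $\sqrt{2\anticommSquare}$ up to an additive correction of order $\negl/\pce$.

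\emph{Step 4 (assemble and compare).} Substituting into Step~2 gives
\begin{equation}
S_1\leq\frac14\sqrt{2+\sqrt{2\anticommSquare}}+\frac{1}{\pce}\negl(\security)=\sqrt{\frac{\sqrt2+\sqrt{\anticommSquare}}{8\sqrt2}}+\frac{1}{\pce}\negl(\security),
\end{equation}
which is exactly the claimed bound; the equality follows by factoring $\sqrt2$ out of the radicand. The additive correction comes from expanding $\Pr[\virtualInputRV=1]$ around $\tfrac12$, using that all quantities are bounded by constants.

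\emph{Main obstacle.} The only delicate point is Step~4's error bookkeeping. We must ensure that dividing by $\Pr[\virtualInputRV=1]$ and propagating the balance error through the square roots yields an additive $\negl/\pce$ term. This could fail in the regime where $\negl/\pce$ is not small, but there the bound is trivial since $S_1\leq\tfrac12$, so one can split into cases on whether $\negl/\pce\leq\tfrac14$.
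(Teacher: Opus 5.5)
Your proof is correct. It reaches exactly the paper's intermediate inequality by a different, more elementary route. Writing $q=\Pr[\virtualInputRV=1]$, your Steps 2--3 give $S_1\leq\frac12 q\sqrt{2+\sqrt{\anticommSquare/q}}$, and squaring shows this is identical to the paper's $S_1^2\leq q^2/2+q^{3/2}\sqrt{\anticommSquare}/4$.

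\emph{How each route gets there.} The paper uses Jordan's lemma. It bounds each block by $\frac12\|\bellMeasurement_0^{(\blockIndex)}-\bellMeasurement_1^{(\blockIndex)}\|_\infty=|\sin(\measurementAngle{\blockIndex}/2)|\leq\sqrt{(1+|\cos\measurementAngle{\blockIndex}|)/2}$. It then applies Jensen and Cauchy--Schwarz over the block weights $p_1^{(\blockIndex)}$ and uses $4qT\leq\anticommSquare$. You instead apply the operator Cauchy--Schwarz bound $|\tr[K\rho]|\leq\sqrt{\tr[K^2\rho]}$ twice, together with $(\bellMeasurement_0-\bellMeasurement_1)^2=2\id-\{\bellMeasurement_0,\bellMeasurement_1\}$. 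Your argument is basis-free and needs no block decomposition. On each Jordan block the two computations coincide, since $\{\bellMeasurement_0,\bellMeasurement_1\}$ acts there as $2\cos(\measurementAngle{\blockIndex})\id$. The paper's version buys only notational continuity with the block analysis of Step~2.

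\emph{Error bookkeeping.} The paper needs only an upper bound on $q$. The quantity $q^2/2+q^{3/2}\sqrt{\anticommSquare}/4$ is increasing in $q$, so the paper clips $\eta=\min\{1/2,\nu/\pce\}$ and linearizes, with no division by $q$ and no case split. Your Step~3 divides by $q$, so you need a lower bound on $q$, which is what forces your case split. You can remove the split by keeping $q$ outside the radical, since $\frac12 q\sqrt{2+\sqrt{\anticommSquare/q}}=\sqrt{q^2/2+q^{3/2}\sqrt{\anticommSquare}/4}$.

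\emph{A small slip in the fallback case.} If you keep the split, the trivial bound there is $S_1\leq q$, because each $|S_{1,\bellChallenge}|\leq q/2$. This gives $S_1\leq 1$, or $S_1\leq\frac12+\nu/\pce$ by balance, but not $S_1\leq\frac12$. For example, without balance, $\bellMeasurement_1=-\bellMeasurement_0$ and $q=1$ give $S_1=1$. The fallback still works after enlarging the constant in front of $\nu/\pce$.
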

\begin{proof}
	If~$\Pr[\virtualInputRV=1]=0$, then~$S_1=0$, and the claimed bound is immediate.
	Hence, we may assume that~$\Pr[\virtualInputRV=1]>0$.
	Terms corresponding to zero-probability pairs~$(\virtualOutput,\virtualInput)$ are understood to be omitted.

	For every~$(\virtualOutput,\virtualInput)$ and block~$\blockIndex$, define
	\begin{equation}
		p_{\virtualOutput,\virtualInput}^{(\blockIndex)}
		\coloneqq
		\tr\!\left[
			\blockProjector{\blockIndex}
			\rho_{\proverReg}^{\virtualOutput,\virtualInput}
		\right]
		\; .
	\end{equation}
	For~$\virtualInput\in\{0,1\}$ with~$\Pr[\virtualInputRV=\virtualInput]>0$, define
	\begin{equation}
		p_{\virtualInput}^{(\blockIndex)}
		\coloneqq
		\sum_{\virtualOutput\in\{0,1\}}
		\Pr\left[
			\virtualOutputRV=\virtualOutput
			\;\middle|\;
			\virtualInputRV=\virtualInput
		\right]
		p_{\virtualOutput,\virtualInput}^{(\blockIndex)}
		\; .
	\end{equation}
	For every such~$\virtualInput$, the coefficients
	$\{p_{\virtualInput}^{(\blockIndex)}\}_{\blockIndex}$
	form a probability distribution.

	Since~$\bellChallengeRV$ is uniform,
	\begin{align}
		S_1
		&=
		\frac12
		\sum_{\virtualOutput\in\{0,1\}}
		\Pr[
			\virtualOutputRV=\virtualOutput,
			\virtualInputRV=1
		]
		(-1)^\virtualOutput
		\tr\!\left[
			(\bellMeasurement_0-\bellMeasurement_1)
			\rho_{\proverReg}^{\virtualOutput,1}
		\right]
		\nonumber\\
		&\leq
		\frac12
		\sum_{\virtualOutput\in\{0,1\}}
		\Pr[
			\virtualOutputRV=\virtualOutput,
			\virtualInputRV=1
		]
		\sum_{\blockIndex}
		p_{\virtualOutput,1}^{(\blockIndex)}
		\left\|
			\bellMeasurement_0^{(\blockIndex)}
			-
			\bellMeasurement_1^{(\blockIndex)}
		\right\|_\infty
		\; .
	\end{align}
	Inside block~$\blockIndex$,
	\begin{align}
		\frac12
		\left\|
			\bellMeasurement_0^{(\blockIndex)}
			-
			\bellMeasurement_1^{(\blockIndex)}
		\right\|_\infty
		&=
		\left|
			\sin\!\left(
				\frac{\measurementAngle{\blockIndex}}{2}
			\right)
		\right|
		\nonumber\\
		&\leq
		\sqrt{
			\frac{
				1
				+
				\left|
					\cos\!\left(
						\measurementAngle{\blockIndex}
					\right)
				\right|
			}{2}
		}
		\; .
	\end{align}
	Consequently,
	\begin{equation}
		S_1
		\leq
		\Pr[\virtualInputRV=1]
		\sum_{\blockIndex}
		p_1^{(\blockIndex)}
		\sqrt{
			\frac{
				1
				+
				\left|
					\cos\!\left(
						\measurementAngle{\blockIndex}
					\right)
				\right|
			}{2}
		}
		\; .
	\end{equation}
	By Jensen's inequality,
	\begin{equation}
		S_1
		\leq
		\Pr[\virtualInputRV=1]
		\sqrt{
			\frac{
				1
				+
				\displaystyle
				\sum_{\blockIndex}
				p_1^{(\blockIndex)}
				\left|
					\cos\!\left(
						\measurementAngle{\blockIndex}
					\right)
				\right|
			}{2}
		}
		\; .
	\end{equation}
	By Cauchy--Schwarz,
	\begin{equation}
		\sum_{\blockIndex}
		p_1^{(\blockIndex)}
		\left|
			\cos\!\left(
				\measurementAngle{\blockIndex}
			\right)
		\right|
		\leq
		\sqrt{
			\sum_{\blockIndex}
			p_1^{(\blockIndex)}
			\cos^2\!\left(
				\measurementAngle{\blockIndex}
			\right)
		}
		\; .
	\end{equation}
	For brevity, set
	\begin{equation}
		q
		\coloneqq
		\Pr[\virtualInputRV=1]
		\qquad\text{and}\qquad
		T
		\coloneqq
		\sum_{\blockIndex}
		p_1^{(\blockIndex)}
		\cos^2\!\left(
			\measurementAngle{\blockIndex}
		\right)
		\; .
	\end{equation}
	The preceding estimates give
	\begin{equation}
		S_1
		\leq
		q
		\sqrt{
			\frac{
				1+\sqrt{T}
			}{2}
		}
		\; .
	\end{equation}

	Moreover,
	\begin{equation}
		\anticommSquare
		=
		4
		\sum_{\virtualInput\in\{0,1\}}
		\Pr[\virtualInputRV=\virtualInput]
		\sum_{\blockIndex}
		p_{\virtualInput}^{(\blockIndex)}
		\cos^2\!\left(
			\measurementAngle{\blockIndex}
		\right)
		\; .
	\end{equation}
	Since every term is nonnegative,
	\begin{equation}
		4qT
		\leq
		\anticommSquare
		\; ,
	\end{equation}
	and hence
	\begin{equation}
		T
		\leq
		\frac{\anticommSquare}{4q}
		\; .
	\end{equation}

	If~$S_1\leq0$, the claimed upper bound is immediate.
	Otherwise, squaring the preceding bound and using the estimate on~$T$ gives
	\begin{align}
		S_1^2
		&\leq
		\frac{q^2}{2}
		\left(
			1+\sqrt{T}
		\right)
		\nonumber\\
		&\leq
		\frac{q^2}{2}
		+
		\frac{q^2}{2}
		\sqrt{
			\frac{\anticommSquare}{4q}
		}
		\nonumber\\
		&=
		\frac{q^2}{2}
		+
		\frac{
			q^{3/2}
			\sqrt{\anticommSquare}
		}{4}
		\; .
	\end{align}

	By Equation~\eqref{eq:rc_virtual_input_balance}, there exists a negligible function~$\nu$ such that
	\begin{equation}
		\left|
			q-\frac12
		\right|
		\leq
		\frac{\nu(\security)}{\pce}
		\; .
	\end{equation}
	Set
	\begin{equation}
		\eta(\security)
		\coloneqq
		\min\left\{
			\frac12,
			\frac{\nu(\security)}{\pce}
		\right\}
		\; .
	\end{equation}
	Since~$\left|q-\frac12\right|\leq\frac12$, it follows that
	\begin{equation}
		\left|
			q-\frac12
		\right|
		\leq
		\eta(\security)
		\leq
		\frac12
		\qquad\text{and}\qquad
		\eta(\security)
		\leq
		\frac{\nu(\security)}{\pce}
		\; .
	\end{equation}
	In particular,
	\begin{equation}
		q
		\leq
		\frac12+\eta(\security)
		\; .
	\end{equation}
	Using~$\eta(\security)^2\leq\eta(\security)/2$, we obtain
	\begin{align}
		q^2
		&\leq
		\left(
			\frac12+\eta(\security)
		\right)^2
		\nonumber\\
		&=
		\frac14
		+
		\eta(\security)
		+
		\eta(\security)^2
		\nonumber\\
		&\leq
		\frac14
		+
		\frac32
		\eta(\security)
		\; .
	\end{align}
	The function
	\begin{equation}
		f(t)
		\coloneqq
		\left(
			\frac12+t
		\right)^{3/2}
	\end{equation}
	is convex on~$[0,1/2]$.
	Therefore, it lies below the secant line joining its values at~$0$ and~$1/2$, and hence
	\begin{align}
		q^{3/2}
		& \leq
		\left(
			2-1/\sqrt2
		\right)
		\eta(\security)
		+
		2^{-3/2}
		\nonumber \\
		& \leq
		\frac32
		\eta(\security)
		+
		2^{-3/2}
	\end{align}
	The decomposition of~$\anticommSquare$ also implies that
	\begin{equation}
		0
		\leq
		\anticommSquare
		\leq
		4
		\; .
	\end{equation}
	Combining the preceding estimates gives
	\begin{align}
		S_1^2
		&\leq
		\frac18
		+
		\frac34
		\eta(\security)
		+
		\frac{
			\sqrt{\anticommSquare}
		}{
			8\sqrt2
		}
		+
		\frac{
			3\sqrt{\anticommSquare}
		}{
			8
		}
		\eta(\security)
		\nonumber\\
		&\leq
		\frac{
			\sqrt2
			+
			\sqrt{\anticommSquare}
		}{
			8\sqrt2
		}
		+
		\frac32
		\eta(\security)
		\; .
	\end{align}
	Set
	\begin{equation}
		b
		\coloneqq
		\frac{
			\sqrt2
			+
			\sqrt{\anticommSquare}
		}{
			8\sqrt2
		}
		\; .
	\end{equation}
	Since~$b\geq1/8$, the inequality from Lemma~\ref{lemma:sqrt_linearization}
	for~$r\geq0$ yields
	\begin{align}
		S_1
		&\leq
		\sqrt{
			\frac{
				\sqrt2
				+
				\sqrt{\anticommSquare}
			}{
				8\sqrt2
			}
		}
		+
		\frac{3\sqrt2}{2}
		\eta(\security)
		\nonumber\\
		&\leq
		\sqrt{
			\frac{
				\sqrt2
				+
				\sqrt{\anticommSquare}
			}{
				8\sqrt2
			}
		}
		+
		\frac{3\sqrt2}{2}
		\frac{\nu(\security)}{\pce}
		\; .
	\end{align}
	Since a constant multiple of a negligible function is negligible, we conclude that
	\begin{equation}
		S_1
		\leq
		\sqrt{
			\frac{
				\sqrt2
				+
				\sqrt{\anticommSquare}
			}{
				8\sqrt2
			}
		}
		+
		\frac{1}{\pce}
		\negl(\security)
		\; .
	\end{equation}
\end{proof}

\begin{lemma}[Upper bound on the~$(\virtualInput,\bellChallenge)=(0,1)$ contribution]
	\label{lemma:app_zero_one_contribution}
	It holds that
	\begin{equation}
		S_{0,1}
		\leq
		\frac14
		+
		\frac{1}{\pce}
		\negl(\security)
		\; ,
	\end{equation}
	where~$\pce(\security)\coloneqq\Pr[\contEvent]$ denotes the probability that Phase~A1 does not abort.
\end{lemma}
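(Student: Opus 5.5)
The bound needs only the fact that each trace term is at most one in absolute value, together with the balance of the virtual input from Equation~\eqref{eq:rc_virtual_input_balance}. No Jordan decomposition or anticommutation estimate is needed.

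First, I would expand the definition of $S_{0,1}$ from Definition~\ref{def:app_chsh_contributions}. For $(\virtualInput,\bellChallenge)=(0,1)$ the sign is $(-1)^{\virtualOutput\oplus 0}=(-1)^{\virtualOutput}$, and $\Pr[\bellChallengeRV=1]=\tfrac12$. Hence
\begin{equation*}
	S_{0,1}
	=
	\frac12
	\sum_{\virtualOutput\in\{0,1\}}
	\Pr[\virtualOutputRV=\virtualOutput,\virtualInputRV=0]\,
	(-1)^{\virtualOutput}
	\tr\!\left[
		\bellMeasurement_1
		\rho_{\proverReg}^{\virtualOutput,0}
	\right]
	\; ,
\end{equation*}
where pairs of probability zero are omitted as in the proof of Lemma~\ref{lemma:app_xone_contribution}.

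Next, since $\bellMeasurement_1$ is a binary observable with eigenvalues $\pm1$, we have $\|\bellMeasurement_1\|_\infty\le 1$. Each $\rho_{\proverReg}^{\virtualOutput,0}$ is a normalized state, so every trace term has absolute value at most $1$. Summing over $\virtualOutput$ then gives
\begin{equation*}
	S_{0,1}
	\le
	\frac12\Pr[\virtualInputRV=0]
	\; .
\end{equation*}

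Finally, I would apply Equation~\eqref{eq:rc_virtual_input_balance}, which comes from quantum hiding (\virtualHidingRef{quantum}) applied to a constant predictor. It gives $\Pr[\virtualInputRV=0]\le \tfrac12+\nu(\security)/\pce$ for some negligible $\nu$. Substituting,
\begin{equation*}
	S_{0,1}\le \frac14+\frac{\nu(\security)}{2\pce}
	\; ,
\end{equation*}
and since half of a negligible function is negligible, this is the claimed bound.

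There is no real obstacle in this argument. The only point to check is that the balance estimate is used conditioned on $\contEvent$, with the same $1/\pce$ scaling convention as in Lemma~\ref{lemma:app_xone_contribution}; this holds because the whole appendix works conditioned on $\contEvent$.
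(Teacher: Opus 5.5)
Your proof is correct and matches the paper's argument essentially line for line. Like the paper, you expand $S_{0,1}$ with sign $(-1)^{\virtualOutput}$ and $\Pr[\bellChallengeRV=1]=\tfrac12$, bound each trace term by one to get $S_{0,1}\le\tfrac12\Pr[\virtualInputRV=0]$, and then use the balance estimate~\eqref{eq:rc_virtual_input_balance} to conclude $S_{0,1}\le\tfrac14+\nu(\security)/(2\pce)$.
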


\begin{proof}
	By Equation~\eqref{eq:rc_virtual_input_balance}, there exists a negligible function~$\nu$ such that
	\begin{equation}
		\Pr[\virtualInputRV=0]
		\leq
		\frac12
		+
		\frac{\nu(\security)}{\pce}
		\; .
	\end{equation}
	We therefore have
	\begin{align}
		S_{0,1}
		&=
		\frac12
		\sum_{\virtualOutput\in\{0,1\}}
		\Pr[
			\virtualOutputRV=\virtualOutput,
			\virtualInputRV=0
		]
		\tr\!\left[
			(-1)^\virtualOutput
			\bellMeasurement_1
			\rho_{\proverReg}^{\virtualOutput,0}
		\right]
		\nonumber\\
		&\leq
		\frac12
		\Pr[\virtualInputRV=0]
		\nonumber\\
		&\leq
		\frac14
		+
		\frac{\nu(\security)}{2\pce}
		\nonumber\\
		&=
		\frac14
		+
		\frac{1}{\pce}
		\negl(\security)
		\; ,
	\end{align}
	where the first inequality uses
	\begin{equation}
		\left|
			\tr\!\left[
				\bellMeasurement_1
				\rho_{\proverReg}^{\virtualOutput,0}
			\right]
		\right|
		\leq
		1
		\; ,
	\end{equation}
	and the final equality absorbs the constant factor~$1/2$ into the negligible function.
\end{proof}

\begin{corollary}[Conditional lower bound on the~$\bellMeasurement_0$ bias]
	\label{cor:app_bzero_bias}
	Suppose that
	\begin{equation}
		\pacc
		=
		\cos^2\!\left({\pi}/{8}\right)
		-
		\varepsilon
		\; .
	\end{equation}
	Then, whenever~$\Pr[\virtualInputRV=0]>0$,
	\begin{equation}
		\underset{\transRV\mid\virtualInputRV=0}{\E}
		\left|
			\tr\!\left[
				\bellMeasurement_0
				\postPhaseStateP{\transRV}
			\right]
		\right|
		\geq
		2\sqrt2
		-
		1
		-
		\sqrt{
			2
			+
			\sqrt{2\anticommSquare}
		}
		-
		8\varepsilon
		-
		\frac{1}{\pce}
		\negl(\security)
		\; ,
	\end{equation}
	where~$\pce(\security)\coloneqq\Pr[\contEvent]$ denotes the continuation probability.
\end{corollary}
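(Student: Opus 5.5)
The plan is to read the bound off the CHSH decomposition in Definition~\ref{def:app_chsh_contributions}, combining the two upper bounds already proved (Lemma~\ref{lemma:app_xone_contribution} for $S_1$ and Lemma~\ref{lemma:app_zero_one_contribution} for $S_{0,1}$). Together they force a lower bound on the remaining term $S_{0,0}$, and $S_{0,0}$ is controlled by the quantity we want.

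First, from $\pacc=\frac12+\frac{S_0+S_1}{2}$ and $\pacc=\cos^2(\pi/8)-\varepsilon$, we get $S_0+S_1=2\cos^2(\pi/8)-1-2\varepsilon=\frac{1}{\sqrt2}-2\varepsilon$. Writing $S_0=S_{0,0}+S_{0,1}$ and inserting the two lemmas gives
\begin{equation}
	S_{0,0}
	\geq
	\frac{1}{\sqrt2}
	-\frac14
	-2\varepsilon
	-\sqrt{\frac{\sqrt2+\sqrt{\anticommSquare}}{8\sqrt2}}
	-\frac{1}{\pce}\negl(\security)
	\; .
\end{equation}

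Second, I will bound $S_{0,0}$ from above. For $\bellChallenge=0$ the phase is $(-1)^{\virtualOutput}$, and $\rho_{\proverReg}^{\virtualOutput,0}$ is the $\Pr[\transRV=\trans]$-weighted mixture of the $\postPhaseStateP{\trans}$ with $\omap(\trans)=\virtualOutput$ and $\imap(\trans)=0$. Hence
\begin{equation}
	S_{0,0}
	=
	\frac12\sum_{\trans:\imap(\trans)=0}\Pr[\transRV=\trans](-1)^{\omap(\trans)}\tr\!\left[\bellMeasurement_0\postPhaseStateP{\trans}\right]
	\leq
	\frac12\Pr[\virtualInputRV=0]\,
	\underset{\transRV\mid\virtualInputRV=0}{\E}\left|\tr\!\left[\bellMeasurement_0\postPhaseStateP{\transRV}\right]\right|
	\; .
\end{equation}
By the balance estimate \eqref{eq:rc_virtual_input_balance}, $\Pr[\virtualInputRV=0]\leq\frac12+\nu/\pce$.

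Third, I combine the two steps. If the claimed right-hand side is nonpositive, there is nothing to prove, since the expectation is nonnegative. Otherwise $S_{0,0}>0$, and writing $E$ for the expectation, $E\geq 2S_{0,0}/\Pr[\virtualInputRV=0]$. Two facts turn the multiplicative error into an additive one: $E\leq 1$, and $\Pr[\virtualInputRV=0]\geq\frac12-\nu/\pce$. It suffices to verify $E\geq 4S_{0,0}-\O(\nu/\pce)$; this follows from $4S_{0,0}\leq 2\Pr[\virtualInputRV=0]E\leq E+2\nu E/\pce\leq E+2\nu/\pce$. Multiplying the lower bound on $S_{0,0}$ by $4$ gives $2\sqrt2-1-8\varepsilon-\O(\nu/\pce)$ minus $4\sqrt{(\sqrt2+\sqrt{\anticommSquare})/(8\sqrt2)}$. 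The last term simplifies:
\begin{equation}
	4\sqrt{\frac{\sqrt2+\sqrt{\anticommSquare}}{8\sqrt2}}
	=\sqrt{\frac{16(\sqrt2+\sqrt{\anticommSquare})}{8\sqrt2}}
	=\sqrt{2+\sqrt{2\anticommSquare}}
	\; .
\end{equation}
This is exactly the claimed bound, with all negligible terms absorbed into $\frac{1}{\pce}\negl(\security)$.

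The main obstacle is this last step, converting the bound on $S_{0,0}$ into a bound on the conditional expectation. It requires the lower bound on $\Pr[\virtualInputRV=0]$ and the observation that $E\leq1$, so that the negligible terms stay additive with a $1/\pce$ prefactor and do not blow up. The rest is direct substitution of the earlier lemmas.
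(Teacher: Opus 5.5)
Your proposal is correct and matches the paper's proof essentially step for step. Like the paper, you lower-bound $S_{0,0}=2\pacc-1-S_1-S_{0,1}$ using the two preceding lemmas, upper-bound $S_{0,0}\le\frac12\Pr[\virtualInputRV=0]\,B_0$, and combine $\Pr[\virtualInputRV=0]\le\frac12+\nu/\pce$ with $B_0\le1$ to obtain $B_0\ge4S_{0,0}-2\nu/\pce$. Your invocation of the lower bound on $\Pr[\virtualInputRV=0]$ and your case split on the sign of the right-hand side are superfluous, since your final chain of inequalities never uses either.
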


\begin{proof}
	Set
	\begin{equation}
		B_0
		\coloneqq
		\underset{\transRV\mid\virtualInputRV=0}{\E}
		\left|
			\tr\!\left[
				\bellMeasurement_0
				\postPhaseStateP{\transRV}
			\right]
		\right|
		\; .
	\end{equation}
	By Definition~\ref{def:app_chsh_contributions},
	\begin{align}
		S_{0,0}
		&=
		\frac12
		\sum_{
			\trans:\imap(\trans)=0
			}
		\Pr[\transRV=\trans]
		\tr\!\left[
			(-1)^{\omap(\trans)}
			\bellMeasurement_0
			\postPhaseStateP{\trans}
		\right]
		\nonumber\\
		&\leq
		\frac12
		\Pr[\virtualInputRV=0]
		B_0
		\; .
	\end{align}
	By Equation~\eqref{eq:rc_virtual_input_balance}, there exists a negligible function~$\nu_0$ such that
	\begin{equation}
		\Pr[\virtualInputRV=0]
		\leq
		\frac12
		+
		\frac{\nu_0(\security)}{\pce}
		\; .
	\end{equation}
	Since~$0\leq B_0\leq1$, it follows that
	\begin{align}
		S_{0,0}
		&\leq
		\frac12
		\left(
			\frac12
			+
			\frac{\nu_0(\security)}{\pce}
		\right)
		B_0
		\nonumber\\
		&\leq
		\frac14
		B_0
		+
		\frac{\nu_0(\security)}{2\pce}
		\; .
	\end{align}
	Consequently,
	\begin{equation}
		B_0
		\geq
		4S_{0,0}
		-
		\frac{2\nu_0(\security)}{\pce}
		\; .
	\end{equation}

	On the other hand,
	\begin{equation}
		S_{0,0}
		=
		2\pacc-1
		-
		S_1
		-
		S_{0,1}
		\; .
	\end{equation}
	Substituting
	\begin{equation}
		2\pacc-1
		=
		\frac{1}{\sqrt2}
		-
		2\varepsilon
		\; ,
	\end{equation}
	and applying Lemmas~\ref{lemma:app_xone_contribution} and~\ref{lemma:app_zero_one_contribution}, we obtain
	\begin{align}
		S_{0,0}
		&\geq
		\frac{1}{\sqrt2}
		-
		2\varepsilon
		-
		\sqrt{
			\frac{
				\sqrt2
				+
				\sqrt{\anticommSquare}
			}{
				8\sqrt2
			}
		}
		-
		\frac14
		-
		\frac{1}{\pce}
		\negl(\security)
		\; .
	\end{align}
	Combining the preceding estimates and absorbing the sum of the negligible functions into a single negligible function gives
	\begin{align}
		B_0
		&\geq
		4\left(
			\frac{1}{\sqrt2}
			-
			2\varepsilon
			-
			\sqrt{
				\frac{
					\sqrt2
					+
					\sqrt{\anticommSquare}
				}{
					8\sqrt2
				}
			}
			-
			\frac14
		\right)
		-
		\frac{1}{\pce}
		\negl(\security)
		\nonumber\\
		&=
		2\sqrt2
		-
		1
		-
		\sqrt{
			2
			+
			\sqrt{2\anticommSquare}
		}
		-
		8\varepsilon
		-
		\frac{1}{\pce}
		\negl(\security)
		\; .
	\end{align}
\end{proof}

\begin{corollary}[Asymptotic conditional lower bound on the~$\bellMeasurement_0$ bias]
	\label{cor:app_bzero_bias_asymptotic}
	Under the hypotheses of Corollary~\ref{cor:app_bzero_bias},
	\begin{equation}
		\underset{\transRV\mid\virtualInputRV=0}{\E}
		\left|
			\tr\!\left[
				\bellMeasurement_0
				\postPhaseStateP{\transRV}
			\right]
		\right|
		\geq
		\sqrt2
		-
		1
		-
		\O(\sqrt{\varepsilon})
		-
		\frac{1}{\pce}
		\negl(\security)
		\; .
	\end{equation}
\end{corollary}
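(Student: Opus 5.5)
The plan is to start from the explicit bound of Corollary~\ref{cor:app_bzero_bias} and simplify the square-root term using the approximate anticommutation estimate of Lemma~\ref{lemma:rc_approx_anticommutation}. No new probabilistic or operator-theoretic input is needed; the work is elementary calculus plus bookkeeping of the negligible terms.

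First, I will linearize the square root. Concavity of $t\mapsto\sqrt{t}$ at $t=2$ gives $\sqrt{2+x}\le\sqrt2+x/(2\sqrt2)$ for $x\ge0$. Taking $x=\sqrt{2\anticommSquare}$ yields
\begin{equation}
	\sqrt{2+\sqrt{2\anticommSquare}}
	\leq
	\sqrt2+\frac{\sqrt{\anticommSquare}}{2}
	\; .
\end{equation}
Substituting this into Corollary~\ref{cor:app_bzero_bias}, the constants combine as $2\sqrt2-1-\sqrt2=\sqrt2-1$. This leaves the bias lower bound
\begin{equation}
	\sqrt2-1-\tfrac12\sqrt{\anticommSquare}-8\varepsilon-\tfrac{1}{\pce}\negl(\security)
	\; .
\end{equation}

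Second, I will bound $\sqrt{\anticommSquare}$. By Lemma~\ref{lemma:rc_approx_anticommutation}, $\anticommSquare\le C(\varepsilon+\nu(\security)/\pce)$ for a constant $C$ and a negligible function $\nu$. Subadditivity of the square root then gives
\begin{equation}
	\sqrt{\anticommSquare}\le\sqrt{C\varepsilon}+\sqrt{C\nu(\security)/\pce}
	\; .
\end{equation}
The linear term is handled by $\varepsilon\le1$, which gives $8\varepsilon\le8\sqrt{\varepsilon}$. Together with $\sqrt{C\varepsilon}$, this is absorbed into $\O(\sqrt{\varepsilon})$.

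The one point needing care, and the main (mild) obstacle, is that $\sqrt{\nu/\pce}$ is not literally of the form $\tfrac1\pce\negl$. I will rewrite it as $\tfrac{1}{\pce}\sqrt{\nu\,\pce}\le\tfrac{1}{\pce}\sqrt{\nu}$, using $\pce\le1$. Since $\sqrt{\nu}$ is again negligible, the term has the required form. With this, all error terms collapse into $\O(\sqrt{\varepsilon})+\tfrac{1}{\pce}\negl(\security)$, which is exactly the claim.
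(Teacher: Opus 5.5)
Your proposal is correct and follows essentially the paper's proof: the paper likewise substitutes $\anticommSquare\le C\varepsilon+\nu(\security)/\pce$ from Lemma~\ref{lemma:rc_approx_anticommutation} into Corollary~\ref{cor:app_bzero_bias}, bounds $\sqrt{2+\sqrt{2\anticommSquare}}-\sqrt2$ by a multiple of $\sqrt{\anticommSquare}$ (by rationalizing the difference rather than by concavity, which amounts to the same thing), splits the square root by subadditivity, and uses $\pce\le1$ together with the negligibility of $\sqrt{\nu}$ to bring $\sqrt{\nu/\pce}$ into the form $\tfrac{1}{\pce}\negl(\security)$. Your treatment of the $8\varepsilon$ term via $\varepsilon\le\sqrt{\varepsilon}$ is likewise the paper's absorption of $\O(\varepsilon)$ into $\O(\sqrt{\varepsilon})$.
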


\begin{proof}
	By Lemma~\ref{lemma:rc_approx_anticommutation}, there exist a constant~$C>0$ and a negligible function~$\nu$ such that
	\begin{equation}
		\anticommSquare
		\leq
		C\varepsilon
		+
		\frac{\nu(\security)}{\pce}
		\; .
	\end{equation}
	Moreover,
	\begin{align}
		\sqrt{
			2
			+
			\sqrt{2\anticommSquare}
		}
		-
		\sqrt2
		&=
		\frac{
			\sqrt{2\anticommSquare}
		}{
			\sqrt{
				2
				+
				\sqrt{2\anticommSquare}
			}
			+
			\sqrt2
		}
		\nonumber\\
		&\leq
		\sqrt{\anticommSquare}
		\nonumber\\
		&\leq
		\sqrt{C\varepsilon}
		+
		\sqrt{
			\frac{\nu(\security)}{\pce}
		}
		\nonumber\\
		&\leq
		\O(\sqrt{\varepsilon})
		+
		\frac{1}{\pce}
		\negl(\security)
		\; .
	\end{align}
	Here, the final inequality uses~$\pce\leq1$ and the fact that the square root of a nonnegative negligible function is negligible.
	Therefore,
	\begin{equation}
		\sqrt{
			2
			+
			\sqrt{2\anticommSquare}
		}
		\leq
		\sqrt2
		+
		\O(\sqrt{\varepsilon})
		+
		\frac{1}{\pce}
		\negl(\security)
		\; .
	\end{equation}
	Substituting this estimate into Corollary~\ref{cor:app_bzero_bias} and absorbing~$\O(\varepsilon)$ into~$\O(\sqrt{\varepsilon})$ proves the result.
\end{proof}

\subsubsection{Good Transcript--Block Pairs}

\begin{lemma}[Good transcript--block pairs]
	\label{lemma:rc_good_pairs}
	Suppose that the prover is accepted with probability
	\begin{equation}
		\pacc
		=
		\chshopt
		-
		\varepsilon
		\; .
	\end{equation}
	Then, whenever~$\Pr[\virtualInputRV=0]>0$,
	\begin{equation}
		\label{eq:rc_good_pair_probability}
		\Pr\left[
			\blockIndexRV
			\in
			\angleBlockSet{\angleThreshold_*}
			\;\wedge\;
			(\transRV,\blockIndexRV)
			\in
			\biasPairSet{\projThreshold_*}
			\;\middle|\;
			\virtualInputRV=0
		\right]
		\geq
		\frac{1}{1+10\sqrt2}
		-
		\O(\sqrt{\varepsilon})
		-
		\frac{1}{\pce}
		\negl(\security)
		\; ,
	\end{equation}
	where~$\pce(\security)\coloneqq\Pr[\contEvent]$ denotes the continuation probability.
\end{lemma}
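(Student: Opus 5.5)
The plan is to combine the two consequences of near-optimal CHSH performance established above, the $\bellMeasurement_0$ bias (Corollary~\ref{cor:app_bzero_bias_asymptotic}) and near-complementarity (Lemma~\ref{lemma:app_nearly_complementary_weight}), through a union bound. Everything is conditioned on $\virtualInputRV=0$. The bias event will be handled by a reverse Markov inequality, and the angle event by the Markov bound already proved, transferred to the conditional distribution.

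\emph{Step 1: bias event.} First I lift the transcript-level bias to the block level. Since $\bellMeasurement_0$ is block diagonal with respect to $\{\blockProjector{\blockIndex}\}_{\blockIndex}$, for every transcript
\begin{equation}
\left|\tr\!\left[\bellMeasurement_0\postPhaseStateP{\trans}\right]\right|
\leq
\sum_{\blockIndex}\blockProbability{\trans}{\blockIndex}
\left|\tr\!\left[\bellMeasurement_0^{(\blockIndex)}\blockPostPhaseStateP{\trans}{\blockIndex}\right]\right| .
\end{equation}
Define the random variable $W\coloneqq|\tr[\bellMeasurement_0^{(\blockIndexRV)}\blockPostPhaseStateP{\transRV}{\blockIndexRV}]|\in[0,1]$. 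By the display and Corollary~\ref{cor:app_bzero_bias_asymptotic},
\begin{equation}
\E[W\mid\virtualInputRV=0]\geq\sqrt2-1-\O(\sqrt\varepsilon)-\frac{1}{\pce}\negl(\security).
\end{equation}
Because $W\leq1$, the reverse Markov inequality gives $\Pr[W\geq\projThreshold\mid\virtualInputRV=0]\geq(\E[W\mid\virtualInputRV=0]-\projThreshold)/(1-\projThreshold)$. With $\projThreshold_*=\frac{9}{10}(\sqrt2-1)$, the numerator is $\frac{1}{10}(\sqrt2-1)$ up to errors, and the denominator is $1-\projThreshold_*=(19-9\sqrt2)/10$. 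Since $(\sqrt2-1)(1+10\sqrt2)=19-9\sqrt2$, the ratio is exactly $1/(1+10\sqrt2)$. This explains the choice of $\projThreshold_*$. The error terms get divided only by the constant $1-\projThreshold_*$, so they stay $\O(\sqrt\varepsilon)+\negl/\pce$.

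\emph{Step 2: angle event under conditioning.} Lemma~\ref{lemma:app_nearly_complementary_weight} is unconditional, so I reuse its proof. Writing $\anticommSquare=4\sum_{\virtualInput}\Pr[\virtualInputRV=\virtualInput]\,\E[\cos^2\measurementAngle{\blockIndexRV}\mid\virtualInputRV=\virtualInput]$, as in the proof of Lemma~\ref{lemma:app_xone_contribution}, I get $\E[\cos^2\measurementAngle{\blockIndexRV}\mid\virtualInputRV=0]\leq\anticommSquare/(4\Pr[\virtualInputRV=0])$. Markov's inequality then gives
\begin{equation}
\Pr[\blockIndexRV\notin\angleBlockSet{\angleThreshold_*}\mid\virtualInputRV=0]\leq\frac{\anticommSquare}{4\angleThreshold_*\Pr[\virtualInputRV=0]} .
\end{equation}
By Equation~\eqref{eq:rc_virtual_input_balance}, $\Pr[\virtualInputRV=0]\geq\frac12-\negl/\pce$. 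Combined with Lemma~\ref{lemma:rc_approx_anticommutation}, $\anticommSquare=\O(\varepsilon+\negl/\pce)$, and $\angleThreshold_*=10^{-3}$ a constant, this probability is $\O(\varepsilon)+\negl/\pce$, hence $\O(\sqrt\varepsilon)+\negl/\pce$.

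\emph{Step 3: union bound.} I subtract the Step 2 bound from the Step 1 bound to obtain Equation~\eqref{eq:rc_good_pair_probability}.

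\emph{Main obstacle: bookkeeping.} The delicate part is dividing by $\Pr[\virtualInputRV=0]$ when $\negl/\pce$ is not small. I will dispose of this by case split. If $\nu(\security)/\pce\geq\frac14$, the claimed right-hand side can be made nonpositive by enlarging the negligible term, and the statement is trivial. Otherwise $\Pr[\virtualInputRV=0]\geq\frac14$, all divisions cost only constants, and the errors fold into a single $\frac{1}{\pce}\negl(\security)$.
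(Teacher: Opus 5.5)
Your proposal is correct and follows essentially the same route as the paper. Both arguments lift the transcript-level $\bellMeasurement_0$ bias to the Jordan blocks, apply a reverse Markov bound with $\projThreshold_*$ chosen so the ratio equals $1/(1+10\sqrt2)$, control the bad-angle event conditioned on $\virtualInputRV=0$ by Markov, and conclude with a union bound.

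The one difference is how the angle bound is transferred to the conditional distribution. The paper never divides by $\Pr[\virtualInputRV=0]$: it uses the unconditional bound together with $r/q_0 = 2r + (1-2q_0)\,r/q_0 \leq 2r + |1-2q_0|$, valid since $r\leq q_0$. This makes your case split unnecessary, although the case split is also valid.
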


\begin{proof}[Proof of Lemma~\ref{lemma:rc_good_pairs}]
	Fix arbitrary parameters~$0<\angleThreshold\leq1$ and~$0\leq\projThreshold<1$.

	Define the event
	\begin{equation}
		\mathcal{E}_{\mathrm{angle}}
		\coloneqq
		\left\{
			\blockIndexRV
			\notin
			\angleBlockSet{\angleThreshold}
		\right\}
		\; .
	\end{equation}
	By Lemma~\ref{lemma:app_nearly_complementary_weight},
	\begin{equation}
		\Pr[
			\mathcal{E}_{\mathrm{angle}}
		]
		\leq
		\frac{\anticommSquare}{4\angleThreshold}
		\; .
	\end{equation}
	Set
	\begin{equation}
		q_0
		\coloneqq
		\Pr[\virtualInputRV=0]
		\qquad\text{and}\qquad
		r
		\coloneqq
		\Pr[
			\mathcal{E}_{\mathrm{angle}}
			\wedge
			\virtualInputRV=0
		]
		\; .
	\end{equation}
	Whenever the conditional probabilities below are defined, we have~$q_0>0$ and~$0\leq r\leq q_0$.
	By Equation~\eqref{eq:rc_virtual_input_balance}, there exists a negligible function~$\nu$ such that
	\begin{equation}
		\left|
			q_0-\frac12
		\right|
		\leq
		\frac{\nu(\security)}{\pce}
		\; .
	\end{equation}
	Therefore,
	\begin{align}
		\Pr\left[
			\blockIndexRV
			\notin
			\angleBlockSet{\angleThreshold}
			\;\middle|\;
			\virtualInputRV=0
		\right]
		&=
		\frac{r}{q_0}
		\nonumber\\
		&=
		2r
		+
		(1-2q_0)
		\frac{r}{q_0}
		\nonumber\\
		&\leq
		2r
		+
		\left|
			1-2q_0
		\right|
		\nonumber\\
		&\leq
		2
		\Pr[
			\mathcal{E}_{\mathrm{angle}}
		]
		+
		\frac{2\nu(\security)}{\pce}
		\nonumber\\
		&\leq
		\frac{\anticommSquare}{2\angleThreshold}
		+
		\frac{1}{\pce}
		\negl(\security)
		\; .
		\label{eq:app_conditional_bad_angle}
	\end{align}
	Consequently,
	\begin{equation}
		\label{eq:app_conditional_good_angle}
		\Pr\left[
			\blockIndexRV
			\in
			\angleBlockSet{\angleThreshold}
			\;\middle|\;
			\virtualInputRV=0
		\right]
		\geq
		1
		-
		\frac{\anticommSquare}{2\angleThreshold}
		-
		\frac{1}{\pce}
		\negl(\security)
		\; .
	\end{equation}

	Define
	\begin{equation}
		U_{\trans,\blockIndex}
		\coloneqq
		\left|
			\tr\!\left[
				\bellMeasurement_0^{(\blockIndex)}
				\blockPostPhaseStateP{\trans}{\blockIndex}
			\right]
		\right|
		\in
		[0,1]
		\; .
	\end{equation}
	For every transcript~$\trans$,
	\begin{align}
		\left|
			\tr\!\left[
				\bellMeasurement_0
				\postPhaseStateP{\trans}
			\right]
		\right|
		&=
		\left|
			\sum_{\blockIndex}
			\blockProbability{\trans}{\blockIndex}
			\tr\!\left[
				\bellMeasurement_0^{(\blockIndex)}
				\blockPostPhaseStateP{\trans}{\blockIndex}
			\right]
		\right|
		\nonumber\\
		&\leq
		\sum_{\blockIndex}
		\blockProbability{\trans}{\blockIndex}
		U_{\trans,\blockIndex}
		\; .
	\end{align}
	Corollary~\ref{cor:app_bzero_bias_asymptotic} therefore implies
	\begin{equation}
		\label{eq:app_average_block_bias}
		\underset{
			\transRV,\blockIndexRV
			\mid
			\virtualInputRV=0
		}{\E}
		U_{\transRV,\blockIndexRV}
		\geq
		\sqrt2
		-
		1
		-
		\O(\sqrt{\varepsilon})
		-
		\frac{1}{\pce}
		\negl(\security)
		\; .
	\end{equation}

	By the definition of~$\biasPairSet{\projThreshold}$,
	\begin{align}
		\underset{
			\transRV,\blockIndexRV
			\mid
			\virtualInputRV=0
		}{\E}
		U_{\transRV,\blockIndexRV}
		&\leq
		\Pr\left[
			(\transRV,\blockIndexRV)
			\in
			\biasPairSet{\projThreshold}
			\;\middle|\;
			\virtualInputRV=0
		\right]
		\nonumber\\
		&\qquad+
		\projThreshold
		\Pr\left[
			(\transRV,\blockIndexRV)
			\notin
			\biasPairSet{\projThreshold}
			\;\middle|\;
			\virtualInputRV=0
		\right]
		\nonumber\\
		&=
		\projThreshold
		+
		(1-\projThreshold)
		\Pr\left[
			(\transRV,\blockIndexRV)
			\in
			\biasPairSet{\projThreshold}
			\;\middle|\;
			\virtualInputRV=0
		\right]
		\; .
	\end{align}
	Combining this with Equation~\eqref{eq:app_average_block_bias} gives
	\begin{align}
		&\Pr\left[
			(\transRV,\blockIndexRV)
			\in
			\biasPairSet{\projThreshold}
			\;\middle|\;
			\virtualInputRV=0
		\right]
		\nonumber\\
		&\qquad\geq
		\frac{
			\sqrt2-1-\projThreshold
		}{
			1-\projThreshold
		}
		-
		\frac{
			\O(\sqrt{\varepsilon})
		}{
			1-\projThreshold
		}
		-
		\frac{1}{
			(1-\projThreshold)\pce
		}
		\negl(\security)
		\; .
		\label{eq:app_large_bias_probability}
	\end{align}

	Using the union bound together with Equations~\eqref{eq:app_conditional_good_angle} and~\eqref{eq:app_large_bias_probability}, we obtain
	\begin{align}
		&\Pr\left[
			\blockIndexRV
			\in
			\angleBlockSet{\angleThreshold}
			\;\wedge\;
			(\transRV,\blockIndexRV)
			\in
			\biasPairSet{\projThreshold}
			\;\middle|\;
			\virtualInputRV=0
		\right]
		\nonumber\\
		&\qquad\geq
		\frac{
			\sqrt2-1-\projThreshold
		}{
			1-\projThreshold
		}
		-
		\frac{
			\O(\sqrt{\varepsilon})
		}{
			1-\projThreshold
		}
		-
		\frac{\anticommSquare}{2\angleThreshold}
		-
		\frac{1}{
			(1-\projThreshold)\pce
		}
		\negl(\security)
		\; .
		\label{eq:app_general_good_pair_probability}
	\end{align}
	Here, the two continuation-dependent negligible terms have been absorbed into a single negligible function, using~$1/(1-\projThreshold)\geq1$.

	Set~$\projThreshold=\projThreshold_*$ and~$\angleThreshold=\angleThreshold_*$ as in Equation~\eqref{eq:rc_concrete_thresholds}.
	Then,
	\begin{equation}
		\frac{
			\sqrt2-1-\projThreshold_*
		}{
			1-\projThreshold_*
		}
		=
		\frac{1}{1+10\sqrt2}
		\; .
	\end{equation}
	Moreover, Lemma~\ref{lemma:rc_approx_anticommutation} gives
	\begin{equation}
		\frac{\anticommSquare}{2\angleThreshold_*}
		\leq
		\O(\varepsilon)
		+
		\frac{1}{\pce}
		\negl(\security)
		\; .
	\end{equation}
	Since~$\projThreshold_*$ and~$\angleThreshold_*$ are fixed constants, substituting these estimates into Equation~\eqref{eq:app_general_good_pair_probability}, absorbing~$\O(\varepsilon)$ into~$\O(\sqrt{\varepsilon})$, and absorbing constant factors into the negligible function gives
	\begin{equation}
		\Pr\left[
			\blockIndexRV
			\in
			\angleBlockSet{\angleThreshold_*}
			\;\wedge\;
			(\transRV,\blockIndexRV)
			\in
			\biasPairSet{\projThreshold_*}
			\;\middle|\;
			\virtualInputRV=0
		\right]
		\geq
		\frac{1}{1+10\sqrt2}
		-
		\O(\sqrt{\varepsilon})
		-
		\frac{1}{\pce}
		\negl(\security)
		\; ,
	\end{equation}
	which proves Lemma~\ref{lemma:rc_good_pairs}.
\end{proof}

The lemma has a simple geometric interpretation in terms of the qubit strategies arising from the Jordan decomposition.
The condition~$\blockIndex\in\angleBlockSet{\angleThreshold_*}$ selects those qubits for which the two measurement axes are nearly orthogonal, while the condition~$(\trans,\blockIndex)\in\biasPairSet{\projThreshold_*}$ selects those for which the state has a constant projection onto the~$\bellMeasurement_0^{(\blockIndex)}$ axis.
Lemma~\ref{lemma:rc_good_pairs} shows that, conditioned on~$\virtualInputRV=0$, these two properties hold simultaneously on a constant fraction of the probability mass, up to a correction~$\O(\sqrt{\varepsilon})+\tfrac{1}{\pce}\negl(\security)$.
We retain this good fraction and discard the remaining blocks from the entropy lower bound.
In~\hyperref[par:step_3]{Step~3}, we show that every retained block contributes a constant amount of entropy.

\subsection{Step 3: Entropy from Good Blocks}
\label{par:step_3}

We now show that each of the good blocks identified in Step~2 contributes a constant amount of entropy.
Recall that the constants~$\projThreshold_*$ and~$\angleThreshold_*$, defined in Equation~\eqref{eq:rc_concrete_thresholds}, quantify the two properties of a good block: the state has constant bias with respect to~$\bellMeasurement_0$, while the two Phase~B observables are nearly complementary.
The entropic uncertainty relation of Lemma~\ref{lemma:uncertainty_relations} then implies that the outcome of measuring~$\bellMeasurement_1$ has constant entropy even conditioned on the adversary's arbitrary quantum side information~$\eveReg$.
We first establish this entropy bound for an individual good block and then combine it with the probability mass guaranteed by Lemma~\ref{lemma:rc_good_pairs} to obtain the overall entropy bound.

\begin{lemma}[Entropy on a good block]
	\label{lemma:rc_entropy_good_block}
	Let~$(\trans,\blockIndex)\in\Trans\times\N$ be a transcript--block pair such that
	\begin{equation}
		\imap(\trans)
		=
		0
		\; ,
		\qquad
		\blockIndex
		\in
		\angleBlockSet{\angleThreshold_*}
		\; ,
		\qquad
		(\trans,\blockIndex)
		\in
		\biasPairSet{\projThreshold_*}
		\; .
	\end{equation}
	Then
	\begin{equation}
		\vne\!\left(
			\bellOutcomeRVAt{1}^{(\blockIndex)}
			\mid
			\eveReg
		\right)_{\blockPostPhaseState{\trans}{\blockIndex}}
		\geq
		\goodBlockEntropy
		\; ,
	\end{equation}
	where
	\begin{equation}
		\label{eq:rc_good_block_entropy_constant}
		\goodBlockEntropy
		\coloneqq
		1
		-
		\log\!\left(
			1+10^{-3/2}
		\right)
		-
		h\!\left(
			\frac12
			+
			\frac{9(\sqrt2-1)}{20}
		\right)
		>
		0.057
		\; .
	\end{equation}
	That is, on every good block, the von Neumann entropy of~$\bellOutcomeRVAt{1}^{(\blockIndex)}$, the outcome of measuring~$\bellMeasurement_1^{(\blockIndex)}$, remains bounded below by the positive constant~$\goodBlockEntropy$ even when conditioned on arbitrary quantum side information~$\eveReg$ held by the adversary.
\end{lemma}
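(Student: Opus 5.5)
The plan is to apply the entropic uncertainty relation of Lemma~\ref{lemma:uncertainty_relations} inside the fixed two-dimensional Jordan block~$\blockIndex$. Take~$\Pi=\bellMeasurement_1^{(\blockIndex)}$ as the observable whose entropy we bound and~$M=\bellMeasurement_0^{(\blockIndex)}$ as the complementary observable. Both act on the qubit~$\blockProjector{\blockIndex}\mathcal{H}_\proverReg$. The state~$\blockPostPhaseState{\trans}{\blockIndex}$ is supported on this qubit tensored with~$\eveReg$. The lemma then gives
\begin{equation*}
	\vne\!\left(\bellOutcomeRVAt{1}^{(\blockIndex)}\mid\eveReg\right)
	\geq
	\log_2(1/c)-\vne\!\left(\bellOutcomeRVAt{0}^{(\blockIndex)}\right).
\end{equation*}
The proof therefore reduces to two one-dimensional estimates: an upper bound on the overlap~$c$, and an upper bound on the Shannon entropy of the~$\bellMeasurement_0$ outcome.

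For the overlap, I will use Definition~\ref{def:squared_overlap} and Lemma~\ref{lemma:anticommutator_overlap}, which give~$c=\tfrac12+\tfrac12|\cos\measurementAngle{\blockIndex}|$. Here the absolute value comes from relabeling outcomes to reduce to~$\theta\in[0,\pi/2]$. The block lies in~$\angleBlockSet{\angleThreshold_*}$, so~$|\cos\measurementAngle{\blockIndex}|\leq\sqrt{\angleThreshold_*}=10^{-3/2}$. Hence~$c\leq\tfrac12(1+10^{-3/2})$, and
\begin{equation*}
	\log_2(1/c)\geq 1-\log_2\!\left(1+10^{-3/2}\right).
\end{equation*}
One degenerate case needs checking first. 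If the block is one-dimensional, the observables are commuting scalars~$\pm1$, and the bias condition alone gives no entropy. But then~$\cos^2\measurementAngle{\blockIndex}=1>\angleThreshold_*$, so such blocks are excluded from~$\angleBlockSet{\angleThreshold_*}$. I will state this explicitly.

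For the second term, the outcome distribution of~$\bellMeasurement_0^{(\blockIndex)}$ on~$\blockPostPhaseStateP{\trans}{\blockIndex}$ is~$\bigl(\tfrac{1+m}{2},\tfrac{1-m}{2}\bigr)$, where~$m=\tr[\bellMeasurement_0^{(\blockIndex)}\blockPostPhaseStateP{\trans}{\blockIndex}]$. Membership in~$\biasPairSet{\projThreshold_*}$ gives~$|m|\geq\projThreshold_*$. The binary entropy~$h$ is symmetric and decreasing in~$|m|$, so
\begin{equation*}
	\vne\!\left(\bellOutcomeRVAt{0}^{(\blockIndex)}\right)\leq h\!\left(\tfrac12+\tfrac{\projThreshold_*}{2}\right)=h\!\left(\tfrac12+\tfrac{9(\sqrt2-1)}{20}\right).
\end{equation*}
Combining the two estimates gives exactly~$\goodBlockEntropy$.

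The remaining step is numerical. We have~$\tfrac{9(\sqrt2-1)}{20}\approx0.1864$, so~$h(0.6864)\approx0.897$. Also~$\log_2(1+0.0316)\approx0.0449$. This gives~$\goodBlockEntropy\approx0.058>0.057$, and I will verify these figures to enough digits.

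The main subtlety is applying Lemma~\ref{lemma:uncertainty_relations} correctly, since it is stated for qubit observables and a tripartite state. Restricting to the block yields a genuine qubit, and we can take~$\verifierReg$ trivial or fold the classical transcript into it. I also need the conditioning on~$\eveReg$ to refer to the block-projected normalized state. The projector~$\blockProjector{\blockIndex}$ commutes with both observables, so measuring~$\bellMeasurement_1$ on the block state is the same as measuring~$\bellMeasurement_1^{(\blockIndex)}$, and the conclusion follows. The hypothesis~$\imap(\trans)=0$ plays no role in this block-level bound. It only matters for the probability weighting used later.
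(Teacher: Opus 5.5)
Your proposal is correct and follows the paper's proof essentially step for step. You apply Lemma~\ref{lemma:uncertainty_relations} inside block~$\blockIndex$ with $\bellMeasurement_1^{(\blockIndex)}$ as the target and $\bellMeasurement_0^{(\blockIndex)}$ as the complementary observable, bound $\log(1/c_{\blockIndex})$ via $|\cos(\measurementAngle{\blockIndex})|\leq\sqrt{\angleThreshold_*}=10^{-3/2}$, and bound $\vne(\bellOutcomeRVAt{0}^{(\blockIndex)})$ by $h\bigl(\tfrac{1+\projThreshold_*}{2}\bigr)$ using the bias condition. Your extra remarks are correct details the paper leaves implicit: one-dimensional blocks are excluded from $\angleBlockSet{\angleThreshold_*}$ since $\cos^2=1$ there, and the numerics give $\goodBlockEntropy\approx 0.0578$.
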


\begin{proof}
	Let~$c_{\blockIndex}$ denote the maximum squared overlap between the respective eigenbases of~$\bellMeasurement_0^{(\blockIndex)}$ and~$\bellMeasurement_1^{(\blockIndex)}$.
	Then
	\begin{equation}
		c_{\blockIndex}
		=
		\frac{
			1
			+
			\left|
				\cos\!\left(
					\measurementAngle{\blockIndex}
				\right)
			\right|
		}{2}
		\; .
	\end{equation}
	Since~$\blockIndex\in\angleBlockSet{\angleThreshold_*}$,
	\begin{equation}
		\left|
			\cos\!\left(
				\measurementAngle{\blockIndex}
			\right)
		\right|
		\leq
		\sqrt{\angleThreshold_*}
		=
		10^{-3/2}
		\; .
	\end{equation}
	Therefore,
	\begin{equation}
		\label{eq:rc_good_block_overlap_bound}
		\log(1/c_{\blockIndex})
		\geq
		1
		-
		\log\!\left(
			1+10^{-3/2}
		\right)
		\; .
	\end{equation}

	Since~$(\trans,\blockIndex)\in\biasPairSet{\projThreshold_*}$,
	\begin{equation}
		\left|
			\tr\!\left[
				\bellMeasurement_0^{(\blockIndex)}
				\blockPostPhaseStateP{\trans}{\blockIndex}
			\right]
		\right|
		\geq
		\projThreshold_*
		\; .
	\end{equation}
	Hence
	\begin{equation}
		\label{eq:rc_good_block_first_entropy}
		\vne\!\left(
			\bellOutcomeRVAt{0}^{(\blockIndex)}
		\right)_{\blockPostPhaseStateP{\trans}{\blockIndex}}
		\leq
		h\!\left(
			\frac{1+\projThreshold_*}{2}
		\right)
		=
		h\!\left(
			\frac12
			+
			\frac{9(\sqrt2-1)}{20}
		\right)
		\; .
	\end{equation}

	Applying Lemma~\ref{lemma:uncertainty_relations} inside block~$\blockIndex$ gives
	\begin{align}
		\vne\!\left(
			\bellOutcomeRVAt{1}^{(\blockIndex)}
			\mid
			\eveReg
		\right)_{\blockPostPhaseState{\trans}{\blockIndex}}
		&\geq
		\log(1/c_{\blockIndex})
		-
		\vne\!\left(
			\bellOutcomeRVAt{0}^{(\blockIndex)}
		\right)_{\blockPostPhaseStateP{\trans}{\blockIndex}}
		\nonumber\\
		&\geq
		\goodBlockEntropy
		\; .
	\end{align}
\end{proof}

\subsection{Proof of the Entropy-Certification Theorem}

\begin{proof}[Proof of \entthmref{}]
	We first lower-bound the entropy of~$\bellOutcomeRVAt{1}$, the outcome corresponding to the fixed challenge~$\bellChallenge=1$.
	Since~$\virtualInputRV$ is a deterministic function of~$\transRV$,
	\begin{align}
		\vne\!\left(
			\bellOutcomeRVAt{1}
			\mid
			\transRV,\eveReg
		\right)
		&=
		\sum_{\virtualInput\in\{0,1\}}
		\Pr[\virtualInputRV=\virtualInput]
		\vne\!\left(
			\bellOutcomeRVAt{1}
			\mid
			\transRV,\eveReg,
			\virtualInputRV=\virtualInput
		\right)
		\nonumber\\
		&\geq~
		\Pr[\virtualInputRV=0]
		\vne\!\left(
			\bellOutcomeRVAt{1}
			\mid
			\transRV,\eveReg,
			\virtualInputRV=0
		\right)
		\; .
	\end{align}
	The inequality uses the nonnegativity of the conditional entropy of a classical register given quantum side information.
	By Equation~\eqref{eq:rc_virtual_input_balance},
	\begin{equation}
		\label{eq:rc_entropy_reduce_virtual_zero}
		\vne\!\left(
			\bellOutcomeRVAt{1}
			\mid
			\transRV,\eveReg
		\right)
		\geq
		\frac12
		\vne\!\left(
			\bellOutcomeRVAt{1}
			\mid
			\transRV,\eveReg,
			\virtualInputRV=0
		\right)
		-
		\frac1\pce \negl(\security) \; .
	\end{equation}

	Every projector~$\blockProjector{\blockIndex}$ commutes with~$\bellMeasurement_1$.
	We may therefore refine, without loss of generality, the measurement of~$\bellMeasurement_1$ by first measuring the block~$\blockIndexRV$.
	Giving~$\blockIndexRV$ to the side-information system can only decrease the conditional entropy, and hence
	\begin{equation}
		\label{eq:rc_condition_on_block}
		\vne\!\left(
			\bellOutcomeRVAt{1}
			\mid
			\transRV,\eveReg,
			\virtualInputRV=0
		\right)
		\geq
		\vne\!\left(
			\bellOutcomeRVAt{1}
			\mid
			\blockIndexRV,\transRV,\eveReg,
			\virtualInputRV=0
		\right)
		\; .
	\end{equation}
	Expanding over~$\transRV$ and~$\blockIndexRV$ gives
	\begin{equation}
		\label{eq:rc_entropy_average_blocks}
		\vne\!\left(
			\bellOutcomeRVAt{1}
			\mid
			\blockIndexRV,\transRV,\eveReg,
			\virtualInputRV=0
		\right) =
		\sum_{\substack{\trans,\blockIndex:\\
			\imap(\trans)=0}}
		\Pr\left[
			\transRV=\trans,
			\blockIndexRV=\blockIndex
			\;\middle|\;
			\virtualInputRV=0
		\right]
		\vne\!\left(
			\bellOutcomeRVAt{1}^{(\blockIndex)}
			\mid
			\eveReg
		\right)_{\blockPostPhaseState{\trans}{\blockIndex}}
		\; .
	\end{equation}
	Define
	\begin{equation}
		\goodPairSet
		\coloneqq
		\left\{
			(\trans,\blockIndex):
			\imap(\trans)=0,
			\quad
			\blockIndex\in\angleBlockSet{\angleThreshold_*},
			\quad
			(\trans,\blockIndex)\in\biasPairSet{\projThreshold_*}
		\right\}
		\; .
	\end{equation}
	Restricting the sum in Equation~\eqref{eq:rc_entropy_average_blocks} to~$\goodPairSet$ and applying Lemma~\ref{lemma:rc_entropy_good_block}, we obtain
	\begin{align}
		&\vne\!\left(
			\bellOutcomeRVAt{1}
			\mid
			\blockIndexRV,\transRV,\eveReg,
			\virtualInputRV=0
		\right)
		\nonumber\\
		&\qquad\geq
		\goodBlockEntropy
		\Pr\left[
			(\transRV,\blockIndexRV)
			\in
			\goodPairSet
			\;\middle|\;
			\virtualInputRV=0
		\right]
		\nonumber\\
		&\qquad\geq
		\goodBlockEntropy
		\left(
			\frac{1}{1+10\sqrt2}
			-
			\O(\sqrt{\varepsilon})
		\right)
		-
		\frac1\pce
		\negl(\security)
		\; ,
		\label{eq:rc_entropy_from_good_pairs}
	\end{align}
	where the second inequality follows from Lemma~\ref{lemma:rc_good_pairs}.

	Combining Equations~\eqref{eq:rc_entropy_reduce_virtual_zero}, \eqref{eq:rc_condition_on_block}, and~\eqref{eq:rc_entropy_from_good_pairs} gives the fixed-challenge bound
	\begin{equation}
		\label{eq:rc_fixed_challenge_entropy_corrected}
		\vne\!\left(
			\bellOutcomeRVAt{1}
			\mid
			\transRV,\eveReg
		\right)_{\rho\mid\contEvent}
		\geq
		\frac{\goodBlockEntropy}{2(1+10\sqrt{2})}
		-
		\O(\sqrt{\varepsilon})
		-
		\frac1\pce
		\negl(\security)
		\; .
	\end{equation}
	The challenge~$\bellChallengeRV$ is uniform and independent of the post-Phase-A state.
	Consequently,
	\begin{align}
		&\vne\!\left(
			\bellOutcomeRV
			\mid
			\bellChallengeRV,\transRV,\eveReg
		\right)_{\rho\mid\contEvent}
		\nonumber\\
		&\qquad=
		\frac12
		\vne\!\left(
			\bellOutcomeRVAt{0}
			\mid
			\transRV,\eveReg
		\right)_{\rho\mid\contEvent}
		+
		\frac12
		\vne\!\left(
			\bellOutcomeRVAt{1}
			\mid
			\transRV,\eveReg
		\right)_{\rho\mid\contEvent}
		\nonumber\\
		&\qquad\geq
		\frac{\goodBlockEntropy}{4(1+10\sqrt{2})}
		-
		\O(\sqrt{\varepsilon})
		-
		\frac1\pce
		\negl(\security)
		\; .
	\end{align}
	Therefore, \entthmref{} holds with
	\begin{equation}
		\entropyConst
		=
		\frac{\goodBlockEntropy}{4(1+10\sqrt{2})}
		>
		0
		\; .
	\end{equation}
	Equation~\eqref{eq:rc_fixed_challenge_entropy_corrected} is the stronger fixed-generation-challenge statement that will be used in the entropy-accumulation protocol.
	This completes the proof of \entthmref{}.
\end{proof}

\section{BGKM Protocol}
\label{app:bgkm_protocol}

\subsection{Protocol Description}
\label{app:bgkm_protocol_description}

\renewcommand{\image}{z}
\begin{algorithm}[H]
	\caption{BGKM protocol~\cite{brakerski2023simple}}
	\label{protocol:bgkm_protocol}

	\vspace{0.25cm}

	Fix a security parameter~$\security\in\N$ and a \tcf{} family~$\Tcfunc$.

	\vspace{0.25cm}

	\textbf{Phase A}

	\vspace{-0.15cm}

	\begin{enumerate}
		\item
		The verifier samples a key and trapdoor
		\begin{equation}
			(\key,\trapdoor)
			\leftarrow
			\gen(1^\security)
			\; ,
		\end{equation}
		from a \tcf{} family~$\Tcfunc$, and sends~$\key$ to the prover.

		\item
		The prover sends an image~$\image$ to the verifier.

		\item
		The verifier uniformly samples
		\begin{equation}
			\rand_0,\rand_1
			\leftarrow
			\{0,1\}^{n}
			\; ,
		\end{equation}
		and sends~$(\rand_0,\rand_1)$ to the prover.

		\item
		The prover sends a string~$\hadx\in\{0,1\}^{n}$ to the verifier.

		\item
		The transcript is
		\begin{equation}
			\trans
			\coloneqq
			(\key,\image,\hadx,\rand_0,\rand_1)
			\; .
		\end{equation}
	\end{enumerate}

	\vspace{-0.15cm}

	\textbf{Phase B}

	\vspace{-0.15cm}

	\begin{enumerate}[resume]
		\item
		The verifier uniformly samples
		$\bellChallenge\leftarrow\{0,1\}$
		and sends~$\bellChallenge$ to the prover.

		\item
		The prover sends a bit~$\bellOutcome\in\{0,1\}$ to the verifier.
	\end{enumerate}
\end{algorithm}

\subsection{Honest Quantum Implementation}
\label{app:bgkm_honest_implementation}

The description above specifies only the interaction between the verifier and an arbitrary prover, without imposing any particular internal implementation on the prover.
For completeness, and because the resulting hidden qubit is used explicitly in the main construction, we next describe the corresponding honest quantum implementation.
In particular, this implementation makes explicit how the honest prover's state evolves during Phase~A and how the residual qubit on which the Phase-B measurement is performed is prepared.

\addtocounter{algorithm}{-1}
\begin{algorithm}[H]
	\caption{BGKM protocol: honest implementation~\cite{brakerski2023simple}}
	\label{protocol:bgkm_protocol_honest}
	\vspace{0.25cm}

	Fix a security parameter~$\security\in\N$ and a \tcf{} family~$\Tcfunc$.

	\vspace{0.25cm}
	\textbf{Phase A}
	\begin{enumerate}
		\item $\verifier$ samples a key and trapdoor~$(\key,\trapdoor)\leftarrow\gen(1^\security)$ from a \tcf{} family~$\Tcfunc$, and sends~$\key$ to~$\prover$.

		\item $\prover$ prepares the state
		\begin{equation}
			\sum_{\bit\in\{0,1\}}
			\sum_{\preimage\in\Preimage}
			\ket{\bit}
			\ket{\preimage}
			\ket{\tcfunc_{\key,\bit}(\preimage)}
			\; .
		\end{equation}
		It measures the last register in the computational basis, obtaining~$\image$, and sends~$\image$ to~$\verifier$.
		Denoting by~$\preimage_0,\preimage_1$ the two preimages associated with~$\image$, the remaining state is
		\begin{equation}
			\ket{0}\ket{\preimage_0} +
			\ket{1}\ket{\preimage_1} \; .
		\end{equation}

		\item $\verifier$ uniformly samples~$\rand_0,\rand_1\leftarrow\{0,1\}^n$, and sends~$(\rand_0,\rand_1)$ to~$\prover$.

		\item $\prover$ coherently computes the inner product, obtaining
		\begin{equation}
			\ket{0}
			\ket{\preimage_0}
			\ket{\preimage_0\cdot\rand_0}
			+
			\ket{1}
			\ket{\preimage_1}
			\ket{\preimage_1\cdot\rand_1} \; .
		\end{equation}

		\item $\prover$ uncomputes the first register and measures the preimage register in the Hadamard basis, obtaining a string~$\hadx\in\{0,1\}^n$
		which it sends to~$\verifier$.
		The remaining qubit is
		\begin{equation}
			\ket{\protocolState_\trans}
			\propto
			\ket{\preimage_0\cdot\rand_0}
			+(-1)^
			{
				\hadx\cdot
				(\preimage_0\oplus\preimage_1)
			}
			\ket{\preimage_1\cdot\rand_1}
			\; .
		\end{equation}

		\item The transcript becomes
		\begin{equation}
			\trans
			\coloneqq
			(
				\key,
				\image,
				\hadx,
				\rand_0,
				\rand_1
			)
			\; .
		\end{equation}
	\end{enumerate}

	\textbf{Phase B}
	\begin{enumerate}[resume]
		\item $\verifier$ uniformly samples
		$\bellChallenge\leftarrow\{0,1\}$
		and sends~$\bellChallenge$ to~$\prover$.

		\item $\prover$ performs the measurement
		\begin{equation}
			\bellMeasurement_{\bellChallenge}
			=
			\left(\sigmaz + (-1)^{\bellChallenge}\sigmax\right) / \sqrt{2}
			\; ,
		\end{equation}
		on its state~$\ket{\protocolState_\trans}$ and sends the result~$\bellOutcome$ to~$\verifier$.
	\end{enumerate}
\end{algorithm}

\begin{remark}[Uncomputing the branch register]
	By Definition~\ref{def:bit_commitment}, the opening verification algorithm~$\open$ is a deterministic polynomial-time classical algorithm.
	In particular, for any fixed commitment transcript~$\bctrans$, the function
	\begin{equation}
		v_0(\bctrans,\decomm)
		\coloneqq
		\indicator
		{
			\left[
				\open(\bctrans,0,\decomm)=\top
			\right]
		}
		\; ,
	\end{equation}
	is efficiently computable.

	Therefore, the prover can implement~$v_0$ reversibly as an efficient quantum operation.
	Applied to a control qubit and a decommitment register, this operation acts as
	\begin{equation}
		\ket{b}\ket{\decomm}
		\longmapsto
		\ket{b\oplus v_0(\bctrans,\decomm)}
		\ket{\decomm}
		\; .
	\end{equation}
	For the two valid openings~$\decomm_0$ and~$\decomm_1$, we have
	\begin{equation}
		v_0(\bctrans,\decomm_0)=1
		\qquad\text{and}\qquad
		v_0(\bctrans,\decomm_1)=0
		\; .
	\end{equation}
	Hence, on the state appearing in the honest execution,
	\begin{equation}
		\ket{0}\ket{\decomm_0}
		+
		\ket{1}\ket{\decomm_1}
		\longmapsto
		\ket{0}\ket{\decomm_0}
		+
		\ket{0}\ket{\decomm_1}
		=
		\ket{0}
		\left(
			\ket{\decomm_0}
			+
			\ket{\decomm_1}
		\right)
		\; .
	\end{equation}
	Thus, the branch register factors from the decommitment register and can be discarded.
\end{remark}

\section{Additional technical lemmas}

\begin{lemma}[Square-root linearization]
	\label{lemma:sqrt_linearization}
	Let~$b>0$ and~$r\geq0$.
	Then,
	\begin{equation}
		\sqrt{b+r}
		\leq
		\sqrt{b}
		+
		{r}/{2\sqrt{b}}
		\; .
	\end{equation}
\end{lemma}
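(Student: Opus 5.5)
Since $b>0$ and $r\ge 0$, both sides of the claimed inequality are nonnegative, so the plan is to square both sides and compare. Squaring the right-hand side gives
\begin{equation}
	\left(\sqrt{b}+\frac{r}{2\sqrt{b}}\right)^2
	=
	b + r + \frac{r^2}{4b}
	\; .
\end{equation}
Because $b>0$, the last term satisfies $r^2/(4b)\geq 0$, so this squared quantity is at least $b+r$. Taking square roots, which is monotone on the nonnegative reals, then yields $\sqrt{b+r}\leq\sqrt{b}+r/(2\sqrt{b})$.

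An equivalent route is concavity. The function $t\mapsto\sqrt{t}$ is concave on $(0,\infty)$, so its graph lies below its tangent line at $t=b$. That tangent line is $t\mapsto\sqrt{b}+(t-b)/(2\sqrt{b})$, and evaluating it at $t=b+r$ gives exactly the claimed bound.

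I expect no real obstacle here. The only point to check is that both sides are nonnegative, so that squaring is legitimate, and this follows immediately from $b>0$ and $r\geq 0$.
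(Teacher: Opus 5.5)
Your proof is correct, and both the squaring argument and the tangent-line argument are complete. The paper takes a different elementary route. It rationalizes the difference, writing $\sqrt{b+r}-\sqrt{b}=r/(\sqrt{b+r}+\sqrt{b})$, and then uses $r\geq 0$ to bound the denominator below by $2\sqrt{b}$.

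The three arguments each show something slightly different:
\begin{itemize}
\item The paper's identity gives the gap exactly. Using $\sqrt{b}\leq\sqrt{b+r}$ on the denominator instead, it also yields the matching lower bound $\sqrt{b+r}-\sqrt{b}\geq r/(2\sqrt{b+r})$ at no extra cost.
\item Your squaring computation shows that the slack in squared form is exactly $r^2/(4b)$.
\item Your concavity argument identifies the bound as the first-order Taylor estimate of a concave function. This makes it clear that the bound holds for all $r\geq -b$, not just $r\geq 0$.
\end{itemize}

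The paper only invokes the lemma with $r\geq 0$, in the bound on $S_1$ in Lemma~\ref{lemma:app_xone_contribution}, so any of the three arguments suffices.
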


\begin{proof}
	We have
	\begin{align}
		\sqrt{b+r}
		-
		\sqrt{b}
		&=
		\frac{
			(b+r)-b
		}{
			\sqrt{b+r}
			+
			\sqrt{b}
		}
		\nonumber\\
		&=
		\frac{
			r
		}{
			\sqrt{b+r}
			+
			\sqrt{b}
		}
		\; .
	\end{align}
	Since~$r\geq0$,
	\begin{equation}
		\sqrt{b+r}
		\geq
		\sqrt{b}
		\; ,
	\end{equation}
	and hence
	\begin{equation}
		\sqrt{b+r}
		+
		\sqrt{b}
		\geq
		2\sqrt{b}
		\; .
	\end{equation}
	Therefore,
	\begin{equation}
		\sqrt{b+r}
		-
		\sqrt{b}
		\leq
		{r}/{2\sqrt{b}}
		\; .
	\end{equation}
\end{proof}

\begin{lemma}
	\label{lemma:cos_sum_bound}
    Let $\zeta,\theta\in\mathbb{R}$.
    Then
    \begin{equation}
        \left|
        \cos(\zeta)+\cos(\zeta-\theta)
        \right|
        \le
        2\left|\cos\!\left(\frac{\theta}{2}\right)\right|\; .
    \end{equation}
\end{lemma}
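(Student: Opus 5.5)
The plan is to reduce the claim to a one-variable trigonometric inequality via sum-to-product. Write $A\coloneqq\cos(\zeta)+\cos(\zeta-\theta)$. The identity $\cos u+\cos v=2\cos\!\left(\frac{u+v}{2}\right)\cos\!\left(\frac{u-v}{2}\right)$ with $u=\zeta$ and $v=\zeta-\theta$ gives
\begin{equation}
	A
	=
	2\cos\!\left(\zeta-\frac{\theta}{2}\right)
	\cos\!\left(\frac{\theta}{2}\right)
	\; .
\end{equation}

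Taking absolute values, the factor $\left|\cos\!\left(\zeta-\theta/2\right)\right|$ is at most $1$. This immediately yields $|A|\le 2\left|\cos(\theta/2)\right|$, which is the statement of Lemma~\ref{lemma:cos_sum_bound}. There is no real obstacle here; the only care needed is to apply the product formula with the correct half-sum and half-difference arguments.

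As a sanity check on the direction of the inequality, equality holds when $\zeta=\theta/2$. This is the configuration in which, in the intended application to a Jordan block, the state's Bloch vector bisects the two measurement axes. It is consistent with the bound $\frac12\|\bellMeasurement_0^{(\blockIndex)}\pm\bellMeasurement_1^{(\blockIndex)}\|_\infty$ used in Lemma~\ref{lemma:app_xone_contribution}.
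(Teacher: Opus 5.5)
Your proof is correct and matches the paper's argument exactly: the sum-to-product identity gives $\cos(\zeta)+\cos(\zeta-\theta)=2\cos\!\left(\zeta-\frac{\theta}{2}\right)\cos\!\left(\frac{\theta}{2}\right)$, and then $\left|\cos\!\left(\zeta-\frac{\theta}{2}\right)\right|\le 1$. Your closing remark about the application is not needed for the proof.
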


\begin{proof}
    Using the sum-to-product identity, we have
    \begin{equation}
        \cos(\zeta)+\cos(\zeta-\theta)
        =
        2\cos\!\left(\zeta-\frac{\theta}{2}\right)
        \cos\!\left(\frac{\theta}{2}\right)\; .
    \end{equation}
    Hence,
    \begin{equation}
        \left|
        \cos(\zeta)+\cos(\zeta-\theta)
        \right|
        =
        2
        \left|
        \cos\!\left(\zeta-\frac{\theta}{2}\right)
        \right|
        \left|
        \cos\!\left(\frac{\theta}{2}\right)
        \right|
        \le
        2\left|\cos\!\left(\frac{\theta}{2}\right)\right|\; ,
    \end{equation}
    where the inequality follows from $|\cos(x)|\le 1$.
\end{proof}

\newpage
\bibliographystyle{alpha}
\bibliography{bib}

\end{document}